\newtheorem{theorem}{Theorem}[section]
\newtheorem{lemma}[theorem]{Lemma}
\newtheorem{definition}[theorem]{Definition}
\newtheorem{conjecture}[theorem]{Conjecture}
\newtheorem{fact}[theorem]{Fact}
\newtheorem{remark}[theorem]{Remark}
\newtheorem{claim}[theorem]{Claim}
\newcommand{\bvi}{\mathbf{v}^{(i)}}
\newcommand{\wh}{\widehat}
\newcommand{\wt}{\widetilde}
\newcommand{\ov}{\overline}
\newcommand{\eps}{\epsilon}
\newcommand{\R}{\mathbb{R}}
\renewcommand{\varepsilon}{\epsilon}
\renewcommand{\tilde}{\wt}
\renewcommand{\hat}{\wh}
\renewcommand{\eps}{\epsilon}
\newcommand{\bx}{\mathbf{x}}
\newcommand{\ba}{\mathbf{a}}
\newcommand{\by}{\mathbf{y}}
\newcommand{\bz}{\mathbf{z}}
\newcommand{\bb}{\mathbf{b}}
\newcommand{\bu}{\mathbf{u}}
\newcommand{\bc}{\mathbf{c}}
\newcommand{\bv}{\mathbf{v}}
\newcommand{\boldm}{\mathbf{m}}
\newcommand{\ttop}{{(t)}}
\newcommand{\bX}{\mathbf{X}}
\newcommand{\bY}{\mathbf{Y}}
\newcommand{\bH}{\mathbf{H}}
\newcommand{\bM}{\mathbf{M}}
\newcommand{\bA}{\mathbf{A}}
\newcommand{\bN}{\mathbf{N}}
\newcommand{\bC}{\mathbf{C}}
\newcommand{\bB}{\mathbf{B}}
\newcommand{\bD}{\mathbf{D}}
\newcommand{\bG}{\mathbf{G}}
\newcommand{\bJ}{\mathbf{J}}
\newcommand{\bI}{\mathbf{I}}
\newcommand{\bU}{\mathbf{U}}
\newcommand{\bV}{\mathbf{V}}
\newcommand{\bW}{\mathbf{W}}
\newcommand{\bw}{\mathbf{w}}
\newcommand{\bdelta}{\boldsymbol\delta}
\newcommand{\omv}{\mathsf{OMv}}
\newcommand{\tauo}{\tau_{\mathsf{ols}}}
\DeclareMathOperator*{\E}{{\mathbb{E}}}
\DeclareMathOperator*{\Var}{{\bf {Var}}}
\DeclareMathOperator{\poly}{poly}
\DeclareMathOperator{\nnz}{nnz}
\DeclareMathOperator{\diag}{diag}
\DeclareMathOperator{\new}{new}
\DeclareMathOperator{\im}{Im}
\newcommand*{\RN}[1]{\expandafter\@slowromancap\romannumeral #1@}
\title{The Complexity of Dynamic Least-Squares Regression}
\date{}
\author{Shunhua Jiang \\ 
\vspace{.2in}
Columbia University\\ \texttt{sj3005@columbia.edu}
\and Binghui Peng \\ 
\vspace{.2in}
Columbia University\\
\texttt{bp2601@columbia.edu} 
\and Omri Weinstein\\ The Hebrew University \\  and Columbia University\\ \texttt{omri@cs.columbia.edu}
}
\begin{document}

\maketitle
%%%%%%%%%%%%%%%%%%%%%%%%%%%%%%%%%%%%%%%%%%%%%%%%%%%%%%%%%
%%%%%%%%%%%%%%%%%%%%%%%%%%%%%%%%%%%%%%%%%%%%%%%%%%%%%%%%%
\begin{abstract}
    
We settle the complexity of dynamic least-squares regression (LSR),\footnote{The two main results of this manuscript (Theorems \ref{thm:main_UB_informal} and \ref{thm_low_acc_LB_informal}) are new compared to the preliminary  arXiv version: The lower bound for fully-dynamic $\ell_2$-regression holds against \emph{constant} relative accuracy algorithms, and our partially-dynamic upper bound handles \emph{adaptive} adversarial updates.}  where rows and labels $(\mathbf{A}^{(t)}, \mathbf{b}^{(t)})$ can be \emph{adaptively} inserted and/or deleted, and the goal is to efficiently maintain an $\epsilon$-approximate solution to $\min_{\mathbf{x}^{(t)}} \| \mathbf{A}^{(t)} \mathbf{x}^{(t)} - \mathbf{b}^{(t)} \|_2$ for all $t\in [T]$. We prove sharp  separations ($d^{2-o(1)}$ vs. $\sim d$) between the amortized update time of: (i) Fully vs. Partially dynamic $0.01$-LSR; (ii) High vs. low-accuracy LSR in the partially-dynamic (insertion-only) setting.  

Our lower bounds  follow from a  gap-amplification reduction---reminiscent of  \emph{iterative refinement}---from the exact version of the \emph{Online Matrix Vector} Conjecture (OMv) [HKNS15],  to \emph{constant} approximate OMv  over the reals, where the $i$-th online product $\mathbf{H}\mathbf{v}^{(i)}$ only needs to be computed to $0.1$-\emph{relative} error. All previous fine-grained reductions from OMv to its approximate versions only show hardness for inverse polynomial approximation $\epsilon = n^{-\omega(1)}$ (additive or multiplicative) . This result is of independent interest in fine-grained complexity and for the investigation of the OMv Conjecture, which is still widely open.    

\end{abstract}
\setcounter{page}{0}
\thispagestyle{empty}
\newpage

%%%%%%%%%%%%%%%%%%%%%%%%%%%%%%%%%%%%%%%%%%%%%%%%%%%%%%%%%
%%%%%%%%%%%%%%%%%%%%%%%%%%%%%%%%%%%%%%%%%%%%%%%%%%%%%%%%%
\section{Introduction}
\label{sec:intro}
The problem of least-squares regression (LSR)  dates back to Gauss in 1821 \cite{GaussLS}, and is the backbone 
of high-dimensional statistical inference \cite{HFT_book_01}, signal processing \cite{rg75},  
convex optimization \cite{b15}, control theory \cite{k90}, 
network routing \cite{ls14,m13} and machine learning \cite{cortes1995support}. 
Given an overdetermined ($n\gg d$) linear system $\bA \in \R^{n \times d}, \bb \in \R^n $, 
the goal is to find the solution vector $\bx$ that minimizes the mean squared error (MSE)
\begin{align} \label{eq_LS_reg} 
\min_{\bx\in \R^n} \| \bA\bx - \bb \|_2.
\end{align}
The exact closed-form solution is given by the well-known Normal equation
$\bx^\star = (\bA^\top \bA)^{-1}\bA^\top \bb$, 
which requires $O(nd^2)$ time to compute using naive matrix-multiplication, or $O(nd^{\omega-1}) \approx O(n d^{1.37})$ time using fast matrix-multiplication (FMM) \cite{s69} for the current FMM exponent of $\omega \approx 2.37$ \cite{le14, aw21}.

Despite the elegance and simplicity of this closed-form solution, in practice the latter runtime is often too slow, especially in modern data analysis applications where both the dimension of the 
feature space ($d$) and the size of datasets ($n$) are overwhelmingly large. 
A more modest objective in attempt to circumvent this computational overhead, is to seek an $\eps$-accurate solution that satisfies 
\begin{align} \label{eq_APX_LS_solution} 
\|\bA \bx- \bb\|_2 \leq (1+\eps) \min_{\bx \in \R^d} \|\bA \bx - \bb \|_2 .
\end{align} 
A long line of work on sketching \cite{cw17} and sampling \cite{clmmps15}, combined with iterative linear-system solvers (preconditioned gradient descent), culminated in \emph{high-precision} algorithms that run in close to input-sparsity time 
$\wt{O}(\nnz(A)\log(1/\eps)+ d^{\omega})$\footnote{We use $\wt{O}$ to hide $\poly \log(d)$ factors.} 
for the offline problem \cite{s06, cw17, nn13, cherapanamjeri2023optimal}.
This is essentially optimal in the realistic setting $d\ll n$. 

\vspace{+2mm}
{\bf \noindent  Dynamic Least Squares \ \ } 
Many of the aforementioned applications of LSR, both in theory and practice, involve data that is continually changing, either by nature or by design. In such  applications, it is desirable to avoid recomputing the LSR solution from scratch, and instead maintain the solution \eqref{eq_APX_LS_solution} dynamically, under insertion and/or deletions of rows and labels $(\ba^\ttop, \beta^\ttop)$. 
The most compelling and realistic dynamic model is that of \emph{adaptive} row updates, where the algorithm  is required to be correct  against an adaptive adversary that chooses the next update as a function of previous outputs of the algorithm. This is a much stronger notion of  dynamic algorithms than the traditional \emph{oblivious} model, where the sequence of updates is chosen \emph{in advance}. 
This distinction is perhaps best manifested in 
\emph{iterative randomized} algorithms, where the input to the next iteration depends on the output of the previous iteration and hence on the internal randomness of the algorithm, and traditional sketching algorithms generally break under the stronger adaptive setting. \cite{hw13, bjwy22, hkm+22,cohen2022robustness}. 

Another aspect of dynamic LSR  is whether data is partially or fully dynamic -- some applications inherently involve incremental updates (row-insertions), whereas others require both insertion and deletions. 

\bf Fully Dynamic LSR. \rm 
In the \emph{fully dynamic} setting, rows $(\ba^\ttop, \beta^\ttop)$ can be adaptively inserted or deleted, and the goal is to minimizes the \emph{amortized update time} required to maintain an $\eps$-approximate solution to \eqref{eq_LS_reg} in each iteration. 
Variants of dynamic LSR show up in many important applications and iterative optimization methods, from 
Quasi-Newton \cite{pilanci2017newton} and interior-point methods (IPM) \cite{cls21}, 
the (matrix) multiplicative-weight updates framework \cite{AHK06}, iteratively-reweighted least squares \cite{Law61} to mention a few.   
We stress that some of these variant involve application-specific restrictions on the LSR updates (e.g., small $\ell_2$-norm or sparsity), whereas we study here the problem in full generality. 

\bf Partially Dynamic LSR. \rm \; 
In the \emph{partially-dynamic} setting, rows $(\ba^\ttop, \beta^\ttop)$ can only be inserted, and the goal is again to minimize the amortized update time for inserting a row.
This incremental setting is more natural in 
 control theory and dynamical linear systems \cite{Pla50, k60}, and in modern deep-learning applciations, in particular \emph{continual learning} \cite{pkp+19,cpp22}, where the 
goal is to finetune a neural network over arrival of new training data, \emph{without} training from scratch. 

\

The textbook solution for dynamic LSR, which handles general adaptive row updates, and dates back to \emph{Kalman's filter} and the recursive least-squares framework \cite{k60},  
is to apply 
\emph{Woodbury's identity}, 
which can implement each row update in $O(d^2)$ worst-case time \cite{k60}, and maintain the Normal equation exactly.
Interestingly, in the \emph{oblivious, partially dynamic} (insertion-only) setting, one can do much better --  \cite{cmp20} gave a streaming algorithm, based on \emph{online row-sampling}, which  maintains a subset of only $\tilde{O}(d/\eps^2)$ rows of $\bA^{(t)}$, and provides an $\eps$-approximate LSR solution at any given time (the algorithm of \cite{cmp20} does \emph{not} yield an efficient data structure, but we will show a stronger result implying this).  Very recently, this algorithm was extended to handle \emph{adaptive} incremental updates \cite{bhm+21}, at the price of a substantial increase in the number of sampled rows $\tilde{O}(d^2\kappa^2/\eps^2)$  (and hence the update time also increases), where $\kappa$ is the condition number of the inputs, which could scale polynomially with the total number of rounds $T$.

Our first result is an efficient partially-dynamic low-accuracy LSR data structure, against adaptive row-insertions, whose update time depends only logarithmically on the condition number:   

\begin{theorem}[Faster Adaptive Row Sampling] \label{thm:main_UB_informal}
For any accuracy parameter $0 < \eps < 1/8$, there is a randomized dynamic data structure which, 
with probability at least $0.9$, maintains an $\eps$-approximate LSR solution under adaptive row-insertions, 
simultaneously \underline{for all} iterations $t\in[T]$, with total update time 
\[ O\left(\nnz(\bA^{(T)}) \log(T) + \epsilon^{-4} d^5 \log^4(\frac{\sigma_{\max}}{\sigma_{\min}}) \log^3(T)\right), \]
where $\sigma_{\max}$ ($\sigma_{\min}$) is the maximum (minimum) singular value over all input matrices $[\bA^\ttop, \bb^\ttop]$ for $t\in [T]$.
\end{theorem}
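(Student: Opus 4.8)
The plan is to build on the online row-sampling approach of [CMP20], but fix its fragility under adaptive adversaries by coupling the sampling with a *robust* online leverage-score estimator whose accuracy is independent of the condition number. Recall the online leverage score of row $\ba^\ttop$ with respect to the prefix matrix is $\tauo^\ttop = \ba^{(t)\top}(\bA^{(t-1)\top}\bA^{(t-1)} + \lambda\bI)^{-1}\ba^\ttop$ for a small regularizer $\lambda$, and that sampling each row with probability $\propto \min(1, c\eps^{-2}\log(T)\cdot\tauo^\ttop)$ keeps a spectral approximation of $\bA^{(t)\top}\bA^{(t)}$ at every time $t$. First I would recall that the *sum* of online leverage scores over all $T$ rows is $O(d\log(\sigma_{\max}/\sigma_{\min}))$ (the standard ``online matrix determinant'' potential argument, with the $\log\kappa$ factor entering exactly here), so the oblivious version samples only $\tilde O(\eps^{-2}d\log\kappa\log T)$ rows in total and the maintained coreset has this size.

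The core obstacle, and where most of the work goes, is adaptivity: once the adversary sees which rows were sampled (implicitly, through the solution the data structure outputs), the future rows $\ba^\ttop$ may be chosen adversarially correlated with the sketch, breaking the union bound that underlies the spectral guarantee. The plan is to defeat this with the standard ``two-level / wrapper'' robustification: run $K = \Theta(\log T)$ independent copies of the sampler, and at each query output (a high-accuracy LSR solve on) the coreset of a copy selected by a careful mechanism — e.g. the differential-privacy-based aggregation of [HKMMS20]-style robust streaming, or a $\sqrt{\lambda\,}$-net / switching argument — so that the adversary's view leaks only $O(\log T)$ bits about the internal randomness, which the $\tilde O(\log T)$ oversampling slack absorbs. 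Because our accuracy target $\eps$ is *constant-order* rather than $1/\poly$, the required success probability per copy is only constant, so a modest number of copies and a single fresh JL sketch per query suffice; this is exactly why the condition-number dependence can be made logarithmic rather than polynomial, in contrast to [BHMSSS21]. Each copy maintains $\bA^{(t)\top}\bA^{(t)}$ of its coreset (a $d\times d$ matrix) under rank-one updates, and recomputes online leverage scores by a matrix-vector product against the maintained inverse, in $O(d^2)$ time per sampled row; the $d^5$ term in the bound is $d^2$ (leverage computation / Woodbury) $\times$ $d$ (coreset size per $\log$) $\times$ polylog, times the $\eps^{-4}$ from squaring the $\eps^{-2}$ coreset size in the high-accuracy final solve.

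Concretely I would proceed as follows. (1) State the regularized online sampling scheme and prove the per-time spectral guarantee $\bA^{(t)\top}\bA^{(t)}+\lambda\bI \preceq_{(1+\eps)} \bS^{(t)\top}\bS^{(t)} + \lambda\bI$ for the coreset $\bS^{(t)}$ in the *oblivious* case, via the matrix Chernoff / Freedman martingale bound, and bound the coreset size by the determinant potential, giving the $\log(\sigma_{\max}/\sigma_{\min})$ factor. (2) Observe that an $\eps$-spectral approximation of the regularized Gram matrix, together with one high-precision iterative LSR solve (preconditioned gradient descent, using the coreset as preconditioner, costing $\tilde O(\nnz\cdot\log(1/\eps) + d^\omega)$ amortized — but here $\nnz$ is read once per insertion, giving the $\nnz(\bA^{(T)})\log T$ term), yields an $\eps$-approximate solution to the *unregularized* problem provided $\lambda$ is taken below $\sigma_{\min}^2$; this is where the $\log T$ in front of $\nnz$ comes from, as each row is touched $O(\log T)$ times across the $K$ copies and the rebuild schedule. (3) Upgrade to adaptivity with the robust-wrapper argument above, checking that the adversary's information gain is $O(\log T)$ bits and re-running the union bound. (4) Collect terms. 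The step I expect to be genuinely delicate is (3): making the robustness wrapper interact correctly with the *amortized* (not worst-case) nature of the update time — the coreset is rebuilt in batches, and one must ensure the adversary cannot force a rebuild-heavy schedule, which I would handle by a standard amortization/potential argument tying rebuild cost to the number of newly sampled rows, itself bounded by the determinant potential of step (1).
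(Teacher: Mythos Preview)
Your proposal diverges from the paper's approach at the crucial step (3), and the divergence is not merely stylistic: the robust-wrapper route you sketch has a real gap.

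The paper does \emph{not} run $K=\Theta(\log T)$ independent copies with a DP-style aggregation or sketch-switching mechanism. Those wrappers are designed for problems with scalar (or low-dimensional) outputs; here the data structure must output a full vector $\bx^{(t)}\in\R^d$ at every step, and the paper itself flags exactly this as an open issue in the robust-streaming literature. Your claim that ``the adversary's view leaks only $O(\log T)$ bits about the internal randomness'' is unsupported: revealing the LSR solution on one copy's coreset can disclose $\Omega(d)$ bits about which rows that copy sampled, so the $O(\log T)$ copies do not obviously absorb the leakage. You would need a genuinely new argument here, not an appeal to [HKMMS20]-type machinery.

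What the paper actually does is prove that online leverage-score sampling is \emph{intrinsically} robust against adaptive adversaries, with no wrapper at all---just a single sampler with an oversampling factor of $O(d\log(\sigma_{\max}/\sigma_{\min}))$ relative to the oblivious case (Lemma~5.7). The key idea, which your proposal misses, is a truncated-martingale trick: after passing to a fixed unit vector $\bx$ via an $\eps$-net (this is where the extra factor of $d$ enters), one must apply Freedman's inequality, but the variance bound involves the a-priori unknown quantity $\|\bM^{(T)}\bx\|_2$. Rather than crudely bounding it by $\sigma_{\max}\|\bx\|_2$ as in [BHM+21] (which costs $(\sigma_{\max}/\sigma_{\min})^2$), the paper sets up $O(\sigma_{\max}/\sigma_{\min})$ separate martingales, one for each ``guess'' $s$ of this norm, truncated to zero once $\|\bM^{(t)}\bx\|_2$ exceeds $s$; a union bound over the guesses costs only $\log(\sigma_{\max}/\sigma_{\min})$. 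Separately, the JL sketch used for fast leverage-score estimation is refreshed every time a row is actually sampled; between samplings the data structure's output is unchanged, so the sketch randomness is not leaked and the JL guarantee remains valid against the adaptive adversary (Lemma~5.9). The $d^5\eps^{-4}$ term arises because the number of sampled rows is $\tilde O(d^2\eps^{-2})$ and each sampling event triggers a JL refresh costing $\tilde O(s^{(t)}\cdot d)$, so the total is $\tilde O((d^2\eps^{-2})^2\cdot d)$---not the decomposition you gave.
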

For constant approximations ($\epsilon = 0.1$), Theorem~\ref{thm:main_UB_informal} almost matches the fastest static sketching-based solution, up to polylogarithmic terms and the additive term. When $T \gg d$, this theorem shows that amortized update time of our algorithm is $\tilde{O}(d)$. A key sub-routine of our algorithm is an improved analysis of the online leverage score sampling that reduces the number of rows from $\tilde{O}(d^2\kappa^2/\eps^2)$ to  $\tilde{O}(d^2 \log(\kappa)/\eps^2)$, where $\kappa:=\sigma_{\max}/\sigma_{\min}$.

\

Our main result is that, by contrast, in the \emph{fully-dynamic} setting, Kalman's classic approach is essentially optimal, even for maintaining a \emph{constant} approximate LSR solution, assuming the \emph{Online Matrix-Vector} ($\omv$) Conjecture \cite{hkns}: 

\begin{theorem}[Lower Bound for Fully-Dynamic LSR, Informal] \label{thm_low_acc_LB_informal}
There is an adaptive sequence of $T = \poly(d)$ row insertions and deletions, such that any dynamic data structure that maintains an $0.01$-approximate LSR solution, has amortized update time at least $\Omega(d^{2-o(1)})$ per row, under the $\omv$ Conjecture. 
\end{theorem}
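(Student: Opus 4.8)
The plan is to derive Theorem~\ref{thm_low_acc_LB_informal} from the $\omv$ Conjecture through two reductions. The \emph{first}, and conceptually central, is a \emph{gap-amplification} reduction: the \emph{exact} Boolean $\omv$ problem reduces, with only $\poly\log n$ overhead, to a ``constant relative accuracy'' version of $\omv$ \emph{over the reals} -- given a real matrix and online real query vectors $\bvi$, it suffices to output $\bu^{(i)}$ with $\|\bu^{(i)}-\bH\bvi\|_2 \le \tfrac{1}{10}\|\bH\bvi\|_2$. The \emph{second} reduces this relaxed, constant-accuracy $\omv$ to fully-dynamic $0.01$-LSR; composing the two yields the theorem. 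Since the reduced query streams are produced adaptively from the data structure's outputs, the hardness holds against adaptive adversaries, as claimed.

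\smallskip\noindent\textbf{Step 1 (gap amplification -- the crux).}
Fix a Boolean $\omv$ instance and a query $\bvi$. Since $\bH\bvi$ is integral with $\|\bH\bvi\|_\infty\le n$, it suffices to approximate it to additive $\ell_\infty$-error below $\tfrac12$ and round. I would do this by \emph{iterative refinement} on top of the constant-relative-error oracle, applied to the augmented matrix $[\,\bH\mid -\bI\,]$: maintain a running real estimate $\bc_t$ with $\bc_0=\mathbf{0}$; an oracle query on the concatenation $[\bvi;\bc_t]$ returns $[\,\bH\mid -\bI\,][\bvi;\bc_t]=\bH\bvi-\bc_t$ up to relative error $\tfrac{1}{10}$, so setting $\bc_{t+1}:=\bc_t+(\text{answer})$ contracts the residual, $\|\bH\bvi-\bc_{t+1}\|_2 \le \tfrac{1}{10}\|\bH\bvi-\bc_t\|_2$. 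After $O(\log n)$ rounds the residual drops below $\tfrac12$ in every coordinate, and rounding $\bc_t$ recovers $\bH\bvi$ exactly. What makes this stronger than the folklore reductions -- all of which need inverse-polynomial accuracy -- is precisely that a \emph{constant} relative-error oracle already drives the residual down geometrically, so no accuracy boosting inside a single oracle call is needed. The two things to check are that each $\bc_t$ carries only $\poly\log n$ bits (after a harmless rounding of each oracle answer), so the reduced instance is a legitimate $\poly\log$-bit real $\omv$ instance, and that $[\,\bH\mid-\bI\,]$ remains a hard $\omv$ matrix -- which it does, appending $-\bI$ being an $O(n)$-column padding.

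\smallskip\noindent\textbf{Step 2 (realizing the oracle by a dynamic $0.01$-LSR data structure, and the $d^{2-o(1)}$ bound).}
A single constant-relative-error query $[\,\bH\mid-\bI\,][\bvi;\bc_t]$ is answered by the regression $\min_{x_1,x_2,x_3} \|x_1-\bH x_2+x_3\|_2^2 + \lambda^2(\|x_2-\bvi\|_2^2+\|x_3-\bc_t\|_2^2) + \mu^2\|x_1\|_2^2$, where $\mu=\Theta(1)$ plays the role of an ``anti-regularizer'' and $\lambda$ is polynomially large. A direct calculation shows the exact minimizer has $x_1^\star=\tfrac{1}{1+\mu^2}(\bH\bvi-\bc_t)$ up to inverse-polynomial distortion, that $\sigma_{\min}(\bA)=\Theta(1)$, and -- the key point -- that $\OPT=\Theta(\|\bH\bvi-\bc_t\|_2)$; hence any $(1+\epsilon)$-approximate solution recovers the residual to relative error $O(\sqrt\epsilon)$, which for $\epsilon=0.01$ is a constant strictly below $1$, exactly the oracle Step~1 consumes. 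The matrix $\bH$ and the fixed regularizer rows are installed once in preprocessing, and each refinement round is served by re-loading the right-hand side and reading the data structure's explicitly-maintained $0.01$-approximate solution for free. One would then arrange the reduction so that the $\Theta(d)$ online queries of a $d^{3-o(1)}$-hard $\omv$ (equivalently $\mathsf{OuMv}$) instance cost only $\wt{O}(d)$ updates in total while the data structure is nevertheless forced to spend $d^{3-o(1)}$ total time; this yields amortized update time $d^{2-o(1)}$, matching the $O(d^2)$ Woodbury/Kalman upper bound and establishing the claimed separation.

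\smallskip\noindent\textbf{Main obstacle.}
The real difficulty is Step~1: showing that \emph{constant} relative accuracy suffices, where every prior fine-grained reduction from $\omv$ loses accuracy down to $n^{-\omega(1)}$ because the natural regularized normal-equation embedding incurs de-regularization error proportional to the (necessarily polynomially large) regularization strength. Making the ``anti-regularized, residual-proportional-$\OPT$'' regression of Step~2 compose cleanly with the $O(\log n)$-round refinement of Step~1 -- keeping every intermediate vector $\poly\log$-bit, the regression well-conditioned throughout, and, for the $d^{2-o(1)}$ bound, the per-query update count small enough -- is where the genuine work lies; the underlying $\omv$-accounting is then routine.
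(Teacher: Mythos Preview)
Your Step~1 is a valid and pleasingly simple alternative to the paper's gap-amplification: iterative refinement on the augmented matrix $[\bH\mid -\bI]$ does contract the residual geometrically given a constant-relative-error oracle, and after $O(\log n)$ adaptive calls you can round to the exact Boolean answer. This gives a cleaner proof of hardness of constant-approximate $\omv$ (the paper's Theorem~\ref{thm:hard-omv-approx}) than the paper's route through the online projection problem and the two nested refinement loops of Algorithm~\ref{algo:hard-amplification}. The paper's detour, however, is not gratuitous---it buys exactly the structural property your Step~2 lacks.

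The genuine gap is in Step~2, and you have flagged it yourself without resolving it. Your regression has variables $(x_1,x_2,x_3)\in\R^{3n}$, and to ``re-load the right-hand side'' with a new query $\bvi$ and residual $\bc_t$ you must rewrite the $2n$ rows encoding $\lambda^2\|x_2-\bvi\|_2^2+\lambda^2\|x_3-\bc_t\|_2^2$. In the fully-dynamic model each such rewrite is a delete followed by an insert, so a single oracle call costs $\Theta(n)$ row updates. Over $n$ $\omv$ queries and $O(\log n)$ refinement rounds each, that is $\Theta(n^2\log n)$ updates; an $n^{3-o(1)}$ total-time lower bound then yields only $n^{1-o(1)}=d^{1-o(1)}$ amortized per update, not $d^{2-o(1)}$. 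Your sentence ``one would then arrange the reduction so that the $\Theta(d)$ online queries \ldots\ cost only $\wt O(d)$ updates in total'' is precisely the missing idea, and your encoding cannot achieve it: there is no way to pin $x_2$ to an arbitrary $n$-dimensional vector $\bvi$ using $O(1)$ linear constraints.

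The paper's resolution is to abandon the attempt to compute $\bH\bvi$ directly. Instead it introduces the \emph{online projection} problem (project $\bz$ onto a fixed subspace $\bU$), shows it is $\omv$-hard via a binary decomposition of the spectrum, and then---crucially---realizes each projection query with a \emph{single} row insertion $(\tfrac{1}{10}\bz,1)$ into a ridge-regressed under-constrained system $\|\bU_\perp^\top\bx-\tfrac{1}{\sqrt d}\mathbf{1}\|_2^2+\lambda\|\bx\|_2^2$. The point is that the $d$-dimensional LSR solution $\bx$ already carries $d$ degrees of freedom; one new constraint biases it toward $\bz_{\bU}$, and the answer is read off from $\bx-\bx^*$ after a line search. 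This is what makes the update count $O(1)$ per query and the amortized bound $d^{2-o(1)}$. Your Step~1, while correct for the approximate-$\omv$ statement in isolation, produces an oracle whose input is a full $2n$-dimensional vector, and that is inherently expensive to load into a dynamic LSR instance.
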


Recall that the $\omv$ Conjecture \cite{hkns} postulates that computing \bf \emph{exact} \rm Boolean matrix-vector products, of a fixed $n\times n$ Boolean matrix $\bH$ with an \emph{online} sequence of vectors $\mathbf{v}^{(1)}, \ldots, \mathbf{v}^{(n)}$, one-at-a-time ($\bH\bv^{(i)}$),  requires $n^{3-o(1)}$ time (in sharp contrast to the offline batch setting, where this can be done using FMM in $n^\omega\ll n^3$ time, see Section \ref{sec:fully} for more details). 

Theorem~\ref{thm_low_acc_LB_informal} follows from a gap-amplification reduction from exact $\omv$ to \emph{approximate $\omv$} {\em over the reals}, asserting that the $\omv$ problem remains hard even when the online matrix-vector products $\bH\bvi$ need only be approximated to within \emph{constant relative} accuracy (i.e., $\| \mathbf{y}^{(i)} - \bH\bvi \|_2 \leq 0.1 \|\bH\bvi\|_2$) against adaptive sequences, see Theorem \ref{thm:hard-omv-approx}. 
All previous fine-grained reductions from $\omv$ to its approximate versions only show hardness for inverse-polynomial error $\eps = n^{-\omega(1)}$, see further discussion in the related work section. 

We believe Theorem \ref{thm:hard-omv-approx} may be useful for proving or refuting the $\omv$ Conjecture itself, both because constant relative approximation brings the problem closer to the realm of dimensionality-reduction tools (which only work in the low-accuracy, oblivious regime), and on the other hand, since the adaptive nature of our construction is necessary for relating the $\omv$ conjecture to more established fine-grained conjectures (3SUM, SETH, OV): 
Indeed, 
one reason for the lack of progress the $\omv$ Conjecture is that almost all known reductions in the fine-grained complexity literature are non-adaptive \cite{W18survey}, meaning that they apply equally to online and offline queries, and hence are futile for atacking the $\omv$ Conjecture (one exception is the adaptive reduction of \cite{WW18} for triangle detection).
We remark that adaptivity of the vectors $\bvi$'s is crucial for the proof of Theorems \ref{thm_low_acc_LB_informal} and \ref{thm:hard-omv-approx}, but also natural:   Iterative optimization algorithms and linear-system solvers, in particular first-order (Krylov) methods for quadratic minimization, are based on \emph{iterative refinement} of the residual error, hence the new error vector is a function of previous iterates \cite{wilkinson1994rounding,hestenes1952methods,akps19}. 
In fact, the proof of Theorems \ref{thm_low_acc_LB_informal} and \ref{thm:hard-omv-approx} is inspired precisely by this idea, see the technical overview below.

\

Finally, we prove a similar $d^{2-o(1)}$ amortized lower bound for \emph{high-accuracy} data structures   in the insertion-only setting, which shows that the accuracy  of our data structure from Theorem~\ref{thm:main_UB_informal} cannot be drastically improved:  

\begin{theorem}[Hardness of High-Precision Partially-Dynamic LSR, Informal] \label{thm_exact_LB_informal}
Assuming the $\omv$ Conjecture, any dynamic data structure that maintains an $\eps=1/\poly(T, d)$-approximate solution for the partially dynamic LSR over $T = \poly(d)$ iterations, must have $\Omega(d^{2-o(1)})$ amortized update time per iteration.
\end{theorem}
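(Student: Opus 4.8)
The plan is to give a direct fine-grained reduction from the \emph{exact} (Boolean) $\omv$ problem to high-accuracy partially-dynamic LSR, exploiting the fact that inserting one row performs a rank-one update of the Gram matrix, so a single ``probe'' insertion can reveal one online matrix-vector product. Fix the hard $\omv$ instance: a Boolean matrix $\bM \in \{0,1\}^{n\times n}$ and online queries $\bv^{(1)},\dots,\bv^{(n)} \in \{0,1\}^n$. I take $d = 2n$ and initialize the dynamic instance with the $2n$ rows of the block-triangular factor $\bA^{(0)} = \left(\begin{smallmatrix}\bI_n & -\bM\\ \mathbf{0} & \bI_n\end{smallmatrix}\right)$, with the label of each row set so that the exact solution is a fixed generic vector, say $\bx^{\star(0)} = \mathbf{1}_{2n}$ (achievable since $\bA^{(0)}$ is square and invertible, so one just sets $\bb^{(0)} = \bA^{(0)}\mathbf{1}$, whose entries are $\poly(n)$-bounded integers, and then $\OPT^{(0)} = 0$). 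A short computation gives $(\bA^{(0)\top}\bA^{(0)})^{-1} = \left(\begin{smallmatrix}\bI+\bM\bM^\top & \bM\\ \bM^\top & \bI\end{smallmatrix}\right)$, whose action on any vector supported on the last $n$ coordinates reports $\bM\cdot(\cdot)$ in the first $n$ coordinates. To answer the $i$-th $\omv$ query I insert the single probe row $\ba^{(i)} := \eps_0 (\mathbf{0}_n ; \bv^{(i)})$ with label $0$, for a tiny $\eps_0 = n^{-10}$, then read the data structure's maintained solution $\bx^{(i)}$ and invoke Sherman--Morrison: $\bx^{\star(i)} - \bx^{\star(i-1)} = -\frac{\ba^{(i)\top}\bx^{\star(i-1)}}{1 + \ba^{(i)\top}(\bG^{(i-1)})^{-1}\ba^{(i)}}\,(\bG^{(i-1)})^{-1}\ba^{(i)}$, where $\bG^{(i-1)}$ is the current Gram matrix. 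Dividing the first $n$ coordinates of $\bx^{(i)} - \bx^{(i-1)}$ by one of its last $n$ coordinates at an index where $v^{(i)}_j = 1$ cancels the unknown scalar multiplier and recovers $\bM\bv^{(i)}$ up to a small additive error; rounding to the nearest integer vector yields $\bM\bv^{(i)}$ (hence the Boolean product) exactly. This uses only $O(1)$ insertions and one query per $\omv$ round.

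Two error sources must be controlled. First, the data structure only returns an objective-approximate solution, $\|\bA^{(t)}\bx^{(t)} - \bb^{(t)}\|_2 \le (1+\eps)\OPT^{(t)}$; by the Pythagorean identity this gives $\|\bx^{(t)} - \bx^{\star(t)}\|_2 \le \sqrt{3\eps}\,\OPT^{(t)}/\sigma_{\min}(\bA^{(t)})$. Because every entry of $\bM$ is in $\{0,1\}$, $\kappa(\bA^{(0)}) = \poly(n)$ and the $\eps_0$-scaled probe rows neither spoil conditioning nor make $\OPT^{(t)}$ large (indeed $\OPT^{(t)} = O(\eps_0\poly(n))$), so $\|\bx^{(t)} - \bx^{\star(t)}\|_2 = \sqrt{\eps}\cdot\eps_0\cdot\poly(n)$, which is $\ll \eps_0$ once $\eps \le n^{-100}$ --- this is exactly where the high-accuracy hypothesis $\eps = 1/\poly(T,d)$ is used, and it is why the constant-accuracy data structure of Theorem~\ref{thm:main_UB_informal} does not contradict the bound. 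Second, since we are in the insertion-only setting we cannot delete probe rows, so after $i$ rounds $\bG^{(i)} = \bA^{(0)\top}\bA^{(0)} + \sum_{j\le i}\ba^{(j)}\ba^{(j)\top}$, and $(\bG^{(i-1)})^{-1}\ba^{(i)}$ differs from the clean value $(\bA^{(0)\top}\bA^{(0)})^{-1}\ba^{(i)} = \eps_0(\bM\bv^{(i)}; \bv^{(i)})$ by a Woodbury correction whose operator norm is $O\!\left(\|(\bA^{(0)\top}\bA^{(0)})^{-1}\|^2 \sum_{j\le i}\|\ba^{(j)}\|_2^2\right) = O(\eps_0^2 \poly(n))$, again $\ll \eps_0$. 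Thus all errors are negligible against the unit gaps between integer entries of $\bM\bv^{(i)}$, and the rounding step is correct.

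For the accounting: the reduction issues $T = 2n + n = \Theta(d)$ updates and $n$ queries, and everything outside the data-structure calls (writing down $\bA^{(0)}$, the $O(d)$-time extraction per round) costs $O(n^2)$ in total, negligible against $n^3$. Hence a dynamic LSR data structure with amortized update time $U$ yields an $\omv$ algorithm whose online phase runs in time $\widetilde{O}(T\cdot U + n^2) = \widetilde{O}(n\cdot U)$; under the $\omv$ Conjecture this is $n^{3-o(1)}$, so $U \ge n^{2-o(1)} = d^{2-o(1)}$. The hard $\omv$ instance is fixed in advance, so the generated update sequence is oblivious and the bound holds even against data structures that need only be correct against oblivious adversaries (a fortiori against adaptive ones); a standard $O(\log n)$-fold repetition boosts the per-step success probability so that a union bound over the $n$ rounds succeeds with probability $0.9$. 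To obtain the full stated range $T = \poly(d)$ rather than $T = \Theta(d)$, one chains $\poly(d)$ independent copies of the construction, restarting the regression between copies either on fresh coordinates or by re-inserting $\bA^{(0)}$ with geometrically growing weights; the amortized bound is unchanged. The one genuinely delicate point --- the main obstacle --- is precisely the insertion-only restriction: in a fully dynamic setting one would simply delete $\ba^{(i)}$ after reading the answer and keep $\bG^{(t)}$ pristine, whereas here the probes accumulate, which forces them to have tiny norm, which is in turn exactly what makes the reduction require a $1/\poly$-accurate solver. Quantifying this trade-off together with the objective-to-solution conversion through the condition number is the crux; note the contrast with Theorem~\ref{thm_low_acc_LB_informal}, where a constant-accuracy solver forces the gap-amplification / iterative-refinement machinery, while here the $1/\poly$-accuracy hypothesis lets the encoding be read off in a single shot.
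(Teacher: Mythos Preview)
Your proposal is correct and follows essentially the same route as the paper: insert probe rows of tiny norm, use Sherman--Morrison so that the change in the exact optimizer is (a scalar times) the current inverse-Gram applied to the probe, argue that the accumulated probes are too small to perturb the inverse Gram (the ``drift'' bound), and use the $1/\poly$ accuracy together with the condition-number bound to convert objective error into solution error. The differences are cosmetic: the paper first passes through the standard reduction to real PSD $\omv$ (Lemma~\ref{lem:omv-real}) and sets $\bA^\top\bA=\bH^{-1}$ with labels equal to $1$, which makes the Sherman--Morrison scalar $\approx 1$ so the answer is read off directly as $d^2\sqrt{T}(\bx^{(t)}-\bx^{(t-1)})$ with no ratio step; your block-triangular encoding is a bit more elementary (no matrix square root) but forces the extra division to cancel the unknown scalar. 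Both arguments need $\eps_0$ and $\eps$ polynomially small for exactly the reason you identify. One small cleanup: you do not need the chaining at the end---since the $\omv$ conjecture is stated for any $T\ge d$, just take $T$ probe rows with $\eps_0=1/\poly(T,d)$ and the same analysis goes through.
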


The three above theorems provide a rather complete characterization of the complexity of dynamic least squares regression.

\subsection{Related work}

\paragraph{Fine-grained complexity} The $\omv$ conjecture \cite{hkns} has originally been proposed as a unified approach to prove conditional lower bound for dynamic problems. It has broad applications to dynamic algorithms \cite{d16,bks17,jns19,lr21,jx22} and it is still widely open \cite{lw17, ckl18,aggs22,hs22}.
A few prior works \cite{acss20, cs17,bis17,bcis18} have shown fine-grained hardness of related matrix problems (e.g. kernel-density estimation, empirical risk minimization), based on the strong Exponential Time Hypothesis (SETH, see  \cite{ip01} and references therein). 
In contrast to Theorems \ref{thm_low_acc_LB_informal}, \ref{thm:hard-omv-approx}, all these works only establish hardness for exact or polynomially-small precision (i.e., $\eps=d^{-\omega(1)}$).

\paragraph{Least-squares regression in other models of computation}
The problem of (static) least-squares regression has a long history in TCS \cite{ac06,cw17,nn13,clmmps15,acw17}. Using dimensionality-reduction techniques (sketching or sampling) to precondition the input matrix, and running (conjugate) gradient descent, one can obtain an $\eps$-approximation solution in input-sparsity $\tilde{O}(\nnz(\bA)\log(1/\eps) + d^{\omega})$, see \cite{w14,w21} for a comprehensive survey.

The LSR problem has also been studied in different models, we briefly review here the most relevant literature. 
In the streaming model, \cite{cw09} gives the (tight) {\em space} complexity of $\tilde{\Theta}(d^2/\eps)$ when entries of the input are subject to changes.
In the online model, the input data arrives in an online streaming and \cite{cmp20} proposes the online row sampling framework, which stores $\tilde{O}(d/\eps^2)$ rows and maintain an $\eps$-spectral approximation of the input.
\cite{bdm+20} generalizes the guarantee to the {\em sliding window} model (among other numerical linear algebra tasks), where data still comes an online stream and but only the most recent updates form the underlying data set.

The focus of all aforementioned model is on the space (or the number of rows), instead of computation time, and they only work against an oblivious adversary.
Initiated by \cite{bjwy22}, a recent line of work \cite{bjwy22, hkm+22,wz22} aims to make streaming algorithm works against an adaptive adversary. 
As noted by \cite{workshop2021}, most existing results are for scalar output and it is an open question when the output is a large vector.
\cite{bhm+21} is most relevant to us and studies the online row sampling framework \cite{cmp20} (among other importance sampling approaches) against adaptive adversary, and prove it maintains an $\eps$-spectral approximation when storing $\tilde{O}(d^2\kappa^2/\eps^2)$ rows, where $\kappa = \sigma_{\max}/\sigma_{\min}$ is the condition number of the input. 
A key part of our algorithm is to give an improved analysis and reduce the number to $\tilde{O}(d^2\log (\kappa)/\eps^2)$.

Finally, we note the problem of {\em online regression} has been studied in the online learning literature \cite{h19}, where the goal is to minimize the total {\em regret}. this is very different from ours in that the main bottleneck  is {\em information-theoretic}, whereas the challenge in our loss-minimization problem is purely computational.

\paragraph{Comparison to the inverse-maintenance data structure in IPM} Similar dynamic regression problems have been considered in the literature of interior point methods (IPM) for solving linear programs (LP) \cite{cls21,b20,lsz19,jswz21,ls14,blss20,bll+21}. There the problem is to maintain $(\bA^\top \bW \bA)^{-1} \bA^\top \bW \bb$ for a slowly-changing diagonal matrix $\bW$. The aforementioned papers use sampling and sketching techniques to accelerate the amortized cost per iteration. The inverse-maintenance data structures in the IPM literature are solving a similar but incomparable dynamic LSR problem -- the updates in the IPM setting are adaptive \emph{fully dynamic} (i.e., general low-rank updates), and cannot recover the linear $\tilde{O}(d)$ update time of 
our data structure for \emph{incremental} row-updates (Theorem \ref{thm:main_UB_informal}).

%%%%%%%%%%%%%%%%%%%%%%%%%%%%%%%%%%%%%%%%%%%%%%%%%%%%%%%%%%%%%%%%%%%%%%%%
%%%%%%%%%%%%%%%%%%%%%%%%%%%%%%%%%%%%%%%%%%%%%%%%%%%%%%%%%%%%%%%%%%%%%%%%
%%%%%%%%%%%%%%%%%%%%%%%%%%%%%%%%%%%%%%%%%%%%%%%%%%%%%%%%%%%%%%%%%%%%%%%%

\section{Technical Overview}

In this section we provide a high-level overview of Theorems \ref{thm:main_UB_informal} and \ref{thm_low_acc_LB_informal}. 

\subsection{Lower bound for fully dynamic LSR}
We start from the lower bound in Theorem \ref{thm_low_acc_LB_informal} for fully dynamic $\eps$-LSR, where we prove that  a dynamic data structure  with 
truly sub-quadratic $d^{2-\Omega(1)}$ amortized update time, even for constant approximation $\eps = 0.01$, would break the $\omv$ Conjecture. The key challenge in this proof is that the $\omv$ Conjecture itself only asserts the hardness for \emph{exact} matrix-vector products (over the boolean semiring). Our reduction proceeds in a few steps, where a key intermediate step is introducing the \emph{online projection} problem:

\begin{restatable}[Online projection]{definition}{Onlineprojection}
\label{def:online-projection} In the online projection problem, the input is a fixed orthonormal matrix $\bU \in \R^{d\times d_1}$ ($d_1\in [d]$), and a sequence of vectors $\bz^{(1)}, \ldots, \bz^{(T)}$ that arrives in an online stream. The goal is to compute the projection of $\bz^\ttop$ onto the column space of $\bU$, i.e., $\bU \bU^\top \bz^{(t)}$, at each iteration $t\in [T]$, before $\bz^{(t+1)}$ is revealed. 
\end{restatable}

For notation convenience, we write $\bz = \bz_{\bU} + \bz_{\bU_{\perp}}$ where $\bz_{\bU}$ is the projection onto $\bU$ and $\bz_{\bU_{\perp}}$ is the projection onto the orthogonal space $\bU_{\perp} \in \R^{d\times (d-d_1) }$.

\subsubsection{Hardness of online projection}
We first prove $1/\poly(d)$-hardness of online projection via reduction from $\omv$. The $\omv$ conjecture asserts the hardness of matrix-vector multiplication $(\|\bH\bz^\ttop\|)$ over Boolean semi-ring, and it is easy to see that the lower bound continues to hold when (1) the matrix $\bH$ is positive semidefinite (PSD), (2) the computation is over real, and (3) one allows $1/d^2$ error, i.e., the output $\by^\ttop$ only needs to satisfy $\|\by^\ttop - \bH\bz^\ttop\|_2 \leq O(1/d^2)$ when one normalizes $\|\bH\|_2 = 1$ and $\|\bz^\ttop\|_2 = 1$.

The online projection problem is clearly easier than arbitrary (PSD) matrix-vector multiplications, and we prove the reverse direction is also true. That is, one can (approximately) simulate a matrix-vector product with $O(\log d)$ projection queries. Given a PSD matrix $\bH$, we first perform the eigenvalue decomposition $\bH = \bU \Sigma \bU^\top$ at the preprocessing step, where $\Sigma =\diag(\lambda_1, \ldots, \lambda_d)$ is a diagonal matrix. 
We perform a {\em binary division trick} over the spectral of $\bH$. Let $S_j \subseteq [d]$ include all column indices $i\in [d]$, such that the $j$-th significant bit of $\lambda_{i}$ is non-zero. Let $\bU(j) \in \R^{d\times |S_j|}$ take columns of $\bU$ from $S_j$, then $\bH\bz^{(t)} = \sum_{j=1}^{O(\log d)}\frac{1}{2^j} \cdot \bz_{\bU(j)}^\ttop \pm O(1/d^2)$, i.e., one can obtain an $O(1/d^2)$ approximation of $\bH\bz^{(t)}$ by querying $O(\log d)$ online projection instances, with precision $O(1/d^2)$.

\subsubsection{Hardness amplification} 
Our next step is to boost the hardness of approximation from $O(1/d^2)$ to some constant. In particular, we prove the online projection problem is hard even one only needs 
$
\|\by^\ttop - \bU\bU^\top \bz^\ttop\|_2 \leq \alpha \|\bU\bU^\top \bz^\ttop\|_2 + \beta,
$
where $\alpha = 1/3$ is the multiplicative error and $\beta = 1/d^3$ is a small additive error.

Given any vector $\bz$, to obtain an $O(1/d^2)$ approximation of $\bz_{\bU}$, we set up two online projection instances, $\mathbb{P}_\bU$ and $\mathbb{P}_{\bU_{\perp}}$, and we assume $\mathbb{P}_{\bU}$ (resp.~$\mathbb{P}_{\bU_\perp}$) returns an $(\alpha,\beta)$-approximation to the projection onto $\bU$ (resp.~$\bU_{\perp}$).
A natural idea is to query $\mathbb{P}_{\bU_{\perp}}$ and obtain 
\[
\bw = \mathbb{P}_{\bU_{\perp}}(\bz) = \bz_{\bU_\perp} + \bdelta \quad \text{where the error term} \quad \|\bdelta\|_2 \leq \alpha \|\bz_{\bU_{\perp}}\|_2  + \beta.
\]
Subtracting $\bw$ and considering $\bz' = \bz - \bw$, the orthogonal component decreases by a factor of $\alpha$ (i.e., $\|\bz_{\bU_\perp}'\|_2 \leq \alpha \|\bz_{\bU_\perp}\|_2 + \beta$) and one hopes to repeat it for $O(\log d)$ times to remove the orthogonal component (almost) completely. 
However, the projection component also gets contaminated, i.e., $\bz_{\bU}' = \bz_{\bU} - \bdelta_{\bU}$. Hence, we need to further ``purify'' $\bdelta$ and ensure $\bdelta_{\bU} \approx 0$. 
We obtain it by another $O(\log d)$ iterations of refinement.\footnote{This might sound circular at a first glance because our original goal is to remove $\bz_{\bU_\perp}$ and we reduce it to remove $\bdelta_{\bU}$. The difference is that it is fine to change $\bdelta_{\bU_\perp}$ by a small multiplicative factor when removing $\bdelta_{\bU}$.}

\paragraph{Final reduction} Our final reduction proceeds in $R = O(\log d)$ rounds and each round further contains $K = O(\log d)$ iterations.
\begin{itemize}
\item {\bf Outer loop.} For each round $r \in [R]$, we wish to find $\bw_{r}$ such that  
(1) $\bw_{r}$ has a negligible component in $\bU$, i.e., $\bw_{r, \bU} \approx 0$, and  
(2) $\bw_{r}$ is $\alpha'$-approximate to $\bz_{r}$ in the direction of $\bU_{\perp}$, i.e., $\|\bw_{r, \bU_\perp} - \bz_{r, \bU_\perp}\|_2 \leq \alpha' \|\bz_{r, \bU_\perp}\|_2$ for some constant $\alpha' < 1$. 
By taking $\bz_{r+1} = \bz_{r} - \bw_{r}$, one can prove that the $\bz_{r,\bU}$ component does not change and the orthogonal component $\bz_{r, \bU_\perp}$ decreases by a factor of $\alpha'$. Repeating for $R = O(\log n)$ would be sufficient.
\item {\bf Inner loop.} Within round $r$, recall we first invoke the projection $\mathbb{P}_{\bU_{\perp}}$ and obtain  
$\bw_{r, 0} = \mathbb{P}_{\bU_{\perp}}(\bz_r)$. 
In order to remove $\bw_{r, 0, \bU}$, we query the projection $\mathbb{P}_{\bU}$ and obtain 
$\by_{r, 1} = \mathbb{P}_{\bU}(\bw_{r,0})$, and $\bw_{r, 1} = \bw_{r, 0} - \by_{r,1}$. 
We have the guarantee that $\|\bw_{r, 1, \bU}\|_2 \leq \alpha \|\bw_{r, 0, \bU}\|_2$. Repeat the above step for $K = O(\log n)$ iterations, we have $\bw_{r, K, \bU}\approx 0$. We also need to control the component in $\bU_{\perp}$. We can show that $\bw_{r, K, \bU_{\perp}} = \bw_{r, 0,\bU_{\perp}} - \sum_{k=1}^{K-1}\by_{r, k, \bU_{\perp}}$, where the second term consists of a geometric decreasing sequence with rate $\alpha$, and one has $\bw_{r, K, \bU_{\perp}} = (1 \pm O(\alpha))\bz_{r, \bU_\perp}$.
\end{itemize}

We note the above reduction is adaptive in nature, because the query depends heavily on the algorithm's previous outputs.

\subsubsection{Reduction to fully dynamic LSR}
The final step is to reduce $(\alpha, \beta)$-online projection to fully dynamic $\eps$-LSR, with the following choice of parameters $\alpha = 1/3, \eps= 0.01$ and $\beta = 1/d^3$. A natural first attempt is to set the initial feature matrix $\bA^{(0)} = \bU_{\perp}^{\top} \in \R^{(d-d_1)\times d}$ and the labels $\mathbf{b}^{(0)} = \frac{1}{\sqrt{d}}\mathbf{1}_{d-d_1}$. This is an under-constrained linear system. Let 
$
\bx^{*} = (\bU_{\perp}\bU_{\perp}^\top)^{\dagger}\bU_{\perp}\mathbf{b}^{(0)} = \frac{1}{\sqrt{d}}\sum_{j=1}^{d-d_1}\bU_{\perp, j}
$
be the normal equation -- this is the solution with the least $\ell_2$ norm. Suppose we wish to project $\bz$ onto $\bU$, then one can insert a new row of $(\bz, 10)$ and the Normal equation becomes 
\[
\bx^{*}_{\new} = \bx^{*} +  \frac{10 - \langle \bz_{\bU_{\perp}}, \bx^{*} \rangle}{\|\bz_{\bU}\|_2^2} \bz_{\bU}.
\]
If the $\eps$-approximate solution $\bx'$ returned by the algorithm is indeed close to $\bx^{*}$, we can obtain a scaled version of $\bz_{\bU}$ by computing $\bx' - \bx^{*} \approx \bx^{*}_{\new} - \bx^{*} \propto \bz_{\bU}$. 

Unfortunately, there are infinitely many optimal solutions and an algorithm does not need to output the normal form solution.
For example, an algorithm could remember a random direction $\bv$ that is orthogonal to $\bU_{\perp}$ (at the preprocessing step) and run binary search on $\bx^{*} + \xi \cdot \bv$ to resolve the new constraint $\langle \bz , \bx \rangle = 10$. It only requires $O(d)$ time and returns an exact solution.

\paragraph{The importance of regularization} The above issue seems to be inherent of an under-constrained linear system.
To resolve it, we consider the ridge regression instead and add a small regularization term $\lambda \|\bx\|_2$ for $\lambda = 1/d^{40}$. 
Our reduction starts with $\|\bU_{\perp}^\top \bx - \mathbf{1}_{d-d_2}\|_2 + \lambda \|\bx\|_2$, and inserts/deletes the row $(\bz , 10)$ to compute the projection of $\bz$. For ridge regression, $\bx^{*}_{\new}$ is actually not the optimal solution (instead, it is very close to the unique optimal solution) but we would prove an $\eps$-approximate solution $\bx'$ needs to be very close to $\bx_{\new}^{*}$. Therefore, one can retrieve an $(\alpha, \beta)$-approximate projection from $\bx'$ and $\bx^{*}$. 

\paragraph{Missing technical consideration} We outline a few missing details of the above argument. First, the above argument (i.e., $\bx'$ is close to $\bx^{*}_{\new}$) goes through only if $\bz_{\bU}$ is not too small (e.g., $\|\bz_{\bU}\|_2 \geq 1/d^4$). We need to efficiently test the norm $\|\bz_{\bU}\|_2$ and output $\mathbf{0}$ when it is too small. 
Second, even if $\bx_{\new}^{*}$ and $\bx'$ are close, we can only obtain a scaled version of $\bz_{\bU}$. It is not oblivious to determine the right ``scale'' because of the (constant) approximation error. Instead, we run a line search and output the minimizer of $\arg\min_{\xi}\|\bz - \xi\cdot(\bx' - \bx^{*})\|$, we prove that it gives good approximation to $\bz_{\bU}$.

\subsection{Algorithm for partially dynamic LSR}
Next we provide an overview of our algorithm in Theorem \ref{thm:main_UB_informal} for partially dynamic LSR (with row insertions only). 
Let $\boldm^\ttop = (\ba^\ttop, \beta^{(t)})$ be the $t$-th row and $\bM^\ttop$ be the input matrix that concatenates these rows.
Our approach follows the online row sampling framework \cite{clmmps15, cmp20, bdm+20}: 
When a new row arrives, we sample and keep the new row with probability (approximately) proportional to the {\em online leverage score} $\tauo^\ttop:= (\boldm^\ttop)^\top ((\bM^\ttop)^{\top} \bM^\ttop)^{-1}\boldm^\ttop$. 
We output the closed-form solution on the sampled rows: It is an $\eps$-approximate solution of LSR as long as the sampled matrix $\wt{\bM}^\ttop$ is an $\eps$-spectral approximation to the input matrix $\bM^\ttop$.

\paragraph{Warm up: oblivious adversary}
If the algorithm faces an oblivious adversary, then \cite{cmp20} proves that keeping $O(d\log(\frac{\sigma_{\max}}{\min})/\eps^2)$ rows is enough for  $\eps$-spectral approximation.
It remains to bound the computation time.
Note a direct computation of the online leverage score takes $O(d^2)$ time per row-insertion, which gives no benefit over the classic Kalman's approach. In order to accelerate this computation, we use a JL-embedding trick to compress the matrix $(\bM^\ttop)^\top\bM^\ttop$ (note a similar trick has been used in \cite{ss11, blss20}) and it reduces the computation time from $O(d^2)$ to $O(d)$ per update.

\paragraph{Adversarial robustness of online leverage score sampling}  We need a counterpart of \cite{cmp20} for the more challenging adaptive adversary.
The recent work of \cite{bhm+21} made a first step toward adversarially robust row-sampling. However, comparing to the oblivious setting, their algorithm increases the number of sampled rows 
by a factor of $d (\frac{\sigma_{\max}}{\sigma_{\min}})^2$. Note that it has a polynomial dependence on the condition number $\frac{\sigma_{\max}}{\sigma_{\min}}$, which could be as large as $\poly(dT)$.\footnote{Indeed, if the input are drawn from isotropic Gaussian $\mathcal{N}(0, \mathbf{I}_d)$ but with one direction removed, then the condition number can be as large as $T$.}

\cite{bhm+21} considers an $\epsilon$-net over the unit vectors in $\R^d$, and for any $\bx$ in the $\epsilon$-net, they use Freedman's inequality to prove that the sampled matrix $\tilde{\bM}^{(t)}$ satisfies that $\|\tilde{\bM}^{(t)} \bx\|_2 \approx \|\bM^{(t)} \bx\|_2$ (this brings an $O(d)$ overhead using a union bound). In order to apply Freedman's inequality, they need an estimate of $\|\bM^{(t)} \bx\|_2$, which is unknown in advance since the rows of $\bM^{(t)}$ are chosen adaptively. \cite{bhm+21} directly bounds this norm by the singular values: $\sigma_{\min} \|\bx\|_2 \leq \|\bM^{(t)} \bx\|_2 \leq \sigma_{\max} \|\bx\|_2$, resulting in the additional $(\frac{\sigma_{\max}}{\sigma_{\min}})^2$ overhead.

We provide a new analysis that overcomes this limitation. Similar to \cite{bhm+21}, we also take a union bound over the $\epsilon$-net of unit vectors $\bx \in \R^d$ to reduce to the scalar case. The key difference is that when applying Freedman's inequality, we instead consider $O(\frac{\sigma_{\min}}{\sigma_{\max}})$ truncated martingales that each guesses the correct value of $\|\bM^{(t)} \bx\|_2$, and becomes 0 once the guess becomes inaccurate. We prove that each truncated martingale concentrates according to Freedman's inequality, and since one of the guesses must be close to the true value of $\|\bM^{(t)} \bx\|_2$, taking a union bound over these $O(\frac{\sigma_{\min}}{\sigma_{\max}})$ martingales results in only an $O(\log(\frac{\sigma_{\max}}{\sigma_{\min}}))$ overhead. We believe this can also be used to improve other importance sampling schemes in \cite{bhm+21}, which we leave for future work. 

\paragraph{Robustness of JL estimation} Finally, we also need to prove the JL trick is adversarially robust. To this end, we renew the JL sketch for each sampled row. 
The sampling probability computed using the JL estimate is always an overestimate of the true online leverage score. Consequently, whenever our algorithm omits a row, the ideal algorithm using the exact online leverage scores also omits that row. This means the randomness of the JL matrix is not leaked until a new row is sampled, at which point we refresh the JL matrix.

\section{Preliminary}
\label{sec:formulation}

\paragraph{Notations}
Let $[n]=\{1,2,\cdots,n\}$ and $[n_1: n_2] = \{n_1, \ldots, n_2\}$. 
For any $x, y \in \R$, we write $x = y \pm \eps$ if $|x - y| \leq \eps$, for two vectors $\bx, \by \in \R^d$, $\bx = \by \pm \eps$ means $\|\bx - \by\|_2 \leq \eps$.
For any matrix $\bA \in \R^{n\times d}$ ($n \geq d$), let $\sigma_1(\bA)\geq \cdots \geq \sigma_d(\bA) \geq 0$ be its singular value, and $\kappa(\bA) = \frac{\sigma_1(\bA)}{\sigma_d(\bA)}$ be the condition number. When $\bA$ is a symmetric matrix, we use $\lambda_1(\bA) \geq \cdots \geq \lambda_d(\bA)$ to denote its eigenvalue.
Let $\nnz(\bA)$ be the number of non-zero entries of a matrix $\bA$.
We use $\|\cdot\|$ to denote the spectral norm, i.e. $\|\bA\| = \max_{\bx\in \R^d, \|\bx\|_2 =1}\|\bA \bx\|_2$ and $\|\cdot\|_{\mathsf{F}}$ to denote the Frobenius norm. We use $\ker[\bA]$ and $\im[\bA]$ to denote the kernel space and the column space of $\bA$. We use $\bA_{i,*}$ and $\bA_{*,j}$ to denote the $i$-th row and the $j$-th column of $\bA$. For two sets $S \subseteq [n], R \subseteq [d]$, we use $\bA_{S,*}$ and $\bA_{*,R}$ to denote the submatrix of $\bA$ obtained by taking the rows in $S$ or taking the columns in $R$.
The $d$-dimensional identity matrix is denoted as $\bI_d$, and we use $\mathbf{1}_d$ (resp. $\mathbf{0}_d$) to denote the all one (resp. all zero) vectors. For a vector $\bv \in \R^d$, denote $\diag(\bv) \in \R^{d \times d}$ as a diagonal matrix whose diagonal entries are $\bv$.

\subsection{Model}\label{sec:model}
We formally define the problem of fully dynamic least-squares regression.
\begin{definition}[Fully dynamic least-squares regression]\label{def:fully_dynamic_l2_regression}
Let $d$ be an integer, $T = \poly(d)$ be the total number of updates and $\eps \in [0,0.5)$ be the precision.
In the fully dynamic least-squares regression problem:
\begin{itemize}
    \item The data structure is given a matrix $\bA^{(0)} \in \R^{(d+1) \times d}$ and a vector $\bb^{(0)} \in \R^{n}$ in the preprocessing phase.
    \item For each iteration $t \in [T]$, the algorithm receives one of the following two updates:
    \begin{itemize}
    \item Incremental update: The update is $(\ba, \beta) \in \R^d \times \R$, and the matrix and the vector are updated to be $\bA^\ttop := [(\bA^{(t-1)})^{\top}, \ba]^{\top}$ and $\bb^\ttop := [(\bb^{(t-1)})^{\top}, \beta]^{\top}$.
    \item Decremental update: The update is a row index $i$, and the matrix $\bA^\ttop$ is $\bA^{(t-1)}$ with its $i$-th row deleted, and the vector $\bb^\ttop$ is $\bb^{(t-1)}$ with its $i$-th entry deleted.
    \end{itemize}
\end{itemize}
We say an algorithm solves $\epsilon$-approximate fully dynamic least squares regression if it outputs an $\epsilon$-approximate solution $\bx^{(t)} \in \R^d$ at every iteration $t \in [T]$:
    \[
    \|\bA^{(t)} \bx^{(t)} - \bb^{(t)}\|_2 \leq (1+\eps) \min_{\bx \in \R^d} \|\bA^{(t)} \bx - \bb^{(t)} \|_2.
    \]
\end{definition}

We use $\bM^{(t)} := [\bA^{(t)}, \bb^{(t)}]$ to denote the concatenation of input feature and their labels. 

We can similarly define a partially dynamic least-squares regression problem that allows insertion or deletion updates only. In our paper, we only study the incremental model.\footnote{In the streaming literature, it is also called the online model. We call it incremental model, to emphasize that the computation cost is the major consideration (instead of space, or the number of rows been kept).}

\begin{definition}[Partially dynamic least-squares regression, incremental update]
A partially dynamic least-squares regression is formalized similarly as Definition \ref{def:fully_dynamic_l2_regression}, with incremental update only. Moreover, we assume the least singular value of $\bM^{(0)}$ is at least $\sigma_{\min} > 0$, and the largest singular value of $\bM^{(T)}$ is at most $\sigma_{\max}$. 
\end{definition}

\begin{remark}[Singular value, incremental model]
We assume the least (and largest) singular value of data matrix $\bM^\ttop$ is bounded, this is standard in the literature (see \cite{cmp20, bhm+21}), and one should think of $1/\poly(T,d) \leq \sigma_{\min} \leq \sigma_{\max} \leq \poly(T, d)$. 
In practice, we can always add a polynomially small regularization term to ensure the smallest singular value of $\bM^{(0)}$ is at least $1/\poly(T, d)$. The largest singular value of $\bM^{(T)}$ is bounded by $\poly(T, d)$ as long as each entry has polynomially bounded value. 
\end{remark}

\begin{remark}[Preprocessing]
In the definition of both fully and partially dynamic LSR, we assume the problem is initialized with a full rank data matrix $\bM^{(0)} \in \R^{(d+1)\times(d+1)}$. This is wlog if one allows polynomial preprocessing time.
\end{remark}

Next, we state the (standard) notion of the {\em oblivious} adversary and the {\em adaptive} adversary.
\begin{definition}[Adversary model]
Two adversary models are of consideration:
\begin{itemize} 
\item {\bf Oblivious adversary.} For an oblivious adversary, the insertion/deletion updates are independent of algorithm's output. 
\item {\bf Adaptive adversary.} An adaptive adversary could choose the new update base on the algorithm's previous output. That is, at the $t$-th update, the new row $(\ba, \beta)$ of an insertion or the index of a deletion update could be a function of $\bx^{(1)}, \ldots, \bx^{(t-1)}$.
\end{itemize}
\end{definition}

\subsection{Mathematical tools}
\label{sec:basic_numerical_linear_algebra}

A positive semidefinite (PSD) matrix $\bA\in \R^{d\times d}$ is symmetric and satisfies $\bx^{\top}\bA\bx\geq 0$ for all $\bx\in \R^{d}$.
We write $\bA \succeq 0$ to denote that $\bA$ is PSD, and we write $\bB \succeq \bA$ to denote that $\bB - \bA$ is PSD. 

\begin{definition}[Spectral approximation]
For two symmetric matrices $\bA, \wt{\bA} \in \R^{n \times n}$, we say that $\wt{\bA}$ and $\bA$ are $\eps$-spectral approximations of each other (denoted as $\wt{\bA} \approx_{\epsilon} \bA$) if
\begin{align*}
    (1 - \epsilon) \cdot \bA \preceq \wt{\bA} \preceq (1 + \epsilon) \cdot \bA.
\end{align*}
\end{definition}

\paragraph{Online leverage scores}  
In the incremental model, rows arrive in online fasion and the online leverage score of the $t$-th row $\boldm^\ttop$ equals
\begin{align*}
    \tauo^\ttop:= (\boldm^\ttop)^\top (((\bM)^{(t-1)})^\top \bM^{(t-1)})^{-1} \boldm^\ttop.
\end{align*}

Online leverage scores satify the following property. Its proof is delayed to Appendix~\ref{sec:pre-app}.
\begin{fact}[Property of online leverage score]\label{fact:online_leverage_score}
For any $t\in [T]$, we have
\[
 \boldm^\ttop (\boldm^\ttop)^{\top}  \preceq \tauo^\ttop \cdot  ((\bM)^{(t-1)})^\top \bM^{(t-1)}.
\]
\end{fact}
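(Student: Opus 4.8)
\textbf{Proof proposal for Fact~\ref{fact:online_leverage_score}.}

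The plan is to unfold the definition of the online leverage score and recognize the claimed matrix inequality as an instance of a standard fact about rank-one terms relative to a PSD matrix containing them. Write $\bN := ((\bM)^{(t-1)})^\top \bM^{(t-1)}$ and $\bv := \boldm^\ttop$, so $\tauo^\ttop = \bv^\top \bN^{-1} \bv$ and the goal is to show $\bv \bv^\top \preceq (\bv^\top \bN^{-1} \bv)\, \bN$. Since we assume the initial data matrix $\bM^{(0)}$ is full rank (see the preprocessing remark), $\bN$ is positive definite and invertible, so $\bN^{1/2}$ and $\bN^{-1/2}$ are well-defined. The natural move is to change variables: multiply the desired inequality on both sides by $\bN^{-1/2}$, which reduces it to showing $\bu \bu^\top \preceq \|\bu\|_2^2 \cdot \bI_d$ where $\bu := \bN^{-1/2}\bv$. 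This last statement is immediate — for any $\bx$, $\langle \bx, \bu\rangle^2 \le \|\bx\|_2^2 \|\bu\|_2^2$ by Cauchy–Schwarz — and conjugating back by $\bN^{1/2}$ recovers the claim, using that $\bN^{1/2}(\bu\bu^\top)\bN^{1/2} = \bv\bv^\top$ and $\bN^{1/2}(\|\bu\|_2^2 \bI_d)\bN^{1/2} = (\bv^\top \bN^{-1}\bv)\bN = \tauo^\ttop \bN$.

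Concretely, the steps I would carry out, in order, are: (i) recall $\bN = ((\bM)^{(t-1)})^\top \bM^{(t-1)} \succ 0$ from the full-rank preprocessing assumption, so $\bN^{-1}$ and $\bN^{\pm 1/2}$ exist; (ii) state the elementary lemma that for any vector $\bu$ one has $\bu\bu^\top \preceq \|\bu\|_2^2 \bI$, proved by Cauchy–Schwarz applied to $\bx^\top \bu$; (iii) apply it with $\bu = \bN^{-1/2}\boldm^\ttop$; (iv) conjugate both sides by $\bN^{1/2}$, using that $\bA \preceq \bB$ implies $\bC^\top \bA \bC \preceq \bC^\top \bB \bC$ for any $\bC$, and simplify the two sides to $\boldm^\ttop(\boldm^\ttop)^\top$ and $\tauo^\ttop \bN$ respectively. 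One subtlety worth a sentence: if $\bN$ is only PSD and not strictly positive definite (e.g., if one did not assume full-rank preprocessing), one should instead work with the pseudoinverse $\bN^{\dagger}$ and restrict attention to $\boldm^\ttop \in \im[\bN]$, which holds in the incremental model since $\boldm^\ttop$ is a row appended to $\bM^{(t-1)}$ only after round $0$ — but given the stated preprocessing convention this case does not arise, so I would just note it in passing.

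There is essentially no hard step here; the only thing to be careful about is making the change of variables rigorous (ensuring $\bN^{-1/2}$ is legitimately defined, which is exactly where the full-rank preprocessing assumption is used) and correctly tracking that $\tauo^\ttop$ is a scalar so it passes freely through the conjugation. I would write the argument in three or four lines of display math and defer it to the appendix as the paper already indicates.
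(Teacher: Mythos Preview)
Your proposal is correct and takes essentially the same approach as the paper: both introduce $\bN = ((\bM)^{(t-1)})^\top \bM^{(t-1)}$, pass to $\bN^{-1/2}\boldm^\ttop$, and reduce the claim to Cauchy--Schwarz. The only cosmetic difference is that the paper tests the matrix inequality against an arbitrary $\bx$ and substitutes $\by = \bN^{1/2}\bx$, whereas you conjugate the matrix inequality directly by $\bN^{\pm 1/2}$; these are the same argument.
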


Least squares regressions can be solved approximately using spectral approximations. See Section~2.5 of \cite{w14} for details.
\begin{lemma}[Approximate least squares regression from spectral approximation]\label{lem:approx_l2_regression_from_spectral_approx}
Given $\bA \in \R^{n \times d}$ and $\bb \in \R^n$, define a matrix $\bM := [\bA, \bb] \in \R^{n \times (d+1)}$. Let $\bD \in \R^{n' \times n}$ be a subspace embedding that satisfies $\bM^{\top} \bD^{\top} \bD \bM \approx_{\epsilon} \bM^{\top} \bM$.
Define $\bx \in \R^d$ to be
\begin{align*}
    \bx := \arg \min_{\bx' \in \R^d} \|\bD \bA \bx' - \bD \bb\|_2.
\end{align*}
Then with probability at least $1 - \delta$, $\bx$ satisfies
\begin{align*}
\|\bA \bx - \bb\|_2 \leq (1 + \epsilon) \min_{\bx' \in \R^d} \|\bA \bx' - \bb\|_2.
\end{align*}
\end{lemma}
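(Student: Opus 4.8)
The plan is to reduce the regression objective to a single quadratic form in the augmented matrix $\bM=[\bA,\bb]$, so that the hypothesis $\bM^{\top}\bD^{\top}\bD\bM \approx_{\epsilon} \bM^{\top}\bM$ translates, essentially verbatim, into a two-sided multiplicative comparison between the true cost $\|\bA\bx'-\bb\|_2$ and the sketched cost $\|\bD\bA\bx'-\bD\bb\|_2$, uniformly over all candidates $\bx'$. The point of folding the label vector $\bb$ into $\bM$ is precisely that this lets us express an \emph{affine} function of $\bx'$ as a \emph{linear} function of an extended vector, on which the PSD inequality can be applied.

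Concretely, I would first record the identity that for every $\bx'\in\R^d$, setting $\by' := (\bx'^{\top},-1)^{\top}\in\R^{d+1}$ gives $\bA\bx'-\bb=\bM\by'$, hence $\|\bA\bx'-\bb\|_2^2 = \by'^{\top}(\bM^{\top}\bM)\by'$ and $\|\bD\bA\bx'-\bD\bb\|_2^2 = \by'^{\top}(\bM^{\top}\bD^{\top}\bD\bM)\by'$. Conditioning on the probability-$\ge 1-\delta$ event that $\bD$ realizes the stated spectral approximation, the PSD sandwich $(1-\epsilon)\bM^{\top}\bM \preceq \bM^{\top}\bD^{\top}\bD\bM \preceq (1+\epsilon)\bM^{\top}\bM$, evaluated on the vector $\by'$, yields $(1-\epsilon)\|\bA\bx'-\bb\|_2^2 \le \|\bD\bA\bx'-\bD\bb\|_2^2 \le (1+\epsilon)\|\bA\bx'-\bb\|_2^2$ for every $\bx'$ simultaneously.

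Next I would chain this through the optimality of $\bx$ for the sketched program. Let $\bx^{\star}$ minimize $\|\bA\bx'-\bb\|_2$. Since $\bx$ minimizes $\|\bD\bA\bx'-\bD\bb\|_2$, we get $(1-\epsilon)\|\bA\bx-\bb\|_2^2 \le \|\bD\bA\bx-\bD\bb\|_2^2 \le \|\bD\bA\bx^{\star}-\bD\bb\|_2^2 \le (1+\epsilon)\|\bA\bx^{\star}-\bb\|_2^2$, so that $\|\bA\bx-\bb\|_2 \le \sqrt{(1+\epsilon)/(1-\epsilon)}\,\|\bA\bx^{\star}-\bb\|_2$. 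Finally I would absorb the constant: for $\epsilon$ bounded away from $1$ one has $\sqrt{(1+\epsilon)/(1-\epsilon)}\le 1+O(\epsilon)$, so invoking the whole argument with the embedding distortion set to a small constant multiple of $\epsilon$ (e.g.\ $\epsilon/2$) gives the stated $(1+\epsilon)$ guarantee; the lemma is understood up to this harmless rescaling. There is no real obstacle in this proof — the only point to get right is that the comparison inequality must be applied on the affine slice $\{\by'\in\R^{d+1} : \by'_{d+1}=-1\}$ (hence the augmentation to $\bM$), after which everything is a one-line PSD estimate plus the definition of $\bx$.
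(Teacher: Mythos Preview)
Your proposal is correct and is exactly the standard argument the paper defers to (it cites Section~2.5 of Woodruff's survey rather than giving its own proof): fold $\bb$ into $\bM$, apply the PSD sandwich on the affine slice $\by'=(\bx'^\top,-1)^\top$, and chain through the sketched optimality of $\bx$. Your remark about the $\sqrt{(1+\epsilon)/(1-\epsilon)}$ versus $(1+\epsilon)$ constant, handled by a harmless rescaling of $\epsilon$, is the right way to close the gap.
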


We note the standard least squares regression has a closed-form solution.
\begin{fact}[Closed-form formula for least squares regression]
\label{fact:closed-form}
Let $n \geq d$ be two integers. For any matrix $\bA \in \R^{n \times d}$ with rank $d$, and any vector $\bb \in \R^d$, the vector 
$\bx^* := \bA^{\dagger} \cdot \bb = (\bA^{\top} \bA)^{-1} \bA^{\top} \cdot \bb$ satisfies
\begin{align*}
    \|\bA \bx^* - \bb\|_2 = \min_{\bx \in \R^d} \|\bA \bx - \bb\|_2.
\end{align*}
\end{fact}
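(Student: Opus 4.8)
The plan is to reduce the problem to the normal equations and then use either convexity or a direct orthogonality argument. First I would observe that since $t\mapsto\sqrt{t}$ is monotone increasing on $[0,\infty)$, minimizing $\|\bA\bx-\bb\|_2$ is equivalent to minimizing the squared objective $f(\bx):=\|\bA\bx-\bb\|_2^2 = \bx^\top\bA^\top\bA\bx - 2\bb^\top\bA\bx + \bb^\top\bb$. This is a quadratic form whose Hessian $2\bA^\top\bA$ is positive semidefinite, so $f$ is convex and every stationary point is a global minimizer.

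Next I would compute $\nabla f(\bx) = 2\bA^\top\bA\bx - 2\bA^\top\bb$ and set it to zero, obtaining the normal equations $\bA^\top\bA\bx = \bA^\top\bb$. Because $\bA$ has full column rank $d$, the matrix $\bA^\top\bA\in\R^{d\times d}$ is invertible: if $\bA^\top\bA\bv = 0$ then $\|\bA\bv\|_2^2 = \bv^\top\bA^\top\bA\bv = 0$, hence $\bA\bv = 0$, hence $\bv = 0$ by injectivity of $\bA$. Thus the unique stationary point is $\bx^\star = (\bA^\top\bA)^{-1}\bA^\top\bb = \bA^\dagger\bb$, which by convexity of $f$ is the unique global minimizer.

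Alternatively — and this is the version I would actually write, since it sidesteps calculus and exposes the geometry — I would verify directly that the residual $\br^\star := \bA\bx^\star - \bb$ is orthogonal to the column space of $\bA$: indeed $\bA^\top\br^\star = \bA^\top\bA(\bA^\top\bA)^{-1}\bA^\top\bb - \bA^\top\bb = 0$. Then for any $\bx\in\R^d$, write $\bA\bx-\bb = \bA(\bx-\bx^\star) + \br^\star$; the first term lies in the column space of $\bA$ while $\br^\star$ is orthogonal to it, so the Pythagorean identity gives $\|\bA\bx-\bb\|_2^2 = \|\bA(\bx-\bx^\star)\|_2^2 + \|\br^\star\|_2^2 \ge \|\br^\star\|_2^2 = \|\bA\bx^\star-\bb\|_2^2$, with equality at $\bx=\bx^\star$. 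Taking square roots completes the proof.

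There is no genuine obstacle here — the statement is classical — but the one step that must be made explicit is the invertibility of $\bA^\top\bA$, which is precisely where the hypothesis $\rank(\bA)=d$ is used; without it $(\bA^\top\bA)^{-1}$ is undefined and one would instead argue via the pseudoinverse and the column space directly. I would therefore state that implication as its own line rather than leaving it implicit.
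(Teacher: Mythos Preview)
Your proof is correct and complete; both the convexity/normal-equations route and the orthogonality/Pythagorean argument are standard and valid. The paper itself does not prove this statement at all --- it is stated as a \textbf{Fact} without proof in the preliminaries (Section~\ref{sec:basic_numerical_linear_algebra}), so there is nothing to compare against.
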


\section{Lower bound for fully dynamic LSR}
\label{sec:fully}
In this section we prove that the fully dynamic LSR requires $\Omega(d^{2-o(1)})$ time per update to solve to constant accuracy under the $\omv$ conjecture.

\begin{theorem}[Hardness for fully dynamic LSR, formal version of Theorem \ref{thm_low_acc_LB_informal}]
\label{thm:lower-full}
Let $\gamma > 0$ be any constant. Let $d$ be the input dimension and $T = \poly(d)$ be the total number of update. Let $\eps < 0.01$ be any constant, assuming the $\omv$ conjecture is true, then any algorithm that maintains an $\eps$-approximate solution for fully dynamic least-squares regression problem requires amortized running time at least $\Omega(d^{2 -\gamma})$.
\end{theorem}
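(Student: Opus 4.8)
# Proof Proposal for Theorem~\ref{thm:lower-full}

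\textbf{Overall strategy.} The plan is to execute the chain of reductions sketched in the technical overview: from exact Boolean $\omv$, to PSD matrix-vector products over $\R$ with $O(1/d^2)$ additive error (easy observations about the $\omv$ conjecture), to $1/\poly(d)$-accurate \emph{online projection} (Definition~\ref{def:online-projection}), to $(\alpha,\beta)$-approximate online projection with $\alpha = 1/3$, $\beta = 1/d^3$ (the hardness-amplification step), and finally to fully-dynamic $\eps$-LSR for constant $\eps < 0.01$. Since $\omv$ requires $n^{3-o(1)}$ time on $T = \Theta(n)$ online vectors, and each reduction blows up the number of LSR updates by only a $\poly\log(d)$ factor and preserves dimension up to constants, a data structure with $O(d^{2-\gamma})$ amortized update time would yield an $\omv$ algorithm running in $d \cdot d^{2-\gamma} \cdot \poly\log(d) = d^{3-\gamma+o(1)}$ time, a contradiction.

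\textbf{Step 1: From $\omv$ to approximate PSD matrix-vector products.} First I would record that the $\omv$ lower bound is robust: given a hard Boolean instance $(\bH, \bv^{(1)}, \ldots, \bv^{(n)})$, one may assume $\bH$ is symmetric PSD (e.g.\ replace $\bH$ by $\begin{psmallmatrix} 0 & \bH \\ \bH^\top & 0 \end{psmallmatrix} + c\bI$ and recover $\bH\bv$ from block structure), work over $\R$, and tolerate additive error $O(1/d^2)$ after normalizing $\|\bH\| = 1$, $\|\bv^{(i)}\| = 1$ --- rounding such an output recovers the exact Boolean product since nonzero Boolean entries are $\ge 1$. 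Then, as in the overview, compute the eigendecomposition $\bH = \bU\Sigma\bU^\top$ once in preprocessing, and for each of the $O(\log d)$ significant bits $j$ of the eigenvalues, let $\bU(j)$ collect the eigenvectors whose $j$-th eigenvalue-bit is one; then $\bH\bz = \sum_j 2^{-j}\, \bU(j)\bU(j)^\top \bz \pm O(1/d^2)$, so an $O(1/d^2)$-accurate online projection oracle, called $O(\log d)$ times per online vector, simulates the PSD product. This gives $1/\poly(d)$-hardness of online projection.

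\textbf{Step 2: Hardness amplification for online projection.} This is the crux and the step I expect to be the main obstacle. Following the two-level refinement scheme in the overview, I would set up two $(\alpha,\beta)$-approximate online-projection instances $\mathbb{P}_{\bU}$ and $\mathbb{P}_{\bU_\perp}$ with $\alpha = 1/3$, $\beta = 1/d^3$, and for a target vector $\bz$ run $R = O(\log d)$ outer rounds, each containing $K = O(\log d)$ inner iterations. In round $r$, setting $\bw_{r,0} = \mathbb{P}_{\bU_\perp}(\bz_r)$ and iteratively $\by_{r,k+1} = \mathbb{P}_{\bU}(\bw_{r,k})$, $\bw_{r,k+1} = \bw_{r,k} - \by_{r,k+1}$, one drives the $\bU$-component of $\bw_{r,k}$ down geometrically (rate $\alpha$) so $\bw_{r,K,\bU}\approx 0$, while the $\bU_\perp$-component is only perturbed multiplicatively: $\bw_{r,K,\bU_\perp} = (1\pm O(\alpha))\bz_{r,\bU_\perp}$. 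Taking $\bz_{r+1} = \bz_r - \bw_{r,K}$ leaves $\bz_{r,\bU}$ essentially fixed and shrinks $\bz_{r,\bU_\perp}$ by a constant factor $\alpha' < 1$, so after $R$ rounds $\bz_{R,\bU_\perp}\approx 0$ and $\bz - \bz_R \approx \bz_{\bU}$ to accuracy $O(1/d^2)$. The delicate part is the bookkeeping of how the additive errors $\beta$ and the cross-contamination terms accumulate across the $RK = O(\log^2 d)$ oracle calls and interact with the (possibly small) magnitude of $\bz_{\bU}$; one needs $\beta = 1/d^3$ (and the $\poly(d)$ slack from normalization) to ensure the total error stays $O(1/d^2)$. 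Because every query is built from previous oracle outputs, the resulting hard instance is inherently adaptive.

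\textbf{Step 3: From approximate online projection to fully-dynamic $\eps$-LSR.} Finally I would implement an $(\alpha=1/3, \beta=1/d^3)$-online projection query using $O(1)$ fully-dynamic LSR updates with $\eps < 0.01$, via the \emph{ridge-regularized} construction of the overview: initialize the regression $\|\bU_\perp^\top \bx - \tfrac{1}{\sqrt d}\mathbf{1}\|_2 + \lambda\|\bx\|_2$ with $\lambda = 1/d^{40}$, whose (unique) minimizer $\bx^\star$ is computed once; to project $\bz$, insert the row $(\bz, 10)$, read the data structure's $\eps$-approximate solution $\bx'$, then delete the row. Regularization forces the $\eps$-approximate solution to be close to the true new minimizer $\bx^\star_{\new} \approx \bx^\star + c\,\bz_{\bU}$ (uniqueness is what fails in the unregularized under-constrained system, where an adversarial algorithm could cheat with an $O(d)$-time exact-but-useless answer). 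From $\bx' - \bx^\star$ one recovers a scaled copy of $\bz_{\bU}$; to fix the unknown scale introduced by the constant error I would, as noted in the overview, output $\arg\min_\xi \|\bz - \xi(\bx' - \bx^\star)\|_2$, and separately test whether $\|\bz_{\bU}\|_2 \ge 1/d^4$ (returning $\mathbf 0$ otherwise, which is correct up to the allowed additive error) so that the "$\bx'$ close to $\bx^\star_{\new}$" guarantee is only invoked when $\bz_{\bU}$ is non-negligible. Composing the three steps and tracking that the total number of LSR updates is $n \cdot O(\log^3 d)$ while the dimension is $\Theta(d)$ yields the claimed $\Omega(d^{2-\gamma})$ amortized lower bound under $\omv$ for every constant $\gamma > 0$.
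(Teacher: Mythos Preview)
Your proposal is correct and follows essentially the same approach as the paper: the same four-step reduction chain (Boolean $\omv$ $\to$ $O(1/d^2)$-approximate PSD $\omv$ $\to$ $O(1/d^2)$-approximate online projection $\to$ $(1/3,1/d^3)$-approximate online projection $\to$ $\eps$-fully dynamic LSR), the same binary-division trick on eigenvalues, the same two-level iterative refinement for amplification, and the same ridge-regularized LSR instance with line search for scale recovery. One small slip: after $R$ rounds it is $\bz_R$ itself (not $\bz-\bz_R$) that approximates $\bz_{\bU}$, since $\bz_{r,\bU}$ is preserved while $\bz_{r,\bU_\perp}\to 0$; and the ``efficient test for $\|\bz_{\bU}\|_2 \ge 1/d^4$'' is realized in the paper via the size of $\|\bx'-\bx^\star\|_2$ and the residual $|10-\langle\bz,\bx'\rangle|$ rather than as a separate oracle call.
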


The $\omv$ conjecture was originally proposed by \cite{hkns}. In this paper, we work on the standard Word RAM model with word size $O(\log n)$.
\begin{conjecture}[$\omv$ conjecture, \cite{hkns}]
\label{conj:omv}
Let $\gamma > 0$ be any constant. Let $d$ be an integer and $T \geq d$. 
Let $\bB\in \{0,1\}^{d \times d}$ be a Boolean matrix. A sequence of Boolean vectors $\bz^{(1)}, \ldots, \bz^{(T)} \in \{0,1\}^d$ are revealed one after another, and an algorithm solves the $\omv$ problem if it returns the Boolean matrix-vector product $\bB \bz^{(t)} \in \R^d$ after receiving $\bz^\ttop$ at the $t$-th step.
The conjectures states that there is no algorithm that solves the $\omv$ problem using $\poly(d)$ preprocessing time and $O(d^{2-\gamma})$ amortized query time, and has an error probability $\leq 1/3$.
\end{conjecture}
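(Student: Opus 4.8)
The final displayed statement is the $\omv$ Conjecture itself (Conjecture~\ref{conj:omv}), which the paper explicitly flags as ``still widely open'' and which it \emph{assumes} rather than proves; strictly speaking there is no proof to sketch, so what I offer instead is the plan of attack one would realistically pursue and an honest account of where it is blocked. The most natural route is a \emph{conditional} proof via an adaptive fine-grained reduction: reduce a canonical problem --- $3$SUM, Orthogonal Vectors (hence SETH), or APSP --- to online matrix--vector multiplication, so that a truly-subquadratic amortized $\omv$ algorithm refutes the source hypothesis. First I would try to route the reduction through the \emph{approximate} $\omv$ problem over the reals established here (Theorem~\ref{thm:hard-omv-approx}), since constant-relative accuracy puts $\omv$ within reach of dimensionality-reduction and polynomial-method machinery that is unavailable for the exact Boolean version. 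The obstruction, emphasized in the introduction, is that essentially every reduction in the fine-grained literature is \emph{non-adaptive}: it writes down all query vectors in advance, and hence cannot distinguish the online setting from the offline one, where FMM solves the entire batch in $d^{\omega}\ll d^{3}$ time. The only known adaptive exception is the triangle-detection reduction of \cite{WW18}. The crux would be to build a reduction whose query vectors are genuinely produced one at a time --- using gap-amplification / iterative-refinement gadgets of the flavor developed in this paper --- while keeping the per-query overhead below $d^{2-\gamma}$; nobody currently knows how to satisfy adaptivity and time-efficiency at once.

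A second route is an \emph{unconditional} lower bound in a restricted model: prove the $d^{3-o(1)}$ barrier directly in the cell-probe model, or in an algebraic / linear-circuit model where $\omv$ is a bilinear form. Here the obstacle is purely quantitative. The strongest known cell-probe lower bounds for $\omv$-style online problems, obtained by chronogram and information-transfer arguments, are only of the form $d^{1+c}$ for small constants $c$ --- polynomially short of the conjectured near-cubic bound --- and pushing any explicit problem to a super-linear, let alone near-cubic, \emph{unconditional} time lower bound is far beyond present technique. A third route is internal to $\omv$: bootstrap from a weaker variant ($\mathsf{Mv}$, $\mathsf{uMv}$, or vector--matrix--vector) via the equivalences of \cite{hkns}, or amplify hardness of approximate $\omv$ back to exact $\omv$. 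But such intra-$\omv$ manipulations \emph{preserve} rather than manufacture the adaptivity that is precisely what is missing when one tries to connect $\omv$ to the other fine-grained conjectures, so they relocate the difficulty without dissolving it.

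In summary, this paper does not resolve Conjecture~\ref{conj:omv}; its relevant contribution is to supply what would be a natural springboard for a future attack --- showing that the exact Boolean conjecture is equivalent to a far more robust-looking statement, namely constant-relative-accuracy online matrix--vector over the reals against an \emph{adaptive} query sequence. The single main obstacle, common to all three routes, is the absence of any time-efficient \emph{adaptive} fine-grained reduction into an online problem, together with the absence of any unconditional technique that delivers near-cubic time lower bounds for explicit problems; until one of these materializes, the $\omv$ Conjecture remains open.
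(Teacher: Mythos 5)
You have correctly identified that the statement in question is Conjecture~\ref{conj:omv} itself --- a conjecture the paper cites from \cite{hkns} and \emph{assumes} as a hypothesis for its lower bounds, not something it proves --- so there is no proof in the paper to compare against, and your decision not to fabricate one is the right call. Your survey of the plausible attack routes (conditional adaptive reductions from 3SUM/SETH/APSP, unconditional cell-probe or algebraic lower bounds, intra-$\omv$ bootstrapping) and the obstructions to each accurately reflects both the paper's own discussion and the current state of the literature.
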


The $\omv$ conjecture asserts the hardness of solving online Boolean matrix-vector product \emph{exactly}.
In order to prove Theorem~\ref{thm:lower-full}, it would be convenient to work with real-valued matrix-vector products. 
We prove that the same lower bound holds for (well-conditioned) PSD matrices, while allowing polynomially small error.  
The following result is a standard and its proof can be found in Appendix \ref{sec:fully-app}.
\begin{lemma}[Hardness of approximate real-valued OMv]
\label{lem:omv-real}
Let $\gamma > 0$ be any constant. Let $d$ be a sufficiently large integer, $T = \poly(d)$. 
Let $\bH\in \R^{d \times d}$ be any symmetric matrix whose eigenvalues satisfy 
$
1/3\leq \lambda_d(\bH) \leq \cdots \leq \lambda_1(\bH) \leq 1,
$
and $\bz^{(1)}, \ldots, \bz^{(T)}$ be online queries.
Assuming the $\omv$ conjecture is true, then there is no algorithm with $\poly(d)$ preprocessing time and $O(d^{2-\gamma})$ amortized running time that can return an $O(1/d^{2})$-approximate answer to $\bH \bz^{(t)}$ for all $t \in [T]$, i.e., a vector $\by^{(t)} \in \R^{d}$ s.t. $\|\by^{(t)} - \bH \bz^{(t)}\|_2 \leq O(1/d^2)$. 
\end{lemma}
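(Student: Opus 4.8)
The plan is to go from the Boolean OMv Conjecture (Conjecture~\ref{conj:omv}) to the real-valued PSD version in three small, self-contained reductions. First I would handle the \emph{Boolean-to-real} step: given a Boolean matrix $\bB\in\{0,1\}^{d\times d}$ and Boolean queries $\bz^\ttop$, the integer product $\bB\bz^\ttop$ has entries in $\{0,1,\dots,d\}$, so if a real-valued solver returns $\by^\ttop$ with $\|\by^\ttop - \bB\bz^\ttop\|_2 \le 1/3$, then rounding each coordinate of $\by^\ttop$ to the nearest integer recovers $\bB\bz^\ttop$ \emph{exactly}. Hence an additive-$1/3$-error real solver for arbitrary nonnegative integer matrices already refutes OMv; the only thing left is to rescale so that the matrix is PSD and well-conditioned, and to fix the error from $1/3$ (in the un-normalized scaling) to $O(1/d^2)$ (in the normalized scaling $\|\bH\|_2 = \Theta(1)$, $\|\bz^\ttop\|_2 = \Theta(1)$).

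Second, the \emph{symmetrization/PSD} step: replace $\bB$ by the $2d\times 2d$ matrix
\[
\bM \;=\; \begin{pmatrix} \mathbf{0} & \bB \\ \bB^\top & \mathbf{0}\end{pmatrix},
\]
which is symmetric, and query it on vectors of the form $(\bz^\ttop;\mathbf{0})$ so that the relevant block of $\bM(\bz^\ttop;\mathbf{0})$ is exactly $\bB^\top\bz^\ttop$ (equivalently use $(\mathbf 0;\bz^\ttop)$ to read off $\bB\bz^\ttop$). Then shift and scale: $\bH \;=\; c_1 \bI_{2d} + c_2 \bM$ for constants $c_1,c_2$ chosen so that every eigenvalue lands in $[1/3,1]$. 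Since $\|\bM\|_2 \le \|\bB\|_F \le d$, taking $c_2 = \Theta(1/d)$ and $c_1 = \Theta(1)$ suffices; the eigenvalues of $\bH$ then lie in $[c_1 - c_2 d,\, c_1 + c_2 d]\subseteq[1/3,1]$, and also $\lambda_{2d}(\bH)\ge c_1 - c_2 d > 0$ so $\bH$ is PSD and well-conditioned as required. One recovers $\bM\bv$ from $\bH\bv$ by $\bM\bv = (\bH\bv - c_1\bv)/c_2$, which only amplifies the error by a factor $1/c_2 = \Theta(d)$.

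Third, the \emph{error-budget bookkeeping}: the statement of Lemma~\ref{lem:omv-real} normalizes queries to $\|\bz^\ttop\|_2 = 1$, so a Boolean query vector $\bz^\ttop\in\{0,1\}^d$ with $s$ ones becomes $\bz^\ttop/\sqrt{s}$, and the true answer $\bH(\bz^\ttop/\sqrt s)$ has norm $\Theta(1)$. An $O(1/d^2)$ additive error on this normalized answer, pushed back through the scaling ($\times\sqrt{s}\le \sqrt d$ to un-normalize the query, and $\times 1/c_2 = \Theta(d)$ to undo the shift-scale), yields error $O(\sqrt d\cdot d / d^2) = O(d^{-1/2})\le 1/3$ on the integer vector $\bB\bz^\ttop$ for $d$ large enough — small enough to round correctly. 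Preprocessing (forming $\bM$, $\bH$) is $O(d^2)$, i.e. $\poly(d)$, and each query is simulated with $O(1)$ real-OMv queries plus $O(d)$ arithmetic, so an $O(d^{2-\gamma})$ amortized real-OMv algorithm gives an $O(d^{2-\gamma})$ amortized Boolean-OMv algorithm, contradicting Conjecture~\ref{conj:omv}. (The success-probability/error-probability $\le 1/3$ carries over verbatim since the reduction is deterministic.)

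I expect the only genuinely delicate point to be the constant-vs-polynomial accuracy accounting in the third step: one must be careful that the \emph{normalization} conventions for $\bz^\ttop$ and $\bH$ in the lemma statement combine with the reduction's rescalings so that an $O(1/d^2)$ normalized error really does translate to sub-$1/3$ integer error — i.e., that the product of the query-rescaling factor $\sqrt s$ and the matrix-rescaling factor $1/c_2$ stays $o(d^2)$. This holds comfortably ($\sqrt s / c_2 = O(d^{3/2})$), so there is slack, but it is the place where a naive choice of constants could break the argument.
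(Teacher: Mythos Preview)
Your proposal is correct and follows essentially the same reduction as the paper: symmetrize $\bB$ into the $2d\times 2d$ off-diagonal block matrix, shift-and-scale it to $\bH = c_1\bI_{2d} + c_2\bM$ with $c_2 = \Theta(1/d)$ (the paper takes $c_1 = 2$, $c_2 = 1/d$, landing in $(1,3)$ before a trivial rescaling), query on the padded normalized vector $(\mathbf 0_d,\bz/\|\bz\|_2)$, and verify that the amplified error $O(\sqrt d \cdot d \cdot d^{-2}) < 1/2$ permits rounding to recover the Boolean product. The only cosmetic slip is that the lemma statement itself does not impose $\|\bz^\ttop\|_2 = 1$; you are free to construct the hard instance with unit-norm queries, which is exactly what the paper does.
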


The remaining proof of Theorem~\ref{thm:lower-full} proceeds in a few steps.
We introduce the online projection problem in Section \ref{sec:online-projection} and prove that the $\omv$ conjecture implies that the online projection problem requires $\Omega(d^{2-\gamma})$ amortized time to solve to $O(1/d^2)$ accuracy.
We amplify the hardness to constant accuracy in Section \ref{sec:hard-amplification}, and we reduce the online projection to fully dynamic-LSR in Section \ref{sec:reduction}. 
See an illustration of these steps in Figure~\ref{fig:fully}.

\begin{figure}[!ht]
  \centering
  \begin{tikzpicture}[node distance=1.2cm, box/.style={rectangle, draw, text width=4.2cm}, >={Stealth[length=5pt]}]
    % Draw the nodes
    \node[box] (1) {$\omv$ conjecture \\ (Conjecture~\ref{conj:omv})};
    \node[box, right=of 1] (2) {$O(1/d^2)$-approximate real-valued $\omv$ (Lemma~\ref{lem:omv-real})};
    \node[box, below=of 1] (3) {$O(1/d^2)$-approximate online projection (Lemma~\ref{lem:online-projection})};
    \node[box, right=of 3] (4) {$(1/3, 1/d^3)$-approximate online projection (Lemma~\ref{lem:hard-amplification})};
    \node[box, right=of 4] (5) {$0.01$-approximate fully dynamic LSR \\ (Theorem~\ref{thm:lower-full})};

    % Draw the arrows
    \draw[->] (1) -- (2);
    \draw[->] (2) -- (3);
    \draw[->] (3) -- (4);
    \draw[->] (4) -- (5);
  \end{tikzpicture}
  \caption{An illustration of the chain of proofs in this section.}
  \label{fig:fully}
\end{figure}
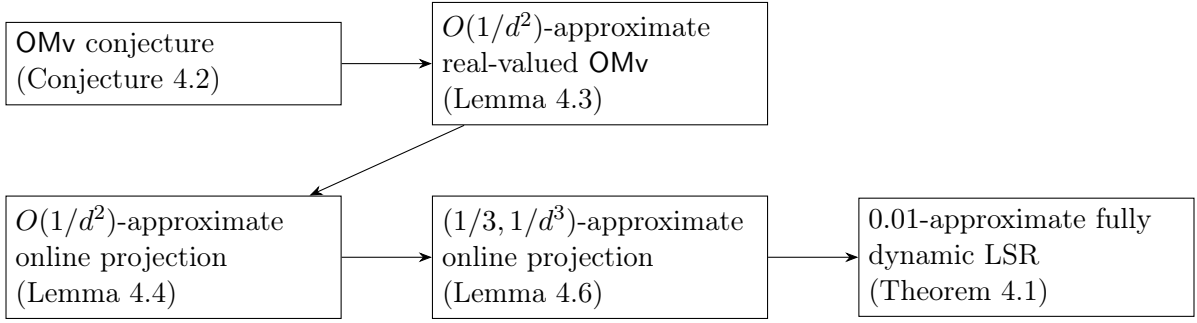

\subsection{Hardness of online projection}
\label{sec:online-projection}

Recall the definition of the online projection task, which asks to compute the projection of a sequence of online queries $\{\bz^\ttop\}_{t\in [T]}$ onto a fixed subspace $\bU$.

\Onlineprojection*

For any orthonormal $\bU \in \R^{d\times d_1}$, let $\bU_\perp \in \R^{d\times (d-d_1)}$ be the orthonormal matrix that spans the complementary of the column space of $\bU$, i.e., it satisfies $[\bU, \bU_\perp] \in \R^{d\times d}$ is a squared orthonormal matrix. For any vector $\bz \in \R^d$, define 
\begin{align}
\bz = \bz_{\bU} + \bz_{\bU_\perp} \quad \text{where} \quad \bz_\bU = \bU\bU^{\top}\bz \quad \text{and} \quad \bz_{\bU_\perp} = (\mathbf{I} - \bU\bU^{\top})\bz.
\end{align}
That is to say, $\bz_{\bU}$ is the projection of $\bz$ onto the subspace spanned by the columns of $\bU$, and $\bz_{\bU_{\perp}}$ is the projection of $\bz$ onto the subspace spanned by the complementary of $\bU$. In this section we also denote the projection of a vector $\bz_{k}$ as $\bz_{k, \bU}$.

We prove computing online projection requires $\Omega(d^{2-\gamma})$ amortized time assuming $\omv$.
\begin{lemma}[Hardness of online projection]
\label{lem:online-projection}
Let $\gamma > 0$ be any constant. Assuming the $\omv$ conjecture is true, then there is no algorithm with $\poly(d)$ preprocessing time and $O(d^{2-\gamma})$ amortized running time that can return an $O(1/d^2)$-approximate solution $\hat{\bz}^{\ttop}_{\bU} \in \R^d$ that satisfies $\|\hat{\bz}^{\ttop}_{\bU} - \bU\bU^{\top} \bz^{(t)}\|_2 \leq O(1/d^2)$ for the online projection problem.
\end{lemma}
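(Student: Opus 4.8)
The plan is to reduce the $O(1/d^2)$-approximate real-valued OMv problem of Lemma~\ref{lem:omv-real} to the online projection problem, using the ``binary division trick'' over the spectrum of the PSD matrix $\bH$ sketched in the technical overview. Concretely, suppose toward a contradiction that there is an algorithm $\mathcal{A}$ solving online projection to $O(1/d^2)$ accuracy with $\poly(d)$ preprocessing and $O(d^{2-\gamma})$ amortized time. Given a symmetric $\bH$ with $1/3 \le \lambda_d(\bH) \le \cdots \le \lambda_1(\bH) \le 1$, in the preprocessing phase I would first compute the eigendecomposition $\bH = \bU \Sigma \bU^\top$ (polynomial time), write each eigenvalue $\lambda_i$ in binary as $\lambda_i = \sum_{j \ge 1} 2^{-j} b_{i,j}$ with $b_{i,j} \in \{0,1\}$, and for each $j \in \{1, \ldots, J\}$ with $J = O(\log d)$ define $S_j := \{ i \in [d] : b_{i,j} = 1\}$ and let $\bU(j) \in \R^{d \times |S_j|}$ be the submatrix of $\bU$ with columns indexed by $S_j$. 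Each $\bU(j)$ is orthonormal (columns of an orthonormal matrix), so it is a legal input to a fresh instance of the online projection problem; I would initialize $J$ independent copies $\mathbb{P}_1, \ldots, \mathbb{P}_J$ of $\mathcal{A}$, one per $\bU(j)$.

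On each online query $\bz^{(t)}$ (which I may assume is normalized, $\|\bz^{(t)}\|_2 = 1$), I would feed $\bz^{(t)}$ to all $J$ instances, obtain approximate projections $\hat{\bz}^{(t)}_{\bU(j)}$ with $\|\hat{\bz}^{(t)}_{\bU(j)} - \bU(j)\bU(j)^\top \bz^{(t)}\|_2 \le c/d^2$ each, and output
\[
\by^{(t)} := \sum_{j=1}^{J} 2^{-j} \, \hat{\bz}^{(t)}_{\bU(j)}.
\]
The correctness computation is: $\bU(j)\bU(j)^\top = \sum_{i \in S_j} \bU_{*,i}\bU_{*,i}^\top$, so $\sum_{j=1}^{J} 2^{-j} \bU(j)\bU(j)^\top = \sum_i \big(\sum_{j \le J} 2^{-j} b_{i,j}\big) \bU_{*,i}\bU_{*,i}^\top = \bU \Sigma' \bU^\top$ where $\Sigma'$ has entries $\lambda_i$ truncated to $J$ bits, hence $\|\bU\Sigma'\bU^\top - \bH\| \le 2^{-J} = O(1/d^3)$ for $J = 3\log_2 d + O(1)$. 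Combined with the triangle inequality over the $J$ per-instance errors, $\|\by^{(t)} - \bH\bz^{(t)}\|_2 \le \sum_j 2^{-j} (c/d^2) + 2^{-J}\|\bz^{(t)}\|_2 \le c/d^2 + O(1/d^3) = O(1/d^2)$, as required. The total running time is $J = O(\log d)$ times the cost of one online-projection instance, which is $\poly(d)$ preprocessing and $O(d^{2-\gamma}\log d) = O(d^{2-\gamma'})$ amortized per query for any $\gamma' < \gamma$ (or simply note $\log d = d^{o(1)}$). By Lemma~\ref{lem:omv-real} this contradicts the $\omv$ conjecture, proving the lemma.

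I expect the only genuine subtlety — and hence the step to state carefully — is the accounting of how errors from the $J$ separate online-projection instances aggregate, together with making sure the reduction does not leak any adaptivity issues: the queries $\bz^{(t)}$ are chosen online and the same $\bz^{(t)}$ is handed to all $J$ instances, so I should run them on shared randomness-independent copies and observe that their outputs are combined only by the fixed linear map $\sum_j 2^{-j}(\cdot)$, so no instance's future queries depend on another's internal randomness beyond what the OMv adversary already controls. Two minor points to handle explicitly: (i) the normalization $\|\bz^{(t)}\|_2 = 1$ is without loss of generality by rescaling the output linearly, and the additive $O(1/d^2)$ guarantee scales correspondingly, which is fine since the adversary's vectors in Lemma~\ref{lem:omv-real} can be taken with bounded norm; (ii) choosing $J = \lceil 3\log_2 d\rceil + 2$ suffices to push the truncation error below the $O(1/d^2)$ budget. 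Everything else is the routine bit-decomposition bookkeeping above.
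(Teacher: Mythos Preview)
Your proposal is correct and follows essentially the same argument as the paper: eigendecompose $\bH$, set up $O(\log d)$ online-projection instances via the bit-decomposition of the eigenvalues, and aggregate the approximate projections with weights $2^{-j}$, bounding the error by the sum of per-instance errors plus the $2^{-J}$ truncation of the spectrum. The paper's write-up is slightly terser (it does not discuss the adaptivity point or normalization explicitly), but the mathematical content is identical.
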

\begin{proof}
By Lemma \ref{lem:omv-real}, it suffices to reduce from the $O(1/d^2)$-approximate $\omv$ problem of a real-valued PSD matrix $\bH$ with eigenvalues $1/3 \leq \lambda_d(\bH) \leq \cdots \leq \lambda_1(\bH) \leq 1$.
We apply a binary division trick and (approximately) decompose $\bH$ into $k = O(\log d)$ projection matrices $\bU(1), \ldots, \bU(k)$.
Formally, let $\bH = \bU\Sigma \bU^{\top}$ where $\bU \in \R^{d\times d}$ and $\Sigma = \diag(\lambda_1(\bH), \ldots, \lambda_n(\bH))$.
Let $\lambda_{i}(\bH) =0.\lambda_{i,1}\lambda_{i,2}\ldots$ be the binary representation of $\lambda_i$ ($i \in [n]$).
For each $j \in [k]$, let 
\[
S_j = \{i:  i\in [n], \lambda_{i, j} = 1\} \subseteq [n]
\]
be the subset of coordinates with non-zero binary value at the $j$-th bit. Let $\bU(j) = \bU_{*,S_j} \in \R^{d\times |S_j|}$ be an orthonormal matrix that takes columns from $S_j$.

We reduce an $\omv$ instance to $k$ online projection instances, and show that we can compute an $O(1/d^2)$-approximate solution to an $\omv$ query of $\bH$ by using $k$ online projection queries, one for each of the $k$ online projection instances. 
In the preprocessing step, we compute the orthonormal matrices $\bU(1), \ldots, \bU(k)$, and we let them be the initial matrices of the online projection instances. This step can be done in $O(d^3\log d)$ time. 
At the $t$-th step, given an online query $\bz^\ttop \in \R^d$ of $\omv$ with norm $\|\bz^\ttop\|_2 \leq 1$, we make one query for each instance of online projection and let $\bz_j^\ttop$ be the projection returned by the $j$-th instance.
It satisfies 
\begin{align}
\|\bz^{\ttop}_{j} - \bU(j)\bU(j)^{\top}\bz^\ttop\|_2 \leq O(1/d^2). \label{eq:online-proj-guarantee}
\end{align}
The vectors $\bz^\ttop_1, \ldots, \bz^\ttop_k$ can be computed in $k \mathcal{T} = O(\mathcal{T} \cdot \log d)$ (amortized) running time, where $\mathcal{T}$ is the runtime of the online projection algorithm. Finally, we output
\begin{align}
\by^\ttop = \sum_{j=1}^{k} \frac{1}{2^j} \cdot \bz^\ttop_j \label{eq:omv-output}
\end{align}
for the $\omv$ query.
Our goal is to prove $\by^\ttop$ is an $O(1/d^2)$-approximate solution to the $\omv$ query, i.e. 
\begin{align}\label{eq:omv_error}
\|\by^\ttop - \bH\bz^\ttop\|_2 \leq O(1/d^2).
\end{align}

To this end, we have
\begin{align}
\|\by^\ttop - \bH\bz^\ttop\|_2 = &~ \left\|\sum_{j=1}^{k} \frac{1}{2^j} \cdot \bz^\ttop_j - \bH\bz^\ttop\right\|_2 \notag \\
\leq &~  \left\|\sum_{j=1}^{k} \frac{1}{2^j} \cdot \bU(j)\bU(j)^{\top}\bz^\ttop - \bH\bz^\ttop\right\|_2 + \sum_{j=1}^{k}\frac{1}{2^j}\left\|\bU(j)\bU(j)^{\top}\bz^\ttop - \bz_j^\ttop\right\|_2 \notag \\
\leq &~ \left\|\sum_{j=1}^{k} \frac{1}{2^j} \cdot \bU(j)\bU(j)^{\top} \bz^\ttop - \bH\bz^\ttop\right\|_2 + O(1/d^2).\label{eq:online1}
\end{align}
Here the first step follows from the definition of $\by^\ttop$ in Eq.~\eqref{eq:omv-output}, the second step follows from the triangle inequality, and the last step follows from Eq.~\eqref{eq:online-proj-guarantee}.

It remains to bound the first term of RHS, and it suffices to prove $\sum_{j=1}^{k} \frac{1}{2^j} \cdot \bU(j)\bU(j)^{\top}$ is close to $\bH$. 
This holds (almost) by definition. 
Formally, let $\Sigma(j) = \frac{1}{2^j}\diag(\lambda_{1,j}, \ldots, \lambda_{d, j})$ for any $j \geq 1$. By the definition of $\bU(j)$, we have
\begin{align*}
\frac{1}{2^j} \cdot \bU(j)\bU(j)^{\top} =\bU\Sigma(j)\bU^{\top}, \quad \forall j \in [k],
\end{align*}
and therefore 
\begin{align*}
\left\|\bH - \sum_{j=1}^{k}\frac{1}{2^j} \cdot \bU(j)\bU(j)^{\top}\right\|_2 = &~ \left\| \bU \Sigma \bU^{\top} - \sum_{j=1}^{k}\bU\Sigma(j)\bU^{\top}\right\|_2 \\
= &~ \left\|\sum_{j> k} \bU\Sigma(j)\bU^{\top}\right\|_2 \leq \frac{1}{2^{k}} = O(1/d^2),
\end{align*}
where the third step follows from $\| \bU\Sigma(j)\bU^{\top} \|_2 = \frac{1}{2^j} \cdot \|\bU(j) \bU(j)^{\top}\|_2 = \frac{1}{2^j}$, and the last step follows from $k = O(\log d)$.

Plugging into Eq.~\eqref{eq:online1}, we have 
\begin{align*}
\|\by^\ttop - \bH\bz^\ttop\|_2 \leq &~\left\|\sum_{j=1}^{k} \frac{1}{2^j} \cdot \bU(j)\bU(j)^{\top}\bz^\ttop - \bH\bz^\ttop\right\|_2 + O(1/d^2)\\
\leq &~ \left\|\sum_{j=1}^{k} \frac{1}{2^j} \cdot \bU(j)\bU(j)^{\top} - \bH\right\|_2 \|\bz^\ttop\|_2  + O(1/d^2)\\
\leq &~ O(1/d^2)\cdot 1 + O(1/d^2) = O(1/d^2).
\end{align*}

In summary, the above reduction means there exist an $O(1/d^2)$-approximate $\omv$ algorithm for $\bH$ with $O(d^3 \log d)$ preprocessing time and $O(\mathcal{T} \cdot \log d)$ amortized query time. If $\mathcal{T} = O(d^{2-\gamma})$ for some constant $\gamma$, then we can solve the $O(1/d^2)$-approximate $\omv$ problem for $\bH$ in amortized $O(d^{2-\gamma} \cdot \log d)$ time, and by Lemma~\ref{lem:omv-real} this contradicts with the $\omv$ conjecture.
\end{proof}

\subsection{Hardness amplification}
\label{sec:hard-amplification}
So far we have proved an $\Omega(d^{2 - \gamma})$ lower bound of the online projection problem when it is required to output an $O(1/d^2)$-approximate answer per round. 
Our next step is to amplify this approximation precision to a constant.
We first formalize the notion of $(\alpha, \beta)$-approximate projection.

\begin{definition}[$(\alpha, \beta)$-approximate projection]
Given an orthonormal matrix $\bU \in \R^{d\times d_1}$ and a vector $\bz \in \R^{d}$, we say a vector $\by\in \R^{d}$ is an $(\alpha, \beta)$-approximate projection of $\bz$ onto $\bU$, if it satisfies 
\begin{align*}
\|\by - \bU\bU^{\top}\bz\|_2 \leq \alpha \|\bU\bU^\top \bz\|_2 + \beta\|\bz\|_2.
\end{align*}
\end{definition}
As we shall see soon, the interesting regime is $\alpha = \Theta(1)$ and $\beta = 1/\poly(d)$. The problem of online projection (Definition \ref{def:online-projection}) works well with the requirement of outputting an $(\alpha, \beta)$-approximate projection per round. 

\begin{lemma}[Hardness amplification]
\label{lem:hard-amplification}
Let $\gamma > 0$ be any constant. Let $\alpha = 1/3$ and $\beta = O(1/d^3)$.
Assuming the $\omv$ conjecture is true, then there is no algorithm with $\poly(d)$ preprocessing time and $O(d^{2-\gamma})$ amortized running time that can return an $(\alpha, \beta)$-approximate solution for the online projection problem.
\end{lemma}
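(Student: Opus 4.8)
The plan is to reduce the $O(1/d^2)$-approximate online projection problem (Lemma~\ref{lem:online-projection}) to the $(1/3,1/d^3)$-approximate online projection problem, i.e.\ show that an $(\alpha,\beta)$-approximate projection oracle with $\alpha=1/3$, $\beta=O(1/d^3)$, invoked $O(\log^2 d)$ times per round, lets us recover an $O(1/d^2)$-approximate projection. The reduction runs two online projection data structures in parallel: $\mathbb{P}_{\bU}$ (initialized with $\bU$) and $\mathbb{P}_{\bU_\perp}$ (initialized with $\bU_\perp$). Since $\bU_\perp$ can be computed in $\poly(d)$ preprocessing time and each query to the $(\alpha,\beta)$-oracle costs $O(d^{2-\gamma})$ amortized, the whole simulation costs $O(d^{2-\gamma}\log^2 d)$ amortized per $\omv$-style query, which would contradict Lemma~\ref{lem:online-projection} and hence the $\omv$ conjecture.

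The core of the reduction is the iterative-refinement scheme sketched in the technical overview, with a clean outer/inner loop structure. First I would set up the \emph{outer loop}: maintain residuals $\bz_1 = \bz^\ttop$ and iterate $R = O(\log d)$ times, where in round $r$ we produce $\bw_r$ that is (i) almost orthogonal to $\bU$ (i.e.\ $\|\bw_{r,\bU}\|_2 \leq \beta'$ for a suitably small $\beta'$) and (ii) a good multiplicative approximation to $\bz_{r,\bU_\perp}$ (i.e.\ $\|\bw_{r,\bU_\perp} - \bz_{r,\bU_\perp}\|_2 \leq \alpha'\|\bz_{r,\bU_\perp}\|_2 + \beta'$ for some constant $\alpha' < 1/2$), then set $\bz_{r+1} = \bz_r - \bw_r$. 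The key invariant to prove by induction is that $\bz_{r,\bU}$ stays essentially fixed (changes only by $O(\beta')$ per round, hence $O(R\beta')$ total) while $\|\bz_{r,\bU_\perp}\|_2$ shrinks geometrically by factor $\alpha'$; after $R = O(\log d)$ rounds $\|\bz_{R+1,\bU_\perp}\|_2 = O(1/d^2)$, so $\sum_r \bw_r = \bz - \bz_{R+1} \approx \bz_{\bU}$, giving the desired $O(1/d^2)$-approximate projection (after also bounding the accumulated error in the $\bU$ direction, which requires $R\beta' = O(\log d)\cdot O(1/d^3) = O(1/d^2)$, consistent with $\beta = O(1/d^3)$ and $\beta' = O(\beta \cdot \poly\log d)$ modulo the norm-blowup below). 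The \emph{inner loop} implements each round: set $\bw_{r,0} = \mathbb{P}_{\bU_\perp}(\bz_r)$, which by definition satisfies $\bw_{r,0} = \bz_{r,\bU_\perp} + \bdelta$ with $\|\bdelta\|_2 \leq \alpha\|\bz_{r,\bU_\perp}\|_2 + \beta\|\bz_r\|_2$; the bad part is $\bdelta_{\bU}$. To kill it, iterate $K = O(\log d)$ times: given $\bw_{r,k-1}$, query $\by_{r,k} = \mathbb{P}_{\bU}(\bw_{r,k-1})$ and set $\bw_{r,k} = \bw_{r,k-1} - \by_{r,k}$; since $\mathbb{P}_{\bU}$ is an $(\alpha,\beta)$-approximation, $\|\bw_{r,k,\bU}\|_2 \leq \alpha\|\bw_{r,k-1,\bU}\|_2 + \beta\|\bw_{r,k-1}\|_2$, so after $K$ steps $\|\bw_{r,K,\bU}\|_2$ is geometrically small. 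Meanwhile the $\bU_\perp$-component telescopes: $\bw_{r,K,\bU_\perp} = \bw_{r,0,\bU_\perp} - \sum_{k=1}^{K}\by_{r,k,\bU_\perp}$, and each $\by_{r,k,\bU_\perp}$ is bounded by the (geometrically shrinking) $\bU$-mass of $\bw_{r,k-1}$ plus a $\beta$ term, so the total perturbation of the $\bU_\perp$-component is $O(\alpha)\|\bz_{r,\bU_\perp}\|_2 + O(\beta K)\cdot(\text{norm bound})$, giving $\bw_{r,K,\bU_\perp} = (1\pm O(\alpha))\bz_{r,\bU_\perp}$ — exactly property (ii) with $\alpha' = O(\alpha) = O(1/3) < 1/2$ once constants are chased carefully (taking $\alpha = 1/3$ and possibly a slightly larger constant number of inner rounds, or a finer accounting, to ensure $\alpha' < 1/2$).

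The main obstacle I anticipate is controlling the \emph{growth of the norms} $\|\bz_r\|_2$ and $\|\bw_{r,k}\|_2$ across the $R\cdot K = O(\log^2 d)$ iterations, because the additive error $\beta$ in each oracle call is measured relative to the \emph{current} vector's norm, not the original $\|\bz^\ttop\|_2$. Naively each subtraction could double the norm, leading to a $2^{O(\log^2 d)}$ blowup that would swamp the $\beta = 1/d^3$ budget. The fix is to observe that $\bw_{r,0} = \mathbb{P}_{\bU_\perp}(\bz_r)$ has norm at most $(1+\alpha)\|\bz_{r,\bU_\perp}\|_2 + \beta\|\bz_r\|_2 \leq O(\|\bz_r\|_2)$, and the inner-loop corrections $\by_{r,k}$ have geometrically decaying norms (since they track the shrinking $\bU$-mass), so $\|\bw_{r,k}\|_2$ stays within a constant factor of $\|\bz_r\|_2$ throughout the inner loop; and across outer rounds $\|\bz_r\|_2$ is non-increasing up to lower-order terms. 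This keeps every norm within an absolute constant of $\|\bz^\ttop\|_2 \leq 1$, so taking $\beta = c/d^3$ for a small constant $c$ and absorbing the $O(\poly\log d)$ accumulation factors into choosing $\beta = O(1/d^3)$ (with the constant hidden in $O(\cdot)$ covering $RK$ and the constant norm blowup) suffices to land at final error $O(1/d^2)$. A secondary subtlety is that the two data structures $\mathbb{P}_{\bU}$ and $\mathbb{P}_{\bU_\perp}$ receive \emph{adaptively chosen} query vectors (each query depends on the previous outputs of both structures), but this is fine since the hardness statements we are reducing from and to are both for adaptive query sequences. I would conclude by tallying: $R\cdot K = O(\log^2 d)$ oracle calls per round, $\poly(d)$ preprocessing to form $\bU_\perp$, total amortized time $O(d^{2-\gamma}\log^2 d) = O(d^{2-\gamma'})$ for any $\gamma' < \gamma$, contradicting Lemma~\ref{lem:online-projection}.
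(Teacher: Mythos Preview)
Your proposal is essentially identical to the paper's proof: same outer/inner loop structure (Algorithm~\ref{algo:hard-amplification}), same use of $\mathbb{P}_{\bU}$ and $\mathbb{P}_{\bU_\perp}$, same inner-loop purification argument, and the same norm-control observation (the paper proves $\|\bz_r\|_2 \leq 1 + O(K^4 r\beta)$ and $\|\bw_{r,k}\|_2 \leq 4$ inductively). One minor bookkeeping slip: since you established $\bz_{R+1,\bU} \approx \bz_\bU$ and $\|\bz_{R+1,\bU_\perp}\|_2 = O(1/d^2)$, the output should be $\bz_{R+1}$ itself --- not $\sum_r \bw_r = \bz - \bz_{R+1}$, which approximates $\bz_{\bU_\perp}$ --- and the paper's shrinkage factor is $2\alpha = 2/3$ rather than something below $1/2$, but any constant $<1$ suffices.
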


The reduction is formally shown in Algorithm \ref{algo:hard-amplification}. Our goal is to show that we can answer online projection queries to $O(1/d^2)$ accuracy by using $(\alpha, \beta)$-approximate oracles. We use $\mathbb{P}_\bU, \mathbb{P}_{\bU_\perp}: \R^d \to \R^d$ to denote $(\alpha, \beta)$-approximate projection oracles whose outputs satisfy that for any vector $\bz \in \R^d$, 
(1) $\|\mathbb{P}_{\bU}(\bz) - \bz_\bU\|_2 \leq  \alpha \|\bz_\bU\|_2 + \beta \|\bz\|_2$, and 
(2) $\|\mathbb{P}_{\bU_\perp}(\bz) - \bz_{\bU_\perp}\|_2 \leq  \alpha \|\bz_{\bU_\perp}\|_2 + \beta \|\bz\|_2 $.

The reduction proceeds in $R = \Theta(\log d)$ rounds (i.e., the outer-loop on Line \ref{line:outer}), and we wish to show that each round (1) keeps the projected component $\bz_{r, \bU}$, and (2) reduces the orthogonal component $\bz_{r, \bU_\perp}$ (see Lemma \ref{lem:outer}).
In each round, the reduction first calls the approximate projection oracle onto $\bU_\perp$, which gives a good approximation to the orthogonal component $\bz_{r, \bU_\perp}$, but also has non-negligible component onto space $\bU$.
To resolve this, Algorithm \ref{algo:hard-amplification} proceeds in $K = O(\log d)$ iterations (i.e., inner loop on Line \ref{line:inner}), and in each iteration, it combines the previous output and sends it to $\mathbb{P}_{\bU}$. This process gradually purifies the component onto space $\bU$ (see Lemma \ref{lem:inner-induction}).

\begin{algorithm}[!htbp]
\caption{Hardness amplification}
\label{algo:hard-amplification}
\begin{algorithmic}[1]
\State {\bf Input:} Online query $\bz$, approximate projection oracles $\mathbb{P}_{\bU}$ and $\mathbb{P}_{\bU_\perp}$
\State $\bz_1 \leftarrow \bz$
\For{$r =1,2,\ldots, R$}\Comment{$R = O(\log d)$}\label{line:outer}
\State $\bw_{r, 0} \leftarrow \mathbb{P}_{\bU_\perp}(\bz_r)$
\For{$k=1,2,\ldots, K$}\Comment{$K = O(\log d)$} \label{line:inner}
\State $\by_{r, k} \leftarrow \mathbb{P}_{\bU}(\bw_{r, k-1})$
\State $\bw_{r, k} \leftarrow \bw_{r, k-1} - \by_{r, k} $ \label{line:w-def}
\EndFor
\State $\bz_{r+1} \leftarrow \bz_{r} - \bw_{r, K}$ \label{line:update}
\EndFor
\State \Return $\bz_{R+1}$
\end{algorithmic}
\end{algorithm}

We first state the guarantee of each outer loop. W.l.o.g., we assume $\|\bz\|_2 = 1$.
\begin{lemma}
\label{lem:outer}
For each round $r \in [R]$, we have
\begin{itemize}
\item $\|\bz_{r+1, \bU} - \bz_{r, \bU}\|_2 \leq 4(K+2)\beta$, 
\item $\|\bz_{r+1, \bU_\perp}\|_2 \leq 2\alpha \|\bz_{r, \bU_\perp}\|_2 + 4K^2\beta$, and
\item $\|\bz_{r+1}\|_2 \leq 1 + O(K^4r \beta)$.
\end{itemize}
\end{lemma}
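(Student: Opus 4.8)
The goal is to analyze one pass of the outer loop, tracking how the $\bU$-component and the $\bU_\perp$-component of $\bz_r$ evolve under the inner loop. I would first establish the inner-loop invariant by induction on $k$: the vector $\bw_{r,k}$ has geometrically shrinking $\bU$-component, $\|\bw_{r,k,\bU}\|_2 \le \alpha \|\bw_{r,k-1,\bU}\|_2 + \beta\|\bw_{r,k-1}\|_2$, while its $\bU_\perp$-component changes only by the "contamination" term $\by_{r,k,\bU_\perp}$, which is itself controlled by $\|\by_{r,k}\|_2 = \|\mathbb{P}_\bU(\bw_{r,k-1})\|_2 \le (1+\alpha)\|\bw_{r,k-1,\bU}\|_2 + \beta\|\bw_{r,k-1}\|_2$. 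Since $\bw_{r,0} = \mathbb{P}_{\bU_\perp}(\bz_r)$, its $\bU$-component is already small: $\|\bw_{r,0,\bU}\|_2 \le \alpha\|\bz_{r,\bU}\|_2 \cdot 0 + \ldots$ — more precisely $\|\bw_{r,0} - \bz_{r,\bU_\perp}\|_2 \le \alpha\|\bz_{r,\bU_\perp}\|_2 + \beta\|\bz_r\|_2$, so $\|\bw_{r,0,\bU}\|_2 \le \alpha\|\bz_{r,\bU_\perp}\|_2 + \beta\|\bz_r\|_2$, which after the geometric decay over $K = O(\log d)$ iterations becomes $\|\bw_{r,K,\bU}\|_2 \le \alpha^K(\cdots) + O(K\beta\|\bz_r\|_2) = O(K\beta)$ (using $\alpha^K \le 1/d^{10}$ and $\|\bz_r\|_2 = 1+o(1)$).

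Next I would control $\|\bw_{r,K,\bU_\perp}\|_2$. Telescoping Line~\ref{line:w-def}, $\bw_{r,K,\bU_\perp} = \bw_{r,0,\bU_\perp} - \sum_{k=1}^{K}\by_{r,k,\bU_\perp}$. The first term is $(1\pm\alpha)$-close to $\bz_{r,\bU_\perp}$ up to additive $\beta\|\bz_r\|_2$; the sum is bounded by $\sum_k \|\by_{r,k}\|_2$, and since $\|\by_{r,k}\|_2 \le (1+\alpha)\|\bw_{r,k-1,\bU}\|_2 + \beta\|\bw_{r,k-1}\|_2$ is geometrically decaying (dominated by $\|\bw_{r,0,\bU}\|_2 \le \alpha\|\bz_{r,\bU_\perp}\|_2 + \beta$), the whole sum is $O(\alpha)\|\bz_{r,\bU_\perp}\|_2 + O(K\beta)$. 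This gives $\|\bw_{r,K,\bU_\perp}\|_2 \le 2\alpha\|\bz_{r,\bU_\perp}\|_2 + O(K^2\beta)$ after absorbing constants, so since $\bz_{r+1,\bU_\perp} = \bz_{r,\bU_\perp} - \bw_{r,K,\bU_\perp}$ and $\bw_{r,K,\bU_\perp}$ is a good approximation to $\bz_{r,\bU_\perp}$, we get the second claimed bound $\|\bz_{r+1,\bU_\perp}\|_2 \le 2\alpha\|\bz_{r,\bU_\perp}\|_2 + 4K^2\beta$. For the first claim, $\bz_{r+1,\bU} = \bz_{r,\bU} - \bw_{r,K,\bU}$, and we just bounded $\|\bw_{r,K,\bU}\|_2 \le 4(K+2)\beta$ (after choosing $K$ large enough that $\alpha^K$ times anything polynomial is $\le\beta$).

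For the third claim, $\|\bz_{r+1}\|_2^2 = \|\bz_{r+1,\bU}\|_2^2 + \|\bz_{r+1,\bU_\perp}\|_2^2$; the $\bU$-part is $\|\bz_{r,\bU}\|_2 \pm O(K\beta)$ and the $\bU_\perp$-part is at most $\|\bz_{r,\bU_\perp}\|_2 + O(K^2\beta)$ (it only shrinks, up to the additive error), so $\|\bz_{r+1}\|_2 \le \|\bz_r\|_2 + O(K^2\beta)$, and iterating over $r$ rounds gives $\|\bz_{r+1}\|_2 \le 1 + O(K^2 r\beta) \le 1 + O(K^4 r\beta)$. Throughout, I would carry the running bound $\|\bz_r\|_2 \le 2$ (say) as a standing assumption justified by this last inequality together with $\beta = O(1/d^3)$, $K,R = O(\log d)$, so that all the "$\|\bz_r\|_2$" factors multiplying $\beta$ can be folded into constants.

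\textbf{Main obstacle.} The delicate point is the bookkeeping of the additive errors: the oracle's additive term scales with $\|\bz_r\|_2$ (and with $\|\bw_{r,k-1}\|_2$ in the inner loop), and $\|\bw_{r,k-1}\|_2$ could a priori be as large as $O(1)$ even though $\|\bw_{r,k-1,\bU}\|_2$ is tiny — because $\bw_{r,k-1}$ still carries the full $\bU_\perp$-component. So the $\beta\|\bw_{r,k-1}\|_2$ contamination into the $\bU$-direction does \emph{not} shrink geometrically; it is only the \emph{multiplicative} $\alpha\|\bw_{r,k-1,\bU}\|_2$ part that does. This is exactly why the final $\bU$-error is $O(K\beta)$ rather than $O(\beta)$: one pays $\beta$ per inner iteration. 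Getting the constants to line up so that $\alpha^K \cdot (\text{poly}(d))$ is genuinely negligible (forcing $K = \Theta(\log d)$ with a large enough constant), and simultaneously ensuring the accumulated $O(K^2\beta)$, $O(K^4 r\beta)$ terms stay far below the $O(1/d^2)$ target after $R = \Theta(\log d)$ rounds, is the part that requires care but no cleverness.
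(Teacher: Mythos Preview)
Your plan follows the same route as the paper (the paper packages the inner-loop analysis via auxiliary error vectors $\bdelta_{r,k}$ satisfying $\bw_{r,k,\bU}=-\bdelta_{r,k,\bU}$ and $\by_{r,k}=-\bdelta_{r,k-1,\bU}+\bdelta_{r,k}$, but the substance is the induction you describe). However, one step in your argument is too crude and, with $\alpha=1/3$, actually fails to give a contraction.

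The problematic step is bounding $\sum_{k=1}^K\|\by_{r,k,\bU_\perp}\|_2$ by $\sum_{k=1}^K\|\by_{r,k}\|_2$ with $\|\by_{r,k}\|_2\le(1+\alpha)\|\bw_{r,k-1,\bU}\|_2+\beta\|\bw_{r,k-1}\|_2$. Summing the geometric series gives $\sum_k\|\by_{r,k}\|_2\le\frac{1+\alpha}{1-\alpha}\|\bw_{r,0,\bU}\|_2+O(K\beta)\le\frac{(1+\alpha)\alpha}{1-\alpha}\|\bz_{r,\bU_\perp}\|_2+O(K\beta)$. Adding the $\alpha\|\bz_{r,\bU_\perp}\|_2$ from $\|\bw_{r,0,\bU_\perp}-\bz_{r,\bU_\perp}\|_2$ yields a total factor $\alpha+\frac{(1+\alpha)\alpha}{1-\alpha}=\frac{2\alpha}{1-\alpha}$, which for $\alpha=1/3$ equals exactly $1$ --- no decay of the $\bU_\perp$-component, so the second bullet of the lemma does not follow.

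The fix is small but essential: since $\bw_{r,k-1,\bU}\in\im(\bU)$, one has $\by_{r,k,\bU_\perp}=(\by_{r,k}-\bw_{r,k-1,\bU})_{\bU_\perp}$, hence
\[
\|\by_{r,k,\bU_\perp}\|_2\le\|\by_{r,k}-\bw_{r,k-1,\bU}\|_2\le\alpha\|\bw_{r,k-1,\bU}\|_2+\beta\|\bw_{r,k-1}\|_2,
\]
replacing your $(1+\alpha)$ by $\alpha$. Now the sum is $\le\frac{\alpha^2}{1-\alpha}\|\bz_{r,\bU_\perp}\|_2+O(K\beta)$, and the total contraction factor becomes $\alpha+\frac{\alpha^2}{1-\alpha}=\frac{\alpha}{1-\alpha}=\tfrac12\le 2\alpha$, recovering the claim. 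This sharper bound is precisely what the paper's $\bdelta_{r,k}:=\by_{r,k}-\bw_{r,k-1,\bU}$ encodes (note $\|\bdelta_{r,k}\|_2$ is the oracle error, not the oracle output). As a side note, where you wrote ``$\|\bw_{r,K,\bU_\perp}\|_2\le 2\alpha\|\bz_{r,\bU_\perp}\|_2+\ldots$'' you meant $\|\bw_{r,K,\bU_\perp}-\bz_{r,\bU_\perp}\|_2$; the rest of your outline (first and third bullets) is fine.
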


It is useful to first understand the guarantee of the inner loops. 
At the beginning, we have the following lemma for $\bw_{r, 0}$.
\begin{lemma}
\label{lem:inner-begin}
For any round $r \in [R]$, assuming $\|\bz_r\|_2\leq 2$, then we can write 
\begin{align*}
\bw_{r, 0} = \bz_{r, \bU_\perp} - \bdelta_{r, 0} \quad \text{where} \quad \|\bdelta_{r, 0}\|_{2} \leq \alpha \|\bz_{r, \bU_{\perp}}\|_2 + 2\beta.
\end{align*}
\end{lemma}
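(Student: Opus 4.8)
The plan is to unwind the definition of $\bw_{r,0}$ in Algorithm~\ref{algo:hard-amplification} and apply the guarantee of the oracle $\mathbb{P}_{\bU_\perp}$ directly. On Line~4 we set $\bw_{r,0} = \mathbb{P}_{\bU_\perp}(\bz_r)$, so by property (2) of the $(\alpha,\beta)$-approximate projection oracle we have $\|\bw_{r,0} - \bz_{r,\bU_\perp}\|_2 \leq \alpha \|\bz_{r,\bU_\perp}\|_2 + \beta \|\bz_r\|_2$. Defining $\bdelta_{r,0} := \bz_{r,\bU_\perp} - \bw_{r,0}$ gives $\bw_{r,0} = \bz_{r,\bU_\perp} - \bdelta_{r,0}$ by construction, so the only thing to verify is the bound on $\|\bdelta_{r,0}\|_2$.

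The bound on $\|\bdelta_{r,0}\|_2$ follows by combining the oracle guarantee with the hypothesis $\|\bz_r\|_2 \leq 2$: we get $\|\bdelta_{r,0}\|_2 = \|\bz_{r,\bU_\perp} - \bw_{r,0}\|_2 \leq \alpha \|\bz_{r,\bU_\perp}\|_2 + \beta \|\bz_r\|_2 \leq \alpha \|\bz_{r,\bU_\perp}\|_2 + 2\beta$, which is exactly the claimed inequality. So the proof is essentially a two-line unfolding of definitions plus the triangle-inequality-free substitution of the norm bound.

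I do not anticipate any real obstacle here — this lemma is a "base case" bookkeeping statement for the inner-loop induction (Lemma~\ref{lem:inner-induction}) and the outer-loop analysis (Lemma~\ref{lem:outer}). The only subtlety worth flagging is that the hypothesis $\|\bz_r\|_2 \leq 2$ is what lets us replace the $\beta\|\bz_r\|_2$ term appearing in the oracle's multiplicative/additive guarantee by the cleaner $2\beta$; this hypothesis is discharged elsewhere (it is implied by the third bullet of Lemma~\ref{lem:outer}, since $O(K^4 r\beta) = o(1)$ for $\beta = O(1/d^3)$, $K, R = O(\log d)$, and $\|\bz\|_2 = 1$). No further machinery is needed.

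\begin{proof}
By Line~4 of Algorithm~\ref{algo:hard-amplification}, we have $\bw_{r,0} = \mathbb{P}_{\bU_\perp}(\bz_r)$. Define
\[
\bdelta_{r,0} := \bz_{r,\bU_\perp} - \bw_{r,0},
\]
so that $\bw_{r,0} = \bz_{r,\bU_\perp} - \bdelta_{r,0}$ holds by construction. It remains to bound $\|\bdelta_{r,0}\|_2$. By property~(2) of the $(\alpha,\beta)$-approximate projection oracle $\mathbb{P}_{\bU_\perp}$ applied to the vector $\bz_r$,
\[
\|\bdelta_{r,0}\|_2 = \|\mathbb{P}_{\bU_\perp}(\bz_r) - \bz_{r,\bU_\perp}\|_2 \leq \alpha \|\bz_{r,\bU_\perp}\|_2 + \beta \|\bz_r\|_2.
\]
Since $\|\bz_r\|_2 \leq 2$ by assumption, we conclude $\|\bdelta_{r,0}\|_2 \leq \alpha \|\bz_{r,\bU_\perp}\|_2 + 2\beta$, as claimed.
\end{proof}
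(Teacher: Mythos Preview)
Your proof is correct and essentially identical to the paper's: both simply unfold $\bw_{r,0} = \mathbb{P}_{\bU_\perp}(\bz_r)$, apply the $(\alpha,\beta)$-oracle guarantee, and substitute $\|\bz_r\|_2 \le 2$ to turn $\beta\|\bz_r\|_2$ into $2\beta$.
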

\begin{proof}
The proof follows directly from the guarantee of $\mathbb{P}_{\bU_\perp}$. In particular, we have that 
\begin{align*}
\|\bw_{r, 0} - \bz_{r, \bU_{\perp}}\|_2 = \|\mathbb{P}_{\bU_\perp}(\bz_r) - \bz_{r, \bU_{\perp}}\|_2 \leq \alpha \|\bz_{r, \bU_{\perp}}\|_2 + \beta \|\bz_r\|_2 \leq \alpha \|\bz_{r, \bU_{\perp}}\|_2 + 2\beta,
\end{align*}
where the second step holds since $\mathbb{P}_{\bU_\perp}$ returns an $(\alpha, \beta)$ approximation over projection onto $\bU_{\perp}$ and the last step holds since $\|\bz_r\|_2 \leq 2$. 
We complete the proof here.
\end{proof}

For each iteration $k \in [K]$, we have the following lemma for the inner loop.
\begin{lemma}
\label{lem:inner-induction}
For any round $r\in [R]$ and iteration $k \in [K]$, assuming $\|\bz_r\|_2 \leq 2$, we can write
\begin{align*}
\bw_{r, k} = \bz_{r, \bU_{\perp}} - (\sum_{\tau = 0}^{k-1} \bdelta_{r, \tau, \bU_{\perp}}) - \bdelta_{r, k}, ~~~\text{and}~~
\by_{r, k} = -\bdelta_{r, k-1, \bU} + \bdelta_{r, k}, 
\end{align*}
and each $\bdelta_{r, k}$ satisfies
\begin{align*}
\|\bdelta_{r, k}\|_2 \leq \alpha \|\bdelta_{r, k-1, \bU}\|_2 + 4 \beta \quad \text{and} \quad \|\bdelta_{r, k}\|_2 \leq \alpha^{k+1}\|\bz_{r, \bU_{\perp}}\|_2 + 4(k+1)\beta.
\end{align*}
\end{lemma}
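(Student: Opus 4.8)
The plan is to prove Lemma~\ref{lem:inner-induction} by induction on the iteration index $k$, keeping track of two things simultaneously: the exact algebraic decomposition of $\bw_{r,k}$ and $\by_{r,k}$ into ``true'' components plus accumulated error terms $\bdelta_{r,\tau}$, and the norm bounds on the error. Throughout I will use the abbreviation that $\bdelta_{r,k,\bU} := \bU\bU^\top \bdelta_{r,k}$ and $\bdelta_{r,k,\bU_\perp} := (\bI - \bU\bU^\top)\bdelta_{r,k}$, so that $\bdelta_{r,k} = \bdelta_{r,k,\bU} + \bdelta_{r,k,\bU_\perp}$ orthogonally, and in particular $\|\bdelta_{r,k,\bU}\|_2, \|\bdelta_{r,k,\bU_\perp}\|_2 \le \|\bdelta_{r,k}\|_2$.

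\textbf{Base case.} For $k=1$: by definition $\by_{r,1} = \mathbb{P}_{\bU}(\bw_{r,0})$, and by Lemma~\ref{lem:inner-begin} we have $\bw_{r,0} = \bz_{r,\bU_\perp} - \bdelta_{r,0}$. Since $\bz_{r,\bU_\perp}$ lies in the range of $\bU_\perp$, its projection onto $\bU$ is zero, so $(\bw_{r,0})_{\bU} = -\bdelta_{r,0,\bU}$. Define $\bdelta_{r,1} := \by_{r,1} - (\bw_{r,0})_\bU = \mathbb{P}_{\bU}(\bw_{r,0}) - (\bw_{r,0})_\bU$; then by the oracle guarantee for $\mathbb{P}_\bU$, $\|\bdelta_{r,1}\|_2 \le \alpha \|(\bw_{r,0})_\bU\|_2 + \beta\|\bw_{r,0}\|_2 = \alpha\|\bdelta_{r,0,\bU}\|_2 + \beta\|\bw_{r,0}\|_2$. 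Using $\|\bw_{r,0}\|_2 \le \|\bz_{r,\bU_\perp}\|_2 + \|\bdelta_{r,0}\|_2 \le \|\bz_r\|_2 + \alpha\|\bz_{r,\bU_\perp}\|_2 + 2\beta \le 4$ (for $\|\bz_r\|_2\le 2$, $\alpha=1/3$, $\beta$ small) gives $\|\bdelta_{r,1}\|_2 \le \alpha\|\bdelta_{r,0,\bU}\|_2 + 4\beta$, which is the first claimed bound; and since $\|\bdelta_{r,0,\bU}\|_2 \le \|\bdelta_{r,0}\|_2 \le \alpha\|\bz_{r,\bU_\perp}\|_2 + 2\beta$, we get $\|\bdelta_{r,1}\|_2 \le \alpha^2\|\bz_{r,\bU_\perp}\|_2 + 2\alpha\beta + 4\beta \le \alpha^2\|\bz_{r,\bU_\perp}\|_2 + 8\beta$, which is within the claimed $\alpha^{k+1}\|\bz_{r,\bU_\perp}\|_2 + 4(k+1)\beta = \alpha^2\|\bz_{r,\bU_\perp}\|_2 + 8\beta$. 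Then $\by_{r,1} = (\bw_{r,0})_\bU + \bdelta_{r,1} = -\bdelta_{r,0,\bU} + \bdelta_{r,1}$, matching the stated formula, and $\bw_{r,1} = \bw_{r,0} - \by_{r,1} = \bz_{r,\bU_\perp} - \bdelta_{r,0} + \bdelta_{r,0,\bU} - \bdelta_{r,1} = \bz_{r,\bU_\perp} - \bdelta_{r,0,\bU_\perp} - \bdelta_{r,1}$, which matches the claimed $\bw_{r,k}$ formula with $k=1$ (the sum $\sum_{\tau=0}^{0}\bdelta_{r,\tau,\bU_\perp}$).

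\textbf{Inductive step.} Assume the claim holds for $k-1$. By the induction hypothesis, $\bw_{r,k-1} = \bz_{r,\bU_\perp} - \sum_{\tau=0}^{k-2}\bdelta_{r,\tau,\bU_\perp} - \bdelta_{r,k-1}$; projecting onto $\bU$ and using that $\bz_{r,\bU_\perp}$ and each $\bdelta_{r,\tau,\bU_\perp}$ have zero $\bU$-component, $(\bw_{r,k-1})_\bU = -\bdelta_{r,k-1,\bU}$. Define $\bdelta_{r,k} := \by_{r,k} - (\bw_{r,k-1})_\bU = \mathbb{P}_\bU(\bw_{r,k-1}) - (\bw_{r,k-1})_\bU$. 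The oracle guarantee gives $\|\bdelta_{r,k}\|_2 \le \alpha\|(\bw_{r,k-1})_\bU\|_2 + \beta\|\bw_{r,k-1}\|_2 = \alpha\|\bdelta_{r,k-1,\bU}\|_2 + \beta\|\bw_{r,k-1}\|_2$. The key estimate needed here is $\|\bw_{r,k-1}\|_2 \le 4$: from the decomposition, $\|\bw_{r,k-1}\|_2 \le \|\bz_{r,\bU_\perp}\|_2 + \sum_{\tau=0}^{k-2}\|\bdelta_{r,\tau}\|_2 + \|\bdelta_{r,k-1}\|_2 \le 2 + \sum_{\tau\ge 0}(\alpha^{\tau+1} \cdot 2 + 4(\tau+1)\beta)$, and since $\sum_{\tau\ge 0}\alpha^{\tau+1} = \alpha/(1-\alpha) = 1/2$ and $\beta = O(1/d^3)$ with the loop lengths $O(\log d)$, the tail is at most a small constant, so $\|\bw_{r,k-1}\|_2 \le 4$ comfortably. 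Hence $\|\bdelta_{r,k}\|_2 \le \alpha\|\bdelta_{r,k-1,\bU}\|_2 + 4\beta$, the first bound. For the second, apply the induction hypothesis $\|\bdelta_{r,k-1,\bU}\|_2 \le \|\bdelta_{r,k-1}\|_2 \le \alpha^k\|\bz_{r,\bU_\perp}\|_2 + 4k\beta$, so $\|\bdelta_{r,k}\|_2 \le \alpha^{k+1}\|\bz_{r,\bU_\perp}\|_2 + 4k\alpha\beta + 4\beta \le \alpha^{k+1}\|\bz_{r,\bU_\perp}\|_2 + 4(k+1)\beta$ using $\alpha \le 1$. Finally, $\by_{r,k} = (\bw_{r,k-1})_\bU + \bdelta_{r,k} = -\bdelta_{r,k-1,\bU} + \bdelta_{r,k}$, and $\bw_{r,k} = \bw_{r,k-1} - \by_{r,k} = \bz_{r,\bU_\perp} - \sum_{\tau=0}^{k-2}\bdelta_{r,\tau,\bU_\perp} - \bdelta_{r,k-1} + \bdelta_{r,k-1,\bU} - \bdelta_{r,k} = \bz_{r,\bU_\perp} - \sum_{\tau=0}^{k-1}\bdelta_{r,\tau,\bU_\perp} - \bdelta_{r,k}$, completing the induction.

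\textbf{Main obstacle.} The routine calculations are the triangle-inequality bookkeeping; the one place requiring genuine care is controlling $\|\bw_{r,k-1}\|_2$ (equivalently $\|\bz_r\|_2$) uniformly in $k$ and across rounds, since $\bdelta_{r,k}$ is defined recursively in terms of itself (its $\bU$-component feeds the next step) and the factor $\beta\|\bw_{r,k-1}\|_2$ in the oracle error must not be allowed to blow up. The resolution is that the geometric decay of the $\bdelta$'s with ratio $\alpha = 1/3$ makes the sum $\sum_\tau \|\bdelta_{r,\tau}\|_2$ bounded by a constant, while the additive $O(k\beta)$ terms stay $O(\mathrm{polylog}(d)/d^3) = o(1)$ — this is exactly why the lemma is stated with the explicit $4(k+1)\beta$ additive slack, which is loose enough to absorb the cross-terms $4k\alpha\beta$ at every step without the constant growing. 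One must also invoke the hypothesis $\|\bz_r\|_2 \le 2$ of the lemma, which is itself guaranteed by the third bullet of Lemma~\ref{lem:outer} for all rounds $r \le R$ given $\|\bz\|_2 = 1$ and $R, K = O(\log d)$, $\beta = O(1/d^3)$.
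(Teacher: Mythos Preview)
Your proof is correct and follows essentially the same route as the paper: induction on $k$, defining $\bdelta_{r,k}$ as the oracle error $\mathbb{P}_\bU(\bw_{r,k-1}) - (\bw_{r,k-1})_\bU$, noting $(\bw_{r,k-1})_\bU = -\bdelta_{r,k-1,\bU}$, and bounding $\|\bw_{r,k-1}\|_2 \le 4$ via the geometric series in the $\bdelta$'s. The only cosmetic difference is that the paper treats $k=0$ (via Lemma~\ref{lem:inner-begin}) as the base case and folds your $k=1$ computation into the inductive step, whereas you spell out $k=1$ separately.
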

\begin{proof}
We prove the claim by induction. The base case that $\bw_{r, 0} = \bz_{r, \bU_\perp} - \bdelta_{r, 0}$ and $\|\bdelta_{r, 0}\|_{2} \leq \alpha \|\bz_{r, \bU_{\perp}}\|_2 + 2\beta$ is proved in Lemma~\ref{lem:inner-begin}. (Note that $\by_{r, 0}$ is not defined.)

Suppose the lemma statement holds for $k-1$, which means we have
\begin{align}
\bw_{r, k-1, \bU} = -\bdelta_{r, k - 1, \bU} \label{eq:inner2}
\end{align}
and
\begin{align}
\|\bw_{r, k-1}\|_2 = &~ \left\|\bz_{r, \bU_{\perp}} - (\sum_{\tau=0}^{k-2} \bdelta_{r, \tau, \bU_{\perp}}) - \bdelta_{r, k-1} \right\|_2 \notag \\
\leq &~ \|\bz_{r, \bU_{\perp}}\|_2 + (\sum_{\tau=0}^{k-2} \|\bdelta_{r, \tau, \bU_{\perp}}\|_2) + \|\bdelta_{r, k-1}\|_2 \notag \\
\leq &~ \|\bz_{r, \bU_{\perp}}\|_2 + \sum_{\tau=0}^{k-1} \|\bdelta_{r, \tau}\|_2  \notag\\
\leq &~ \sum_{\tau=0}^{k} \Big( \alpha^{\tau}\|\bz_{r, \bU_{\perp}}\|_2 + 4(\tau+1)\beta \Big) \notag \\
\leq &~ \frac{1-\alpha^{k+1}}{1-\alpha} \cdot \|\bz_{r, \bU_{\perp}}\|_2 + 4K^2\beta \leq 4, \label{eq:inner3}
\end{align}
where the second step holds from triangle inequality, the third step follows from 
\[
\|\bdelta_{r, \tau}\|_2^2 = \|\bdelta_{r, \tau, \bU} + \bdelta_{r, \tau, \bU_\perp}\|_2^2 = \|\bdelta_{r, \tau, \bU}\|_2^2 + \|\bdelta_{r, \tau, \bU_\perp}\|_2^2 \geq \|\bdelta_{r, \tau, \bU}\|_2^2,
\]
the fourth step holds from Lemma \ref{lem:inner-begin} and the induction hypothesis.
The last step follows the choice of $\alpha = 1/3, \beta = O(1/d^3)$, $K = O(\log d)$, and $\|\bz_{r, \bU_\perp}\|_2 \leq \|\bz_{r}\|_2 \leq 2$.

Now we are ready to prove that the induction hypothesis also holds for the $k$-th iteration.

{\bf Properties of $\bdelta_{r,k}$ and $\by_{r,k}$.}
We have
\begin{align}
\|\by_{r, k} + \bdelta_{r, k-1, \bU}\|_2 = &~ \|\mathbb{P}_{\bU}(\bw_{r, k-1}) + \bdelta_{r, k-1, \bU}\|_2 = \|\mathbb{P}_{\bU}(\bw_{r, k}) -  \bw_{r, k-1, \bU}\|_2\notag \\
\leq &~ \alpha\|\bw_{r, k-1, \bU}\|_2 + \beta \|\bw_{r, k-1}\|_2 = \alpha\|\bdelta_{r, k-1, \bU}\|_2 + 4\beta, \label{eq:inner4}
\end{align}
where the first step follows from the definition that $\by_{r, k} = \mathbb{P}_{\bU}(\bw_{r, k-1})$, the second step follows from Eq.~\eqref{eq:inner2}, the third step holds from the guarantee of $\mathbb{P}_{\bU}$ and the last step holds from Eq.~\eqref{eq:inner2}\eqref{eq:inner3}. 

Hence, define $\bdelta_{r, k} = \by_{r, k} + \bdelta_{r, k-1, \bU}$, from Eq.~\eqref{eq:inner4} we have that
\begin{align*}
\|\bdelta_{r, k}\|_2 \leq \alpha \|\bdelta_{r, k-1, \bU}\|_2 + 4\beta.
\end{align*}
Note that this definition of $\bdelta_{r, k}$ also gives us that
\begin{equation}\label{eq:inner_y}
\by_{r, k} = \bdelta_{r, k} - \bdelta_{r, k-1, \bU}.
\end{equation}
By induction hypothesis, we also have 
\[
\|\bdelta_{r, k}\|_2 \leq \alpha \|\bdelta_{r, k-1, \bU}\|_2 + 4\beta \leq \alpha\|\bdelta_{r, k-1}\|_2 + 4\beta \leq \alpha^{k + 1}\|\bz_{r, \bU_{\perp}}\|_2 + 4(k+1)\beta.
\]

{\bf Property of $\bw_{r,k}$.} We have
\begin{align*}
\bw_{r, k} = &~ \bw_{r, k-1} - \by_{r,k} \\
= &~ \Big( \bz_{r, \bU_{\perp}} - (\sum_{\tau=0}^{k-2} \bdelta_{r, \tau, \bU_{\perp}}) - \bdelta_{r, k-1}\Big) - \Big( \bdelta_{r, k} - \bdelta_{r, k-1, \bU} \Big)\\
= &~ \bz_{r, \bU_{\perp}} - (\sum_{\tau=0}^{k-1} \bdelta_{r, \tau, \bU_{\perp}}) - \bdelta_{r, k}.
\end{align*}
Here the first step follows from the definition of $\bw_{r, k}$ (Line \ref{line:w-def}), the second step follows from the induction hypothesis about $\bw_{r,k-1}$ and Eq.~\eqref{eq:inner_y} that we just proved. We conclude the proof here.
\end{proof}

Now we can go back to analyse the outer loops and prove Lemma \ref{lem:outer}.
\begin{proof}[Proof of Lemma \ref{lem:outer}]
Consider any round $r \in [R]$. We have
\begin{align}
\bz_{r+1} = &~ \bz_{r} - \bw_{r, K} \notag \\
= &~ \bz_{r} - \Big( \bz_{r, \bU_{\perp}} - (\sum_{\tau = 0}^{K-1} \bdelta_{r, \tau, \bU_{\perp}}) - \bdelta_{r, K} \Big) \notag \\
= &~ \bz_{r, \bU} + (\sum_{\tau = 0}^{K-1} \bdelta_{r, \tau, \bU_{\perp}}) + \bdelta_{r, K} \label{eq:outer1}
\end{align}
Here the first step follows from the update rule (Line \ref{line:update}), the second step follows from Lemma \ref{lem:inner-induction}. 

Hence, we have $\bz_{r+1, \bU} = \bz_{r, \bU} + \bdelta_{r, K, \bU}$, so for the first claim, we have
\begin{align*}
\|\bz_{r+1, \bU} - \bz_{r, \bU}\|_2 = &~ \|\bdelta_{r, K, \bU}\|_2 \leq \|\bdelta_{r, K}\|_2\\
\leq &~ \alpha^{K+1}\|\bz_{r, \bU_\perp}\|_2 + 4(K+1)\beta \leq 4(K+2)\beta.
\end{align*}
Here the third step follows from Lemma \ref{lem:inner-induction}, the last step follows from $\|\bz_{r, \bU_\perp}\|_2 \leq \|\bz_{r}\|_2 \leq 2$, and the choice of parameter that $K = O(\log d)$, $\alpha = 1/3$ and $\beta = O(1/d^3)$, so that $\alpha^K < \beta$.

For the second claim, the orthogonal component $\bz_{r+1, \bU_\perp}$ satisfies
\begin{align*}
\|\bz_{r+1, \bU_\perp}\|_2 = &~ \left\|\sum_{k=0}^{K} \bdelta_{r, k, \bU_\perp}\right\|_2 \leq \sum_{k=0}^{K}\|\bdelta_{r, k, \bU_\perp}\|_2 \leq \sum_{k=0}^{K}\|\bdelta_{r, k}\|_2 \\
\leq &~ \sum_{k=0}^{K} \Big( \alpha^{k+1}\|\bz_{r, \bU_\perp}\|_2 + 4(k+1) \beta \Big) \\
\leq &~ 2\alpha \|\bz_{r, \bU_\perp}\|_2 + 4K^2 \beta,
\end{align*}
where the first step follows from Eq.~\eqref{eq:outer1}, the second step follows from triangle inequality, the fourth step follows from Lemma \ref{lem:inner-induction}, and the last step follows from the choice of parameter that $\alpha = 1/3$.

Finally, we prove the third claim by induction on $r$. First note that in the base case where $r=0$, by definition we have $\bz_{1, \bU} = \bz$, so $\|\bz_{1, \bU}\|_2 = \|\bz\|_2 = 1$. Suppose the third claim continues to hold up to round $r-1$, for the $r$-th round, we have
\begin{align*}
\|\bz_{r+1}\|_2^2 = &~ \|\bz_{r+1, \bU_\perp}\|_2^2 + \|\bz_{r+1, \bU}\|_2^2 \\
\leq &~ \big(2\alpha \|\bz_{r, \bU_\perp}\|_2 + 4K^2\beta \big)^2 + \big(\|\bz_{r, \bU}\|_2 + 4(K+2)\beta \big)^2 \\
\leq &~ \big(\|\bz_{r, \bU_\perp}\|_2 + 4K^2\beta \big)^2 + \big(\|\bz_{r, \bU}\|_2 + 4 K^2 \beta \big)^2 \\
= &~ (\|\bz_{r, \bU_\perp}\|_2^2 + \|\bz_{r, \bU}\|_2^2) + 8 K^2 \beta \cdot (\|\bz_{r, \bU_\perp}\|_2 + \|\bz_{r, \bU}\|_2) + 32 K^4 \beta^2 \\
\leq &~ \|\bz_r\|_2^2 + 16 K^2 \beta \cdot \|\bz_r\|_2 + 32 K^4 \beta^2 \\
\leq &~ 1 + O(K^4 r \beta),
\end{align*}
where the second step follows from the first two claims that we just proved: $\|\bz_{r+1, \bU_\perp}\|_2 \leq 2\alpha \|\bz_{r, \bU_\perp}\|_2 + 4K^2\beta$, and $\|\bz_{r+1, \bU}\|_2 \leq \|\bz_{r, \bU}\|_2 + \|\bz_{r+1, \bU} - \bz_{r, \bU}\|_2 \leq \|\bz_{r, \bU}\|_2 + 4(K+2)\beta$, the third step follows from $2 \alpha < 1$ since $\alpha = 1/3$ and $K+2 < K^2$ since $K = O(\log d)$, the fifth step follows from $\|\bz_r\|_2^2 = \|\bz_{r, \bU_\perp}\|_2^2 + \|\bz_{r, \bU}\|_2^2$, and the last step follows from the induction hypothesis that $\|\bz_r\|_2 \leq 1 + O(K^4 (r-1) \beta)$, and that $K^2 \beta < K^4 \beta$ and $K^4 \beta^2 < K^4 \beta$.
\end{proof}

Now we can wrap up the reduction and prove Lemma \ref{lem:hard-amplification}.
\begin{proof}[Proof of Lemma \ref{lem:hard-amplification}]
We prove that if there is an algorithm that outputs $(\alpha, \beta)$-approximate solutions for the online projection problem in $O(d^{2-\gamma})$ amortized time, then we can use this algorithm to obtain $O(1/d^2)$-approximate solutions for the online projection problem in $O(d^{2-\gamma + o(1)})$ amortized time, and hence contradicts with Lemma~\ref{lem:online-projection}.

Given an orthonormal matrix $\bU$ and let $\bz^\ttop$ be the query at the $t$-th round of the online projection problem, then we perform the reduction shown in Algorithm \ref{algo:reduction} and its output $\bz^\ttop_{R+1}$ satisfies
\begin{align*}
\|\bz_{R+1, \bU}^\ttop - \bz_{\bU}^\ttop\|_2 = &~ \|\bz_{R+1, \bU}^\ttop - \bz_{1, \bU}^\ttop\|_2 \leq \sum_{r=1}^{R} \|\bz_{r+1, \bU}^\ttop - \bz_{r, \bU}^\ttop\|_2 \leq O(RK\beta).
\end{align*}
Here the first inequality follows from triangle inequality and the second one holds due to the first claim of Lemma \ref{lem:outer}.
Meanwhile, due to the second claim of Lemma \ref{lem:outer}, we have
\begin{align*}
\|\bz_{R+1, \bU_\perp}^\ttop\|_2 \leq (2\alpha)^{K}\|\bz_{1, \bU_\perp}^\ttop\|_2 + O(RK^2\beta) \leq O(RK^2\beta),
\end{align*}
where the second step follows from that $(2 \alpha)^K < 1/d^3$ since $K = O(\log d)$ and $\alpha = 1/3$.

Combining the above two inequalities, and since $R = O(\log d)$, $K = O(\log d)$, and $\beta = O(1/d^3)$, we obtain
\begin{align*}
\|\bz_{R+1}^\ttop - \bz_{\bU}^\ttop\|_2 \leq \|\bz_{R+1, \bU}^\ttop - \bz_{\bU}^\ttop\|_2 + \|\bz_{R+1, \bU_\perp}^\ttop\|_2 \leq O(RK\beta  + RK^2\beta) \leq O(1/d^2).
\end{align*}
That is to say, $\bz_{R+1}^\ttop$ is an $O(1/d^2)$-approximate projection of $\bz^\ttop$ onto $\bU$.

We still need to bound the runtime of the reduction. Let $\mathcal{T}$ denote the amortized query time of the $(\alpha, \beta)$-approximate oracles $\mathbb{P}_{\bU_\perp}$ and $\mathbb{P}_{\bU}$. The reduction involves $R = O(\log d)$ outer loops, with each outer loop requiring a single call to $\mathbb{P}_{\bU_\perp}$ and containing $K = O(\log d)$ inner loops. During each inner loop, a single call to $\mathbb{P}_{\bU}$ is made, and the construction of $\bw_{r,k}$ takes $O(d)$ time.

Therefore, we can conclude that the total runtime of the algorithm is bounded by $RK \cdot \mathcal{T} + O(RKd) = (\mathcal{T} + d) \cdot O(\log^2 d)$. If $\mathcal{T} = O(d^{2-\gamma})$, then we can solve the $O(1/d^2)$-approximate online projection problem in amortized $O(d^{2-\gamma} \cdot \log^2 d)$ time, and this contradicts with the $\omv$ conjecture by Lemma~\ref{lem:online-projection}. This completes the proof.
\end{proof}

As a corollary, we prove the hardness of constant approximate-$\omv$.
\begin{theorem}[Hardness of approximate-$\omv$]
\label{thm:hard-omv-approx}
Let $d$ be a sufficiently large integer and $T = \poly(d)$.
Let $\gamma > 0$ be any constant, $\alpha = 1/3$ and $\beta = O(1/d^3)$. 
Let $\bH\in \R^{d\times d}$ ($\|\bH\|_2 =1$), and $\bz^{(1)}, \ldots, \bz^{(T)}$ be online queries ($\|\bz^\ttop\|_2 = 1$).
Assuming the $\omv$ conjecture is true, then there is no algorithm with $\poly(d)$ preprocessing time and $O(d^{2-\gamma})$ amortized running time that can return an $(\alpha, \beta)$-approximate answer to $\bH\bz^\ttop$ for all $t \in [T]$, i.e., a vector $\by^\ttop$ s.t. $\|\by^\ttop - \bH\bz^\ttop\|_2 \leq \alpha \|\bH\bz^\ttop\|_2 + \beta$. This continues to hold when $\bH$ is a projection matrix.
\end{theorem}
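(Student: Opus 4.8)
The plan is to derive Theorem~\ref{thm:hard-omv-approx} as a corollary of Lemma~\ref{lem:hard-amplification} together with the binary-division decomposition already used in the proof of Lemma~\ref{lem:online-projection}. First I would observe that an $(\alpha,\beta)$-approximate $\omv$ oracle for general $\bH$ with $\|\bH\|_2=1$ immediately yields, as a special case, an $(\alpha,\beta)$-approximate \emph{online projection} oracle: given an orthonormal $\bU$, set $\bH = \bU\bU^\top$, which is a projection matrix (hence $\|\bH\|_2 = 1$ whenever $\bU$ has at least one column), and note that $\bH\bz = \bz_{\bU} = \bU\bU^\top\bz$, so $\|\by - \bH\bz\|_2 \le \alpha\|\bH\bz\|_2 + \beta\|\bz\|_2 = \alpha\|\bz_{\bU}\|_2 + \beta\|\bz\|_2$ is exactly the $(\alpha,\beta)$-approximate projection guarantee. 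Since $\bU\bU^\top$ can be formed in $\poly(d)$ preprocessing time, any $O(d^{2-\gamma})$-amortized $(\alpha,\beta)$-approximate $\omv$ algorithm restricted to projection matrices would give an $O(d^{2-\gamma})$-amortized $(\alpha,\beta)$-approximate online projection algorithm, contradicting Lemma~\ref{lem:hard-amplification}. This already proves the ``projection matrix'' case, and a fortiori the general $\|\bH\|_2=1$ case.

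Alternatively, and to make the statement self-contained without appealing to the final sentence of Lemma~\ref{lem:hard-amplification}, I would spell out the reduction in the other direction as well, mirroring Lemma~\ref{lem:online-projection}: an $(\alpha,\beta)$-approximate online projection oracle can simulate an $(\alpha', \beta')$-approximate $\omv$ query for a well-conditioned PSD $\bH$ via the same binary-division trick. Decompose $\bH = \bU\Sigma\bU^\top$ in preprocessing, form the $k = O(\log d)$ submatrices $\bU(1),\dots,\bU(k)$ from the bit expansions of the eigenvalues, feed each $\bU(j)$ to a separate online projection instance, and output $\by^\ttop = \sum_{j=1}^k 2^{-j}\bz_j^\ttop$ where $\bz_j^\ttop$ is the $(\alpha,\beta)$-approximate projection returned by instance $j$. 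The spectral-norm bound $\|\bH - \sum_j 2^{-j}\bU(j)\bU(j)^\top\|_2 \le 2^{-k} = O(1/d^2)$ from Lemma~\ref{lem:online-projection}'s proof handles the truncation error; the accumulated approximation error is $\sum_j 2^{-j}(\alpha\|\bz^\ttop_{\bU(j)}\|_2 + \beta\|\bz^\ttop\|_2) \le \alpha\|\bz^\ttop\|_2 \sum_j 2^{-j} + \beta\|\bz^\ttop\|_2 = O(\alpha + \beta)\|\bz^\ttop\|_2$, which after rescaling and using $\|\bH\bz^\ttop\|_2 \ge \lambda_d(\bH)\|\bz^\ttop\|_2 \ge \tfrac13\|\bz^\ttop\|_2$ gives an $(O(\alpha), O(\beta))$-relative guarantee for $\bH\bz^\ttop$.

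In either direction the only real subtlety is bookkeeping the interplay between the multiplicative parameter $\alpha$ and the additive parameter $\beta$ under the geometric sum, and confirming that the constant $\alpha=1/3$ and $\beta = O(1/d^3)$ survive the reduction without blowing up past the thresholds demanded by Lemma~\ref{lem:hard-amplification}; since the sum $\sum_j 2^{-j} < 1$ and the condition number of $\bH$ is $O(1)$, no amplification of $\alpha$ beyond a constant occurs, and $\beta$ only grows by a $\poly\log(d)$ factor, which is absorbed. The runtime accounting is immediate: $k = O(\log d)$ projection queries per $\omv$ step plus $O(d\log d)$ arithmetic, so an $O(d^{2-\gamma})$-amortized approximate-$\omv$ algorithm yields an $O(d^{2-\gamma}\log d)$-amortized approximate online-projection algorithm, contradicting Lemma~\ref{lem:hard-amplification} (hence, chained back, the $\omv$ conjecture). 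I expect the main obstacle to be purely expository — choosing which of the two framings to present and making sure the projection-matrix special case is stated with the cleanest possible constants — rather than any genuine mathematical difficulty, since all the needed estimates are already established in Sections~\ref{sec:online-projection} and~\ref{sec:hard-amplification}.
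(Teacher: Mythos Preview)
Your first paragraph is correct and is exactly what the paper intends: the theorem is stated as a corollary of Lemma~\ref{lem:hard-amplification}, and the intended argument is precisely the trivial observation that an $(\alpha,\beta)$-approximate $\omv$ oracle applied to the projection matrix $\bH=\bU\bU^\top$ (with unit-norm queries) is literally an $(\alpha,\beta)$-approximate online projection oracle, so a fast such oracle contradicts Lemma~\ref{lem:hard-amplification}. The general $\|\bH\|_2=1$ case then follows a fortiori.

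Your second paragraph, however, is both unnecessary and has the reduction running the wrong way at the end. You describe the binary-division trick as ``an $(\alpha,\beta)$-approximate online projection oracle can simulate an $(\alpha',\beta')$-approximate $\omv$ query'', which is the direction of Lemma~\ref{lem:online-projection} (projection $\Rightarrow$ $\omv$). From that you cannot conclude that a fast approximate-$\omv$ algorithm yields a fast approximate-projection algorithm; that is the contrapositive of the \emph{opposite} reduction. Since your first paragraph already establishes projection $\leq$ approximate-$\omv$ directly (and hence the theorem), the entire alternative route should simply be dropped.
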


\subsection{Reduction from online projection to fully dynamic LSR}
\label{sec:reduction}
Finally, we provide a reduction from $(\alpha, \beta)$-approximate online projection to $\epsilon$-approximate fully dynamic LSR, where $\alpha = 1/3$, $\beta = O(1/d^3)$, and $\eps = \frac{1}{100}$. 
Given an instance of online projection with orthonormal matrix $\bU \in \R^{d \times d_1}$, we first set up the LSR problem.

\vspace{+2mm}
{\bf \noindent Setup for reduction \ \ } Let 
\begin{align*}
\bA^{(0)} = 
\left[
\begin{matrix}
\sqrt{\lambda} \cdot \mathbf{I}_d \\
(\bU_\perp)^{\top}
\end{matrix}
\right] \in \R^{(2d-d_1) \times d}  \quad \text{and} \quad \bb^{(0)} = 
\left[
\begin{matrix}
\mathbf{0}_d\\
\frac{1}{\sqrt{d}} \cdot \mathbf{1}_{d-d_1}
\end{matrix} 
\right] 
\in \R^{2d-d_1}
\end{align*}
where $\lambda = 1/d^{40}$. For convenience, we have included a notation table in Table~\ref{tab:parameters}.

\begin{table}[ht]
\centering
\begin{tabular}{|c|c|c|}
\hline
Parameter & Value & Comment \\ \hline
$\eps$ & $< 1/100$ & approximation factor of fully dynamic LSR \\ \hline
$\lambda$ & $1/d^{40}$ & coefficient of the regularization term \\ \hline
$\alpha$ & $1/3$ & approximation factor of online projection \\ \hline
\end{tabular}
\caption{Parameters used in the reduction from online projection to fully dynamic LSR}
\label{tab:parameters}
\end{table}

It would be convenient to view the first $d$ rows as a regularization term, and the (squared) loss equals to
\begin{align*}
L(\bx) := \|\bA^{(0)} \bx - \bb^{(0)}\|_2^2 = \left\|(\bU_\perp)^{\top}\bx - \frac{1}{\sqrt{d}} \cdot \mathbf{1}_{d-d_1}\right\|_2^2 + \lambda \|\bx\|_2^2.
\end{align*}
In the processing step, we also compute
\begin{equation}\label{eq:def_x*}
\bx^{*} := \frac{1}{\sqrt{d}} \sum_{j=1}^{d-d_1}\bU_{\perp,j} \in \R^d,
\end{equation}
where with a slight abuse of notation we let $\bU_{\perp, j} \in \R^d$ denote the $j$-th column of matrix $\bU_\perp$ (Hence $\bx^{*}$ also lies in the column space of $\bU_\perp$).
Overall, the preprocessing step takes at most $O(d^\omega)$ time.

\vspace{+2mm}
{\bf \noindent Online projection query \ \ } Given an online projection query $\bz^\ttop \in \R^{d}$ of the $t$-th step, recall our goal is to find an $(\alpha, 1/d^3)$-approximate projection onto $\bU$.

The reduction is formally presented in Algorithm \ref{algo:reduction}. First, it inserts a new row of $(\frac{1}{10}\cdot \bz^\ttop, 1)$ to the matrix, and then it calls the dynamic LSR solver (Line \ref{line:regression}) to obtain an $\eps$-approximate solution $\bx^\ttop$. The final output is determined as follows: if the component $\bz_\bU^\ttop$ is already sufficiently small, then it is captured by the condition on Line \ref{line:termination}, and we can simply output $\bf{0}$. Otherwise, Algorithm \ref{algo:reduction} outputs a scaled version of $(\bx^\ttop - \bx^{*})$, where the scaling factor is determined by Eq.~\eqref{eq:interpolation}. Finally, the new row is deleted, and we return to the original setup.

\begin{algorithm}[!htbp]
\caption{Reduction: From $(\alpha, 1/d^3)$-approximate online projection to $\epsilon$-approximate dynamic LSR}
\label{algo:reduction}
\begin{algorithmic}[1]
\State Insert $(\frac{1}{10}\cdot\bz^\ttop, 1) \in \R^{d}\times \R$ \Comment{Insert a new row}
\State Call the regression solver and let $\bx^{\ttop}$ be an $\eps$-approximate solution of the square root of \label{line:regression}
\begin{align}
L^\ttop(\bx) := \left\|(\bU_{\perp})^\top\bx - \frac{1}{\sqrt{d}} \cdot \mathbf{1}_{d-d_1}\right\|_2^2 + \frac{1}{100}|\langle \bz^\ttop, \bx\rangle - 10|^2 + \lambda \|\bx\|_2^2   \label{eq:lsr-r}
\end{align}
\State $\by^\ttop \leftarrow \bx^{\ttop} - \bx^{*}$ \label{line:y-rt}
\If{$\|\by^\ttop\|_2 \geq d^3$ \textbf{or} $|10 - \langle \bz^\ttop, \bx^\ttop\rangle| \geq 200 d^4\sqrt{\lambda}$}  
\label{line:termination}
\State \Return $\hat{\bz}_\bU^\ttop \leftarrow \mathbf{0}$ 
\Else
\State \Return $\hat{\bz}_\bU^\ttop \leftarrow \xi^{*} \cdot \by^\ttop$ where \label{line:interpolation}
\begin{align}
\xi^{*} = \arg\min_{\xi} \|\bz^\ttop - \xi \cdot \by^\ttop\|_2 \label{eq:interpolation}
\end{align}
\EndIf
\State Delete the row $(\frac{1}{10}\cdot\bz^\ttop, 1)$ \Comment{Delete the new row}
\end{algorithmic}
\end{algorithm}

Intuitively, the first term $\|(\bU_{\perp})^\top\bx - \frac{1}{\sqrt{d}} \cdot \mathbf{1}_{d-d_1}\|_2^2$ of the loss $L^\ttop(\bx)$ enforces the approximate solution $\bx^{(t)}$ to satisfy that $\bx^{(t)}_{\bU_{\perp}} \approx \bx^*$ on the subspace $\bU_{\perp}$, since $\bx^*$ is the minimizer of the first term. The second term $\frac{1}{100}|\langle \bz^\ttop, \bx\rangle - 10|^2$ then enforces $\bx^{(t)}_{\bU}$ to be close to a scaled version of $\bz^{(t)}_{\bU}$. Thus, $(\bx^{(t)} - \bx^*)$ is approximately a scaled version of $\bz^{(t)}_\bU$.

Formally, our goal is to prove the following lemma.
\begin{lemma}
\label{lem:reduction-lsr}
For any $t \in [T]$, the output of Algorithm \ref{algo:reduction} satisfies
\begin{align*}
\|\hat{\bz}_{\bU}^{\ttop} - \bz_{\bU}^\ttop\|_2 \leq \alpha \|\bz_\bU^\ttop\|_2 + O(1/d^3).
\end{align*}
\end{lemma}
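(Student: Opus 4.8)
The plan has four steps: (1) solve for the exact minimizer $\bx^\star$ of the regularized objective $L^\ttop$ in closed form, exploiting the near-decoupling of its $\bU$- and $\bU_\perp$-components; (2) use strong convexity to localize the returned solution $\bx^\ttop$ near $\bx^\star$, while tracking carefully which \emph{direction} the residual $\bh:=\bx^\ttop-\bx^\star$ points in; (3) show the test on Line~\ref{line:termination} returns $\mathbf 0$ only when $\|\bz_\bU^\ttop\|_2=O(1/d^3)$ (so $\mathbf 0$ is trivially valid) and otherwise forces $\|\bz_\bU^\ttop\|_2>5/d^3$; (4) analyze the line search on Line~\ref{line:interpolation}. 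For Step (1) I would write $\bx=\bx_\bU+\bx_{\bU_\perp}$, set $\ba:=\bU^\top\bz^\ttop$ (so $\|\ba\|_2=\|\bz_\bU^\ttop\|_2\le1$) and, for fixed $\bx_{\bU_\perp}$, $s:=\langle\bz^\ttop,\bx_{\bU_\perp}\rangle-10$; minimizing $L^\ttop$ over $\bx_\bU$ first gives $\bx^\star_\bU=\tfrac{-s^\star}{100\lambda+\|\ba\|_2^2}\,\bz_\bU^\ttop$ — \emph{a nonnegative multiple of} $\bz_\bU^\ttop$ — and substituting back, $\bx^\star_{\bU_\perp}=\bU_\perp\bp^\star$ with $(1+\lambda)\bp^\star=\tfrac1{\sqrt d}\mathbf 1-\tfrac{\lambda s^\star}{100\lambda+\|\ba\|_2^2}\,\bU_\perp^\top\bz^\ttop$, from which one reads off $-s^\star\in[8.9,11.1]$ (using $|\langle\bU_\perp^\top\bz^\ttop,\tfrac1{\sqrt d}\mathbf 1\rangle|\le\|\bz_{\bU_\perp}^\ttop\|_2\le1$) and $\|\bx^\star_{\bU_\perp}-\bx^*\|_2=\|\bp^\star-\tfrac1{\sqrt d}\mathbf 1\|_2\le\lambda+\tfrac{11\lambda}{100\lambda+\|\ba\|_2^2}$, with $\bx^*$ as in \eqref{eq:def_x*}. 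Writing $\rho:=\|\bx^\star_\bU\|_2=\tfrac{(-s^\star)\|\ba\|_2}{100\lambda+\|\ba\|_2^2}$ and $E:=\|\bp^\star-\tfrac1{\sqrt d}\mathbf 1\|_2^2+\lambda\|\bp^\star\|_2^2$, I would also record the identity $L^\ttop(\bx^\star)/\lambda=\rho^2\big(1+\tfrac{100\lambda}{\|\ba\|_2^2}\big)+E/\lambda$, the fact $\langle\bz^\ttop,\bx^\star\rangle=10+\tfrac{100\lambda s^\star}{100\lambda+\|\ba\|_2^2}$, and the crude bounds $L^\ttop(\bx^\star)\le2$ always and $E=O(\lambda)$ once $\|\ba\|_2^2\gg\lambda$.

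For Step (2): $L^\ttop$ is the quadratic with Hessian $2\bH_0$, $\bH_0:=\bU_\perp\bU_\perp^\top+\tfrac1{100}\bz^\ttop(\bz^\ttop)^\top+\lambda\bI$, and $\nabla L^\ttop(\bx^\star)=0$, so the $\eps$-approximation guarantee $\sqrt{L^\ttop(\bx^\ttop)}\le(1+\eps)\sqrt{L^\ttop(\bx^\star)}$ yields $\bh^\top\bH_0\bh=L^\ttop(\bx^\ttop)-L^\ttop(\bx^\star)\le\big((1+\eps)^2-1\big)L^\ttop(\bx^\star)$. I would decompose $\R^d=\im[\bU_\perp]\oplus\mathrm{span}(\bz_\bU^\ttop)\oplus W$, $W:=\im[\bU]\cap(\bz_\bU^\ttop)^\perp$, and split $\bh=\bh_\perp+\bh_\parallel+\bh_W$. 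The structural point is: $\bh_W$ is an eigenvector of $\bH_0$ with eigenvalue $\lambda$, so $\|\bh_W\|_2^2\le\bh^\top\bH_0\bh/\lambda$; the $\im[\bU_\perp]$-block of $\bH_0$ has curvature $\ge1$, so $\|\bh_\perp\|_2^2\le\bh^\top\bH_0\bh$; and $\|\bh\|_2^2\le\bh^\top\bH_0\bh/\lambda$. Combining the last bound with the Step~(1) identity gives $\|\bh\|_2^2/\rho^2\le(2\eps+\eps^2)\big(1+\tfrac{100\lambda}{\|\ba\|_2^2}+\tfrac{E}{\lambda\rho^2}\big)$, and since $E/(\lambda\rho^2)\le2\lambda/\|\ba\|_2^2+o(1)$ this yields $\|\bh\|_2/\rho\le\sqrt{2\eps+\eps^2}\cdot\sqrt{1+O(\lambda/\|\ba\|_2^2)}$, which is $<0.17$ for $\eps\le1/100$ whenever $\|\ba\|_2^2>100\lambda$, and $<0.15$ once $\|\ba\|_2^2\gg\lambda$. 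This is the delicate point: the ratio $\|\bx^\ttop-\bx^\star\|_2/\|\bx^\star_\bU\|_2\approx\sqrt{2\eps}$ must come out \emph{independent of $s^\star$ and of $\|\bz_\bU^\ttop\|_2$}, which is precisely why one needs $L^\ttop(\bx^\star)/\lambda$ to scale like $\rho^2$ rather than being bounded against an unrelated quantity.

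For Step (3): using $10-\langle\bz^\ttop,\bx^\ttop\rangle=\tfrac{100\lambda(-s^\star)}{100\lambda+\|\ba\|_2^2}-\langle\bz^\ttop,\bh\rangle$ with $|\langle\bz^\ttop,\bh\rangle|\le\sqrt{100(2\eps+\eps^2)L^\ttop(\bx^\star)}$, together with $\rho-\|\bh\|_2\le\|(\bx^\ttop-\bx^*)_{\im[\bU]}\|_2\le\|\by^\ttop\|_2\le\rho+\|\bh\|_2+\|\bx^\star_{\bU_\perp}-\bx^*\|_2$, I would check that $\|\bz_\bU^\ttop\|_2=\|\ba\|_2\ge13/d^3$ implies both $\|\by^\ttop\|_2<d^3$ and $|10-\langle\bz^\ttop,\bx^\ttop\rangle|<200d^4\sqrt\lambda$, so the test on Line~\ref{line:termination} fails; contrapositively, whenever the algorithm returns $\mathbf 0$ one has $\|\hat\bz_\bU^\ttop-\bz_\bU^\ttop\|_2=\|\bz_\bU^\ttop\|_2<13/d^3=O(1/d^3)$ and the lemma holds in that case. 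Conversely, if Line~\ref{line:interpolation} is reached, the failing clause $|10-\langle\bz^\ttop,\bx^\ttop\rangle|<200d^4\sqrt\lambda$ forces (via $-s^\star\ge8.9$ and $L^\ttop(\bx^\star)\le2$) $\|\ba\|_2^2>100\lambda$, so Step~(2) applies with $\|\bh\|_2\le0.17\rho$; then the failing clause $\|\by^\ttop\|_2<d^3$ combined with $\|\by^\ttop\|_2\ge\rho-\|\bh\|_2\ge0.83\rho\gtrsim 5/\|\ba\|_2$ forces $\|\ba\|_2>5/d^3$; hence $\|\ba\|_2^2\gg\lambda$, so on Line~\ref{line:interpolation} we have $\|\by^\ttop\|_2=\Omega(1)$, $\|\bh\|_2\le0.15\rho$, and $\|\bh_\perp\|_2,\|\bx^\star_{\bU_\perp}-\bx^*\|_2=O(d^{-16})$.

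For Step (4): on Line~\ref{line:interpolation} I would split $\by^\ttop=\bx^\ttop-\bx^*=(\bx^\star_\bU+\bh_\parallel)+\big((\bx^\star_{\bU_\perp}-\bx^*)+\bh_W+\bh_\perp\big)$ into a part parallel to $\bz_\bU^\ttop$ of norm $\ge\rho-\|\bh\|_2$ and a part orthogonal to $\bz_\bU^\ttop$ of norm $\le\|\bh\|_2+\|\bx^\star_{\bU_\perp}-\bx^*\|_2$, so the line $\ell:=\mathrm{span}(\by^\ttop)$ makes an angle $\theta$ with $\bz_\bU^\ttop$ satisfying $\sin\theta\le\tfrac{\|\bh\|_2/\rho+o(1)}{1-\|\bh\|_2/\rho}\le\tfrac{0.15+o(1)}{0.85}<\tfrac13=\alpha$ for $\eps\le1/100$. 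Since $\xi^*$ of \eqref{eq:interpolation} makes $\xi^*\by^\ttop=P_\ell\bz^\ttop$ the orthogonal projection of $\bz^\ttop=\bz_\bU^\ttop+\bz_{\bU_\perp}^\ttop$ onto $\ell$, orthogonality of $(\bI-P_\ell)\bz_\bU^\ttop\in\ell^\perp$ and $P_\ell\bz_{\bU_\perp}^\ttop\in\ell$ gives
\begin{align*}
\|\xi^*\by^\ttop-\bz_\bU^\ttop\|_2^2=\|(\bI-P_\ell)\bz_\bU^\ttop\|_2^2+\|P_\ell\bz_{\bU_\perp}^\ttop\|_2^2\le(\sin\theta)^2\|\bz_\bU^\ttop\|_2^2+\Big(\tfrac{|\langle\bz_{\bU_\perp}^\ttop,\by^\ttop\rangle|}{\|\by^\ttop\|_2}\Big)^2,
\end{align*}
and since $\langle\bz_{\bU_\perp}^\ttop,\by^\ttop\rangle=\langle\bz_{\bU_\perp}^\ttop,(\bx^\star_{\bU_\perp}-\bx^*)+\bh_\perp\rangle$ (all other summands of $\by^\ttop$ lie in $\im[\bU]$), that inner product is $O(d^{-16})$ while $\|\by^\ttop\|_2=\Omega(1)$, yielding $\|\xi^*\by^\ttop-\bz_\bU^\ttop\|_2\le\sin\theta\,\|\bz_\bU^\ttop\|_2+O(d^{-16})\le\alpha\|\bz_\bU^\ttop\|_2+O(1/d^3)$. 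The hard part throughout is the constant budget: pushing the \emph{multiplicative} error down to $\alpha=1/3$ from an $\eps\le1/100$ oracle is tight and forces one to (i) confine the harmful error component $\bh_W$ to the exact $\lambda$-curvature subspace $W$, while exploiting that the potentially-polluting $\im[\bU_\perp]$-component $\bh_\perp$ sits in an $\Omega(1)$-curvature block and is therefore $1/\poly(d)$-negligible, and (ii) keep the $\rho^2$-scaling of $L^\ttop(\bx^\star)$ so that $\|\bh\|_2/\rho$ is a function of $\eps$ alone. The under-determined-system obstruction noted in the technical overview is neutralized by the $\lambda\|\bx\|_2^2$ term of \eqref{eq:lsr-r} (making $\bx^\star$ unique and $L^\ttop$ strongly convex), and the tiny-$\|\bz_\bU^\ttop\|_2$ regime is absorbed by the two clauses of the Line~\ref{line:termination} test, calibrated so that ``fires'' $\Rightarrow\|\bz_\bU^\ttop\|_2=O(1/d^3)$ and ``does not fire'' $\Rightarrow\|\bz_\bU^\ttop\|_2=\Omega(1/d^3)$.
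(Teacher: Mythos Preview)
Your proof is correct and follows a genuinely different route from the paper's. The paper works with the ``near-optimal'' point $\bx^*_t=\bx^*+\frac{10-\langle\bz_{\bU_\perp}^\ttop,\bx^*\rangle}{\|\bz_\bU^\ttop\|_2^2}\bz_\bU^\ttop$ (which zeroes the first two loss terms exactly), bounds the three pieces of $\bx^\ttop$ via a direct term-by-term claim (Claim~\ref{claim:decomposition}), and analyzes the line search by writing $\xi^*=(1+\nu)\Delta_t^{-1}\|\bz_\bU^\ttop\|_2$ and bounding $|\nu|$ through a quadratic comparison. You instead compute the true minimizer $\bx^\star$ via the profile function, use strong convexity of $L^\ttop$ together with the spectral structure of the Hessian on the three-way split $\im[\bU_\perp]\oplus\mathrm{span}(\bz_\bU^\ttop)\oplus W$, and handle the line search geometrically by recognizing $\xi^*\by^\ttop=P_\ell\bz^\ttop$ and bounding the angle between $\ell$ and $\bz_\bU^\ttop$. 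Your approach is conceptually cleaner and gives a noticeably better multiplicative constant ($\sin\theta\lesssim 0.18$ versus the paper's $|\nu|+2\sqrt\eps|1+\nu|\approx 0.31$), so the restriction $\eps<1/100$ is less tight for you; the paper's argument is more hands-on and avoids having to solve for $\bp^\star$ explicitly.

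Two minor numerical wobbles to clean up, neither affecting correctness: (i) at the boundary $\|\ba\|_2^2=100\lambda$ the ratio $\|\bh\|_2/\rho$ is really $\approx 0.20$, not $0.17$---but since Step~(3) bootstraps to $\|\ba\|_2^2\gg\lambda$ before you ever need the sharp $0.15$ bound in Step~(4), the argument is unaffected; (ii) your stated bound $E/(\lambda\rho^2)\le 2\lambda/\|\ba\|_2^2+o(1)$ has the terms slightly garbled---the correct estimate is $E/(\lambda\rho^2)\lesssim \|\ba\|_2^2/80 + O(\lambda/\|\ba\|_2^2)\le 1/80+o(1)$, which is what the rest of your computation actually uses.
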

\begin{proof}
We will prove the lemma by considering three different cases. We first give a short summary.
\begin{itemize}
\item {\bf Case 1: $\|\bz_{\bU}^\ttop\|_2 \geq 1/d^4$.} We prove that in this case we always have $|10 - \langle \bz^\ttop, \bx^{(t)}\rangle| \geq 200 d^4\sqrt{\lambda}$, so the condition of Line~\ref{line:termination} reduces to test whether $\|\by^\ttop\|_2 \geq d^3$ or not.
\begin{itemize}
\item {\bf Case 1-1: $\|\by^\ttop\|_2 \geq d^3$.} Then the condition of Line \ref{line:termination} is satisfied and we prove $\|\bz_\bU^\ttop\|_2 \leq 1/d^3$, so it is fine to output $\hat{\bz}_\bU^\ttop = \bf{0}$.
\item {\bf Case 1-2: $\|\by^\ttop\|_2 < d^3$.} Then the condition is not satisfied, and we prove the output $\hat{\bz}_\bU^\ttop = \xi^{*} \cdot \by^\ttop$ is an $(\alpha, 1/d^3)$-approximate projection of $\bz^\ttop$. 
This is the main technical part of the proof.
\end{itemize}
\item {\bf Case 2: $\|\bz_{\bU}^\ttop\|_2 < 1/d^4$.} In this case we prove that the termination condition of Line \ref{line:termination} must be true, and therefore, the output $\bz^{\ttop} = \mathbf{0}$ is an $O(1/d^3)$-approximation of $\bz_\bU^{\ttop}$.
\end{itemize}
Before going into details of the three cases, we first define a vector
\begin{align}\label{eq:x_rt*}
\bx^{*}_{t} = \bx^{*} + \frac{10 - \langle \bz_{\bU_\perp}^\ttop,\bx^{*}\rangle}{\|\bz_{\bU}^\ttop\|_2^2} \bz_{\bU}^\ttop \in \R^d.
\end{align}
We note $\bx^{*}_{t}$ is not the optimal solution of Eq.~\eqref{eq:lsr-r}, but it gives a good upper bound of the loss. 

To understand the role of $\bx_t^{*}$, note that if $\bx^{*}_{t}$ is a good approximation of the optimal solution of Eq.~\eqref{eq:lsr-r}, then $\bx^{\ttop}$ will be close to $\bx^{*}_{t}$, so $\by^\ttop = \bx^{\ttop} - \bx^{*} \approx \bx^{*}_{t} - \bx^{*}$. As a result, $\|\by^\ttop\|_2 \approx \|\bx^{*}_{t} - \bx^{*}\|_2 = O(1/\|\bz_\bU^\ttop\|_2)$. If $\|\by^\ttop\|_2 \geq d^3$ (the first part of the termination condition on Line~\ref{line:termination}) then we have $\|\bz_{\bU}^\ttop\|_2$ is small, so $\mathbf{0}$ is a good approximation of $\bz^{\ttop}_{\bU}$. 
On the other hand, if $\bx^{*}_{t}$ is not a good approximation of the optimal solution of Eq.~\eqref{eq:lsr-r}, then this means $\|\bz_{\bU}^\ttop\|_2$ is way too small, and we can capture this by the second part of the termination condition, i.e., the second term in the objective will be large.

One can verify that $\bx^{*}_{t}$ obtains zero loss except for the third regularization term. That is, it satisfies 
\begin{align*}
\langle \bx^{*}_{t}, \bU_{\perp, j}\rangle = &~ \langle \bx^{*}, \bU_{\perp,j}\rangle = \frac{1}{\sqrt{d}}, \quad \forall j \in [d-d_1], \\
\text{and, } \langle \bx^{*}_{t}, \bz^\ttop\rangle = &~\langle \bx^{*}, \bz_{\bU_\perp}^\ttop \rangle + \frac{10 - \langle \bz_{\bU_\perp}^\ttop,\bx^{*}\rangle}{\|\bz_{\bU}^\ttop\|_2^2}\cdot \langle \bz_{\bU}^\ttop, \bz_{\bU}^\ttop\rangle = 10,
\end{align*}
where the first step of the second equation follows from $\bx^*$ is in the subspace $\bU_\perp$ so it's orthogonal to $\bz_{\bU}^\ttop$.
Consequently, we have $\|(\bU_\perp)^{\top}\bx_{t}^{*} - \frac{1}{\sqrt{d}} \cdot \mathbf{1}_{d-d_1}\|_2 = 0$ and $|\langle \bz^\ttop, \bx_{t}^{*}\rangle - 10| = 0$, so $L^\ttop(\bx_{t}^{*})$ defined in Eq.~\eqref{eq:lsr-r} satisfies
\begin{align}
L^\ttop(\bx_{t}^{*}) = \lambda\|\bx_{t}^{*}\|_2^2 
\leq \lambda \cdot \frac{(10 - \langle \bz_{\bU_\perp}^\ttop,\bx^{*}\rangle)^2}{\|\bz_{\bU}^\ttop\|_2^2} + \lambda := \lambda (\Delta_{t}^2 + 1),
\label{eq:loss}
\end{align}
Here the second step follows from the definition of $\bx_{t}^{*}$ in Eq.~\eqref{eq:x_rt*}, and that $\bx^*$ has norm $\|\bx^*\|_2 \leq 1$ and it's orthogonal to $\bz_{\bU}^\ttop$, for notational convenience in the third step we have defined
\begin{equation}\label{eq:def_Delta_rt}
\Delta_{t} := \frac{10 - \langle \bz_{\bU_\perp}^\ttop, \bx^{*}\rangle}{\|\bz_{\bU}^\ttop\|_2}.
\end{equation}

\vspace{+2mm}
{\bf \noindent Case 1 \ \ } Suppose $\|\bz_{\bU}^\ttop\|_2 \geq 1/d^4$. Since $\|\bx^{*}\|_2 \leq 1$ and $\|\bz_{\bU}^\ttop\|_2 \leq 1$, we have 
\begin{align}\label{eq:Delta_bound}
\Delta_{t} = \frac{10 - \langle \bz_{\bU_\perp}^\ttop, \bx^{*}\rangle}{\|\bz_{\bU}^\ttop\|_2} \in (9,  11d^4].
\end{align}
Therefore by Eq.~\eqref{eq:loss} we have
\begin{align*}
L^\ttop(\bx_{t}^{*}) \leq \lambda (\Delta_{t}^2 + 1) \leq \lambda \cdot (1 + 121d^8).
\end{align*}
The solution $\bx^\ttop$ is $\eps$-approximately optimal where $\eps \leq 1/100$, so
\begin{align}
L^\ttop(\bx^\ttop) \leq (1+\eps)^2 \cdot L^\ttop(\bx_{t}^*) \leq 200\lambda d^8. \label{eq:loss2}
\end{align} 
This implies that
\[
\frac{1}{100} |10 - \langle \bz^\ttop, \bx^{(t)}\rangle|^2 \leq L^\ttop(\bx^{(t)}) \leq 200 \lambda d^8.
\]
So in this case we always have $|10 - \langle \bz^\ttop, \bx^{(t)}\rangle| < 200 d^4\sqrt{\lambda}$, and this means the termination condition on Line~\ref{line:termination} is equivalent to whether $\|\by^\ttop\|_2 \geq d^3$.

We make the following claim about $\bx^\ttop$, and we defer the proof of this claim to Appendix \ref{sec:fully-app}, as it involves some detailed calculations.
\begin{claim}
\label{claim:decomposition}
Let $\bV_{t}$ be the orthonormal matrix that concatenates $\bU_\perp$ and $\bz_{\bU}^\ttop$, i.e, $\bV_{t} := [\bU_\perp, \frac{\bz_{\bU}^\ttop}{\|\bz_{\bU}^\ttop\|_2}]$. Then we have
\begin{align*}
\bU_\perp(\bU_\perp)^{\top} \bx^\ttop = &~  \bx^{*} \pm 20d^4 \sqrt{\lambda}, \\
\langle \bx^\ttop, \bz_{\bU}^\ttop\rangle = &~  10 - \langle \bz_{\bU_\perp}^\ttop,\bx^{*}\rangle \pm 200d^4 \sqrt{\lambda}, \\
\|(\mathbf{I} - \bV_{t}\bV_{t}^{\top}) \cdot \bx^\ttop\|_2 \leq &~ 2\sqrt{\eps} \cdot \Delta_{t}.
\end{align*}
\end{claim}

Using Claim \ref{claim:decomposition}, we can write $\bx^\ttop$ as
\begin{align*}
\bx^\ttop = &~ \bU_\perp (\bU_\perp)^{\top} \bx^\ttop + \frac{\langle \bx^\ttop, \bz^\ttop_{\bU} \rangle}{\|\bz^\ttop_{\bU}\|_2} \cdot \frac{\bz^\ttop_{\bU}}{\|\bz_{\bU}^\ttop\|_2} + \bz_{\perp\perp}^\ttop \\
= &~ \bx^{*} + \Delta_{t}\cdot  \frac{\bz^\ttop_{\bU}}{\|\bz_{\bU}^\ttop\|_2} + \bz_{\perp\perp}^\ttop \pm O(d^8\sqrt{\lambda}),
\end{align*}
where the first step follows from decomposing $\bx_r^\ttop$ into three parts: the component that is in subspace $\bU_\perp$, the component that is in the same direction as $\bz_{\bU}^\ttop$, and the component that is orthogonal to both $\bU_\perp$ and $\bz^\ttop_{\bU}$ which we denote as $\bz_{\perp\perp}^\ttop := (\mathbf{I} - \bV_{t}\bV_{t}^{\top}) \cdot \bx^\ttop$, the second step follows from the first and second parts of Claim~\ref{claim:decomposition} and that $\|\bz_{\bU}^\ttop\|_2 \geq 1/d^4$, and finally note that using the third part of Claim~\ref{claim:decomposition} we have 
\begin{equation}\label{eq:z_perp_perp_bound}
\|\bz_{\perp\perp}^\ttop\|_2 \leq 2\sqrt{\eps}\cdot \Delta_{t}.
\end{equation}

Consequently, we can write $\by^\ttop$ as
\begin{align}\label{eq:y_rt}
    \by^\ttop = \bx^\ttop - \bx^* = \Delta_{t}\cdot  \frac{\bz^\ttop_{\bU}}{\|\bz_{\bU}^\ttop\|_2} + \bz_{\perp\perp}^\ttop \pm O(d^8\sqrt{\lambda}).
\end{align}

We further divide into two cases based on whether $\|\by^\ttop\|_2 \geq d^3$, i.e., whether the termination condition is satisfied.

\vspace{+2mm}
{\bf \noindent Case 1-1 \ \ } Suppose $\|\by^\ttop\|_2 \geq d^3$. Then it meets the termination condition and we return $\hat{\bz}_{U}^\ttop = \mathbf{0}$. 
In this case, we have
\begin{align*}
d^3 \leq \|\by^\ttop\|_2 = &~ \left\|\Delta_{t}\cdot  \frac{\bz^\ttop_{\bU}}{\|\bz_{ \bU}^\ttop\|_2} + \bz_{\perp\perp}^\ttop\right\|_2  \pm O(d^8\sqrt{\lambda} ) \\
\leq &~\Delta_{t} + \|\bz_{\perp\perp}^\ttop\|_2 \pm O(d^8\sqrt{\lambda} ) \leq (1+2\sqrt{\eps})\Delta_{t}  \pm O(d^8\sqrt{\lambda} ),
\end{align*}
where third step follows from triangle inequality, and the last step follows from Eq.~\eqref{eq:z_perp_perp_bound}. 
Since $\lambda = 1/d^{40}$ and $\eps = 1/100$, we conclude that
\begin{align*}
\frac{1}{2}d^3 \leq \Delta_{t} =  \frac{10 - \langle \bz_{\bU_\perp}^\ttop,\bx^{*}\rangle}{\|\bz_{ \bU}^{(t)}\|_2}
\leq \frac{11}{\|\bz_{\bU}^{(t)}\|_2},
\end{align*}
and therefore, $\|\bz_{\bU}^\ttop\|_2 \leq O(1/d^3)$ and it is fine to return $\hat{\bz}_{\bU}^\ttop = \mathbf{0}$.

\vspace{+2mm}
{\bf \noindent Case 1-2 \ \ } 
Suppose $\|\by^\ttop\|_2 < d^3$. Then the termination condition is not met. 
To compute $\hat{\bz}_\bU^\ttop$, we need to solve Eq.~\eqref{eq:interpolation}. Define $\xi  = \Delta_{t}^{-1} \cdot \|\bz_{\bU}^\ttop\|_2$, and we have
\begin{align}
\|\bz^\ttop - \xi \by^\ttop\|_2^2 = &~ \Big\|\bz_{\bU_\perp}^\ttop - \Delta_{t}^{-1} \cdot \|\bz_{\bU}^\ttop\|_2 \cdot \big( \bz_{\perp\perp}^\ttop \pm O(d^{8}\sqrt{\lambda}) \big) \Big\|_2^2\notag \\
= &~ \|\bz_{\bU_\perp}^\ttop\|_2^2 + \Delta_{t}^{-2} \cdot \|\bz_{\bU}^\ttop\|_2^2 \cdot \|\bz_{\perp\perp}^\ttop\|_2^2 \pm O(d^{8}\sqrt{\lambda}) \label{eq:ub1}.
\end{align}
The first step follows from Eq.~\eqref{eq:y_rt} and $\bz^\ttop = \bz_{\bU}^\ttop + \bz_{ \bU_{\perp}}^\ttop$, the second step holds since $\bz_{\bU_\perp}^\ttop$ is orthogonal to $\bz_{ \perp\perp}^\ttop$, and the error term is still $\pm O(d^8 \sqrt{\lambda})$ since $\|\bz_{\bU}^\ttop\|_2, \|\bz_{\bU_{\perp}}^\ttop\|_2, \|\bz_{\perp\perp}^\ttop\|_2 \leq 1$ and $\Delta_{t} \geq 9$ (Eq.~\eqref{eq:Delta_bound}).

The optimal solution to Eq.~\eqref{eq:interpolation}, denoted as $\xi^{*}$, can be expressed as $\xi^{*} = (1 + \nu)\Delta_t^{-1} \|\bz_\bU^\ttop\|_2$ for some scaling factor $\nu$. Similarly, we have
\begin{align}
\|\bz^\ttop - \xi^{*}\by^\ttop\|_2^2 = &~ \Big\|\bz_{\bU}^\ttop + \bz_{\bU_\perp}^\ttop - (1 + \nu)\Delta_t^{-1}\|\bz_\bU^\ttop\|_2 \cdot \Big(\Delta_{t}\cdot  \frac{\bz^\ttop_{\bU}}{\|\bz_{\bU}^\ttop\|_2} + \bz_{\perp\perp}^\ttop\Big)   \Big\|_2^2 \pm O(d^8\sqrt{\lambda}) \notag \\
= &~ \|\bz_{\bU_\perp}^\ttop\|_2^2 + \nu^2 \|\bz_{\bU}^\ttop\|_2^2 + (1+\nu)^2 \Delta_{t}^{-2} \cdot \|\bz_{\bU}^\ttop\|_2^2 \cdot \|\bz_{\perp\perp}^\ttop\|_2^2 \pm O(d^{8}\sqrt{\lambda}),\label{eq:ub2}
\end{align}
where the first step comes from Eq.~\eqref{eq:y_rt}, the second step follows from $\bz_{\bU_{\perp}}^\ttop$, $\bz_{\bU}^\ttop$ and $\bz^\ttop_{\perp\perp}$ are orthogonal to each other.

Combining Eq.~\eqref{eq:ub1}\eqref{eq:ub2} and the fact that $\xi^{*}$ is the optimal solution to Eq.~\eqref{eq:interpolation}, we have that
\begin{align*}
0 \geq &~ \|\bz^\ttop - \xi^{*} \by^\ttop\|_2^2 - \|\bz^\ttop - \xi\by^\ttop\|_2^2 \\
= &~ \nu^2 \|\bz_{\bU}^\ttop\|_2^2 + (1+\nu)^2 \Delta_{t}^{-2} \cdot \|\bz_{\bU}^\ttop\|_2^2 \cdot \|\bz_{\perp\perp}^\ttop\|_2^2 - \Delta_{t}^{-2} \cdot \|\bz_{\bU}^\ttop\|_2^2 \cdot \|\bz_{\perp\perp}^\ttop\|_2^2 \pm O(d^{8}\sqrt{\lambda})\\
\geq &~ \nu^2 \|\bz_{\bU}^\ttop\|_2^2 - 2|\nu| \cdot \Delta_{t}^{-2} \cdot \|\bz_{\bU}^\ttop\|_2^2 \cdot \|\bz_{\perp\perp}^\ttop\|_2^2  \pm O(d^{8}\sqrt{\lambda})\\
\geq &~ \nu^2 \|\bz_{\bU}^\ttop\|_2^2 - 2|\nu| \cdot 4\eps \cdot \|\bz_{\bU}^\ttop\|_2^2 \pm O(d^{8}\sqrt{\lambda}).
\end{align*}
where the the last step holds due to $\|\bz_{\perp\perp}\|_2 \leq 2\sqrt{\eps} \cdot \Delta_{t}$ (see Eq. \eqref{eq:z_perp_perp_bound}).

Combining the fact that $\|\bz_\bU^\ttop\|\geq 1/d^4$ and choice of parameters, we conclude that $|\nu| \leq 9\eps$. Therefore, the output $\hat{\bz}_{\bU}^\ttop = \xi^{*}\by^\ttop$ satisfies
\begin{align*}
\|\hat{\bz}_{\bU}^\ttop - \bz_\bU^\ttop\|_2 = &~ \|\xi^{*} \cdot \by^\ttop - \bz_\bU^\ttop\|_2\\
= &~ \Big\| (1 + \nu)\Delta_t^{-1}\|\bz_\bU^\ttop\|_2 \cdot \Big(\Delta_{t}\cdot  \frac{\bz^\ttop_{\bU}}{\|\bz_{\bU}^\ttop\|_2} + \bz_{\perp\perp}^\ttop\Big)  - \bz_\bU^\ttop \Big \|_2 \pm O(d^8\sqrt{\lambda})\\
\leq &~ |\nu| \cdot \|\bz_{\bU}^\ttop\|_2 + |1+\nu| \cdot \Delta_t^{-1}\|\bz_\bU^\ttop\|_2 \|\bz_{\perp\perp}^\ttop\|_2 \pm O(d^8\sqrt{\lambda})\\
\leq &~ (|\nu| + 2\sqrt{\eps}|1+\nu|)\|\bz_{\bU}^\ttop\|_2 \pm O(d^8\sqrt{\lambda})\\
\leq &~ \alpha \|\bz_{\bU}^\ttop\|_2 + 1/d^3.
\end{align*}
Here the second step follows from the Eq.~\eqref{eq:y_rt} and the choice of $\xi^{*}$, the third step follows from triangle inequality, the fourth step holds from $\|\bz_{\perp\perp}\|_2 \leq 2\sqrt{\eps}\Delta_{t}$ (see Eq.~\eqref{eq:z_perp_perp_bound}), and the last step follows from $\alpha = 1/3$ and $|\nu| \leq 9 \epsilon$ that we just proved.
This verifies that $\hat{\bz}_{\bU}^\ttop$ is indeed an $(\alpha, 1/d^3)$-approximation of the projection $\bz_{\bU}^\ttop$.

\vspace{+2mm}
{\bf \noindent Case 2 \ \ } Suppose $\|\bz_{\bU}^\ttop\|_2 <  1/d^4$. It suffices to prove the termination condition on Line~\ref{line:termination} of Algorithm~\ref{algo:reduction} holds, i.e., either $\|\by^\ttop\|_2 \geq d^3$ or $|10 - \langle \bz^\ttop, \bx^{(t)}\rangle | \geq 200 d^4\sqrt{\lambda}$. 

Suppose on the contrary that $\|\by^\ttop\|_2 < d^3$ and $|10 - \langle \bz^\ttop, \bx^{(t)}\rangle | < 200 d^4\sqrt{\lambda}$. Then we have
\begin{align*}
10 - O(d^4 \sqrt{\lambda}) \leq  &~ \langle \bz^\ttop , \bx^\ttop\rangle =  \langle \bz_{\bU_\perp}^\ttop , \bx^\ttop\rangle + \langle \bz_{\bU}^\ttop, \by^\ttop \rangle\\
\leq &~ \langle \bz_{\bU_\perp}^\ttop , \bx^\ttop\rangle + \|\bz_{\bU}^\ttop\|_2 \cdot \| \by^\ttop \|_2 \\
\leq &~ \langle \bz_{\bU_\perp}^\ttop , \bx^\ttop\rangle + (1/d^4) \cdot d^3 \\
\leq &~ \|\bz_{\bU_\perp}^\ttop\|_2 \cdot \|\bU_\perp (\bU_\perp)^\top\bx^\ttop \|_2 + 1/d \\
\leq &~ \|\bU_\perp (\bU_\perp)^\top\bx^\ttop \|_2 + 1/d,
\end{align*}
where second step follows from $\by^{(t)} = \bx^{(t)} - \bx^*$ and $\bx^*$ is in the subspace $\bU_{\perp}$, the fourth step follows from the assumptions $\|\bz_{\bU}^\ttop\|_2 <  1/d^4$ and $\|\by^\ttop\|_2 < d^3$, the fifth step holds since $\bz_{\bU_\perp}^\ttop$ lies in the span of $\bU_\perp$, and the last step follows from $\|\bz_{\bU_\perp}^\ttop\|_2 \leq 1$. We conclude that 
\begin{align*}
\|(\bU_\perp)^{\top}\bx^\ttop\|_2 = \|\bU_\perp(\bU_\perp)^{\top}\bx^\ttop\|_2  \geq 9.
\end{align*}
This means 
\[
L^\ttop(\bx^\ttop) \geq \|(\bU_\perp)^\top \bx^\ttop - \frac{1}{\sqrt{d}}\mathbf{1}_{d-d_1}\|^2_2 \geq \big(\|(\bU_\perp)^{\top}\bx^\ttop\|_2 - \|\frac{1}{\sqrt{d}}\mathbf{1}_{d-d_1}\|_2\big)^2 \geq 60.
\] 
This cannot happen because by the definition that $\bx^{*} = \frac{1}{\sqrt{d}} \sum_{j=1}^{d-d_1}\bU_{\perp,j}$, we have
\begin{align*}
L^\ttop(\bx^{*}) = &~ \left\|(\bU_\perp)^{\top}\bx^{*} - \frac{1}{\sqrt{d}} \cdot \mathbf{1}_{d-d_1}\right\|_2^2 + \frac{1}{100}|\langle \bz^\ttop, \bx^{*}\rangle - 10|^2 + \lambda \|\bx^{*}\|_2^2 \\
\leq &~ 0 + \frac{121}{100} +  \lambda \leq 2,
\end{align*}
and this contradicts with $\bx^{\ttop}$ being an $\eps$-approximate solution of the square root of the loss $L^\ttop$. We conclude the proof here.
\end{proof}

Finally, we note that the reduction of Algorithm \ref{algo:reduction} involves one insertion, one deletion and one calls of the dynamic $\eps$-LSR. 
The extra computation it takes is $O(d)$, so it reduces online projection to dynamic $\eps$-LSR.

Combining Lemma \ref{lem:online-projection}, Lemma \ref{lem:hard-amplification} and Lemma \ref{lem:reduction-lsr}, we can finish the proof of Theorem \ref{thm:lower-full}.

\section{Partially dynamic LSR with incremental updates}
\label{sec:upper}
In this section, we present an algorithm for the partially dynamic least-squares regression problem with \emph{incremental updates}.

\begin{theorem}[Partially dynamic LSR with incremental updates, formal version of Theorem \ref{thm:main_UB_informal}]\label{thm:upper}
Let $d, T \in \mathbb{N}$ and $0< \eps, \delta <1/8$.
Assume the least singular value of $\bM^{(0)}$ is at least $\sigma_{\min}$, and the largest singular value of $\bM^{(T)}$ is at most $\sigma_{\max}$. 
For partially dynamic least-squares regression incremental updates, there exists a randomized algorithm (Algorithm \ref{algo:preprocess}--\ref{algo:update-member}) that with probability at least $1-\delta$, maintains an $\eps$-approximation solution for all iterations $t\in [T]$. 
For oblivious adversary, the total update time is at most 
\[
O\Big(\nnz(\bA^{(T)}) \log(\frac{T}{\delta}) + \epsilon^{-4} d^3 \log^2(\frac{\sigma_{\max}}{\sigma_{\min}}) \log^3(\frac{T}{\delta})\Big),
\]  
For adaptive adversary, the total update time is at most 
\[
O\Big(\nnz(\bA^{(T)}) \log(\frac{T}{\delta}) + \epsilon^{-4} d^5 \log^4(\frac{\sigma_{\max}}{\sigma_{\min}}) \log^3(\frac{T}{\delta}) \Big).
\]  
\end{theorem}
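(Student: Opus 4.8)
The plan is to instantiate the online leverage-score sampling framework of \cite{cmp20} with two new ingredients: an adversarially-robust concentration analysis that bounds the number of kept rows by $\wt{O}(d^2\log(\sigma_{\max}/\sigma_{\min})/\eps^2)$ instead of the $\wt{O}(d^2(\sigma_{\max}/\sigma_{\min})^2/\eps^2)$ of \cite{bhm+21}, and a Johnson--Lindenstrauss acceleration of the per-row leverage-score computation that is itself robust to adaptivity. Concretely, the algorithm maintains a weighted subsample $\wt{\bM}^\ttop$ of the rows seen so far: when $\boldm^\ttop=(\ba^\ttop,\beta^\ttop)$ arrives it keeps a rescaled copy of the row independently with probability $p^\ttop\asymp\min\{1,\,c\,\eps^{-2}\log(T/\delta)\cdot\wt{\tau}^\ttop\}$, where $\wt{\tau}^\ttop$ is an estimate (from below by a constant factor, from above always) of the regularized online leverage score $\tauo^\ttop=(\boldm^\ttop)^\top((\bM^{(t-1)})^\top\bM^{(t-1)}+\lambda\bI)^{-1}\boldm^\ttop$, and at each step it returns the closed-form minimizer of $\|\wt{\bA}^\ttop\bx-\wt{\bb}^\ttop\|_2$ (Fact~\ref{fact:closed-form}). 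By Lemma~\ref{lem:approx_l2_regression_from_spectral_approx}, if the subsample satisfies $(\wt{\bM}^\ttop)^\top\wt{\bM}^\ttop\approx_\eps(\bM^\ttop)^\top\bM^\ttop$ then this minimizer is an $\eps$-LSR solution; so the proof splits into (i) a correctness part establishing this spectral-approximation invariant for all $t$ together with a bound on the number of kept rows, and (ii) a running-time part.

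\textbf{Spectral approximation.} For the oblivious adversary this is essentially \cite{cmp20}: a matrix Freedman/Bernstein argument shows that sampling with probabilities proportional to over-estimates of $\tauo^\ttop$ yields $\wt{\bM}^\ttop\approx_\eps\bM^\ttop$ whp, and the classical online leverage-score sum bound $\sum_{t\le T}\tauo^\ttop=O(d\log(\sigma_{\max}/\sigma_{\min}))$ caps the expected, hence (after a Chernoff bound) the actual, number of kept rows by $\wt{O}(d\log(\sigma_{\max}/\sigma_{\min})/\eps^2)$. The difficulty is the adaptive adversary, since future rows may correlate with coins already used and the matrix martingale breaks. Following \cite{bhm+21} I would first reduce to scalars: fix a $\tfrac{1}{4}$-net $\N$ over unit vectors in $\R^{d+1}$ with $|\N|=2^{O(d)}$, and it suffices to show $\|\wt{\bM}^\ttop\bx\|_2^2=(1\pm\eps)\|\bM^\ttop\bx\|_2^2$ for every $\bx\in\N$ and every $t$, after which a standard net-to-spectral-norm step upgrades this to $\wt{\bM}^\ttop\approx_{O(\eps)}\bM^\ttop$. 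For a fixed $\bx$ the quantity $\|\wt{\bM}^\ttop\bx\|_2^2-\|\bM^\ttop\bx\|_2^2$ is a martingale, but Freedman's inequality needs an a-priori bound on its predictable quadratic variation, which scales with the unknown running value $\|\bM^\ttop\bx\|_2^2$; \cite{bhm+21} bound this crudely in $[\sigma_{\min}^2,\sigma_{\max}^2]$ and lose a factor $(\sigma_{\max}/\sigma_{\min})^2$. The new idea---the technical heart of the argument---is to dyadically \emph{guess} this value: for each $g\in\{2^i\sigma_{\min}^2:0\le i\le O(\log(\sigma_{\max}/\sigma_{\min}))\}$ define a truncated process that agrees with the true martingale as long as the running value of $\|\bM^\ttop\bx\|_2^2$ stays within a factor $2$ of $g$ and is frozen at $0$ thereafter; the freezing time is a stopping time, so each truncated process is still a martingale, now with predictable quadratic variation and increments bounded in terms of $g$, so Freedman applies with failure probability $e^{-\Omega(d)}$ after paying $\eps^{-2}$ in the "length". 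For every $(\bx,t)$ the guess $g$ closest to $\|\bM^\ttop\bx\|_2^2$ is never frozen and hence equals the true martingale, so a union bound over $|\N|\cdot O(\log(\sigma_{\max}/\sigma_{\min}))$ events costs only an extra $O(\log(\sigma_{\max}/\sigma_{\min}))$ factor, giving $\wt{O}(d^2\log(\sigma_{\max}/\sigma_{\min})/\eps^2)$ kept rows.

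\textbf{Running time.} A direct evaluation of (the subsampled, regularized surrogate of) $\tauo^\ttop$ costs $O(d^2)$ per arriving row, no better than Kalman. To bring the per-row cost down I maintain, alongside $\bN_{t-1}^{-1}$ with $\bN_{t-1}=(\wt{\bM}^{(t-1)})^\top\wt{\bM}^{(t-1)}+\lambda\bI$, a small Johnson--Lindenstrauss sketch $\Phi$ of size $k\times d$ with $k=O(\log(T/\delta))$ such that, whp, $\|\Phi\boldm\|_2^2$ is a constant-factor approximation of $\boldm^\top\bN_{t-1}^{-1}\boldm$ --- this is the sketch-a-square-root trick used in \cite{ss11,blss20}; then $\|\Phi\boldm^\ttop\|_2^2$ is computed in $O(k\cdot\nnz(\boldm^\ttop))$ time and the row is sampled with probability proportional to a fixed constant multiple of it. Summed over all $T$ arrivals, plus reading the input, this contributes the leading $O(\nnz(\bA^{(T)})\log(T/\delta))$ term. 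Only when a row is actually \emph{kept} --- which by the previous paragraph happens $\wt{O}(d\log(\sigma_{\max}/\sigma_{\min})/\eps^2)$ times (oblivious) or $\wt{O}(d^2\log(\sigma_{\max}/\sigma_{\min})/\eps^2)$ times (adaptive) --- do I Woodbury-update $\bN^{-1}$, rebuild $\Phi$, and redraw its Gaussian from scratch, at $\poly(d)$ cost per keep; multiplying the per-keep cost by the keep count produces the additive $\eps^{-4}d^3$-type and $\eps^{-4}d^5$-type terms. Two features make the estimator robust to an adaptive adversary: the reported estimate always \emph{over}-estimates $\tauo^\ttop$ (the constant inflation absorbs both the subsampling slack and the JL distortion), so the idealized algorithm that samples on exact online leverage scores keeps a superset of our rows, hence on any step where we do not keep a row the Gaussian's coins have not entered any output; and since we redraw the Gaussian immediately after each keep, each keep-epoch faces a fresh independent sketch. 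Under this coupling the adaptive spectral-approximation analysis of the previous paragraph --- which only needs the idealized leverage-score sampler plus sketches independent of the adversary within their epoch --- carries over unchanged.

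\textbf{Main obstacle.} The hard part will be making the guessed-truncated-martingale argument fully rigorous in the adaptive model: one must define the freezing time as an optional stopping time with respect to the filtration generated by the adversary's moves and the algorithm's coins, verify that each truncated process is a martingale with exactly the increment and quadratic-variation bounds that Freedman's inequality consumes, and argue that for \emph{every} net point and time simultaneously the "correct" dyadic guess is never frozen before time $t$. A secondary subtlety, again in the adaptive setting, is tightening the JL-robustness coupling at the sampling threshold --- handling the event that the inflated JL estimate and the exact online leverage score land on opposite sides of the "keep with probability $1$" cutoff --- so that the chain "over-estimate $\Rightarrow$ superset of kept rows $\Rightarrow$ no randomness leaked" holds without exception; I expect both points to be where the bulk of the careful work lies.
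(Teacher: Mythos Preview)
Your architecture matches the paper's exactly---online leverage-score sampling, JL-accelerated score estimates refreshed at each keep, and a truncated-martingale Freedman analysis for adaptive robustness---but two of the technical choices in your sketch would not go through as written.

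First, a $\tfrac14$-net is too coarse for the net-to-spectral transfer. You want $(1\pm\eps)$-\emph{multiplicative} approximation of $\|\bM^\ttop\bx\|_2^2$ for all unit $\bx$; moving from a net point to a nearby unit vector changes $\|\bM^\ttop\bx\|_2$ by up to $\sigma_{\max}\cdot(\text{spacing})$, while the slack $\eps\|\bM^\ttop\bx\|_2$ you can afford is as small as $\eps\sigma_{\min}$. So the net spacing must be $O(\eps/\kappa)$, not a constant; the paper uses an $\eps/(100\kappa d)$-grid, and this is exactly what deposits the $\log\kappa$ factor into the union bound and hence into the sampling constant. The usual trick of passing to $(\bM^\top\bM)^{-1/2}$-normalized coordinates, where a constant net would suffice, is unavailable because $\bM^{(T)}$ is itself adaptive.

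Second, the freezing rule must be one-sided: in the incremental model $\|\bM^\ttop\bx\|_2^2$ is nondecreasing in $t$, so with a two-sided window the guess $g\approx\|\bM^{(T)}\bx\|_2^2$ freezes immediately at early times when the running value lies below $g/2$, and your claim ``the correct guess is never frozen'' fails. The fix (which is what the paper does) is to zero the increment only once $\|\bM^\ttop\bx\|_2$ exceeds the threshold $s$; then the smallest threshold $s^\star\ge\|\bM^{(T)}\bx\|_2$ is never triggered for any $t\le T$ and coincides with the true martingale throughout. Your $O(\log\kappa)$ dyadic thresholds work fine under this one-sided rule---the paper happens to use $O(\kappa)$ arithmetically spaced thresholds, but since the count enters only inside a logarithm the bounds agree. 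A last minor slip: in the JL coupling the inclusion is reversed---with inflated probabilities \emph{your} algorithm keeps a superset of the idealized sampler's rows, so whenever you do not keep neither does it, both outputs coincide, and the adversary's next query (being a function of those outputs) is independent of the current JL seed.
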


\subsection{Data structure}\label{sec:data_structure}
A complete description of our data structure can be found at Algorithm \ref{algo:preprocess}--\ref{algo:update-member}.
Our approach follows the online row sampling framework \cite{cmp20}. 
When a new row arrives, we sample and keep the new row with probability proportional to the online leverage score, which is approximately computed using JL embedding (Algorithm \ref{algo:sample} Line \ref{line:levarage_score}).
If the row is sampled, then we update the data structure (Algorithm \ref{algo:update-member} Line \ref{line:Delta_H}--\ref{line:update-N}, Line \ref{line:update-G}--\ref{line:update-x}) using Woodbury identity, and instantiate a new JL sketch (Line \ref{line:new-JL}--\ref{line:update-wt_B}. See below for the JL lemma); otherwise, we do not perform any updates.

\begin{lemma}[Johnson-Lindenstrauss Lemma \cite{jl84}]\label{lem:JL}
There exists a function $\textsc{JL}(n,m,\epsilon, \delta)$ that returns a random matrix $\bJ \in \R^{k \times n}$ where $k = O(\epsilon^{-2} \log(m/\delta))$, and $\bJ$ satisfies that for any fixed $m$-element subset $V \subset \R^n$,
\begin{align*}
    \Pr\big[\forall \bv \in V, ~ (1 - \epsilon) \|\bv\|_2 \leq \|\bJ \bv\|_2 \leq (1 + \epsilon) \|\bv\|_2\big] \geq 1 - \delta.
\end{align*}
Furthermore, the function $\textsc{JL}$ runs in $O(kn)$ time.
\end{lemma}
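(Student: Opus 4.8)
The plan is to instantiate $\textsc{JL}(n,m,\epsilon,\delta)$ as the distribution over matrices $\bJ \in \R^{k \times n}$ with i.i.d.\ entries $\N(0, 1/k)$, where $k = \lceil C\epsilon^{-2}\log(m/\delta)\rceil$ for a large enough absolute constant $C$; sampling these $kn$ Gaussian entries takes $O(kn)$ time, which already settles the runtime claim. The heart of the argument is the single-vector concentration bound: for every fixed $\bv \in \R^n$,
\[
\Pr\!\Big[\,\big|\norm{\bJ\bv}_2^2 - \norm{\bv}_2^2\big| > \epsilon\norm{\bv}_2^2\,\Big] \le 2\exp(-c\epsilon^2 k)
\]
for an absolute constant $c > 0$. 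Granting this, I pick $C$ so that $2\exp(-c\epsilon^2 k) \le \delta/m$, union-bound over the $m$ vectors of $V$, and conclude that with probability at least $1-\delta$ every $\bv \in V$ satisfies $(1-\epsilon)\norm{\bv}_2^2 \le \norm{\bJ\bv}_2^2 \le (1+\epsilon)\norm{\bv}_2^2$. Since $\sqrt{1-\epsilon} \ge 1-\epsilon$ and $\sqrt{1+\epsilon} \le 1+\epsilon$ for $\epsilon \in (0,1)$, taking square roots gives the two-sided bound on $\norm{\bJ\bv}_2$ exactly as stated (if one wants the stated constants verbatim rather than up to a factor, replace $\epsilon$ by $\epsilon/2$ when choosing $k$, which only changes $C$).

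To prove the single-vector bound I may assume $\norm{\bv}_2 = 1$ by scaling. The rows of $\sqrt{k}\,\bJ$ are independent with i.i.d.\ $\N(0,1)$ coordinates, so each coordinate $g_i := \langle (\sqrt{k}\,\bJ)_{i,*}, \bv\rangle$ of $\sqrt{k}\,\bJ\bv$ is $\N(0,1)$ and the $g_i$ are independent; hence $k\norm{\bJ\bv}_2^2 = \sum_{i=1}^k g_i^2 \sim \chi^2_k$ with mean $k$. I then invoke the standard Chernoff bound for chi-squared variables: using $\E[e^{s g_i^2}] = (1-2s)^{-1/2}$ for $s < 1/2$ and optimizing over $s$,
\[
\Pr\!\Big[\textstyle\sum_{i} g_i^2 \ge (1+\epsilon)k\Big] \le \exp\!\big(-\tfrac{k}{2}(\epsilon - \log(1+\epsilon))\big),
\qquad
\Pr\!\Big[\textstyle\sum_{i} g_i^2 \le (1-\epsilon)k\Big] \le \exp\!\big(-\tfrac{k}{2}(-\epsilon - \log(1-\epsilon))\big).
\]
Since $\epsilon - \log(1+\epsilon) \ge \epsilon^2/6$ and $-\epsilon - \log(1-\epsilon) \ge \epsilon^2/2$ for $\epsilon \in (0,1)$, both tails are at most $e^{-k\epsilon^2/12}$; adding them and dividing through by $k$ gives the claimed bound with $c = 1/12$.

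The one genuinely technical point is this chi-squared tail estimate, and specifically that its exponent scales like $\epsilon^2 k$ rather than $\epsilon k$ — this is precisely what forces the $\epsilon^{-2}$ dependence of $k$ on the distortion and is why the MGF optimization (equivalently, a second-order expansion of $\epsilon - \log(1+\epsilon)$ around $\epsilon = 0$) is needed instead of a crude first-moment / Markov argument. Everything else — the union bound, the square-root conversion from squared norms to norms, and the $O(kn)$ construction time — is routine. I note that the Gaussian entries can be replaced by Achlioptas-style $\pm 1/\sqrt{k}$ signs or any sub-Gaussian distribution without changing the conclusion: the only modification is that the exact moment generating function is replaced by a sub-exponential tail bound for $\sum_i g_i^2$, which still has an $\epsilon^2 k$ exponent.
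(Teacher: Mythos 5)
Your proof is correct and complete. The paper itself does not prove this lemma --- it is stated as a black-box citation to the Johnson--Lindenstrauss paper and used as a subroutine in the data structure --- so there is no ``paper proof'' to compare against. Your argument is the standard modern proof: sample $\bJ$ with i.i.d.\ $\N(0,1/k)$ entries, observe that $k\norm{\bJ\bv}_2^2 \sim \chi^2_k$ for a fixed unit vector $\bv$, apply a Chernoff/MGF tail bound to get $\Pr[|\norm{\bJ\bv}_2^2 - 1| > \epsilon] \le 2e^{-c\epsilon^2 k}$ with the crucial $\epsilon^2 k$ exponent, union-bound over the $m$ vectors of $V$, and convert the bound on squared norms to a bound on norms via $\sqrt{1-\epsilon}\ge 1-\epsilon$ and $\sqrt{1+\epsilon}\le 1+\epsilon$. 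The runtime observation ($O(kn)$ to generate the matrix) is exactly what the lemma claims. One stylistic note: your parenthetical about replacing $\epsilon$ by $\epsilon/2$ is unnecessary --- as you yourself point out, the square-root step already lands exactly on the $(1\pm\epsilon)$ bound stated in the lemma, so no rescaling of $\epsilon$ is required. Otherwise this is a clean, self-contained proof of the cited result, and your closing remark (that sub-Gaussian entries such as $\pm 1/\sqrt{k}$ signs also work, with the MGF replaced by a sub-exponential bound) is accurate and relevant, since the paper's data structure only needs the statement as given, not the Gaussian construction specifically.
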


\paragraph{Notation} We use superscripts $^{(t)}$ to denote the matrix/vector/scalar maintained by the data structure at the end of the $t$-th iterations. In particular, the superscript $^{(0)}$ represents the variables after the preprocessing step.

\begin{algorithm}[!htbp]
\caption{\textsc{Preprocess} ($\bA$, $\bb$, $\eps$, $\delta$, $T$)}
\label{algo:preprocess}
\begin{algorithmic}[1]
    \State $\bM \leftarrow [\bA, \bb]$ \Comment{Input matrix $\bM \in \R^{(d+1) \times (d+1)}$}
    \State $\bD \leftarrow \bI_{d+1}$  \Comment{Sampling matrix $\bD \in \R^{(d+1)\times(d+1)}$}
    \State $s \leftarrow d+1$ \Comment{The number of sampled rows}
    \State $\bN \leftarrow \bD \cdot \bM$ \Comment{Sampled rows $\bN \in \R^{s \times (d+1)}$} 
    \State $\bH \leftarrow ((\bN)^{\top} \bN )^{-1}$ \label{line:H_0} \Comment{$\bH \in \R^{(d+1) \times (d+1)}$}
    \State $\bB \leftarrow \bN \cdot \bH$ \Comment{$\bB \in \R^{s \times (d+1)}$}
    \State $\bJ \leftarrow \textsc{JL}(s, T, \frac{1}{100}, \frac{\delta}{2 T^2})$ \Comment{JL embedding $\bJ \in \R^{O(\log(T/\delta)) \times s}$}
    \State $\wt{\bB} \leftarrow \bJ \cdot \bB$ \label{line:wt_B_0} \Comment{Used for online LS estimation $\wt{\bB}\in \R^{O(\log(T/\delta)) \times (d+1)}$}
    \State $\bG \leftarrow (\bA^{\top}\bD^2 \bA)^{-1}$ \label{line:G_0} \Comment{$\bG \in \R^{d \times d}$}
    \State $\bu \leftarrow \bA^{\top} \bD^2 \bb$ \Comment{$\bu \in \R^{d}$}
    \State $\bx \leftarrow \bG\cdot \bu$ \Comment{(Approximate) solution $\bx \in \R^d$}
\end{algorithmic}
\end{algorithm}

\begin{algorithm}[!hbtp]
\caption{\textsc{Insert} ($\ba, \beta$) \Comment{Insert a new row $(\ba, \beta)\in \R^d \times \R$}}
\label{algo:update}
\begin{algorithmic}[1]
\State $\boldm \leftarrow [\ba^{\top}, \beta]^{\top}$ \Comment{$\boldm \in \R^{d+1}$} 
\State $\nu \leftarrow \textsc{Sample}(\boldm)$\Comment{$\nu \in \R$}\label{line:sample_in_update}
\State $\bD \leftarrow
\begin{bmatrix}
\bD & 0 \\
0 & \nu
\end{bmatrix} 
$
\State \textbf{if} $\nu \neq 0$ \textbf{then} \textsc{UpdateMembers}($\boldm$)  \label{line:if_start}
\State \Return $\bx$
\end{algorithmic}
\end{algorithm}

\begin{algorithm}[!htbp]
\caption{\textsc{Sample} ($\boldm$)}
\label{algo:sample}
\begin{algorithmic}[1]
\State $C_{\mathrm{obl}} \leftarrow 10 \epsilon^{-2} \log(2T/\delta)$, $C_{\mathrm{adv}} \leftarrow 32 (1 + \epsilon) d \log(\frac{\sigma_{\max}}{\sigma_{\min}}) \cdot C_{\mathrm{obl}}$
\State $\tau \leftarrow \|\wt{\bB} \cdot \boldm\|_2^2$ \Comment{Approximate online LS} \label{line:approximate-ols}
\label{line:levarage_score}
\If{\texttt{Oblivious adversary}} \Comment{Oblivious adversary}
\State $p \leftarrow \min\{C_{\mathrm{obl}} \cdot \tau, 1\}$
\Else \Comment{Adaptive adversary}
\State $p \leftarrow \min\{C_{\mathrm{adv}} \cdot \tau, 1\}$
\EndIf 
\label{line:prob}
\State $\nu \leftarrow 1/\sqrt{p}$ with probability $p$, and $\nu \leftarrow 0$ otherwise\label{line:sample}
\end{algorithmic}
\end{algorithm}

\begin{algorithm}[!htbp]
\caption{\textsc{UpdateMembers} ($\boldm$)}
\label{algo:update-member}
\begin{algorithmic}[1]
\Statex  \texttt{// Update spectral approximation}
\State $s \leftarrow s + 1$ \Comment{The number of sampled rows}
\State $\Delta \bH \leftarrow - \frac{\bH \boldm \boldm^{\top} \bH / p}{1 + \boldm^{\top} \bH \boldm / p}$  \label{line:Delta_H} 
\State $\bH \leftarrow \bH +\Delta \bH$ \Comment{Update $\bH \in \R^{(d+1) \times (d+1)}$}
\State $\bB\leftarrow [ ( \bB + \bN \cdot \Delta \bH )^{\top}, ~ \bH \cdot \boldm / \sqrt{p}]^{\top}$\label{line:update-B} \Comment{Update $\bB \in \R^{s \times (d+1)}$}
\State $\bN \leftarrow [\bN^{\top}, \boldm / \sqrt{p}]^{\top}$ \label{line:update-N} \Comment{Update $\bN \in \R^{s \times (d+1)}$}
\State $\bJ \leftarrow \textsc{JL}(s, T, \frac{1}{100}, \frac{\delta}{2 T^2})$ \Comment{Instantiate a new JL sketch, $\bJ \in \R^{O(\log(T/\delta)) \times s}$}\label{line:new-JL}
\State $\wt{\bB}\leftarrow \bJ \cdot \bB$\label{line:update-wt_B} \Comment{Update $\wt{\bB} \in \R^{O(\log(T/\delta)) \times (d+1)}$}
\Statex  \texttt{// Update solution }
\State $\bG \leftarrow \bG - \frac{\bG \ba \ba^{\top} \bG / p}{1 + \ba^{\top} \bG \ba / p}$ \label{line:update-G}
\Comment{Woodbury identity, update $\bG \in \R^{d \times d}$}
\State $\bu \leftarrow \bu + \beta \cdot \ba/ p$\label{line:update-u}\Comment{Update $\bu \in \R^d$}
\State $\bx \leftarrow \bG\cdot \bu$ \Comment{Update $\bx \in \R^d$} \label{line:update-x}
\end{algorithmic}
\end{algorithm}

We summarize all variables maintained by our data structure and their closed-form formulas. The proof can be found in Appendix \ref{sec:upper-app}.

\begin{lemma}[Closed-form formulas]
\label{lem:close_form_formula_algorithm}
At the $t$-th iteration of $\textsc{Insert}$ (Algorithm~\ref{algo:update}), we have
\begin{enumerate}
    \item $\bM^{(t)} = [\bA^{(t)}, \bb^{(t)}] \in \R^{(d+t+1) \times (d+1)}$ is the input matrix.
    \item $\bD^{(t)} \in \R^{(d+t+1) \times (d+t+1)}$ is a diagonal matrix with $s^{(t)}$ non-zero entries.
    \item $\bN^{(t)} = (\bD^{(t)} \bM^{(t)})_{S^{(t)}, *} \in \R^{s^{(t)} \times (d+1)}$ takes rows of $\bM^\ttop$, where $S^{(t)} \subset [d+t+1]$ is the set of non-zero entries of $\bD^{(t)}$.
    \item $\bH^{(t)} = \big( (\bN^{(t)})^{\top} \bN^{(t)} \big)^{-1} \in \R^{(d+1) \times (d+1)}$.
    \item $\bB^{(t)} = \bN^{(t)} \bH^{(t)} \in \R^{s^{(t)} \times (d+1)}$.
    \item $\wt{\bB}^{(t)} = \bJ^{(t)} \cdot \bB^{(t)} \in \R^{O(\log(T/\delta)) \times (d+1)}$ is used for approximately estimating the online leverage score.
    \item $\bG^{(t)} = \big( (\bA^{(t)})^{\top} (\bD^{(t)})^2 \bA^{(t)} \big)^{-1} \in \R^{d \times d}$.
    \item $\bu^{(t)} = (\bA^{(t)})^{\top}(\bD^{(t)})^2 \bb^{(t)} \in \R^d$.
    \item $\bx^{(t)} = \big( (\bA^{(t)})^{\top} (\bD^{(t)})^2 \bA^{(t)} \big)^{-1} \cdot (\bA^{(t)})^{\top} (\bD^{(t)})^2 \bb^{(t)} \in \R^d$ is the maintained solution. 
\end{enumerate}
\end{lemma}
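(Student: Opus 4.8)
The plan is to establish all nine identities of Lemma~\ref{lem:close_form_formula_algorithm} simultaneously by induction on $t$, since the recursion for each maintained object references the others. For the base case $t=0$, I would read off \textsc{Preprocess} (Algorithm~\ref{algo:preprocess}) directly: there $\bD^{(0)}=\bI_{d+1}$ has $s^{(0)}=d+1$ nonzero entries, $\bN^{(0)}=\bM^{(0)}$, and $\bH^{(0)}$ (Line~\ref{line:H_0}), $\bB^{(0)}$, $\wt\bB^{(0)}$ (Line~\ref{line:wt_B_0}), $\bG^{(0)}$ (Line~\ref{line:G_0}), $\bu^{(0)}$, $\bx^{(0)}$ are defined by exactly the claimed expressions. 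The only nontrivial point at $t=0$ is that $\bH^{(0)}$ and $\bG^{(0)}$ are well defined: this uses the preprocessing assumption that $\bM^{(0)}\in\R^{(d+1)\times(d+1)}$ is full rank, which makes $(\bN^{(0)})^\top\bN^{(0)}=(\bM^{(0)})^\top\bM^{(0)}$ positive definite and forces its first $d$ columns $\bA^{(0)}$ to have rank $d$, so $(\bA^{(0)})^\top\bA^{(0)}$ is also positive definite.

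For the inductive step, assume the identities hold after iteration $t-1$ and consider the $t$-th call to \textsc{Insert} (Algorithm~\ref{algo:update}) with new row $\boldm=[\ba^\top,\beta]^\top$, so that $\bA^{(t)}=[(\bA^{(t-1)})^\top,\ba]^\top$, $\bb^{(t)}=[(\bb^{(t-1)})^\top,\beta]^\top$, and $\bM^{(t)}=[(\bM^{(t-1)})^\top,\boldm]^\top$; identity~1 is then definitional. The sampling step sets $\nu\in\{0,1/\sqrt p\}$ and appends $\nu$ to the diagonal of $\bD$. If $\nu=0$, \textsc{UpdateMembers} is not called, so $s^{(t)}=s^{(t-1)}$ and $\bD^{(t)}$ just gains a zero diagonal entry (identity~2); the new row is not in $S^{(t)}=S^{(t-1)}$ so $\bN^{(t)}=\bN^{(t-1)}$ (identity~3), and since this row carries weight $0$ in $(\bD^{(t)})^2$ we get $(\bN^{(t)})^\top\bN^{(t)}=(\bN^{(t-1)})^\top\bN^{(t-1)}$, $(\bA^{(t)})^\top(\bD^{(t)})^2\bA^{(t)}=(\bA^{(t-1)})^\top(\bD^{(t-1)})^2\bA^{(t-1)}$, and $(\bA^{(t)})^\top(\bD^{(t)})^2\bb^{(t)}=(\bA^{(t-1)})^\top(\bD^{(t-1)})^2\bb^{(t-1)}$, so identities 4--9 hold with every variable unchanged.

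The substantive case is $\nu=1/\sqrt p\neq 0$, where \textsc{UpdateMembers} (Algorithm~\ref{algo:update-member}) runs. The core is identity~4, obtained from the Sherman--Morrison formula: one has $(\bN^{(t)})^\top\bN^{(t)}=(\bN^{(t-1)})^\top\bN^{(t-1)}+\boldm\boldm^\top/p$, and since $\bH^{(t-1)}$ is invertible by the inductive hypothesis and $1+\boldm^\top\bH^{(t-1)}\boldm/p>0$, this rank-one update is invertible with inverse $\bH^{(t-1)}+\Delta\bH$ for $\Delta\bH$ exactly as on Line~\ref{line:Delta_H}. Identities~2 and~3 then follow by inspection ($s$ increments; $\bN^{(t)}$ appends the row $\boldm^\top/\sqrt p$, which is $(\bD^{(t)}\bM^{(t)})_{S^{(t)},*}$, Line~\ref{line:update-N}). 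For identity~5, note $\bN^{(t-1)}\bH^{(t)}=\bN^{(t-1)}(\bH^{(t-1)}+\Delta\bH)=\bB^{(t-1)}+\bN^{(t-1)}\Delta\bH$ and the appended row equals $\boldm^\top\bH^{(t)}/\sqrt p=(\bH^{(t)}\boldm/\sqrt p)^\top$ by symmetry of $\bH^{(t)}$, matching Line~\ref{line:update-B}; identity~6 is then the definition of $\wt\bB^{(t)}$ on Line~\ref{line:update-wt_B} with the freshly sampled $\bJ^{(t)}$ (Line~\ref{line:new-JL}). Applying Sherman--Morrison a second time to $(\bA^{(t)})^\top(\bD^{(t)})^2\bA^{(t)}=(\bA^{(t-1)})^\top(\bD^{(t-1)})^2\bA^{(t-1)}+\ba\ba^\top/p$ gives identity~7 from Line~\ref{line:update-G}; identity~8 holds since the new row adds $\ba\cdot(1/p)\cdot\beta=\beta\ba/p$ to $(\bA^{(t)})^\top(\bD^{(t)})^2\bb^{(t)}$, matching Line~\ref{line:update-u}; and identity~9 is $\bx^{(t)}=\bG^{(t)}\bu^{(t)}$ (Line~\ref{line:update-x}) combined with identities~7 and~8.

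I do not anticipate a real obstacle here; the proof is bookkeeping organized around two uses of Sherman--Morrison. The two places that need slight care are: carrying the invertibility of $(\bN^{(t)})^\top\bN^{(t)}$ and of $(\bA^{(t)})^\top(\bD^{(t)})^2\bA^{(t)}$ through the induction --- both Gram matrices are monotone non-decreasing in the PSD order under row insertions and positive definite at $t=0$, hence stay positive definite --- and the transpose conventions in the $\bB$-update of Line~\ref{line:update-B}, where one must match rows of $\bN^{(t)}$ to the column vector $\bH^{(t)}\boldm/\sqrt p$.
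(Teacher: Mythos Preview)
Your proposal is correct and follows essentially the same approach as the paper's proof: induction on $t$, with the base case read off from \textsc{Preprocess} and the inductive step split into the $\nu=0$ (no-op) case and the $\nu\neq 0$ case handled via Sherman--Morrison/Woodbury on the rank-one updates for $\bH^{(t)}$ and $\bG^{(t)}$. If anything, you are slightly more thorough than the paper in explicitly arguing invertibility is preserved and in tracking the transpose conventions for the $\bB$-update.
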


\subsection{Warm up: Analysis for oblivious adversary}
\label{sec:correct-oblivious}

We first prove the correctness against an oblivious adversary (i.e., our data structure maintains an $\eps$-approximate solution w.h.p.) and the runtime analysis is deferred to Section \ref{sec:time}.
The proof follows easily from the guarantee of online leverage score sampling \cite{cmp20} and the JL sketch, and it serves as a warm up for the more complicated algorithm against an adaptive adversary.
As we shall see later, both guarantees become nontrivial when facing an adaptive adversary.

The key advantage for the oblivious setting is that we can fix the input sequence $\boldm^{(1)}, \ldots, \boldm^{(T)}$ for analysis. We exploit the following guarantee of online leverage score sampling, which is a direct corollary from matrix Freedman inequality. For completeness we include a proof in Appendix~\ref{sec:upper-app}.
\begin{lemma}[Online leverage score sampling, adapted from Lemma 3.3 of \cite{cmp20}]\label{lem:spectral-online-leverage-score}
Let $\epsilon, \delta \in (0,1/2)$ be two parameters. Let $\boldm^{(1)}, \ldots, \boldm^{(T)} \in \R^{d+1}$ be a fixed sequence and let $\tauo^\ttop$ be the online leverage score of the $t$-th row, i.e., $\tauo^\ttop:= (\boldm^\ttop)^\top ((\bM^{(t-1)})^\top\bM^{(t-1)})^{-1}\boldm^\ttop$. 
Suppose an algorithm samples the $t$-th row with probability\footnote{The sampling probability could depend on the result of previous sampling outcomes.}
\begin{align*}
p_t \geq \min\{3 \epsilon^{-2} \tauo^\ttop \log(d / \delta), 1\}.
\end{align*}
Define $\nu_t \in \R$ as
\begin{align*}
    \nu_t = 
    \begin{cases}
    \frac{1}{\sqrt{p_t}}, & \text{if the } t\text{-th row is sampled},  \\
    0, & \text{otherwise.}
    \end{cases}
\end{align*}
Then with probability at least $1 - \delta$, $(\bM^{(0)})^\top \bM^{(0)} + \sum_{t=1}^{T}\nu_t^2\cdot \boldm^\ttop (\boldm^\ttop)^\top$ is an $\epsilon$-spectral approximation of $(\bM^{(T)})^\top \bM^{(T)}$.
\end{lemma}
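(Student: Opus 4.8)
The plan is to recognize this as a textbook matrix-martingale concentration statement and push it through matrix Freedman's inequality after a suitable normalization. Write $\bZ := (\bM^{(T)})^\top \bM^{(T)}$ for the ``full'' Gram matrix and $\wt{\bZ} := (\bM^{(0)})^\top\bM^{(0)} + \sum_{t=1}^T \nu_t^2\, \boldm^\ttop (\boldm^\ttop)^\top$ for the subsampled one. Since $\bM^{(0)}$ is full rank, $\bZ \succeq (\bM^{(0)})^\top\bM^{(0)} \succ 0$, so $\bZ^{-1/2}$ is well defined, and the desired conclusion $\wt{\bZ}\approx_\epsilon\bZ$ is equivalent to $\|\bZ^{-1/2}\wt{\bZ}\bZ^{-1/2} - \bI\|\le\epsilon$. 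Let $\mathcal{F}_{t-1}$ be the $\sigma$-algebra generated by the first $t-1$ sampling outcomes, so that $p_t$ is $\mathcal{F}_{t-1}$-measurable, and set $\bw_t := \bZ^{-1/2}\boldm^\ttop$ and $\bY_t := (\nu_t^2 - 1)\,\bw_t\bw_t^\top = \bZ^{-1/2}\big((\nu_t^2-1)\boldm^\ttop(\boldm^\ttop)^\top\big)\bZ^{-1/2}$. Because $\E[\nu_t^2\mid\mathcal{F}_{t-1}] = p_t\cdot\tfrac1{p_t} + (1-p_t)\cdot 0 = 1$, the sequence $\{\bY_t\}_{t\in[T]}$ is a (self-adjoint) matrix martingale difference sequence with $\bZ^{-1/2}\wt{\bZ}\bZ^{-1/2} - \bI = \sum_{t=1}^T \bY_t$, so it suffices to show $\|\sum_{t=1}^T\bY_t\|\le\epsilon$ with probability $\ge 1-\delta$.

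The two quantities feeding matrix Freedman are the almost-sure bound on $\|\bY_t\|$ and the predictable quadratic variation $\sum_t\E[\bY_t^2\mid\mathcal{F}_{t-1}]$. The basic estimate is $\|\bw_t\|_2^2 = (\boldm^\ttop)^\top\bZ^{-1}\boldm^\ttop \le (\boldm^\ttop)^\top\big((\bM^{(t-1)})^\top\bM^{(t-1)}\big)^{-1}\boldm^\ttop = \tauo^\ttop$, since $\bM^{(T)}$ contains all the rows comprising $\bM^{(t-1)}$ and hence $\bZ \succeq (\bM^{(t-1)})^\top\bM^{(t-1)}$. For the increment bound: if $p_t = 1$ then $\nu_t^2\equiv 1$ and $\bY_t = 0$; if $p_t < 1$ then $p_t\ge\min\{3\epsilon^{-2}\tauo^\ttop\log(d/\delta),1\}$ forces $3\epsilon^{-2}\tauo^\ttop\log(d/\delta) < 1$, i.e. $\tauo^\ttop < R$ where $R := \epsilon^2/(3\log(d/\delta))$, and then in the unsampled branch $\|\bY_t\|\le\|\bw_t\|_2^2\le\tauo^\ttop < R$, while in the sampled branch $\|\bY_t\| = (\tfrac1{p_t}-1)\|\bw_t\|_2^2\le \tauo^\ttop/p_t\le R$ (using $p_t\ge 3\epsilon^{-2}\tauo^\ttop\log(d/\delta)$); so $\|\bY_t\|\le R$ always. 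For the variation, $\bY_t^2 = (\nu_t^2-1)^2\|\bw_t\|_2^2\,\bw_t\bw_t^\top$ and $\E[(\nu_t^2-1)^2\mid\mathcal{F}_{t-1}] = \Var(\nu_t^2\mid\mathcal{F}_{t-1}) = (1-p_t)/p_t\le 1/p_t$ (and $=0$ when $p_t=1$), hence $\E[\bY_t^2\mid\mathcal{F}_{t-1}]\preceq (\tauo^\ttop/p_t)\,\bw_t\bw_t^\top\preceq R\,\bw_t\bw_t^\top$. Summing and using $\sum_{t=1}^T\boldm^\ttop(\boldm^\ttop)^\top\preceq\bZ$ gives $\sum_t\E[\bY_t^2\mid\mathcal{F}_{t-1}]\preceq R\,\bZ^{-1/2}\big(\sum_t\boldm^\ttop(\boldm^\ttop)^\top\big)\bZ^{-1/2}\preceq R\,\bI$.

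With both $\|\bY_t\|\le R$ and $\|\sum_t\E[\bY_t^2\mid\mathcal{F}_{t-1}]\|\le R$, matrix Freedman's inequality (the adaptive version, since $p_t$ may depend on earlier outcomes) yields $\Pr\big[\|\sum_t\bY_t\|\ge\epsilon\big]\le 2(d+1)\exp\!\big(-\tfrac{\epsilon^2/2}{R + R\epsilon/3}\big)$; since $\epsilon<1/2$ the exponent is at most $-\tfrac{\epsilon^2}{3R}\cdot\tfrac12\cdot(1+\tfrac{\epsilon}{3})^{-1}$, which is $\le -\log(d/\delta)$ up to the choice of the absolute constant $3$ in the sampling rate (taking a slightly larger constant makes the right-hand side $\le\delta$). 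Hence with probability at least $1-\delta$ we get $\|\bZ^{-1/2}\wt{\bZ}\bZ^{-1/2}-\bI\|\le\epsilon$, i.e. $(1-\epsilon)\bZ\preceq\wt{\bZ}\preceq(1+\epsilon)\bZ$, which is the claim. The only step that needs genuine care is the almost-sure increment bound in the \emph{unsampled} branch: naively $\|\bY_t\| = \|\bw_t\|_2^2\le\tauo^\ttop$ need not be small, and one must invoke the (non-obvious but easy) implication ``$p_t<1\Rightarrow\tauo^\ttop<\epsilon^2/(3\log(d/\delta))$'' that is baked into the form of the sampling rule; everything else is routine PSD algebra and a black-box application of matrix Freedman.
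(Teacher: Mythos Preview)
Your proof is correct and follows the same overall architecture as the paper: normalize by $\bZ^{-1/2}=((\bM^{(T)})^\top\bM^{(T)})^{-1/2}$, use $\bZ\succeq(\bM^{(t-1)})^\top\bM^{(t-1)}$ to control $\|\bw_t\|_2^2$ by the online leverage score $\tauo^\ttop$, and then invoke a matrix concentration inequality for adapted sequences. The one genuine difference is the concentration tool: the paper applies Tropp's adaptive \emph{matrix Chernoff} bound to the nonnegative sequence $\bX_t=\nu_t^2\,\bw_t\bw_t^\top$ (with $\bX_0$ accounting for the initial $(\bM^{(0)})^\top\bM^{(0)}$ block), whereas you center first and apply \emph{matrix Freedman} to the martingale differences $\bY_t=(\nu_t^2-1)\,\bw_t\bw_t^\top$. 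Both choices are standard and both exploit the same increment estimate $\tauo^\ttop/p_t\le R=\epsilon^2/(3\log(d/\delta))$; the Chernoff route lets one skip the explicit variance computation (since all $\bX_t\succeq 0$), while your Freedman route makes the ``$p_t<1\Rightarrow\tauo^\ttop<R$'' case analysis and the $p_t=1$ degeneracy more explicit. Your closing caveat about the constant $3$ is honest and matches the paper's own loose bookkeeping; with the paper's Chernoff bound the constant~$3$ does suffice directly.
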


Our data structure maintains an $\eps$-spectral approximation of $(\bM^{(t)})^\top \bM^{(t)}$ and uses it to approximate the online leverage score.

\begin{lemma}[Spectral approximation]
\label{lem:sampling_probability}
With probability at least $1 - \delta/T$, for any $t \in [T]$,
\begin{align}\label{eq:tau_leverage_score}
    0.9 (1-\eps) \cdot \tauo^\ttop \leq \tau^{(t)} \leq 1.1 (1+\eps) \cdot \tauo^\ttop
\end{align}
and for any $t \in [0: T]$ 
\begin{align}\label{eq:spectral_approximation}
    (\bM^{(t)})^{\top} (\bD^{(t)})^2 \bM^{(t)} \approx_{\epsilon} (\bM^{(t)})^{\top} \bM^{(t)}.
\end{align}
\end{lemma}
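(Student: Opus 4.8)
The plan is to establish both displayed claims of Lemma~\ref{lem:sampling_probability} together, for an oblivious adversary, by bootstrapping the spectral approximation \eqref{eq:spectral_approximation} and the leverage-score estimate \eqref{eq:tau_leverage_score} off one another along the time axis. The base case $t=0$ of \eqref{eq:spectral_approximation} is immediate, since $\bD^{(0)}=\bI_{d+1}$ forces $(\bM^{(0)})^\top(\bD^{(0)})^2\bM^{(0)}=(\bM^{(0)})^\top\bM^{(0)}$.

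\textbf{Step 1: the JL estimates are faithful.} By Lemma~\ref{lem:close_form_formula_algorithm}, $(\bN^{(t-1)})^\top\bN^{(t-1)}=(\bM^{(t-1)})^\top(\bD^{(t-1)})^2\bM^{(t-1)}$ and $\bB^{(t-1)}=\bN^{(t-1)}((\bN^{(t-1)})^\top\bN^{(t-1)})^{-1}$, so $\|\bB^{(t-1)}\boldm^{(t)}\|_2^2=(\boldm^{(t)})^\top((\bN^{(t-1)})^\top\bN^{(t-1)})^{-1}\boldm^{(t)}$. The matrices $\bN^{(t)},\bB^{(t)}$ change only when a row is sampled, so any single JL sketch $\bJ$ created at a step $t_0$ has its query set contained in the \emph{fixed} set $\{\bB^{(t_0)}\boldm^{(\tau)}:\tau=t_0+1,\dots,\min(t_0+T,T)\}$ of at most $T$ vectors, which is determined once we condition on the sampling history up to $t_0$, against which $\bJ$ is fresh independent randomness --- this is exactly where obliviousness of the input is used. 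Invoking Lemma~\ref{lem:JL} with parameters $(s,T,\tfrac{1}{100},\tfrac{\delta}{2T^2})$ and a union bound over the at most $T+1$ sketches produced during the run yields an event $\mathcal{E}_{\mathrm{JL}}$ of probability $\ge 1-\tfrac{\delta}{2T}$ on which, for all $t\in[T]$, $\tau^{(t)}=\|\wt{\bB}^{(t-1)}\boldm^{(t)}\|_2^2=(1\pm\tfrac1{100})^2\,(\boldm^{(t)})^\top((\bN^{(t-1)})^\top\bN^{(t-1)})^{-1}\boldm^{(t)}$.

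\textbf{Step 2: spectral approximation at $t-1$ $\Rightarrow$ accurate leverage estimate at $t$ $\Rightarrow$ large enough sampling probability.} On $\mathcal{E}_{\mathrm{JL}}$, whenever $(\bM^{(t-1)})^\top(\bD^{(t-1)})^2\bM^{(t-1)}\approx_\epsilon(\bM^{(t-1)})^\top\bM^{(t-1)}$ holds, the matrix $((\bN^{(t-1)})^\top\bN^{(t-1)})^{-1}$ is pinched inside $[\tfrac1{1+\epsilon},\tfrac1{1-\epsilon}]\cdot((\bM^{(t-1)})^\top\bM^{(t-1)})^{-1}$, so $\tau^{(t)}\in[\tfrac{(1-1/100)^2}{1+\epsilon},\tfrac{(1+1/100)^2}{1-\epsilon}]\cdot\tauo^{(t)}$, which lies in $[0.9(1-\epsilon),1.1(1+\epsilon)]\cdot\tauo^{(t)}$ for every $\epsilon<1/8$ --- the constants $10$ in $C_{\mathrm{obl}}$, $\tfrac1{100}$ in the JL call, and $0.9,1.1$ are chosen with precisely this slack in mind; this is \eqref{eq:tau_leverage_score}. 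Consequently $p_t=\min\{C_{\mathrm{obl}}\tau^{(t)},1\}=\min\{10\epsilon^{-2}\log(\tfrac{2T}{\delta})\,\tau^{(t)},1\}\ge\min\{3\epsilon^{-2}\log(\tfrac{2dT}{\delta})\,\tauo^{(t)},1\}$ (using $d\le T$), which is exactly the sampling-probability hypothesis of Lemma~\ref{lem:spectral-online-leverage-score} with confidence $\tfrac{\delta}{2T}$. Since $(\bM^{(0)})^\top\bM^{(0)}+\sum_{s\le t}\nu_s^2\boldm^{(s)}(\boldm^{(s)})^\top=(\bM^{(t)})^\top(\bD^{(t)})^2\bM^{(t)}$ by Lemma~\ref{lem:close_form_formula_algorithm}, the conclusion of Lemma~\ref{lem:spectral-online-leverage-score} is precisely \eqref{eq:spectral_approximation}.

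\textbf{Step 3: closing the loop via a stopping time.} The catch is that the hypothesis of Lemma~\ref{lem:spectral-online-leverage-score} at step $t$ holds only on the event that \eqref{eq:spectral_approximation} already holds at step $t-1$ --- the very conclusion we want --- so Steps 1--2 are circular as stated. I would break the circularity with the standard stopping-time device: working on $\mathcal{E}_{\mathrm{JL}}$, let $\sigma\in[0:T+1]$ be the first time \eqref{eq:spectral_approximation} fails ($\sigma=T+1$ if it never does); for every step $t\le\sigma$ the probabilities $p_t$ meet the requirement of Step 2 (their defining spectral approximation at step $t-1<\sigma$ holds), so the stopped martingale underlying Lemma~\ref{lem:spectral-online-leverage-score} has bounded increments and predictable quadratic variation, its endpoint deviation bound holds with probability $\ge1-\tfrac{\delta}{2T}$, and on that event the stopped Gram matrix at the terminal step equals $(\bM^{(\sigma\wedge T)})^\top(\bD^{(\sigma\wedge T)})^2\bM^{(\sigma\wedge T)}$ and is an $\epsilon$-spectral approximation of $(\bM^{(\sigma\wedge T)})^\top\bM^{(\sigma\wedge T)}$, forcing $\sigma>T$; hence \eqref{eq:spectral_approximation} holds for all $t\in[0:T]$ and, feeding this back into Step 2, \eqref{eq:tau_leverage_score} holds for all $t\in[T]$. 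Intersecting with $\mathcal{E}_{\mathrm{JL}}$ gives total failure probability $\le\delta/T$, matching the statement. The main obstacle --- and the only non-routine point --- is exactly this feedback between the data-dependent sampling rule (which reads the sampled Gram matrix) and the spectral guarantee the rule is meant to produce; the stopping-time argument on top of matrix Freedman is what resolves it, while Step 1 is a JL union bound and Step 2 is quantitative bookkeeping.
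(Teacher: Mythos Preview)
Your proof is correct and follows the same route as the paper: JL faithfulness (your Step~1), the spectral-to-leverage-to-sampling-probability chain (your Step~2), and closure via Lemma~\ref{lem:spectral-online-leverage-score}. The paper resolves the feedback you flag in Step~3 by a direct induction on $t$ (accumulating failure $2\delta'=\delta/T^2$ per step, one for JL and one for Lemma~\ref{lem:spectral-online-leverage-score} applied to the length-$t$ prefix), which is equivalent to your stopping-time device.
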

\begin{proof}
We prove the claim inductively. Let $\delta' = \frac{\delta}{2T^2}$, the induction hypothesis is that with probability $1 - 2 t \delta'$, Eq.~\eqref{eq:spectral_approximation} holds for all $t' \in [0:t]$ and Eq.~\eqref{eq:tau_leverage_score} holds for all $t' \in [t]$. 
The base case $t=0$ holds trivially as $\bD^{(0)} = \bI_{d+1}$. Suppose the induction hypothesis continues to hold for $t-1$, then for the $t$-th iteration, we have
\begin{align}
    \|\bB^{(t-1)} \cdot \boldm^{(t)}\|_2^2 = &~ (\boldm^{(t)})^{\top} \cdot (\bH^{(t-1)})^{\top} (\bN^{(t-1)})^{\top} \bN^{(t-1)} \bH^{(t-1)} \cdot \boldm^{(t)} \notag \\
    = &~ (\boldm^{(t)})^{\top} \cdot \big((\bN^{(t-1)})^{\top} \bN^{(t-1)} \big)^{-1} \cdot \boldm^{(t)} \notag \\
    = &~ (\boldm^{(t)})^{\top} \cdot \big( (\bM^{(t-1)})^{\top} (\bD^{(t-1)})^2 \bM^{(t-1)} \big)^{-1} \cdot \boldm^{(t)} \notag \\
    = &~ (1 \pm \epsilon) \cdot (\boldm^{(t)})^{\top} \cdot \big( (\bM^{(t-1)})^{\top} \bM^{(t-1)} \big)^{-1} \cdot \boldm^{(t)} \notag \\
    = &~ (1 \pm \epsilon) \cdot \tauo^\ttop \label{eq:jl1}
\end{align}
The first three steps follow from the closed-form formula (Lemma \ref{lem:close_form_formula_algorithm}), the fourth step holds due to the induction hypothesis and the last step comes from the definition of online leverage score.

Meanwhile, using the JL Lemma (Lemma~\ref{lem:JL}), we have that with probability at least $1 - \delta'$, 
\begin{align}
    \tau^\ttop = \|\wt{\bB}^{(t-1)} \cdot \boldm^{(t)}\|_2^2  = \|\mathbf{J}^{(t-1)} \bB^{(t-1)} \cdot \boldm^{(t)}\|_2^2 = (1\pm 0.1) \|\bB^{(t-1)} \cdot \boldm^\ttop\|_2^2\label{eq:jl2}
\end{align}
where the last step holds due to the JL Lemma and $\boldm^\ttop$, $\bB^{(t-1)}$ are independent of the entries of $\bJ^{(t-1)}$.

Combining Eq.~\eqref{eq:jl1}\eqref{eq:jl2}, we finish the induction of Eq.~\eqref{eq:tau_leverage_score}. The spectral approximation guarantee of Eq.~\eqref{eq:spectral_approximation} follows directly from Lemma \ref{lem:spectral-online-leverage-score} and our choice of parameters. We finish the proof here.\qedhere
\end{proof}

It is well known that spectral approximations of $(\bM^{(t)})^{\top} \bM^{(t)}$ give approximate solutions to least squares regressions \cite{w14}, so we have proved the correctness of our algorithm.
\begin{lemma}[Correctness of Algorithm \ref{algo:preprocess}--\ref{algo:update-member}, oblivious adversary]\label{lem:correctness_algorithm}
With probability at least $1 - \delta/T$, in each iteration, \textsc{Insert} of Algorithm~\ref{algo:update} outputs a vector $\bx^{(t)} \in \R^d$ such that 
    \[
    \|\bA^{(t)} \bx^{(t)} - \bb^{(t)}\|_2 \leq (1+\eps) \min_{\bx \in \R^d} \|\bA^{(t)} \bx - \bb^{(t)} \|_2.
    \]
against an oblivious adversary.
\end{lemma}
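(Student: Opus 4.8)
The plan is to deduce the statement directly from the spectral-approximation guarantee already proved in Lemma~\ref{lem:sampling_probability}, together with the standard ``sketch-and-solve'' reduction of Lemma~\ref{lem:approx_l2_regression_from_spectral_approx}, after identifying the maintained iterate $\bx^{(t)}$ with the closed-form solution of a sketched regression via Lemma~\ref{lem:close_form_formula_algorithm}. First I would condition on the event of Lemma~\ref{lem:sampling_probability}, which occurs with probability at least $1-\delta/T$ and guarantees that for every $t\in[0:T]$
\[
(\bM^{(t)})^{\top} (\bD^{(t)})^2 \bM^{(t)} \approx_{\epsilon} (\bM^{(t)})^{\top} \bM^{(t)} .
\]
Since $\bD^{(t)}$ is diagonal, $(\bD^{(t)})^{\top} \bD^{(t)} = (\bD^{(t)})^2$, so on this event $\bD^{(t)}$ is a subspace embedding of $\bM^{(t)} = [\bA^{(t)}, \bb^{(t)}]$ in the precise sense required by the hypothesis of Lemma~\ref{lem:approx_l2_regression_from_spectral_approx}.

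The second step is to match the algorithm's output to a sketched least-squares solution. By the closed-form formula of Lemma~\ref{lem:close_form_formula_algorithm}, the maintained vector is $\bx^{(t)} = \big((\bA^{(t)})^{\top}(\bD^{(t)})^2 \bA^{(t)}\big)^{-1}(\bA^{(t)})^{\top}(\bD^{(t)})^2 \bb^{(t)}$, which by Fact~\ref{fact:closed-form} is exactly $\arg\min_{\bx'\in\R^d}\|\bD^{(t)}\bA^{(t)}\bx' - \bD^{(t)}\bb^{(t)}\|_2$. I would record in one line that this object is well defined: $\bM^{(0)}$ has full rank by the preprocessing assumption, hence so does $\bA^{(0)}$ and therefore $\bA^{(t)}$, and the spectral approximation forces $(\bA^{(t)})^{\top}(\bD^{(t)})^2 \bA^{(t)}$ to be invertible as a principal quadratic form of a matrix $\epsilon$-close to the invertible $(\bA^{(t)})^{\top}\bA^{(t)}$. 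Feeding this identification into Lemma~\ref{lem:approx_l2_regression_from_spectral_approx} then yields $\|\bA^{(t)}\bx^{(t)} - \bb^{(t)}\|_2 \le (1+\epsilon)\min_{\bx\in\R^d}\|\bA^{(t)}\bx - \bb^{(t)}\|_2$ simultaneously for all $t$, which is the claim; the only randomness invoked is the single event of Lemma~\ref{lem:sampling_probability}, so the overall failure probability is at most $\delta/T$.

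I do not anticipate any genuine obstacle in this argument: essentially all the real work --- controlling the online leverage-score sampling via matrix Freedman and the JL estimation of those scores --- is already encapsulated in Lemma~\ref{lem:sampling_probability}. The only two points deserving a sentence of care are the bookkeeping identity $(\bD^{(t)})^{\top}\bD^{(t)} = (\bD^{(t)})^2$, needed to invoke Lemma~\ref{lem:approx_l2_regression_from_spectral_approx} verbatim, and the invertibility of $(\bA^{(t)})^{\top}(\bD^{(t)})^2 \bA^{(t)}$; both are immediate. This is precisely why the oblivious case is labelled a ``warm up'' --- the substantive difficulties appear only in the adaptive-adversary analysis.
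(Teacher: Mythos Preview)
Your proposal is correct and follows essentially the same route as the paper's own proof: invoke the closed-form identity from Lemma~\ref{lem:close_form_formula_algorithm} to recognize $\bx^{(t)}$ as the minimizer of the sketched regression, condition on the spectral-approximation event of Lemma~\ref{lem:sampling_probability}, and then apply Lemma~\ref{lem:approx_l2_regression_from_spectral_approx}. The paper's proof is in fact shorter than your write-up, omitting the remarks on $(\bD^{(t)})^\top\bD^{(t)}=(\bD^{(t)})^2$ and invertibility of $(\bA^{(t)})^{\top}(\bD^{(t)})^2\bA^{(t)}$ as obvious.
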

\begin{proof}
Note by Lemma~\ref{lem:close_form_formula_algorithm}, one has $\bx^{(t)} = \big( (\bA^{(t)})^{\top} (\bD^{(t)})^2 \bA^{(t)} \big)^{-1} \cdot (\bA^{(t)})^{\top} (\bD^{(t)})^2 \bb^{(t)}$, which is the closed-form minimizer of $\|\bD^{(t)} \bA^{(t)} \bx - \bD \bb^{(t)}\|_2$.
From Eq.~\eqref{eq:spectral_approximation} in Lemma~\ref{lem:sampling_probability}, we know that with probability at least $1 - \delta/T$, we have $(\bM^{(t)})^{\top} (\bD^{(t)})^2 \bM^{(t)} \approx_{\epsilon} (\bM^{(t)})^{\top} \bM^{(t)}$. 
Using this spectral approximation and Lemma~\ref{lem:approx_l2_regression_from_spectral_approx}, we conclude the proof of this lemma.
\end{proof}

\subsection{Analysis for adaptive adversary}\label{sec:upper_robust}

Next, we prove our data structure (Algorithm~\ref{algo:preprocess} --\ref{algo:update-member}) is adversarially robust and it works against an adaptive adversary when using a larger sampling constant $C_{\mathrm{adv}} = 32 (1 + \epsilon) d \log(\frac{\sigma_{\max}}{\sigma_{\min}}) \cdot C_{\mathrm{obl}}$.

First, we prove that online leverage score sampling works against an adaptive adversary. In contrast with the counterpart Lemma \ref{lem:sampling_probability} of the oblivious setting, the sequence $\boldm^{(1)}, \ldots, \boldm^{(T)}$ is not fixed but chosen adaptively based on previous outcomes. The proof becomes more challenging due to this adaptivity.
\begin{lemma}[Intrinsic robustness of online leverage score sampling]
\label{lem:intrinsic_new}
Let $\eps, \delta \in (0, 1/8)$. 
Let $\boldm^{(1)}, \ldots, \boldm^{(T)} \in \R^{d+1}$ be an adaptive sequence of row vectors chosen by an adaptive adversary and let $\tauo^\ttop$ be the online leverage score of the $t$-th row, i.e., $\tauo^\ttop = (\boldm^{(t)})^\top ((\bM^{(t-1)})^\top \bM^{(t-1)})^{-1} \boldm^{(t)}$. If an algorithm samples the $t$-th row with probability 
\[
p_{t} \geq \min\{\alpha \cdot \tauo^\ttop, 1\}, \text{ where } \alpha = 300 d \epsilon^{-2} \log(\frac{400d \sigma_{\max}}{\epsilon \delta \sigma_{\min} }),
\]
that is,
\begin{align*}
\bX^\ttop =  \left\{
\begin{matrix}
\frac{1}{p_t} \cdot  \boldm^\ttop (\boldm^\ttop)^\top & \text{w.p. } p_t,\\
\mathbf{0} & \text{w.p. } 1 - p_t.\\
\end{matrix}
\right.
\end{align*}
Then with probability at least $1-\delta$, its output $\bY := (\bM^{(0)})^\top \bM^{(0)} + \sum_{t=1}^{T}\bX^{(t)}$ is an $\eps$-spectral approximation of $(\bM^{(T)})^\top \bM^{(T)}$.
\end{lemma}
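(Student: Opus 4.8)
The plan is to follow the ``reduce to a scalar martingale, then beat the $(\sigma_{\max}/\sigma_{\min})^2$ loss by guessing the scale'' strategy from the technical overview. First I would fix, once and for all and \emph{data-independently}, an $\epsilon$-net $\mathcal{N}$ of the unit sphere of $\R^{d+1}$ of size $(3/\epsilon)^{O(d)}$; by the standard covering argument used in \cite{cmp20,bhm+21} it suffices to show that with probability $\ge 1-\delta$ every $\bx\in\mathcal{N}$ satisfies $\bx^\top\bY\bx=(1\pm\epsilon/2)\|\bM^{(T)}\bx\|_2^2$, and since $\mathcal{N}$ is chosen in advance the union bound over it is immune to adaptivity. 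Fixing $\bx$ and writing $D_t:=\bx^\top\bX^\ttop\bx-(\bx^\top\boldm^\ttop)^2$, one has $\bx^\top\bY\bx-\|\bM^{(T)}\bx\|_2^2=\sum_{t=1}^T D_t$ (using $\|\bM^{(T)}\bx\|_2^2=\|\bM^{(0)}\bx\|_2^2+\sum_t(\bx^\top\boldm^\ttop)^2$), and with respect to the interleaved filtration in which $\boldm^\ttop$ is revealed before the coin $\nu_t$ is flipped, $(D_t)$ is a martingale difference sequence with the predictable bounds $|D_t|\le \tfrac1{p_t}(\bx^\top\boldm^\ttop)^2$ and $\E[D_t^2\mid\cdot]\le\tfrac1{p_t}(\bx^\top\boldm^\ttop)^4$.

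The key step is the scale-guessing. Since rows are only inserted, $t\mapsto\|\bM^{(t)}\bx\|_2^2$ is nondecreasing and lies in $[\sigma_{\min}^2,\sigma_{\max}^2]$, so I would partition $[T]$ into $L=O(\log(\sigma_{\max}/\sigma_{\min}))$ dyadic \emph{phases}, phase $\ell$ being the contiguous block of indices $t$ with $\|\bM^{(t-1)}\bx\|_2^2\in[2^\ell\sigma_{\min}^2,\,2^{\ell+1}\sigma_{\min}^2)$, and define the truncated process $\bZ_\ell^{(s)}:=\sum_{t\le s,\,t\in\text{phase }\ell}D_t$; because phase membership of $t$ is determined by $\bM^{(t-1)}$ it is predictable, so each $\bZ_\ell$ is a genuine martingale. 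On phase $\ell$, writing $g=2^\ell\sigma_{\min}^2$, I combine $p_t\ge\min\{\alpha\tauo^\ttop,1\}$ with Fact~\ref{fact:online_leverage_score} (which gives $(\bx^\top\boldm^\ttop)^2\le\tauo^\ttop\|\bM^{(t-1)}\bx\|_2^2$) to get $\tfrac1{p_t}(\bx^\top\boldm^\ttop)^2\le\tfrac1\alpha\|\bM^{(t-1)}\bx\|_2^2<2g/\alpha$ for every contributing index (indices with $p_t=1$ have $D_t=0$), hence range $R_\ell\le 2g/\alpha$; and using the telescoping identity $\sum_{t\in\text{phase }\ell}(\bx^\top\boldm^\ttop)^2=\|\bM^{(b)}\bx\|_2^2-\|\bM^{(a-1)}\bx\|_2^2$ — with the single overflow increment that leaves the phase handled separately (it is either saturated, $p_t=1$, or of size $\le 2g/\alpha$) — the predictable quadratic variation is $W_\ell\le\tfrac{2g}{\alpha}\cdot O(g)=O(g^2/\alpha)$. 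Freedman's inequality with threshold $\lambda_\ell=\tfrac\epsilon4 g$ then gives $\Pr[|\bZ_\ell^{(T)}|\ge\tfrac\epsilon4 g]\le 2\exp(-\Omega(\epsilon^2\alpha))$.

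Finally I would union-bound over the $|\mathcal{N}|\cdot(L+1)$ pairs $(\bx,\ell)$; the stated $\alpha=300d\epsilon^{-2}\log(\tfrac{400d\sigma_{\max}}{\epsilon\delta\sigma_{\min}})$ is exactly large enough that $\Omega(\epsilon^2\alpha)$ dominates $d\log(3/\epsilon)+\log L+\log(1/\delta)$, so the failure probability is at most $\delta$. On the good event, for each $\bx$ only the visited phases contribute, the top visited phase contains $\|\bM^{(T)}\bx\|_2^2$, and the geometric sum $\sum_{\ell\le\ell_{\max}}2^\ell\sigma_{\min}^2\le 2\|\bM^{(T)}\bx\|_2^2$ gives $|\sum_t D_t|=|\sum_\ell\bZ_\ell^{(T)}|\le\tfrac\epsilon2\|\bM^{(T)}\bx\|_2^2$, which the Step-1 net argument upgrades to the claimed $\eps$-spectral approximation. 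The hard part is Steps 2--3: arranging for $W_\ell$ to scale with the \emph{local} value $g=2^\ell\sigma_{\min}^2$ rather than the global $\sigma_{\max}^2$. This hinges on (i) the sharp telescoping bound, which forces one to isolate and separately control the lone insertion that overflows a phase, and (ii) setting up the filtration so that phase membership stays predictable while the $D_t$ stay mean-zero against the adaptive adversary; the Freedman application, the geometric recombination, and the standard net-to-everywhere passage are all routine.
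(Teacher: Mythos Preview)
Your plan is correct and carries through, but it is a genuinely different implementation of the ``guess the scale'' step than the paper's. The paper fixes $\bx$ and runs $|\mathcal S|=10\kappa$ \emph{parallel truncated} martingales, one for each linearly-spaced threshold $s\in\mathcal S$: the $s$-th process accumulates all of $\sum_tD_t$ but zeros out every increment once $\|\bM^{(t)}\bx\|_2>s$; after the union bound one simply picks the single $s^\star$ with $\|\bM^{(T)}\bx\|_2\le s^\star\le 2\|\bM^{(T)}\bx\|_2$, for which truncation never fires, so $\overline y_{s^\star}^{(T)}$ equals the full sum with no recombination or overflow bookkeeping. Your dyadic \emph{phase decomposition} instead breaks the time axis into $O(\log\kappa)$ contiguous blocks and sums the per-phase martingales at the end. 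Your route needs the extra care you flag (the lone index that exits a phase, and the geometric recombination), but in exchange you only union-bound over $O(\log\kappa)$ martingales rather than $O(\kappa)$; since this enters $\alpha$ only through a logarithm, both give the same final constant up to $O(1)$. For the overflow, note that beyond $|D_b|\le 2g/\alpha$ you also get $\E_{b-1}[D_b^2]\le(2g/\alpha)^2=o(g^2/\alpha)$, so it contributes negligibly to $W_\ell$ and Freedman goes through cleanly.

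One point to fix: your net of size $(3/\epsilon)^{O(d)}$ is too coarse for the net-to-everywhere step. Passing from $|\bx^\top(\bY-\bA)\bx|\le\tfrac\epsilon2\|\bM^{(T)}\bx\|_2^2$ on net points to all unit $\bx$ requires comparing $\sigma_{\max}\|\bx-\bx'\|_2$ against $\epsilon\sigma_{\min}$, so the net must be $O(\epsilon/\kappa)$-fine, of size $(C\kappa d/\epsilon)^{d}$ (this is exactly what the paper uses in its Step~2, together with a short claim that $\lambda_{\max}(\bY)\le 4\sigma_{\max}^2$ on the good event). Consequently the net contributes $d\log(\kappa d/\epsilon)$, not $d\log(3/\epsilon)$, to the union bound; the stated $\alpha=300d\epsilon^{-2}\log(400d\kappa/(\epsilon\delta))$ already absorbs this, so only your arithmetic sentence needs adjusting, not the argument.
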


We note Lemma \ref{lem:intrinsic_new} improves Lemma B.2 of \cite{bhm+21}. Concretely, Lemma B.2 in \cite{bhm+21} has an $\wt{O}(d (\frac{\sigma_{\max}}{\sigma_{\min}})^2)$ overhead comparing with the oblivious case, while we reduce the overhead to $\wt{O}(d \log (\frac{\sigma_{\max}}{\sigma_{\min}}))$, which has only polylogarithmic dependence on the condition number $\frac{\sigma_{\max}}{\sigma_{\min}}$.

We make use of the Freedman's inequality for martingales. 
\begin{lemma}[Freedman's inequality, \cite{freedman1975tail}]\label{thm:freedman}
Consider a martingale $Y_0, Y_1, \cdots, Y_n$ with difference sequence $X_1, X_2, \cdots, X_n$, i.e., $Y_0 = 0$, and for all $i \in [n]$, $Y_i = Y_{i-1} + X_i$ and $\E_{i-1}[Y_i] = Y_{i-1}$. Suppose $|X_i| \leq R$ almost surely for all $i \in [n]$. Define the predictable quadratic variation process of the martingale as $W_i = \sum_{j=1}^{i} \E_{j-1}[X_j^2]$, for ill $i \in [n]$. Then for all $u \geq 0$, $\sigma^2 > 0$,
\[
\Pr\left[ \exists i \in [n]: |Y_i| \geq u \text{ and } W_i \leq \sigma^2 \right] \leq 2 \exp\left(-\frac{u^2/2}{\sigma^2 + R u / 3}\right)
\]
\end{lemma}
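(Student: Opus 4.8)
The plan is to prove Freedman's inequality by the standard exponential-supermartingale (Chernoff-for-martingales) method: first establish a one-sided tail bound, then symmetrize. Fix a parameter $\theta>0$ to be optimized at the end. The first step is a pointwise moment-generating-function estimate for a single centered, bounded increment: if $V$ satisfies $\E[V]=0$ and $|V|\le R$ almost surely, then for every $\theta>0$,
\[
\E\!\left[e^{\theta V}\right]\le \exp\!\left(g(\theta)\,\E[V^2]\right),\qquad g(\theta):=\frac{e^{\theta R}-1-\theta R}{R^2}.
\]
This follows from the expansion $e^{\theta v}=1+\theta v+\sum_{k\ge 2}\frac{(\theta v)^k}{k!}$ together with the termwise bound $\frac{(\theta v)^k}{k!}\le v^2\cdot\frac{\theta^k R^{k-2}}{k!}$ for $|v|\le R$ and $k\ge 2$, after which one takes expectations, uses $\E[V]=0$ to kill the linear term, and applies $1+x\le e^x$.

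Applying this estimate conditionally to $X_i$ given $\mathcal{F}_{i-1}$ (using $\E_{i-1}[X_i]=0$ and $|X_i|\le R$) shows that the process $Z_i:=\exp\!\big(\theta Y_i-g(\theta)W_i\big)$ with $Z_0=1$ is a non-negative supermartingale: since $W_i-W_{i-1}=\E_{i-1}[X_i^2]$ is $\mathcal{F}_{i-1}$-measurable, $\E_{i-1}[Z_i]=Z_{i-1}\,e^{-g(\theta)\E_{i-1}[X_i^2]}\,\E_{i-1}[e^{\theta X_i}]\le Z_{i-1}$. The point that distinguishes this from Azuma-type bounds is that $W_i$ is predictable and non-decreasing, so the event $E:=\{\exists i\le n:\ Y_i\ge u\ \text{and}\ W_i\le\sigma^2\}$ can be controlled via a stopping time: let $\kappa$ be the first index with $Y_\kappa\ge u$ and $W_\kappa\le\sigma^2$, and set $\kappa=n$ if no such index exists. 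Then $\kappa$ is a bounded stopping time, so optional stopping for the supermartingale $Z$ gives $\E[Z_\kappa]\le Z_0=1$; on $E$ we have $Z_\kappa\ge\exp(\theta u-g(\theta)\sigma^2)$ because $\theta>0$ and $g(\theta)>0$; and $Z_\kappa\ge 0$ always. Hence $\Pr[E]\le\exp\!\big(-\theta u+g(\theta)\sigma^2\big)$.

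It remains to optimize over $\theta$ and simplify. Choosing $\theta=\tfrac1R\ln(1+uR/\sigma^2)$, the minimizer of $-\theta u+g(\theta)\sigma^2$, and simplifying gives $\Pr[E]\le\exp\!\big(-\tfrac{\sigma^2}{R^2}\,h(uR/\sigma^2)\big)$ with $h(x)=(1+x)\ln(1+x)-x$; the elementary inequality $h(x)\ge \frac{x^2/2}{1+x/3}$ for $x\ge 0$ then converts this to $\Pr[E]\le\exp\!\big(-\tfrac{u^2/2}{\sigma^2+Ru/3}\big)$. Finally, running the identical argument on the martingale $-Y_i$ (same difference bound $R$, same predictable quadratic variation $W_i$) bounds $\Pr[\exists i:\ Y_i\le -u,\ W_i\le\sigma^2]$ by the same quantity, and a union bound over the two one-sided events yields the factor $2$ and the claimed inequality. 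I expect the only delicate step to be the stopping-time argument: one must check that $\kappa$ is genuinely a stopping time (it is, since $Y_i$ is $\mathcal{F}_i$-measurable and $W_i$ is $\mathcal{F}_{i-1}$-measurable, hence both $\mathcal{F}_i$-measurable) and that conditioning on $W_i\le\sigma^2$ is legitimate precisely because $W$ is predictable and monotone — this is exactly what lets the bound avoid any a priori global control on $W_n$. The MGF estimate and the scalar optimization are routine.
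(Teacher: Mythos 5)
The paper does not prove this lemma: Freedman's inequality is cited as a black box from \cite{freedman1975tail}, so there is no internal proof to compare against. Your proof is correct and is essentially the canonical argument (the exponential supermartingale $Z_i=\exp(\theta Y_i - g(\theta)W_i)$, optional stopping at the first hitting time of the event, optimization in $\theta$ to get the Bennett-type exponent $\tfrac{\sigma^2}{R^2}h(uR/\sigma^2)$, Bernstein relaxation $h(x)\ge\frac{x^2/2}{1+x/3}$, then symmetrize for the two-sided factor of $2$). All of the delicate points you flag are handled correctly: $\kappa$ is a bounded stopping time because $Y_i$ is $\mathcal{F}_i$-measurable and $W_i$, being predictable, is $\mathcal{F}_{i-1}\subseteq\mathcal{F}_i$-measurable, and the predictability of $W$ is exactly what lets $e^{-g(\theta)\E_{i-1}[X_i^2]}$ factor out of the conditional expectation to give the supermartingale property. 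One small remark: your termwise MGF bound $\frac{(\theta v)^k}{k!}\le v^2\frac{\theta^kR^{k-2}}{k!}$ genuinely uses $|V|\le R$ (not just $V\le R$), which is fine here since the lemma assumes two-sided boundedness; the sharper one-sided version of Freedman would instead use the monotonicity of $x\mapsto(e^x-1-x)/x^2$, but that refinement is not needed for the stated result.
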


We now turn to the proof of Lemma \ref{lem:intrinsic_new}.
The first step is similar to \cite{bhm+21}, we take an union bound over the $\eps$-net of unit vectors $\bx\in \R^{d+1}$ and reduce to the scalar case (the union bound gives an $\wt{O}(d)$ overhead).
When applying Freedman's inequality, the term $\|\bM^{(T)} \bx\|_2$ shows up, and it is unknown since the rows of $\bM^{(T)}$ are chosen adaptively. 
\cite{bhm+21} uses a straightforward bound of $\sigma_{\min} \|\bx\|_2 \leq \|\bM^{(T)} \bx\|_2 \leq \sigma_{\max} \|\bx\|_2$, which results in another $O((\frac{\sigma_{\max}}{\sigma_{\min}})^2)$ overhead. 
We instead consider $O(\frac{\sigma_{\min}}{\sigma_{\max}})$ number of (truncated) martingales and prove that one of them correctly guesses the value of $\|\bM^{(T)} \bx\|_2$. By taking an union bound over these $O(\frac{\sigma_{\min}}{\sigma_{\max}})$ martingales, it only has an $O(\log(\frac{\sigma_{\max}}{\sigma_{\min}}))$ overhead.

\begin{proof}[Proof of Lemma~\ref{lem:intrinsic_new}]
We first introduce some notations. 
Let 
\[
\kappa = \frac{\sigma_{\max}}{\sigma_{\min}}, \quad \delta' = \delta \cdot \left(\frac{200 \kappa d}{\epsilon}\right)^{-d}, \quad \text{and} \quad \delta'' = \frac{\delta'}{10 \kappa}.
\]
For any $t \in [T]$, let $\mathcal{F}_{t}$ be the $\sigma$-algebra generated by the adaptive sequence $\boldm^{(1)}, \cdots, \boldm^{(t+1)}$ and $\bX^{(1)}, \cdots, \bX^{(t)}$. Note that $\mathcal{F}_0 \subseteq \mathcal{F}_1 \subseteq \cdots \subseteq \mathcal{F}_T$ is a filtration. We use the notation $\E_{t}[\cdot] = \E[\cdot \mid \mathcal{F}_t]$ to denote the expectation conditioned on $\mathcal{F}_t$.

\vspace{+2mm}
{\bf Step 1.} The key step is to prove that for any fixed vector $\bx \in \R^d$, with probability at least $1 - \delta''$, 
\begin{align}
|\bx^{\top} \bY \bx - \|\bM^{(T)} \bx\|_2^2 | \leq \frac{\epsilon}{2} \cdot \|\bM^{(T)}\bx\|_2^2.\label{eq:upper-goal1} 
\end{align}
To this end, define a set 
\[
\mathcal{S} := \Big\{k \cdot \frac{\sigma_{\min}}{10} \|\bx\|_2 ~\Big|~ k \in \mathbb{Z}_{\geq 1}, \text{ and } \sigma_{\min} \leq k \cdot \frac{\sigma_{\min}}{10} \leq \sigma_{\max}\Big\}.
\]
The size of $\mathcal{S}$ is $|\mathcal{S}| = 10 \kappa$. 

For any value $s \in \mathcal{S}$, define the random sequence $\{\overline{x}_{s}^\ttop\}_{t\in [T]}$
\begin{align}
\overline{x}_s^{(t)} =
\begin{cases}
\bx^{\top} \bX^{(t)} \bx - (\bx^{\top} \boldm^{(t)})^2 & \text{if } \|\bM^{(t)} \bx\|_2 \leq s, \\
0 & \text{otherwise.}
\end{cases} \label{eq:definition-ox}
\end{align}
and  $\{\overline{y}_{s}^\ttop\}_{t\in [0:T]}$
\[
\overline{y}_s^{(0)} = 0\quad \text{and} \quad \overline{y}_s^{(t)} = \overline{y}_s^{(t-1)} + \overline{x}_s^{(t)},\quad \forall t \in [T].
\]

The sequence $\{\overline{y}_s^{(t)}\}_{t\in [0:T]}$ forms a martingale. To see this, by the definition of $\bX^{(t)}$, we have $\E_{t-1}[\bX^{(t)}] = \boldm^{(t)} (\boldm^{(t)})^{\top}$ and $\E_{t-1}[\bx^{\top} \bX^{(t)} \bx] = (\bx^{\top} \boldm^{(t)})^2$. This means $\E_{t-1}[\overline{x}_s^{(t)}] = 0$, and therefore
\[
\E_{t-1}[\overline{y}_s^{(t)}] = \E_{t-1}[\overline{y}_s^{(t-1)} + \overline{x}_s^{(t)}] = \overline{y}_s^{(t-1)}.
\]

We wish to apply the Freedman inequality to $\{\overline{y}_s^\ttop\}_{t \in [T]}$, and we bound the maximum deviation and the variance separately.
\begin{itemize}
\item First, we prove $|\overline{x}_s^{(t)}| < \frac{s^2}{\alpha}$. 

If $\|\bM^{(t)} \bx\|_2 > s$, then  $\overline{x}_s^{(t)} = 0$ due to the definition in Eq.~\eqref{eq:definition-ox}.
If $p_t = 1$, then we have $\bX^{(t)} = \boldm^{(t)} (\boldm^{(t)})^{\top}$ with probability $1$, and therefore, $\overline{x}_s^{(t)} = \bx^{\top} \bX^{(t)} \bx - (\bx^{\top} \boldm^{(t)})^2 = 0$.
Finally, suppose $\|\bM^{(t)} \bx\|_2 \leq s$ and $p_t < 1$, we have $p_t \geq \alpha \cdot \tauo^\ttop$, and
\begin{align*}
|\overline{x}_s^{(t)}| \leq \frac{1}{p_t} \cdot (\bx^{\top} \boldm^{(t)})^2 \leq \frac{1}{\alpha \cdot \tauo^\ttop} \cdot (\bx^{\top} \boldm^{(t)})^2 
\leq \frac{\|\bM^{(t)} \bx\|_2^2}{\alpha} \leq \frac{s^2}{\alpha},
\end{align*}
where the first step follows from the definition of $\overline{x}_s^{(t)}$ and $0 < \frac{1}{p_t} - 1 < \frac{1}{p_t}$, the third step follows from the property of online leverage score (Fact \ref{fact:online_leverage_score}).
We conclude with $|\overline{x}_s^{(t)}| \leq \frac{s^2}{\alpha}$.

\item Then, we bound the variance. Similarly, if $\|\bM^{(t)} \bx\|_2 > s$ or $p_t = 1$, then we have $\overline{x}_s^{(t)} = 0$.
Otherwise, suppose $\|\bM^{(t)} \bx\|_2 \leq s$ and $p_t < 1$, then we have $p_t \geq \alpha \cdot \tauo^\ttop$, and
\begin{align*}
\E_{t-1}[(\overline{x}_s^{(t)})^2] = &~ p_t \cdot (\frac{1}{p_t}-1)^2 (\bx^{\top} \boldm^{(t)})^4 + (1 - p_t) \cdot (\bx^{\top} \boldm^{(t)})^4 \\
\leq &~ \frac{1}{p_t} \cdot (\bx^{\top} \boldm^{(t)})^4 
\leq \frac{\|\bM^{(t)} \bx\|_2^2 \cdot (\bx^{\top} \boldm^{(t)})^2}{\alpha} \leq \frac{s^2 \cdot (\bx^{\top} \boldm^{(t)})^2}{\alpha},
\end{align*}
where the third step follows from $p_t \geq \alpha \cdot \tauo^\ttop$ and Fact~\ref{fact:online_leverage_score}, the last step follow from the assumption of $\|\bM^{(t)} \bx\|_2 \leq s$.
Hence, we conclude that
\[
\E_{t-1}[(\overline{x}_s^{(t)})^2] \leq \frac{s^2 \cdot (\bx^{\top} \boldm^{(t)})^2}{\alpha} \cdot \mathbf{1}_{\|\bM^{(t)} \bx\|_2 \leq s},
\] 
where the indicator variable $\mathbf{1}_{\|\bM^{(t)} \bx\|_2 \leq s}$ is $1$ if $\|\bM^{(t)} \bx\|_2 \leq s$ and $0$ otherwise.

Let $t^*$ be the largest index in $[T]$ such that $\|\bM^{(t^*)} \bx\|_2 \leq s$, then we have
\begin{align*}
\sum_{t=1}^{t^{*}} \E_{t-1}[(\overline{x}_s^{(t)})^2] \leq \sum_{t=1}^{t^{*}} \frac{s^2 \cdot (\bx^{\top} \boldm^{(t)})^2}{\alpha} \cdot \mathbf{1}_{\|\bM^{(t)} \bx\|_2 \leq s} = \frac{s^2 \cdot \|\bM^{(t^*)} \bx\|_2^2}{\alpha} \leq \frac{s^4}{\alpha}.
\end{align*}
\end{itemize}

Now we can apply Freedman's inequality (Lemma~\ref{thm:freedman}) to the sequence $\overline{y}_s^{(0)}, \overline{y}_s^{(1)}, \cdots, \overline{y}_s^{(T)}$ with parameters $R = \frac{s^2}{\alpha}$, $\sigma^2 = \frac{s^4}{\alpha}$, and $u = \frac{\epsilon}{8} s^2$:
\begin{align*}
\Pr\Big[|\overline{y}_s^{(T)}| \geq  \frac{\epsilon}{8} s^2\Big] \leq &~ 2 \exp\left(-\frac{u^2/2}{\sigma^2 + Ru / 3}\right) = 2 \exp\left(-\frac{\epsilon^2 s^4/128}{s^4/\alpha + \epsilon s^4 / (24 \alpha)}\right)\\
\leq &~ 2 \exp\left(-\epsilon^2 \alpha / 200\right) \leq \delta'',
\end{align*}
where the last step follows from the choice of $\alpha = 300 d \epsilon^{-2} \log(\frac{400 \kappa d}{\epsilon \delta}) > 200 \epsilon^{-2} \log(2/\delta'')$.

Taking a union bound over $\mathcal{S}$, and since $|\mathcal{S}| = 10 \kappa$, we have that with probability at least $1 - \delta'' \cdot 10 \kappa = 1 - \delta'$, 
\begin{align}
|\overline{y}_s^{(T)}| < \frac{\epsilon}{8}\cdot s^2, \quad  \forall s \in \mathcal{S}.
\label{eq:union1}
\end{align}
Meanwhile, for any realization of $\boldm^{(1)}, \cdots, \boldm^{(T)}$, there must exist an $s^* \in \mathcal{S}$ such that 
\[
\|\bM^{(T)} \bx\|_2 \leq s^* \leq 2 \|\bM^{(T)} \bx\|_2.
\]
This is because (1) $\sigma_{\min} \cdot \|\bx\|_2 \leq \|\bM^{(T)} \bx\|_2 \leq \sigma_{\max} \cdot \|\bx\|_2$,  
and (2) the gap between two values in $\mathcal{S}$ is $\frac{\sigma_{\min}}{10} \|\bx\|_2 \leq \frac{1}{10} \|\bM^{(T)} \bx\|_2$. 

We have $\overline{x}_{s^*}^{(t)} = \bx^{\top} \bX^{(t)} \bx - (\bx^{\top} \boldm^{(t)})^2$ for all $t \in [T]$. 
Conditioned on the high probability event of Eq.~\eqref{eq:union1}, we conclude that
\[
|\bx^{\top} \bY \bx - \|\bM^{(T)} \bx\|_2^2 | = |\overline{y}_{s^*}^{(T)}| \leq \frac{\epsilon}{8} (s^*)^2 \leq \frac{\epsilon}{2} \|\bM^{(T)} \bx\|_2^2.
\]
This proves Eq.~\eqref{eq:upper-goal1}

\vspace{+2mm}
{\bf Step 2} Next we prove that $\bY$ is an $\eps$-spectral approximation to $(\bM^{(T)})^\top \bM^{(T)}$ with high probability.
Define the set 
\[
\mathcal{B} := \{(x_1, x_2, \cdots, x_d) \in [-1, 1]^d \mid \forall i, x_i = k \cdot \frac{\epsilon}{100 \kappa d} \text{ for some } k \in \mathbb{Z}\}.
\]
Note that the size of $\mathcal{B}$ is $(\frac{200 \kappa d}{\epsilon})^d$.

Taking a union bound over $\mathcal{B}$, we have that with probability at least $1 - \delta' \cdot (\frac{200 \kappa d}{\epsilon})^d = 1 - \delta$, 
\begin{align}
|\bx^{\top} \bY \bx - \|\bM^{(T)} \bx\|_2^2 | \leq \frac{\epsilon}{2} \|\bM^{(T)} \bx\|_2^2, \quad \forall \bx \in \mathcal{B}.\label{eq:adaptive-upper3}
\end{align}

We will condition on this event. For any unit vector $\bx^{*} \in \R^{d}$, there must exist some $\bx \in \mathcal{B}$ such that $|x_i - x^*_i| \leq \frac{\epsilon}{100 \kappa d}$ for all $i \in [d]$, and therefore $\|\bx - \bx^*\|_2 \leq \frac{\epsilon}{100 \kappa}$.  
Now, we have
\begin{align*}
|\|\bM^{(T)} \bx\|_2 - \|\bM^{(T)} \bx^*\|_2| \leq &~ \|\bM^{(T)} (\bx - \bx^*)\|_2 \\
\leq &~ \sigma_{\max} \cdot \frac{\epsilon}{100 \kappa} \leq \frac{\epsilon}{100} \cdot \|\bM^{(T)} \bx^{*}\|_2,
\end{align*}
where the second step follows from the largest singular value of $\bM^{(T)}$ is at most $\sigma_{\max}$ and $\|\bx - \bx^*\|_2 \leq \frac{\epsilon}{100 \kappa}$, the last step follows from the least singular value of $\bM^{(T)}$ is at least $\sigma_{\min}$ and $\kappa = \frac{\sigma_{\max}}{\sigma_{\min}}$. We then have 
\begin{align}
|\|\bM^{(T)} \bx\|_2^2 - \|\bM^{(T)} \bx^*\|_2^2| = &~ (\|\bM^{(T)} \bx\|_2 + \|\bM^{(T)} \bx^*\|_2) \cdot |\|\bM^{(T)} \bx\|_2 - \|\bM^{(T)} \bx^*\|_2| \notag \\
\leq &~ \frac{\epsilon}{40} \|\bM^{(T)} \bx^{*}\|_2^2.\label{eq:adaptive-upper1}
\end{align}

Similarly, we have 
\[
|\|\bY^{1/2} \bx\|_2 - \|\bY^{1/2} \bx^*\|_2| \leq \frac{\epsilon}{50} \cdot \|\bM^{(T)} \bx^{*}\|_2,
\]
this comes from the fact that the largest singular value of $\bY$ is at most $4 \sigma_{\max}^2$ (see Claim \ref{claim:upper-tech1}). Hence, 
\begin{align}
\label{eq:adaptive-upper2}
|\bx^{\top} \bY \bx - (\bx^*)^{\top} \bY \bx^*| \leq \frac{\epsilon}{20} \cdot \|\bM^{(T)} \bx^{*}\|_2^2.
\end{align}
Combining Eq.~\eqref{eq:adaptive-upper3}\eqref{eq:adaptive-upper1}\eqref{eq:adaptive-upper2}, we conclude with
\[
|\bx^{\top} \bY \bx - \|\bM^{(T)} \bx\|_2^2 | \leq \epsilon \|\bM^{(T)} \bx\|_2^2.
\]
This implies that $\bY$ is an $\epsilon$-spectral approximation of $(\bM^{(T)})^{\top} \bM^{(T)}$. We conclude the proof here.
\end{proof}

We next prove JL estimates are within $(1\pm 0.01)$ of the online leverage scores, even under the adversarial input sequence.

\begin{lemma}[Robustness of JL estimation]\label{lem:JL_estimates}
With probability at least $1-\delta/T$, the JL estimates in Algorithm~\ref{algo:sample} satisfy
\[
\|\tilde{\bB}^{(t-1)} \cdot \boldm^{(t)}\|_2 = (1 \pm 0.01) \|\bB^{(t-1)} \cdot \boldm^{(t)}\|_2, ~~ \forall t \in [T]
\]
under adversarial input sequence.
\end{lemma}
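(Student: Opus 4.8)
The plan is to exploit the \emph{sketch-switching} structure of the data structure: Algorithm~\ref{algo:update-member} draws a \emph{fresh} JL matrix (Line~\ref{line:new-JL}) at every iteration in which a row is sampled, and between two consecutive sampled rows the maintained solution $\bx$ --- the only quantity an adaptive adversary gets to observe, since it picks $\boldm^{(t)}$ as a function of $\bx^{(1)},\dots,\bx^{(t-1)}$ --- does not change. Group the iterations $1,\dots,T$ into the (random) intervals $I_j=[s_j+1,\,s_{j+1}]$, where $s_1<s_2<\cdots$ are the iterations at which a row is sampled and $s_0:=0$, and let $\bJ_j$ be the JL matrix instantiated at step $s_j$ (the preprocessing one for $j=0$). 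For every $t\in I_j$ the sketch in force at iteration $t$ is $\bJ_j$ and, since $\bB$ and $\bJ$ are modified only inside \textsc{UpdateMembers}, which is invoked only at the right endpoint $s_{j+1}$ and only \emph{after} \textsc{Sample} has read $\tau$, one has $\wt\bB^{(t-1)}=\bJ_j\,\bB^{(s_j)}$ and $\bB^{(t-1)}=\bB^{(s_j)}$ throughout $I_j$. At most $T+1$ sketches $\bJ_0,\dots,\bJ_T$ are ever instantiated.

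The key claim is that $\bJ_j$ is independent of every query that is issued while it is in force. Let $\mathcal{F}_j$ be the $\sigma$-algebra generated by $\bJ_0,\dots,\bJ_{j-1}$, the sampling coins up to and including step $s_j$, and the adversary's internal randomness. Then $\mathcal{F}_j$ determines $\bx^{(s_j)}$ and $\bB^{(s_j)}$; however, $\bJ_j$ is drawn afresh at step $s_j$ and, by the closed forms of Lemma~\ref{lem:close_form_formula_algorithm}, $\bx^{(s_j)}$ is computed from $\bG,\bu$ with no dependence on any JL sketch, so $\bJ_j$ is independent of $\mathcal{F}_j$. As long as no row is sampled the adversary's view is frozen at $\bx^{(s_j)}$, so feeding the adversary's (fixed) strategy this frozen view produces an $\mathcal{F}_j$-measurable deterministic ``would-be'' query sequence $\boldm_1,\boldm_2,\dots$; on $I_j$ the actual queries $\boldm^{(s_j+1)},\dots,\boldm^{(s_{j+1})}$ coincide with the corresponding prefix of $\boldm_1,\boldm_2,\dots$.

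Conditioning on $\mathcal{F}_j$, I would then apply the JL lemma (Lemma~\ref{lem:JL}) with accuracy $1/100$ and failure probability $\delta/(2T^2)$ --- exactly the parameters passed to \textsc{JL} in Algorithms~\ref{algo:preprocess} and \ref{algo:update-member} --- to the fixed, $\mathcal{F}_j$-measurable $T$-element set $V_j=\{\bB^{(s_j)}\boldm_i:i\in[T]\}$: with probability at least $1-\delta/(2T^2)$ over $\bJ_j$ we get $\|\bJ_j\bB^{(s_j)}\boldm_i\|_2=(1\pm 1/100)\|\bB^{(s_j)}\boldm_i\|_2$ for all $i\in[T]$, and on this event $\|\wt\bB^{(t-1)}\boldm^{(t)}\|_2=(1\pm 0.01)\|\bB^{(t-1)}\boldm^{(t)}\|_2$ for every $t\in I_j$. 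A union bound over the at most $T+1$ sketches gives total failure probability at most $(T+1)\delta/(2T^2)\le\delta/T$, and since every $t\in[T]$ lies in exactly one interval $I_j$ the lemma follows. The one point that needs care is that the \emph{length} of $I_j$ depends on $\bJ_j$ (through the probabilities $p$ used in the unsampled iterations), so one must not condition on it; this is harmless precisely because the JL guarantee is proved for the entire length-$T$ would-be sequence at once, so the actual prefix --- whatever its random length turns out to be --- is automatically covered.
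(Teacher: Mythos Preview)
Your proof is correct and follows the same sketch-switching strategy as the paper: partition $[T]$ into the intervals $I_j$ between consecutive sampled rows, observe that the adversary's view (namely $\bx$) is frozen on each $I_j$ so that the queries issued there are determined by data independent of the freshly drawn $\bJ_j$, apply the JL guarantee to the resulting fixed $T$-element set, and union-bound over the at most $T{+}1$ sketches. The paper formalizes the independence step slightly differently, via an explicit hybrid algorithm $\mathcal{A}_{\text{exact}}^{(k)}$ that replaces the JL estimate by $0.9\|\bB\boldm\|_2$ after step $s_j$ together with a coupling showing the real and hybrid algorithms agree on $I_j$; your direct ``would-be sequence'' argument reaches the same conclusion without the hybrid, and your final remark that one must apply JL to the full length-$T$ sequence (rather than conditioning on the random interval length) is exactly the subtlety the paper's inductive coupling is designed to handle.
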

\begin{proof}
We can assume the adversary is deterministic by using Yao's minimax principle. That is to say, the row $\boldm^{(t)} = (\ba^{(t)}, \beta^{(t)})$ at the $t$-th iteration is fixed once the previous rows $\boldm^{(1)}, \cdots, \boldm^{(t-1)}$ and the previous outputs $\bx^{(1)}, \cdots, \bx^{(t-1)}$ are given.

Let $t^*_k \in [T]$ be the time that our data structure samples and keeps the $k$-th row. 
Our algorithm uses a new JL matrix at the end of $t^*_k$-th iteration (Line \ref{line:new-JL} of Algorithm \ref{algo:update-member}), after the algorithm outputs the solution $\bx^{(t^*_k)}$. 
With a slight abuse of notation, we denote the JL matrix used in iterations $t \in [t^*_k+1: t^*_{k+1}]$ as $\bJ^{(k)}$.  Our goal is to prove that $\bJ^{(k)}$ ensures
\[
\|\tilde{\bB}^{(t-1)}\boldm^\ttop\|_2 =  \|\bJ^{(k)}\bB^{(t-1)}\boldm^\ttop\|_2 = (1\pm 0.01)\|\bB^\ttop\boldm^\ttop\|_2, \quad \forall t \in [t^*_k+1, t^*_{k+1}]
\]
with probability at least $1- \delta/T^2$.

Let $\mathcal{A}_{\text{JL}}$ denote our data structure. 
For each $k$, define a ``hybrid'' algorithm $\mathcal{A}_{\text{exact}}^{(k)}$ that is the same as $\mathcal{A}_{\text{JL}}$ in iterations $1, 2, \cdots, t^*_k$, but starting from the $(t^*_k + 1)$-th iteration, $\mathcal{A}_{\text{exact}}^{(k)}$ ignores the JL matrix and set $\tau^\ttop_{\text{exact}} = 0.9 \|\bB^\ttop \boldm^\ttop\|_2$ in replace of Line \ref{line:approximate-ols} of Algorithm \ref{algo:sample}. 
The two algorithms $\mathcal{A}_{\text{JL}}$ and $\mathcal{A}_{\text{exact}}^{(k)}$ start from the same status in the $(t^*_k+1)$-th iteration. 
We also let them use the same random bits to perform the sampling step (Line~\ref{line:sample} in Algorithm~\ref{algo:sample})  
i.e., they share a uniformly random number $u \in [0,1]$ and each set $\nu = 1/\sqrt{p}$ if $u > p$ and $\nu \leftarrow 0$ otherwise, though their sampling probabilities $p$ are different.

We prove that the outputs of $\mathcal{A}_{\text{JL}}$ and $\mathcal{A}_{\text{exact}}^{(k)}$ are the same in iterations $t \in [t^*_k+1, t^*_{k+1}-1]$ with probability $1 - (t^*_{k+1}-t^*_k-1) \cdot \delta/T^2$. 
We prove by induction. In the base case of $(t^*_k+1)$-th iteration, $\bJ^{(k)}$ is a new random matrix and its entries are independent of all previous rows and outputs, hence the entries of $\bJ^{(k)}$ are also independent of $\bB^{(t^*_k)}$ and $\boldm^{(t^*_k+1)}$, 
by Lemma \ref{lem:JL}, with probability $1 - \delta/T^2$, 
\[
 \|\bJ^{(k)} \bB^{(t^*_k)} \boldm^{(t^*_k+1)}\|_2^2 \in (1\pm 0.01) \|\bB^{(t^*_k)} \boldm^{(t^*_k+1)}\|_2^2, 
\]
Therefore, we have $\tau^{(t_k^*)} \geq \tau^{(t_k^{*})}_{\text{exact}}$ and the sampling probability of $\mathcal{A}_{\text{JL}}$ is at least that of $\mathcal{A}_{\text{exact}}^{(k)}$. Since we use the same random bit for $\mathcal{A}_{\text{exact}}^{(k)}$ and $\mathcal{A}_{\text{JL}}$, this means if the $(t^*_k+1)$-th row is sampled in $\mathcal{A}_{\text{exact}}^{(k)}$, then it's also sampled in $\mathcal{A}_{\text{JL}}$, but we know that a new row is only sampled in $\mathcal{A}_{\text{JL}}$ until the $(t^*_{k+1})$-th iteration, so neither $\mathcal{A}_{\text{JL}}$ nor $\mathcal{A}_{\text{exact}}^{(k)}$ samples the $(t^*_k+1)$-th row. 

The induction step is similar and suppose the outputs of $\mathcal{A}_{\text{JL}}$ and $\mathcal{A}_{\text{exact}}^{(k)}$ are the same up to iteration $(t-1) \in [t_k^{*}: t_{k+1}^{*}-1]$, for the $t$-th iteration, we have that
\begin{itemize}
\item Since the $t_{k}^{*} + 1, \ldots, (t-1)$-th rows are not sampled, the matrix $\bB^{(t-1)} = \bB^{(t^*_k)}$ is not updated, so $\bB^{(t-1)}$ remains independent of the JL matrix $\bJ^{(k)}$.
\item Since $\mathcal{A}_{\text{JL}}$ and $\mathcal{A}_{\text{exact}}^{(k)}$ output the same vector to the adversary, it means the next query $\boldm^{(t)}$ chosen by the adversary is the same for both algorithms. The next query $\boldm^{(t)}$ for $\mathcal{A}_{\text{exact}}^{(k)}$ is fixed given the transcript of $\mathcal{A}_{\text{exact}}^{(k)}$, and since $\mathcal{A}_{\text{exact}}^{(k)}$ is completely agnostic of the JL matrix $\bJ^{(k)}$, this means $\boldm^{(t)}$ is also independent of $\bJ^{(k)}$. 
\end{itemize}
Combining the above two facts and by the property of JL sketch (by Lemma \ref{lem:JL}),  with probability $1 - \delta/T^2$, 
\[
 \|\bJ^{(t)} \bB^{(t-1)} \boldm^{(t)}\|_2^2 = (1\pm 0.01) \|\bB^{(t-1)} \boldm^{(t)}\|_2^2, 
\]
and $\tau^{(t)} \geq \tau^{(t)}_{\text{exact}}$. 
By a similar argument, since both algorithm use the same random bits for sampling, $\mathcal{A}_{\text{exact}}^{(k)}$ would not keep $\boldm^{(t)}$ before the $t_{k+1}^{*}$-th iteration. We note it is possible that two algorithms $\mathcal{A}_{\text{JL}}$ and $\mathcal{A}_{\text{exact}}^{(k)}$ differs from each other at the $t_{k+1}^{*}$-th iteration, (i.e., the $t^*_{k+1}$-th row is sampled in $\mathcal{A}_{\text{JL}}$, but it may not be sampled in $\mathcal{A}_{\text{exact}}^{(k)}$), this is fine because after outputting $\bx^{(t^*_{k+1})}$, our data structure immediately update the JL matrix to $\bJ^{(k+1)}$, and we would apply the same argument for $\mathcal{A}_{\text{exact}}^{(k+1)}$. Finally, we note that in each iteration the failure probability is $\delta/T^2$, so using a union bound, the total failure probability is at most $\delta/T$. We conclude the proof here.
\end{proof}

Combining Lemma \ref{lem:intrinsic_new} and Lemma \ref{lem:JL_estimates}, we conclude with the following lemma.

\begin{lemma}[Spectral approximation, adaptive adversary]
\label{lem:spectral-approximation-robust}
The following holds against an adaptive adversary: With probability at least $1 - \delta/T$, for any $t \in [T]$,
\begin{align}
    0.9 (1-\eps) \cdot \tauo^\ttop \leq \tau^{(t)} \leq 1.1 (1+\eps) \cdot \tauo^\ttop \label{eq:spectral-robust-1}
\end{align}
and for any $t \in [0: T]$ 
\begin{align}
    (\bM^{(t)})^{\top} (\bD^{(t)})^2 \bM^{(t)} \approx_{\epsilon} (\bM^{(t)})^{\top} \bM^{(t)}.\label{eq:spectral-robust-2}
\end{align}
Here $\tauo^\ttop = (\boldm^{(t)})^{\top} ((\bM^{(t-1)})^\top \bM^{(t-1)})^{-1}\boldm^{(t)}$ is the online leverage score of the $t$-th row.
\end{lemma}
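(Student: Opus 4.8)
The plan is to follow the oblivious argument (Lemma~\ref{lem:sampling_probability}), substituting the adaptive-adversary versions of its two ingredients: Lemma~\ref{lem:intrinsic_new} in place of the online-row-sampling guarantee and Lemma~\ref{lem:JL_estimates} in place of a fresh JL application at each step. First I would condition on the event of Lemma~\ref{lem:JL_estimates}, which — after adjusting the constants there — we may take to hold with probability at least $1-\delta/(2T)$, so that $\tau^{(t)}=\|\wt{\bB}^{(t-1)}\boldm^{(t)}\|_2^2=(1\pm 0.01)\,\|\bB^{(t-1)}\boldm^{(t)}\|_2^2$ for all $t\in[T]$ becomes a deterministic fact. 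Then I would establish Eqs.~\eqref{eq:spectral-robust-1} and \eqref{eq:spectral-robust-2} by a joint induction on $t$ with hypothesis $P(t)$: ``Eq.~\eqref{eq:spectral-robust-2} holds for every $t'\in[0:t]$ and Eq.~\eqref{eq:spectral-robust-1} holds for every $t'\in[t]$''. The base case $t=0$ is immediate since $\bD^{(0)}=\bI_{d+1}$, whence $(\bM^{(0)})^\top(\bD^{(0)})^2\bM^{(0)}=(\bM^{(0)})^\top\bM^{(0)}$.

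For the inductive step $P(t-1)\Rightarrow P(t)$: by the closed-form identities of Lemma~\ref{lem:close_form_formula_algorithm} we have $\|\bB^{(t-1)}\boldm^{(t)}\|_2^2=(\boldm^{(t)})^\top\big((\bM^{(t-1)})^\top(\bD^{(t-1)})^2\bM^{(t-1)}\big)^{-1}\boldm^{(t)}$, and invoking Eq.~\eqref{eq:spectral-robust-2} at time $t-1$ together with the fact that $\bS_1\approx_{\epsilon}\bS_2$ implies $\bS_1^{-1}\approx_{\epsilon/(1-\epsilon)}\bS_2^{-1}$ shows this equals $\tauo^{(t)}$ up to the distortion from inverting an $\epsilon$-spectral approximation; combined with the conditioned JL estimate, a short calculation using $\epsilon<1/8$ gives $0.9(1-\epsilon)\tauo^{(t)}\le\tau^{(t)}\le 1.1(1+\epsilon)\tauo^{(t)}$, i.e.\ Eq.~\eqref{eq:spectral-robust-1} at time $t$. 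Now Eq.~\eqref{eq:spectral-robust-1} holds on all of $[t]$, so for every $t'\in[t]$ the sampling probability from Algorithm~\ref{algo:sample} satisfies $p^{(t')}=\min\{C_{\mathrm{adv}}\tau^{(t')},1\}\ge\min\{0.9(1-\epsilon)C_{\mathrm{adv}}\tauo^{(t')},1\}\ge\min\{\alpha\,\tauo^{(t')},1\}$, where $\alpha$ is the threshold of Lemma~\ref{lem:intrinsic_new} taken with failure parameter $\Theta(\delta/T^2)$; the last inequality is a routine constant check from $C_{\mathrm{adv}}=32(1+\epsilon)d\log(\sigma_{\max}/\sigma_{\min})\cdot C_{\mathrm{obl}}$ and $C_{\mathrm{obl}}=10\epsilon^{-2}\log(2T/\delta)$. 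Since the data structure maintains $(\bM^{(t)})^\top(\bD^{(t)})^2\bM^{(t)}=(\bM^{(0)})^\top\bM^{(0)}+\sum_{t'=1}^{t}\bX^{(t')}$ with $\bX^{(t')}=\tfrac1{p^{(t')}}\boldm^{(t')}(\boldm^{(t')})^\top$ when row $t'$ is kept and $\mathbf{0}$ otherwise, applying Lemma~\ref{lem:intrinsic_new} with horizon $t$ yields Eq.~\eqref{eq:spectral-robust-2} at time $t$ with failure probability $\Theta(\delta/T^2)$, completing $P(t)$. A union bound over $t\in[T]$ together with the JL event keeps the total failure probability below $\delta/T$ (here $T=\poly(d)$ is used, so the $\Theta(\delta/T^2)$ rescaling costs only logarithmic factors), and correctness of the maintained $\bx^{(t)}$ then follows from Eq.~\eqref{eq:spectral-robust-2} via Lemma~\ref{lem:approx_l2_regression_from_spectral_approx}, exactly as in Lemma~\ref{lem:correctness_algorithm}.

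The hard part will be the circularity lurking in the inductive step: Lemma~\ref{lem:intrinsic_new} requires the per-step bound $p^{(t')}\ge\min\{\alpha\,\tauo^{(t')},1\}$, but $p^{(t')}$ is a function of $\tau^{(t')}$ and hence of the sampled Gram matrix $(\bM^{(t'-1)})^\top(\bD^{(t'-1)})^2\bM^{(t'-1)}$, which need not be dominated by any constant multiple of $(\bM^{(t'-1)})^\top\bM^{(t'-1)}$ — a single rarely-sampled row, reweighted by $1/p$, can blow it up — unless the spectral approximation has \emph{already} been maintained, which is precisely the conclusion we want. I would resolve this by coupling with a ``padded'' process that samples with probability $\wt p^{(t')}=\max\{p^{(t')},\min\{\alpha\,\tauo^{(t')},1\}\}$: this satisfies the hypothesis of Lemma~\ref{lem:intrinsic_new} unconditionally, and the deterministic chain ``Eq.~\eqref{eq:spectral-robust-2} at $t'-1$ $\Rightarrow$ accurate $\tau^{(t')}$ $\Rightarrow$ $p^{(t')}=\wt p^{(t')}$'' shows by induction that, on the high-probability event where the padded process keeps Eq.~\eqref{eq:spectral-robust-2} at every step, it agrees with the true process step by step, so the true process inherits the guarantee. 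The remaining loose ends — verifying that $C_{\mathrm{adv}}$ absorbs both the $1\pm0.01$ JL distortion and the distortion from inverting an $\epsilon$-spectral approximation, and that inserting/deleting rows in Lemma~\ref{lem:close_form_formula_algorithm} is consistent with the identity $(\bM^{(t)})^\top(\bD^{(t)})^2\bM^{(t)}=(\bM^{(0)})^\top\bM^{(0)}+\sum_{t'\le t}\bX^{(t')}$ — are routine.
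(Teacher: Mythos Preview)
Your proposal is correct and follows essentially the same inductive scheme as the paper's proof: condition on the JL event of Lemma~\ref{lem:JL_estimates}, use the closed-form identities and the spectral approximation at step $t-1$ to derive Eq.~\eqref{eq:spectral-robust-1} at step $t$, and then feed the resulting lower bound on $p^{(t')}$ into Lemma~\ref{lem:intrinsic_new} to obtain Eq.~\eqref{eq:spectral-robust-2} at step $t$. You are in fact more careful than the paper on one point: the circularity you flag---that the hypothesis $p^{(t')}\ge\min\{\alpha\,\tauo^{(t')},1\}$ of Lemma~\ref{lem:intrinsic_new} presupposes the very spectral approximation one is trying to establish---is genuine and the paper's terse induction simply glosses over it, whereas your padded-process coupling (run a shadow sampler with probabilities $\wt p^{(t')}=\max\{p^{(t')},\min\{\alpha\,\tauo^{(t')},1\}\}$, apply Lemma~\ref{lem:intrinsic_new} to it unconditionally, and then argue inductively that on the good event the two processes coincide) is a clean and standard way to make the argument rigorous.
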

\begin{proof}
With Lemma \ref{lem:intrinsic_new} and Lemma \ref{lem:JL_estimates} in hand, the proof is similar to the oblivious case (Lemma \ref{lem:sampling_probability}). We prove Eq.~\eqref{eq:spectral-robust-1}\eqref{eq:spectral-robust-2} inductively. The base case holds trivially and suppose it continues to hold up to iteration $(t-1)$. In the $t$-th iteration, by Lemma \ref{lem:intrinsic_new}, we have
\begin{align*}
\tau^\ttop = \|\tilde{\bB}^{(t-1)} \cdot \boldm^{(t)}\|_2 = (1 \pm 0.01) \|\bB^{(t-1)} \cdot \boldm^{(t)}\|_2 = (1 \pm 0.01)(1\pm \eps) \tauo^\ttop. 
\end{align*}
Here the last step follows from the same calculation as Eq.~\eqref{eq:jl1} and the inductive hypothesis on spectral approximation. This finishes the first part of induction. For the second part (i.e., Eq.~\eqref{eq:spectral-robust-2}), it follows from the inductive hypothesis (i.e., Eq.~\eqref{eq:spectral-robust-1}) and the robustness of online leverage score sampling (Lemma \ref{lem:intrinsic_new}). 
\end{proof}

The correctness of our data structure follows directly from Lemma \ref{lem:spectral-approximation-robust} and Lemma \ref{lem:approx_l2_regression_from_spectral_approx}, we summarize below.
\begin{lemma}[Correctness of Algorithm \ref{algo:preprocess}--\ref{algo:update-member}, adaptive adversary]\label{lem:correctness_algorithm-adaptive}
With probability at least $1 - \delta/T$, in each iteration, \textsc{Insert} of Algorithm~\ref{algo:update} outputs a vector $\bx^{(t)} \in \R^d$ such that 
    \[
    \|\bA^{(t)} \bx^{(t)} - \bb^{(t)}\|_2 \leq (1+\eps) \min_{\bx \in \R^d} \|\bA^{(t)} \bx - \bb^{(t)} \|_2
    \]
against an adaptive adversary.
\end{lemma}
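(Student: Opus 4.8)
The plan is to mirror the oblivious-adversary argument of Lemma~\ref{lem:correctness_algorithm}, feeding in the adaptive spectral-approximation guarantee (Lemma~\ref{lem:spectral-approximation-robust}) in place of its oblivious counterpart (Lemma~\ref{lem:sampling_probability}). First I would recall from the closed-form formulas (Lemma~\ref{lem:close_form_formula_algorithm}) that the vector $\bx^{(t)}$ returned by \textsc{Insert} equals $\big((\bA^{(t)})^{\top} (\bD^{(t)})^2 \bA^{(t)}\big)^{-1} (\bA^{(t)})^{\top} (\bD^{(t)})^2 \bb^{(t)}$, i.e., it is exactly the closed-form minimizer (Fact~\ref{fact:closed-form}, applied to the reweighted system) of $\|\bD^{(t)}\bA^{(t)}\bx - \bD^{(t)}\bb^{(t)}\|_2$.

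Next I would invoke Lemma~\ref{lem:spectral-approximation-robust}: against an adaptive adversary, with probability at least $1-\delta/T$, for every $t\in[0:T]$ the sampling matrix satisfies $(\bM^{(t)})^{\top} (\bD^{(t)})^2 \bM^{(t)} \approx_{\epsilon} (\bM^{(t)})^{\top} \bM^{(t)}$, where $\bM^{(t)} = [\bA^{(t)},\bb^{(t)}]$. Condition on this event. Then $\bD^{(t)}$ plays exactly the role of the subspace embedding in Lemma~\ref{lem:approx_l2_regression_from_spectral_approx}, so applying that lemma with the reweighting $\bD^{(t)}$ gives $\|\bA^{(t)}\bx^{(t)} - \bb^{(t)}\|_2 \le (1+\epsilon)\min_{\bx\in\R^d}\|\bA^{(t)}\bx-\bb^{(t)}\|_2$, simultaneously for all $t\in[T]$ on the conditioned event, which has probability at least $1-\delta/T$. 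This is the whole argument.

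There is no substantive obstacle here: all the difficulty has been isolated into Lemma~\ref{lem:spectral-approximation-robust}, which in turn rests on the new robust analysis of online leverage-score sampling (Lemma~\ref{lem:intrinsic_new}) and the adversarial robustness of the JL estimates (Lemma~\ref{lem:JL_estimates}). The one point requiring a line of care is that Lemma~\ref{lem:approx_l2_regression_from_spectral_approx} is phrased for a \emph{random} subspace embedding carrying its own failure probability $\delta$, whereas in our setting $\bD^{(t)}$ is already a fixed matrix whose randomness has been accounted for inside Lemma~\ref{lem:spectral-approximation-robust}; I would therefore use only the deterministic implication ``$\epsilon$-spectral approximation of $(\bM^{(t)})^\top\bM^{(t)}$ $\Rightarrow$ $\epsilon$-approximate regression'' contained in that lemma, so no additional union bound is incurred beyond the $\delta/T$ already paid.
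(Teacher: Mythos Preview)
Your proposal is correct and matches the paper's approach exactly: the paper simply states that correctness follows directly from Lemma~\ref{lem:spectral-approximation-robust} together with Lemma~\ref{lem:approx_l2_regression_from_spectral_approx}, mirroring the oblivious-case proof (Lemma~\ref{lem:correctness_algorithm}) with the adaptive spectral-approximation guarantee swapped in. Your added remark about using only the deterministic implication in Lemma~\ref{lem:approx_l2_regression_from_spectral_approx} is a fair clarification of a point the paper leaves implicit.
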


\subsection{Runtime analysis}
\label{sec:time}

Finally, we bound the total update time of our algorithm. 
At each iteration, if the $t$-th row is sampled, the \textsc{Insert} procedure makes a call to \textsc{UpdateMembers} (Algorithm~\ref{algo:update-member}), and it takes $\wt{O}(s^{(t)} d \log(T/\delta))$ time, where $s^{(t)} $ is the number of sampled rows. The most expensive step is (Line \ref{line:new-JL} and Line \ref{line:update-wt_B}), where we instantiate a new JL matrix. All other steps can be done in $O(d^2)$ time.
If the $t$-th row is not sampled, then the \textsc{Insert} procedure only needs to compute the approximate leverage score $\tau^{(t)}$, and it takes $\wt{O}(\log(T/\delta) \cdot \nnz(\ba^{(t)}))$ time. 

We summarize the above observations and the calculations can be found in Appendix \ref{sec:upper-app}.
\begin{lemma}[Update time]\label{lem:worst_case_query_time}
At the $t$-th iteration of the \textsc{Insert} procedure (Algorithm~\ref{algo:update}),
\begin{itemize}
    \item If the $t$-th row is not sampled, then \textsc{Insert} takes $O\big(\log(T/\delta) \cdot \nnz(\ba^{(t)})\big)$ time.
    \item If the $t$-th row is sampled, then \textsc{Insert} takes $O\big(s^{(t)}d \log(T/\delta)\big)$ time.
\end{itemize}
\end{lemma}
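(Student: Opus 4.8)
The plan is to go line by line through \textsc{Insert} (Algorithm~\ref{algo:update}), its subroutine \textsc{Sample} (Algorithm~\ref{algo:sample}), and \textsc{UpdateMembers} (Algorithm~\ref{algo:update-member}), bound the cost of each individual operation, and sum them up in each of the two cases. Throughout I will use that the number of sampled rows satisfies $s^{(t)}\geq d+1$ for all $t$ (since \textsc{Preprocess} sets $s=d+1$ and \textsc{UpdateMembers} only increments it), so that every $O(d^2)$ term that appears is dominated by $O(s^{(t)}d)$, and that the JL sketch used at iteration $t$ has $k=O(\log(T/\delta))$ rows by Lemma~\ref{lem:JL} applied with $\epsilon=1/100$, $m=T$, and failure probability $\delta/(2T^2)$.

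I will first dispatch the case where the $t$-th row is \emph{not} sampled. Here \textsc{Insert} only forms $\boldm^{(t)}=(\ba^{(t)},\beta^{(t)})$ in $O(\nnz(\ba^{(t)}))$ time, calls \textsc{Sample}, appends a zero diagonal entry to $\bD$ in $O(1)$ time, and returns the (unchanged) solution $\bx$. Inside \textsc{Sample}, the only nontrivial step is computing $\tau=\|\wt{\bB}^{(t-1)}\boldm^{(t)}\|_2^2$: since $\wt{\bB}^{(t-1)}$ has $k=O(\log(T/\delta))$ rows and $\boldm^{(t)}$ has at most $\nnz(\ba^{(t)})+1$ nonzero entries, the product $\wt{\bB}^{(t-1)}\boldm^{(t)}$ costs $O(k\cdot\nnz(\ba^{(t)}))$, its squared norm costs $O(k)$, and computing $p$ and the random scalar $\nu$ costs $O(1)$. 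This yields the first bullet, $O(\log(T/\delta)\cdot\nnz(\ba^{(t)}))$.

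Next I will handle \textsc{UpdateMembers}, which is additionally executed when the $t$-th row is sampled. The crucial observation is that $\Delta\bH$ is rank one: writing $\bg:=\bH^{(t-1)}\boldm^{(t)}$, which is computed in $O(d^2)$ time, we have $\Delta\bH=\gamma\,\bg\bg^{\top}$ for an explicit scalar $\gamma$, so forming $\Delta\bH$ and updating $\bH$ each cost $O(d^2)$. For the update of $\bB$ I will avoid the naive $O(s^{(t)}d^2)$ product and instead use $\bN^{(t-1)}\Delta\bH=\gamma\,(\bN^{(t-1)}\bg)\bg^{\top}$: the matrix--vector product $\bN^{(t-1)}\bg$ takes $O(s^{(t)}d)$ time, and the resulting rank-one outer product, the matrix addition with $\bB$, and appending the new row all take $O(s^{(t)}d)$ time; updating $\bN$ is $O(d)$. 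By Lemma~\ref{lem:JL}, instantiating the new matrix $\bJ\in\R^{k\times s^{(t)}}$ takes $O(ks^{(t)})=O(s^{(t)}\log(T/\delta))$ time, and recomputing $\wt{\bB}=\bJ\bB$ takes $O(ks^{(t)}d)=O(s^{(t)}d\log(T/\delta))$ time --- this is the dominant term. Finally, the Woodbury update of $\bG$ is again a rank-one update costing $O(d^2)$, updating $\bu$ costs $O(\nnz(\ba^{(t)}))$, and recomputing $\bx=\bG\bu$ costs $O(d^2)$. Summing all contributions and using $s^{(t)}\geq d+1$ (so $d^2\leq s^{(t)}d$) and $\nnz(\ba^{(t)})\leq d+1$, everything is absorbed into $O(s^{(t)}d\log(T/\delta))$, which is the second bullet.

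The step I expect to require the most care is the $\bB$-update bound: getting $O(s^{(t)}d)$ rather than $O(s^{(t)}d^2)$ hinges entirely on exploiting the rank-one structure of $\Delta\bH$, so that $\bN^{(t-1)}\Delta\bH$ decomposes into one matrix--vector product plus one rank-one outer product. Everything else is a routine tally of dense matrix--vector and rank-one operations together with the JL cost read off from Lemma~\ref{lem:JL}; I would also briefly note that in the non-sampled case the returned solution $\bx$ is unchanged and hence can be returned by reference (or its $O(d)$ output cost charged once), so that it does not spoil the $\nnz(\ba^{(t)})$-dependent bound.
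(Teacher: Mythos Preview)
Your proposal is correct and follows essentially the same approach as the paper: identify that the bottleneck in the non-sampled case is the product $\wt{\bB}^{(t-1)}\boldm^{(t)}$, and in the sampled case it is recomputing $\wt{\bB}=\bJ\bB$. The paper's proof is terser and simply asserts that ``all other computations only involve matrix-vector multiplications and matrix additions, and they can be computed in $O(s^{(t)} d)$ time''; your explicit use of the rank-one structure of $\Delta\bH$ to bound the $\bN^{(t-1)}\Delta\bH$ step, and of the inequality $s^{(t)}\ge d+1$ to absorb the $O(d^2)$ terms, supplies exactly the details the paper omits.
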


It is clear that we want to bound the number of sampled rows. First, we have the following bound on the sum of online leverage score.
\begin{lemma}[Sum of online leverage scores \cite{cmp20}]\label{lem:online_leverage_score_sum}
If the largest singular value of matrix $\bM^{(T)}$ is at most $\sigma_{\max}$, and the least singular value of matrix $\bM^{(0)}$ is at least $\sigma_{\min}$, then
\[
\sum_{t=1}^T \tauo^\ttop \leq O(d \log(\sigma_{\max}/ \sigma_{\min})).
\]
where $\tauo^\ttop = (\boldm^{(t)})^{\top} ((\bM^{(t-1)})^\top \bM^{(t-1)})^{-1}\boldm^{(t)}$ is the online leverage score of the $t$-th row.
\end{lemma}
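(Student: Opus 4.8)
The plan is to run the classical ``matrix determinant'' potential argument for online sampling. Since rows are only inserted, the running Gram matrices satisfy $(\bM^\ttop)^{\top}\bM^\ttop = (\bM^{(t-1)})^{\top}\bM^{(t-1)} + \boldm^\ttop(\boldm^\ttop)^{\top}$, so they form a monotone chain $(\bM^{(0)})^{\top}\bM^{(0)} \preceq \cdots \preceq (\bM^{(T)})^{\top}\bM^{(T)}$, each invertible since $\bM^{(0)}$ is full rank. First I would apply the matrix determinant lemma at each step:
\begin{align*}
\det\bigl((\bM^\ttop)^{\top}\bM^\ttop\bigr) &= \det\bigl((\bM^{(t-1)})^{\top}\bM^{(t-1)}\bigr)\cdot\Bigl(1 + (\boldm^\ttop)^{\top}\bigl((\bM^{(t-1)})^{\top}\bM^{(t-1)}\bigr)^{-1}\boldm^\ttop\Bigr)\\
&= \det\bigl((\bM^{(t-1)})^{\top}\bM^{(t-1)}\bigr)\cdot(1+\tauo^\ttop),
\end{align*}
so $\tauo^\ttop$ is exactly the multiplicative increment of the log-determinant potential. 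Telescoping over $t=1,\dots,T$ and taking logarithms gives $\sum_{t=1}^{T}\log(1+\tauo^\ttop) = \log\det\bigl((\bM^{(T)})^{\top}\bM^{(T)}\bigr) - \log\det\bigl((\bM^{(0)})^{\top}\bM^{(0)}\bigr)$.

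Next I would bound the two determinants via the spectral hypotheses. The least singular value of $\bM^{(0)}$ being at least $\sigma_{\min}$ means $(\bM^{(0)})^{\top}\bM^{(0)} \succeq \sigma_{\min}^2\bI_{d+1}$, so all $d+1$ eigenvalues are at least $\sigma_{\min}^2$ and $\log\det\bigl((\bM^{(0)})^{\top}\bM^{(0)}\bigr) \ge 2(d+1)\log\sigma_{\min}$. For the other end, monotonicity gives $(\bM^{(T)})^{\top}\bM^{(T)} \succeq (\bM^\ttop)^{\top}\bM^\ttop$ for every $t$, and $\sigma_1(\bM^{(T)}) \le \sigma_{\max}$ means $(\bM^{(T)})^{\top}\bM^{(T)} \preceq \sigma_{\max}^2\bI_{d+1}$, hence $\log\det\bigl((\bM^{(T)})^{\top}\bM^{(T)}\bigr) \le 2(d+1)\log\sigma_{\max}$. (The sandwiching $\sigma_{\min}^2\bI_{d+1} \preceq (\bM^{(0)})^{\top}\bM^{(0)} \preceq (\bM^{(T)})^{\top}\bM^{(T)} \preceq \sigma_{\max}^2\bI_{d+1}$ is the only place both singular-value assumptions are used together, and it is what yields the stated $O(d\log(\sigma_{\max}/\sigma_{\min}))$ bound.) Combining, $\sum_{t=1}^{T}\log(1+\tauo^\ttop) \le 2(d+1)\log(\sigma_{\max}/\sigma_{\min})$.

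Finally I would pass back to leverage scores using the elementary inequality $\min\{x,1\}\le 2\log(1+x)$, valid for all $x\ge 0$, which gives $\sum_{t=1}^{T}\min\{\tauo^\ttop,1\} \le 4(d+1)\log(\sigma_{\max}/\sigma_{\min}) = O(d\log(\sigma_{\max}/\sigma_{\min}))$. This truncated sum is precisely what the downstream analysis needs, since the sampling probability in Algorithm~\ref{algo:sample} is $\min\{C\tauo^\ttop,1\}$, so the expected number of sampled rows is at most $C\sum_{t}\min\{\tauo^\ttop,1\}$. I expect this last conversion to be the only delicate point: an individual online leverage score can be as large as $(\sigma_{\max}/\sigma_{\min})^2$ (only the \emph{increments} of the log-determinant, not the scores themselves, are summable), so the lemma should be invoked in the capped form $\sum_t\min\{\tauo^\ttop,1\}$ — equivalently, as a bound on $\sum_t p_t$ — whenever it is used to count sampled rows. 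Everything else is a routine determinant computation.
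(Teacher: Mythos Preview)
The paper does not supply its own proof of this lemma; it simply cites \cite{cmp20}. Your argument is exactly the standard log-determinant potential proof from that reference: the matrix determinant lemma turns each $\tauo^\ttop$ into a multiplicative increment of $\det((\bM^\ttop)^\top\bM^\ttop)$, the telescoping sum is bounded by $2(d+1)\log(\sigma_{\max}/\sigma_{\min})$ via the singular-value hypotheses, and one converts back via $\min\{x,1\}\le 2\log(1+x)$. This is correct and matches the cited approach.

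Your closing remark about capping is well taken. As literally stated, the lemma asserts a bound on $\sum_t\tauo^\ttop$, but the determinant argument only controls $\sum_t\log(1+\tauo^\ttop)$, hence $\sum_t\min\{\tauo^\ttop,1\}$; a single online score can be as large as $(\sigma_{\max}/\sigma_{\min})^2$, so the uncapped sum need not be $O(d\log(\sigma_{\max}/\sigma_{\min}))$ in general. The paper's downstream use (Lemma~\ref{lem:number-row}) only needs $\sum_t p_t$ with $p_t=\min\{C\tau^\ttop,1\}$, and for that the capped bound you prove is exactly what is required. So the proof is fine, and your caveat is the right way to read the lemma.
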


Using Lemma \ref{lem:online_leverage_score_sum}, we have the following lemma on the number of sampled rows, and we again defer its proof to Appendix \ref{sec:upper-app}.
\begin{lemma}[Number of sampled rows]
\label{lem:number-row}
With probability at least $1 -\delta/T$, for oblivious adversary, the total number of sampled rows is at most
\begin{align}
O\left(d\epsilon^{-2} \log(T / \delta) \log(\sigma_{\max}/\sigma_{\min}) \right). \label{eq:number-row-oblivious}
\end{align}
For adaptive adversary, the total number of sampled rows is at most
\begin{align}
O\left(d^2 \epsilon^{-2} \log(T/\delta)\log^2(\sigma_{\max}/\sigma_{\min})\right). \label{eq:number-row-adaptive}
\end{align}
\end{lemma}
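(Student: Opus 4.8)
To prove Lemma~\ref{lem:number-row}, the plan is to first bound the \emph{expected} number of sampled rows via the sum of online leverage scores, and then promote this to a high-probability bound with a martingale Chernoff argument. Condition on the good event $\mathcal{E}$ of Lemma~\ref{lem:sampling_probability} in the oblivious case (resp. Lemma~\ref{lem:spectral-approximation-robust} in the adaptive case), which holds with probability at least $1-\delta/T$ and guarantees $\tau^{(t)} \le 1.1(1+\eps)\,\tauo^\ttop$ for every $t\in[T]$. On $\mathcal{E}$, the sampling probability in Algorithm~\ref{algo:sample} obeys $p_t = \min\{C\,\tau^{(t)},1\} \le \min\{1.1(1+\eps)\,C\,\tauo^\ttop,\,1\}$, where $C=C_{\mathrm{obl}}$ in the oblivious case and $C=C_{\mathrm{adv}}$ in the adaptive case. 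Since the model assumes $\sigma_d(\bM^{(0)})\ge\sigma_{\min}$ and $\sigma_1(\bM^{(T)})\le\sigma_{\max}$, Lemma~\ref{lem:online_leverage_score_sum} gives $\sum_{t=1}^T \tauo^\ttop \le O(d\log(\sigma_{\max}/\sigma_{\min}))$ for \emph{every} realization, hence
\[
\sum_{t=1}^T p_t \;\le\; 1.1(1+\eps)\,C\cdot O\big(d\log(\sigma_{\max}/\sigma_{\min})\big) \;=:\; \mu .
\]
Substituting $C_{\mathrm{obl}}=10\eps^{-2}\log(2T/\delta)$ yields $\mu=O(d\eps^{-2}\log(T/\delta)\log(\sigma_{\max}/\sigma_{\min}))$, matching \eqref{eq:number-row-oblivious}; substituting $C_{\mathrm{adv}}=32(1+\eps)d\log(\sigma_{\max}/\sigma_{\min})\,C_{\mathrm{obl}}$ yields $\mu=O(d^2\eps^{-2}\log(T/\delta)\log^2(\sigma_{\max}/\sigma_{\min}))$, matching \eqref{eq:number-row-adaptive}.

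For the concentration step, let $Z_t\in\{0,1\}$ indicate that the $t$-th row is sampled, so $\E[Z_t\mid\mathcal{F}_{t-1}]=p_t$, and let $M_i=\sum_{t\le i}(Z_t-p_t)$, a martingale with increments bounded by $1$ in absolute value. To handle the fact that the estimate $\sum_t p_t\le\mu$ is only guaranteed on $\mathcal{E}$, introduce the stopping time $\rho$ equal to the first iteration at which the spectral-approximation / leverage-estimate guarantee fails (and $\rho=T+1$ if it never fails); for $t\le\rho$ we still have $p_t\le 1.1(1+\eps)C\,\tauo^\ttop$, so the stopped process $M_{i\wedge\rho}$ has predictable quadratic variation $W_{i\wedge\rho}\le\sum_{t\le\rho}p_t\le\mu$. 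Applying a martingale Chernoff bound, e.g. Freedman's inequality (Lemma~\ref{thm:freedman}), to $M_{i\wedge\rho}$ with $R=1$, variance parameter $\mu$, and deviation $u=\mu$ shows $\sum_{t\le\rho}Z_t\le 2\mu+O(\log(T/\delta))=O(\mu)$ with probability at least $1-\delta/T$, the additive $\log(T/\delta)$ term being dominated by $\mu$. On $\mathcal{E}$ we have $\rho=T+1$, so a union bound over the failure of $\mathcal{E}$ and of the concentration estimate gives: with probability at least $1-O(\delta/T)$ the total number of sampled rows $\sum_{t=1}^T Z_t$ is $O(\mu)$; rescaling $\delta$ by a constant yields the stated probability $1-\delta/T$ and establishes both \eqref{eq:number-row-oblivious} and \eqref{eq:number-row-adaptive}.

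The step I expect to require the most care is the concentration argument in the adaptive regime: the per-step probabilities $p_t$ — and even the bound $\mu$ on their sum — are random, and they feed back into the adversary's later choices, so one cannot invoke a textbook Chernoff bound for independent Bernoulli trials. The stopping-time (equivalently, truncated-martingale) device above is what circumvents this; the remaining work is routine bookkeeping with the constants $C_{\mathrm{obl}},C_{\mathrm{adv}}$ and the leverage-score-sum bound of Lemma~\ref{lem:online_leverage_score_sum}.
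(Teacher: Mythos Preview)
Your proposal is correct and follows essentially the same approach as the paper: bound $\sum_t p_t$ via Lemma~\ref{lem:online_leverage_score_sum} together with the leverage-score estimate guarantee, then apply Freedman's inequality to the Bernoulli martingale. Your treatment is in fact slightly more careful than the paper's, which packages the concentration step into an auxiliary lemma assuming $\sum_t p_t\le U$ holds ``always'' and then invokes it after conditioning on the good event; your explicit stopping-time truncation is exactly what makes that conditioning rigorous.
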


Combining Lemma \ref{lem:worst_case_query_time} and Lemma \ref{lem:number-row}, we bound the amortized update time of our data structure.
\begin{lemma}[Amortized update time]\label{lem:amortized_query_time}
With probability at least $1-\delta/T$, for oblivious adversary, 
the total running time of $\textsc{Insert}$ over $T$ iterations is at most
\[
O\big(\nnz(\bA^{(T)}) \log(T/\delta) + \epsilon^{-4} d^3 \log^2(\sigma_{\max} / \sigma_{\min}) \log^3(T / \delta)\big).
\]
for adaptive adversary, 
the total running time of $\textsc{Insert}$ over $T$ iterations is at most
\[
O\big(\nnz(\bA^{(T)}) \log(T/\delta) + \epsilon^{-4} d^5 \log^4(\sigma_{\max} / \sigma_{\min}) \log^3(T / \delta)\big).
\]
\end{lemma}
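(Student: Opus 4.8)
The plan is to obtain Lemma~\ref{lem:amortized_query_time} by summing the per-iteration cost of Lemma~\ref{lem:worst_case_query_time} over all $T$ calls to \textsc{Insert}, conditioning throughout on the high-probability event of Lemma~\ref{lem:number-row} that bounds the total number $S$ of sampled rows. First I would partition the $T$ iterations into those in which the incoming row is kept (``sampled'') and those in which it is discarded. For a discarded iteration, Lemma~\ref{lem:worst_case_query_time} gives cost $O(\log(T/\delta)\cdot\nnz(\ba^\ttop))$; since the rows $\ba^{(1)},\dots,\ba^{(T)}$ are exactly the rows appended to $\bA^{(0)}$ to form $\bA^{(T)}$, we have $\sum_{t=1}^{T}\nnz(\ba^\ttop)\le\nnz(\bA^{(T)})$, so discarded iterations contribute $O(\nnz(\bA^{(T)})\log(T/\delta))$ in total.

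Next I would bound the sampled iterations, whose cost is $O(s^\ttop d\log(T/\delta))$ each, dominated by recomputing the sketch $\wt\bB=\bJ\bB$ on Lines~\ref{line:new-JL}--\ref{line:update-wt_B} of Algorithm~\ref{algo:update-member}. The point to be careful about is that $s^\ttop$ grows: it starts at $d+1$ and is incremented by one each time a row is kept, so when the $k$-th row is sampled we have $s^\ttop=d+1+k$. Summing over the $S$ sampled iterations yields $\sum_{k=1}^{S}O\big((d+1+k)\,d\log(T/\delta)\big)=O\big(d\log(T/\delta)(Sd+S^2)\big)$, i.e.\ a cost that is quadratic, not linear, in $S$. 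Adding the two contributions, the total running time of \textsc{Insert} over $T$ iterations is $O\big(\nnz(\bA^{(T)})\log(T/\delta)+d\log(T/\delta)(Sd+S^2)\big)$.

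Finally I would substitute the bounds on $S$ from Lemma~\ref{lem:number-row}. For an oblivious adversary, $S=O(d\eps^{-2}\log(T/\delta)\log(\sigma_{\max}/\sigma_{\min}))$, so the dominant term $dS^2\log(T/\delta)$ becomes $O(\eps^{-4}d^3\log^2(\sigma_{\max}/\sigma_{\min})\log^3(T/\delta))$ (the $d^2S\log(T/\delta)$ term being of lower order); for an adaptive adversary, $S=O(d^2\eps^{-2}\log(T/\delta)\log^2(\sigma_{\max}/\sigma_{\min}))$, so it becomes $O(\eps^{-4}d^5\log^4(\sigma_{\max}/\sigma_{\min})\log^3(T/\delta))$. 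This gives the two claimed bounds; since the only event conditioned on is the one in Lemma~\ref{lem:number-row}, which fails with probability at most $\delta/T$, they hold with probability at least $1-\delta/T$. I expect no real obstacle here: the only non-mechanical observation is the growth of $s^\ttop$ with the running sample count (hence the $S^2$ factor), after which everything is a direct substitution of Lemma~\ref{lem:number-row} into Lemma~\ref{lem:worst_case_query_time}.
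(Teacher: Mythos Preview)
Your proposal is correct and is exactly the argument the paper has in mind: it explicitly states that Lemma~\ref{lem:amortized_query_time} follows by ``Combining Lemma~\ref{lem:worst_case_query_time} and Lemma~\ref{lem:number-row}'', without spelling out the summation. Your key observation---that $s^\ttop=d+1+k$ at the $k$-th sampled iteration, so the sampled contribution is $\sum_{k=1}^{S}O((d+k)d\log(T/\delta))=O(d\log(T/\delta)(Sd+S^2))$ and hence quadratic in $S$---is precisely what makes the stated exponents on $d$, $\eps^{-1}$, and the log factors come out right after substituting the two bounds on $S$ from Lemma~\ref{lem:number-row}.
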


This concludes the runtime analysis and we finish the proof of Theorem \ref{thm:upper}. Finally we remark on the space usage of our data structure.

\begin{remark}[Space usage]\label{rem:space}
Since the largest matrices that the data structure maintains and updates in each iteration are $\bB^{(t)}, \bN^{(t)} \in \R^{s^{(t)} \times (d+1)}$, it is straightforward to see that the total space used by the data structure is bounded by $O(s^{(T)} \cdot d)$. Hence by Lemma \ref{lem:number-row}, the total space is bounded by $O\big(d^2 \epsilon^{-2}  \log(\sigma_{\max} / \sigma_{\min}) \log(T / \delta) \big)$ for oblivious adversary and $O\big(d^3 \epsilon^{-2}  \log^2(\sigma_{\max} / \sigma_{\min}) \log(T / \delta) \big)$ for adaptive adversary.

\end{remark}

\section{Lower bound for partially dynamic LSR}
\label{sec:hard}

When the update is incremental, we prove an $\Omega(d^{2 -o(1)})$ amortized time lower bound for any algorithm with high precision solution.

\begin{theorem}[Hardness of partially dynamic LSR with high precision, formal version of Theorem \ref{thm_exact_LB_informal}]
\label{thm:lower}
Let $d$ be a sufficiently large integer, $T = \poly(d)$ and $\eps = \frac{1}{d^8T^2}$. Let $\gamma > 0$ be any constant. Assuming the $\omv$ conjecture is true, then any dynamic algorithm that maintains an $\eps$-approximate solution of the least squares regression under incremental update requires at least $\Omega(d^{2-\gamma})$ amortized time per update.
\end{theorem}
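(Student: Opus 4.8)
The plan is to reduce from the $O(1/d^2)$-approximate real-valued $\omv$ problem of Lemma~\ref{lem:omv-real}: given a PSD matrix $\bH$ with eigenvalues in $[1/3,1]$ and online queries $\bz^{(1)},\dots,\bz^{(T)}$ with $\|\bz^{(t)}\|_2\le 1$, one must output $\by^{(t)}$ with $\|\by^{(t)}-\bH\bz^{(t)}\|_2\le O(1/d^2)$, which is impossible in amortized $O(d^{2-\gamma})$ time under the $\omv$ conjecture. The obstruction relative to the fully-dynamic reduction of Section~\ref{sec:reduction} is that we may no longer delete the query row after each step, so the already-inserted query rows accumulate; the fix is to insert them with a tiny scaling $c=\Theta(1/(d^2\sqrt T))$, which keeps the matrix the data structure effectively inverts within $O(c^2T)=O(1/d^4)$ of $\bH^{-1}$ over the whole run, while each step still moves the solution by a decodable $\Theta(c)$ amount. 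Since high precision is available there is no need for a hardness-amplification step.

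\textbf{The instance.} In $\poly(d)$-time preprocessing I would compute $\bH^{-1/2}$ and take $\bA^{(0)}$ to be $\bH^{-1/2}$ stacked with a zero row, with $\bb^{(0)}$ having its single nonzero entry in that last coordinate (so that $\bM^{(0)}=[\bA^{(0)},\bb^{(0)}]$ is full rank), giving $(\bA^{(0)})^\top\bA^{(0)}=\bH^{-1}$, $\sigma_{\min}(\bA^{(0)})=\sigma_{\max}(\bH)^{-1/2}\ge 1$ and $\sigma_{\max}(\bA^{(0)})\le\sqrt 3$. At step $t$ I insert the row $(c\,\bz^{(t)},\,1)\in\R^d\times\R$, call the dynamic $\eps$-LSR solver to get $\hat\bx^{(t)}$, and return $\hat\by^{(t)}:=(\hat\bx^{(t)}-\hat\bx^{(t-1)})/c$ (with $\hat\bx^{(0)}$ the solution right after preprocessing). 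Writing $\bM_t=(\bA^{(t)})^\top\bA^{(t)}=\bH^{-1}+c^2\sum_{s\le t}\bz^{(s)}(\bz^{(s)})^\top\succeq\bI$, so that $\|\bM_t^{-1}-\bH\|\le 3c^2T$ and $\|\bM_t^{-1}-\bM_{t-1}^{-1}\|\le c^2$, a direct calculation (equivalently Sherman--Morrison, since each step is a rank-one update) gives, for the exact minimizers $\bx^{*(t)}$, the identity $\bx^{*(t)}-\bx^{*(t-1)}=c\,\bH\bz^{(t)}\pm O(c^3T)$, hence $(\bx^{*(t)}-\bx^{*(t-1)})/c=\bH\bz^{(t)}\pm O(c^2T)=\bH\bz^{(t)}\pm O(1/d^4)$.

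\textbf{From objective accuracy to $\ell_2$ accuracy.} It remains to control $\|\hat\bx^{(t)}-\bx^{*(t)}\|_2$. Here I would use that $f^{(t)}(\bx):=\|\bA^{(t)}\bx-\bb^{(t)}\|_2^2$ is $2\sigma_{\min}(\bA^{(t)})^2$-strongly convex and, since insertions only increase singular values, $\sigma_{\min}(\bA^{(t)})\ge\sigma_{\min}(\bA^{(0)})\ge 1$ for all $t$; therefore $\|\hat\bx^{(t)}-\bx^{*(t)}\|_2^2\le f^{(t)}(\hat\bx^{(t)})-f^{(t)}(\bx^{*(t)})\le\big((1+\eps)^2-1\big)\,\mathrm{OPT}_t\le 3\eps\,\mathrm{OPT}_t$, and $\mathrm{OPT}_t\le f^{(t)}(\mathbf 0)=\|\bb^{(t)}\|_2^2\le 2T$ because every inserted label equals $1$. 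Plugging $\eps=1/(d^8T^2)$ yields $\|\hat\bx^{(t)}-\bx^{*(t)}\|_2\le O(1/(d^4\sqrt T))$, so by the triangle inequality $\|\hat\by^{(t)}-\bH\bz^{(t)}\|_2\le\frac1c\big(\|\hat\bx^{(t)}-\bx^{*(t)}\|_2+\|\hat\bx^{(t-1)}-\bx^{*(t-1)}\|_2\big)+O(1/d^4)=O(1/d^2)$ when $c=\Theta(1/(d^2\sqrt T))$. Since the reduction uses exactly one insertion and $O(d)$ extra arithmetic per query plus $\poly(d)$ preprocessing, a dynamic $\eps$-LSR data structure with amortized $O(d^{2-\gamma})$ update time would yield an $O(1/d^2)$-approximate real $\omv$ algorithm with amortized $O(d^{2-\gamma})$ time, contradicting Lemma~\ref{lem:omv-real} and hence the $\omv$ conjecture.

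\textbf{Main obstacle.} The hard part is the three-way balancing in the last step: $c$ must be small enough that the accumulated rank-one terms perturb $\bH^{-1}$ by $o(1/d^2)$, yet large enough that the unavoidable $\Theta(\sqrt{\eps\cdot\mathrm{OPT}})$ slack incurred when recovering $\bx^{*(t)}$ from an $\eps$-approximate solution, once divided by $c$ in the decoding, is still $o(1/d^2)$; this is exactly what pins the precision threshold at $\eps=1/(d^8T^2)$. The remaining issues---choosing $\bb^{(0)}$ and the padding so the instance literally meets the partially-dynamic model (full-rank $\bM^{(0)}$, both $\sigma_{\min}$ and $\sigma_{\max}$ of order $\Theta(1)$), and representing $\bH^{-1/2}$ to $\poly(d)$ bits---are routine.
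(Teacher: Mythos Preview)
Your proposal is correct and follows essentially the same approach as the paper: both reduce from the $O(1/d^2)$-approximate real-valued $\omv$ of Lemma~\ref{lem:omv-real}, initialize with $\bA^{(0)}$ satisfying $(\bA^{(0)})^\top\bA^{(0)}=\bH^{-1}$, insert $(c\,\bz^{(t)},1)$ with the same scaling $c=1/(d^2\sqrt{T})$, decode via $(\bx^{(t)}-\bx^{(t-1)})/c$, and carry out the same three-step analysis (strong convexity for $\|\hat\bx-\bx^*\|$, a rank-one/Woodbury argument for $\bx^{*(t)}-\bx^{*(t-1)}\approx c\bH^{(t-1)}\bz^{(t)}$, and the $O(c^2T)$ perturbation bound for $\bH^{(t-1)}\approx\bH$). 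Your version is slightly more careful about the model's full-rank requirement on $\bM^{(0)}$ via the padding row, whereas the paper simply takes $\bb^{(0)}=\mathbf{0}_d$; otherwise the arguments coincide.
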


\begin{proof}
We reduce from the problem of Lemma~\ref{lem:omv-real}.
Given a PSD matrix $\bH \in \R^{d\times d}$ for the problem of Lemma~\ref{lem:omv-real}, where $1\leq \lambda_1(\bH) \leq \lambda_{d}(\bH) \leq 3$, we compute $\bA^{\top}\bA = \bH^{-1}$.
In the least squares regression problem, we take $\bA$ to be the initial matrix, and let $\bb = \mathbf{0}_d \in \R^{d}$ be the initial label vector.
The preprocessing step takes $O(d^\omega)$ time.

In the online stage of $\omv$, let the query at the $t$-th step be $\bz^{(t)} \in \R^d$ where $\|\bz^{(t)}\|_2 \leq 1$. 
We scale the vector to construct 
\[
\ba^{(t)} = \frac{1}{d^2\sqrt{T}}\cdot \bz^{(t)} \in \R^{d},
\]
and use $(\ba^{(t)}, 1)\in \R^{d} \times \R$ as the update to the regression problem at the $t$-th step.
Let $\bx^{(t)} \in \R^d$ be the solution returned by the dynamic algorithm for least squares regression at the $t$-th step. We prove that one can answer the matrix-vector query with 
\[
\by^{(t)} = d^2\sqrt{T} (\bx^{(t)} - \bx^{(t-1)}),
\]
and we have the guarantee that
\begin{align}
\|\by^{(t)} - \bH \bz^{(t)}\|_2\leq O(1/d^2).
\label{eq:lower-goal}
\end{align}
By Lemma~\ref{lem:omv-real}, this is impossible if the $\omv$ conjecture is true.

We first introduce some notations.
Let $\bA^{(t)} \in \R^{(d+t) \times d}$ be the data matrix after the $t$-th time step, $\bb^{(t)} \in \R^{d+t}$ be the labels, and $\bH^{(t)} := ((\bA^{(t)})^{\top} \bA^{(t)})^{-1} \in \R^{d\times d}$.
For simplicity, we also define $\bA^{(0)} = \bA$, $\bb^{(0)} = \bb$, $\bH^{(0)} = \bH$.
For any $t\in [T]$, let $\bx_{\star}^{(t)}$ be the optimal solution at the $t$-th step, and it has the closed-form $\bx_{\star}^{(t)} = ((\bA^{(t)})^{\top} \bA^{(t)})^{-1} (\bA^{(t)})^{\top} \cdot \bb^{(t)}$. The proof divides into three steps:
\begin{itemize}
    \item Step 1. $\bx^{(t)}$ and $\bx_{\star}^{(t)}$ are close, i.e., $\bx^{(t)} = \bx_{\star}^{(t)} \pm O(\frac{1}{d^4\sqrt{T}})$.
    \item Step 2. $\bx^{(t)} - \bx^{(t-1)}$ recovers $\bH^{(t-1)} \ba^{(t)}$, i.e., $\bx^{(t)} - \bx^{(t-1)} = \bH^{(t-1)} \ba^{(t)} \pm O(\frac{1}{d^4\sqrt{T}})$.
    \item Step 3. $\bH^{(t-1)} \ba^{(t)}$ is close to $\bH \ba^{(t)}$, i.e., $\bH^{(t-1)} \ba^{(t)} = \bH \ba^{(t)} \pm O(\frac{1}{d^6\sqrt{T}})$.
\end{itemize}
In particular, Step 2 and 3 directly implies Eq.~\eqref{eq:lower-goal}.

We first prove a useful bound on the singular values of $\bA^{(t)}$ and $\bH^{(t)}$ for all $t \in [T]$.
First note that the assumption $1\leq \lambda_1(\bH) \leq \lambda_{d}(\bH) \leq 3$ implies that $\frac{1}{\sqrt{3}} \leq \lambda_d(\bA) \leq \lambda_1(\bA) \leq 1$.
Since $\|\ba^{(t)}\|_2 = \|\frac{1}{d^2\sqrt{T}}\cdot \bz^{(t)}\|_2 \leq \frac{1}{d^2\sqrt{T}}$ for any $t\in [T]$, we have that for any $\bx \in \R^d$ with $\|\bx\|_2 = 1$,
\begin{align*}
    \bx^{\top} (\bA^{(T)})^{\top} \bA^{(T)} \bx = &~ \bx^{\top} \bA^{\top} \bA \bx + \sum_{t=1}^T (\bx^{\top} \ba^{(t)})^2 \\
    \leq &~ \|\bx\|_2^2 + \sum_{t=1}^T \|\bx\|_2^2\cdot \|\ba^{(t)}\|_2^2 \leq 1 + \sum_{t=1}^T \frac{1}{d^4 \cdot T} \leq 2.
\end{align*}
Thus we have $\frac{1}{2} \leq \lambda_d(\bA) \leq \lambda_d(\bA^{(t)}) \leq \lambda_1(\bA^{(t)}) \leq \lambda_1(\bA^{(T)}) \leq 2$.

Therefore, the matrix $\bH^{(t)} = ((\bA^{(t)})^{\top} \bA^{(t)})^{-1}$ satisfies $\frac{1}{4} \leq \lambda_d(\bH^{(t)}) \leq \lambda_1(\bH^{(t)}) \leq 4$.

\vspace{+2mm}
{\noindent \bf Step 1 \ \ } We prove $\bx^{(t)} = \bx_{\star}^{(t)} \pm \frac{1}{d^4\sqrt{T}}$. Since at each step we add a new label $1$ to the vector $\bb$, we have $\|\bb^{(t)}\|_2 \leq \|\bb^{(T)}\|_2 = \sqrt{T}$.
Since $\bx^{(t)}$ is an $\epsilon$-approximate solution of the least squares regression problem, we have
\begin{align*}
(1+\eps)^2\|\bA^{(t)} \bx_{\star}^{(t)} - \bb^{(t)}\|_2^2 \geq &~ \|\bA^{(t)} \bx^{(t)} - \bb^{(t)}\|_2^2\\
=&~ \|\bA^{(t)} \bx_{\star}^{(t)} - \bb^{(t)}\|_2^2 + \|\bA^{(t)}(\bx^{(t)} - \bx_{\star}^{(t)})\|_2^2\\
\geq&~ \|\bA^{(t)} \bx_{\star}^{(t)} - \bb^{(t)}\|_2^2 + \frac{1}{4}\|\bx^{(t)} -\bx_{\star}^{(t)}\|_2^2.
\end{align*}
The second step follows from $\bA^{(t)} \bx_{\star}^{(t)} - \bb^{(t)} = \big( \bA^{(t)} ((\bA^{(t)})^{\top} \bA^{(t)})^{-1} (\bA^{(t)})^{\top} - \bI \big) \cdot \bb^{(t)} \in \ker[(\bA^{(t)})^{\top}]$ is orthogonal to $\bA^{(t)}(\bx^{(t)} - \bx_{\star}^{(t)}) \in \im[\bA^{(t)}]$, the third step follows from $\lambda_d(\bA^{(t)}) \geq \frac{1}{2}$.

Therefore, we conclude
\begin{align}
    \|\bx^{(t)} -\bx_{\star}^{(t)}\|_2^2 \leq 4(2\eps + \eps^2)\|\bA^{(t)} \bx_{\star}^{(t)} - \bb^{(t)}\|_2^2 \leq 12\eps \cdot \|\bb^{(t)}\|_2^2 \leq 12\eps T \notag\\
    \Rightarrow \quad \|\bx^{(t)} -\bx_{\star}^{(t)}\|_2 < 4\sqrt{\eps T} = O\Big(\frac{1}{d^4\sqrt{T}}\Big)\label{eq:lower-step1}.
\end{align}
The last step follows from $\epsilon = \frac{1}{d^8 T^2}$ in the theorem statement.

\vspace{+2mm}
{\noindent \bf Step 2 \ \ } We prove $\bx^{(t)} - \bx^{(t-1)} = \bH^{(t-1)} \ba^{(t)} \pm O(\frac{1}{d^4\sqrt{T}})$.
By the Woodbury identity, one has
\begin{align}
\bx_{\star}^{(t)} = &~ \big((\bA^{(t)})^{\top}  \bA^{(t)}\big)^{-1}(\bA^{(t)})^{\top} \bb^{(t)}\notag\\
=&~ \big((\bA^{(t-1)})^{\top} \bA^{(t-1)} +\ba^{(t)} (\ba^{(t)})^{\top}\big)^{-1} \cdot ((\bA^{(t-1)})^{\top} \bb^{(t-1)} + \ba^{(t)})\notag\\
=&~ \Big(\bH^{(t-1)} - \bH^{(t-1)} \ba^{(t)} \cdot \big(1+ (\ba^{(t)})^{\top}\bH^{(t-1)} \ba^{(t)}\big)^{-1} \cdot (\ba^{(t)})^{\top}\bH^{(t-1)} \Big) \cdot ((\bA^{(t-1)})^{\top} \bb^{(t-1)} + \ba^{(t)})\notag\\
=&~ \bx_{\star}^{(t-1)} + \bH^{(t-1)} \ba^{(t)} \Big(1 - \big(1+ (\ba^{(t)})^{\top}\bH^{(t-1)} \ba^{(t)}\big)^{-1} (\ba^{(t)})^{\top} (\bx_{\star}^{(t-1)} + \bH^{(t-1)} \ba^{(t)}) \Big)\label{eq:lower-step2}
\end{align}
The first step follows from $\bx_{\star}^{(t)}$ is the optimal solution at step $t$, and the third step follows from the Woodbury identity and $((\bA^{(t-1)})^{\top} \bA^{(t-1)})^{-1} = \bH^{(t-1)}$. We use $\bx_{\star}^{(t-1)} = \bH^{(t-1)} (\bA^{(t-1)})^{\top} \bb^{(t-1)}$ in the fourth step. 

Consequently, we have
\begin{align}
    &~ \bx^{(t)} - \bx^{(t-1)} \notag\\
    = &~ \bx_{\star}^{(t)} - \bx_{\star}^{(t-1)} \pm O\Big(\frac{1}{d^4\sqrt{T}}\Big) \notag\\
    = &~ \bH^{(t-1)} \ba^{(t)}  - \bH^{(t-1)} \ba^{(t)} \big(1+ (\ba^{(t)})^{\top}\bH^{(t-1)} \ba^{(t)}\big)^{-1} (\ba^{(t)})^{\top} (\bH^{(t-1)} \ba^{(t)}+\bx_{\star}^{(t-1)}) \pm O\Big(\frac{1}{d^4\sqrt{T}}\Big)\notag\\
    = &~ \bH^{(t-1)} \ba^{(t)} \pm O\Big(\frac{1}{d^4\sqrt{T}}\Big) \pm O\Big(\frac{1}{d^4\sqrt{T}}\Big)\notag\\
    = &~ \bH^{(t-1)} \ba^{(t)} \pm O\Big(\frac{1}{d^4\sqrt{T}}\Big). \label{eq:lower-step3}
\end{align}
The first step comes from Eq.~\eqref{eq:lower-step1}, the second step follows from Eq.~\eqref{eq:lower-step2}, the third step follows from
\begin{align*}
&~ \|\bH^{(t-1)} \ba^{(t)} \cdot (1+ (\ba^{(t)})^{\top}\bH^{(t-1)} \ba^{(t)})^{-1} \cdot (\ba^{(t)})^{\top} \cdot (\bH^{(t-1)} \ba^{(t)}+\bx_{\star}^{(t-1)})\|_2\\
\leq &~ \|\bH^{(t-1)} \| \cdot \|\ba^{(t)}\|_2 \cdot (1+ (\ba^{(t)})^{\top}\bH^{(t-1)} \ba^{(t)})^{-1} \cdot \| (\ba^{(t)})^{\top}\|_2 \cdot (\|\bH^{(t-1)} \ba^{(t)}\|_2+\|\bx_{\star}^{(t-1)}\|_2)\\
 \leq &~ 4\cdot \frac{1}{d^2\sqrt{T}}\cdot 1 \cdot \frac{1}{d^2\sqrt{T}}\cdot (\frac{4}{d^2\sqrt{T}} + 8\sqrt{T}) = O\Big(\frac{1}{d^4\sqrt{T}}\Big)
\end{align*}
where we use $\|\bH^{(t-1)} \|\leq 4$, $\|\ba^{(t)}\|_2 \leq \frac{1}{d^2\sqrt{T}}$, $(1+ (\ba^{(t)})^{\top}\bH^{(t-1)} \ba^{(t)})^{-1} \leq 1$, $\|\bH^{(t-1)} \ba^{(t)}\|_2 \leq \|\bH^{(t-1)} \|\|\ba^{(t)}\|_2 \leq \frac{4}{d^2\sqrt{T}}$, $\|(\bA^{(t)})^{\top} \|\leq 2$ and
\begin{align*}
\|\bx_{\star}^{(t-1)}\|_2 = &~ \bH^{(t-1)} (\bA^{(t-1)})^{\top} \bb^{(t-1)} \\
\leq &~ \|\bH^{(t-1)}\| \cdot \|(\bA^{(t-1)})^{\top} \| \cdot \|\bb^{(t-1)}\|_2
\leq 4 \cdot 2 \cdot \|\bb^{(t-1)}\|_2\leq 8\sqrt{T}. 
\end{align*}

{\noindent \bf Step 3 \ \ } We prove $\bH^{(t-1)} \ba^{(t)} = \bH \ba^{(t)} \pm O(\frac{1}{d^4\sqrt{T}})$. Denote $\bU = [\ba^{(1)}, \ldots, \ba^{(t-1)}] \in \R^{d\times t}$, we have $\|\bU\| = \|\bU^{\top}\| \leq \|\bU\|_{F} \leq \frac{1}{d^2}$ since $\|\ba^{(i)}\|_2 \leq \frac{1}{d^2 \sqrt{T}}$ for all $i \in [t-1]$. Then we have that 
\begin{align}
 \bH^{(t-1)} \ba^{(t)} 
= &~ (\bA^{\top}\bA + \bU \bU^{\top})^{-1}\ba^{(t)} \notag\\
= &~ (\bH - \bH \bU(\bI + \bU^{\top}\bH \bU)^{-1}\bU^{\top}\bH) \cdot \ba^{(t)} \notag\\
= &~ \bH \ba^{(t)} - \bH \bU(\bI + \bU^{\top}\bH \bU)^{-1}\bU^{\top}\bH \ba^{(t)} \notag\\
= &~ \bH \ba^{(t)} \pm O\Big(\frac{1}{d^6\sqrt{T}}\Big) .\label{eq:lower-step4}
\end{align}
The second step follows from the Woodbury identity and $\bH = (\bA^{\top}\bA)^{-1}$.
The fourth step follows from
\begin{align*}
\| \bH \bU(\bI + \bU^{\top}\bH \bU)^{-1}\bU^{\top}\bH \ba^{(t)} \|_2 \leq &~ \|\bH\| \cdot \|\bU\| \cdot \|(\bI + \bU^{\top}\bH \bU)^{-1}\| \cdot \|\bU^{\top}\| \cdot \|\bH\| \cdot \|\ba^{(t)}\|_2\\
\leq &~ 4 \cdot \frac{1}{d^2} \cdot 1 \cdot\frac{1}{d^2}\cdot 4\cdot \frac{1}{d^2\sqrt{T}} = \frac{16}{d^6\sqrt{T}},
\end{align*}
as $\|\bH\| \leq 4$, $\|\bU\| = \|\bU^{\top}\| \leq \|\bU\|_{F} \leq \frac{1}{d^2}$, $\|(\bI + \bU^{\top}\bH \bU)^{-1}\| \leq 1$, and $\|\ba^{(t)}\|_2 \leq \frac{1}{d^2 \sqrt{T}}$.

\vspace{+2mm}
{\noindent \bf Combining three steps \ \ } We conclude that
\begin{align*}
    \|d^2\sqrt{T}(\bx^{(t)} - \bx^{(t-1)}) - \bH \bz^{(t)}\|_2 = d^2\sqrt{T} \|(\bx^{(t)} - \bx^{(t-1)}) - \bH \ba^{(t)}\|_2 = O(1/d^2)
\end{align*}
where the first step follows from $\ba^{(t)} = \frac{1}{d^2\sqrt{T}}\cdot \bz^{(t)}$, and the second step follows from Eq.~\eqref{eq:lower-step3} and Eq.~\eqref{eq:lower-step4}.
Hence one can recover the matrix-vector query from solutions of dynamic least squares regression. 
On the other side, the reduction only takes $O(d)$ time per update.
Hence, we have shown an $\Omega(d^{2-\gamma})$ lower bound on the amortized running time for incremental least squares regression problem under the $\omv$ conjecture.
\end{proof}

\section*{Acknowledgement}
The authors would like to thank Jan van den Brand, David Woodruff, Fred Zhang, Qiuyi Zhang, Joel Tropp for useful discussion over the project. In particular, the authors would like to thank David Woodruff for discussion on the size of JL sketch, thank Jan van den Brand for discussion over the robustness of the JL trick , and thank Joel Tropp for discussion on the matrix Chernoff bound.

Shunhua Jiang is supported by NSF CAREER award CCF-1844887 and Google PhD fellowship. Binghui Peng is supported by NSF CCF-1703925, IIS-1838154, CCF-2106429, CCF-2107187, CCF-1763970, CCF-2212233. Omri Weinstein is supported by NSF CAREER award CCF-1844887, ERC Starting grant  101039914, and ISF grant 3011005535.

\bibliographystyle{alpha}
\bibliography{ref}

\newcommand{\etalchar}[1]{$^{#1}$}
\begin{thebibliography}{BEJWY22}

\bibitem[AC06]{ac06}
Nir Ailon and Bernard Chazelle.
\newblock Approximate nearest neighbors and the fast johnson-lindenstrauss
  transform.
\newblock In {\em Proceedings of the thirty-eighth annual ACM symposium on
  Theory of computing}, pages 557--563, 2006.

\bibitem[ACSS20]{acss20}
Josh Alman, Timothy Chu, Aaron Schild, and Zhao Song.
\newblock Algorithms and hardness for linear algebra on geometric graphs.
\newblock In {\em 2020 IEEE 61st Annual Symposium on Foundations of Computer
  Science (FOCS)}, pages 541--552. IEEE, 2020.

\bibitem[ACW17]{acw17}
Haim Avron, Kenneth~L Clarkson, and David~P Woodruff.
\newblock Faster kernel ridge regression using sketching and preconditioning.
\newblock {\em SIAM Journal on Matrix Analysis and Applications},
  38(4):1116--1138, 2017.

\bibitem[AGGS22]{aggs22}
Vahid~R Asadi, Alexander Golovnev, Tom Gur, and Igor Shinkar.
\newblock Worst-case to average-case reductions via additive combinatorics.
\newblock In {\em Proceedings of the 54th Annual ACM SIGACT Symposium on Theory
  of Computing}, pages 1566--1574, 2022.

\bibitem[AHK12]{AHK06}
Sanjeev Arora, Elad Hazan, and Satyen Kale.
\newblock The multiplicative weights update method: a meta-algorithm and
  applications.
\newblock {\em Theory of Computing}, 8(6):121--164, 2012.

\bibitem[AKPS19]{akps19}
Deeksha Adil, Rasmus Kyng, Richard Peng, and Sushant Sachdeva.
\newblock Iterative refinement for $\ell_p$-norm regression.
\newblock In {\em Proceedings of the Thirtieth Annual ACM-SIAM Symposium on
  Discrete Algorithms}, pages 1405--1424. SIAM, 2019.

\bibitem[AW21]{aw21}
Josh Alman and Virginia~Vassilevska Williams.
\newblock A refined laser method and faster matrix multiplication.
\newblock In {\em Proceedings of the 2021 ACM-SIAM Symposium on Discrete
  Algorithms (SODA)}, pages 522--539. SIAM, 2021.

\bibitem[BCIS18]{bcis18}
Arturs Backurs, Moses Charikar, Piotr Indyk, and Paris Siminelakis.
\newblock Efficient density evaluation for smooth kernels.
\newblock In {\em 2018 IEEE 59th Annual Symposium on Foundations of Computer
  Science (FOCS)}, pages 615--626. IEEE, 2018.

\bibitem[BDM{\etalchar{+}}20]{bdm+20}
Vladimir Braverman, Petros Drineas, Cameron Musco, Christopher Musco, Jalaj
  Upadhyay, David~P Woodruff, and Samson Zhou.
\newblock Near optimal linear algebra in the online and sliding window models.
\newblock In {\em 2020 IEEE 61st Annual Symposium on Foundations of Computer
  Science (FOCS)}, pages 517--528. IEEE, 2020.

\bibitem[BEJS21]{workshop2021}
Omri Ben-Eliezer, Rajesh Jayaram, and Uri Stemmer.
\newblock Stoc 2021 workshop: Robust streaming, sketching, and sampling, 2021.

\bibitem[BEJWY22]{bjwy22}
Omri Ben-Eliezer, Rajesh Jayaram, David~P Woodruff, and Eylon Yogev.
\newblock A framework for adversarially robust streaming algorithms.
\newblock {\em ACM Journal of the ACM (JACM)}, 69(2):1--33, 2022.

\bibitem[BHM{\etalchar{+}}21]{bhm+21}
Vladimir Braverman, Avinatan Hassidim, Yossi Matias, Mariano Schain, Sandeep
  Silwal, and Samson Zhou.
\newblock Adversarial robustness of streaming algorithms through importance
  sampling.
\newblock {\em Advances in Neural Information Processing Systems},
  34:3544--3557, 2021.

\bibitem[BIS17]{bis17}
Arturs Backurs, Piotr Indyk, and Ludwig Schmidt.
\newblock On the fine-grained complexity of empirical risk minimization: Kernel
  methods and neural networks.
\newblock {\em Advances in Neural Information Processing Systems}, 30, 2017.

\bibitem[BKS17]{bks17}
Christoph Berkholz, Jens Keppeler, and Nicole Schweikardt.
\newblock Answering conjunctive queries under updates.
\newblock In {\em proceedings of the 36th ACM SIGMOD-SIGACT-SIGAI symposium on
  Principles of database systems}, pages 303--318, 2017.

\bibitem[BLL{\etalchar{+}}21]{bll+21}
van den~Jan Brand, Yin~Tat Lee, Yang~P Liu, Thatchaphol Saranurak, Aaron
  Sidford, Zhao Song, and Di~Wang.
\newblock Minimum cost flows, mdps, and $\ell_1$-regression in nearly linear
  time for dense instances.
\newblock In {\em Proceedings of the 53rd Annual ACM SIGACT Symposium on Theory
  of Computing (STOC)}, pages 859--869, 2021.

\bibitem[BLSS20]{blss20}
van den~Jan Brand, Yin~Tat Lee, Aaron Sidford, and Zhao Song.
\newblock Solving tall dense linear programs in nearly linear time.
\newblock In {\em Proceedings of the 52nd Annual ACM SIGACT Symposium on Theory
  of Computing (STOC)}, pages 775--788, 2020.

\bibitem[Bub15]{b15}
S{\'e}bastien Bubeck.
\newblock Convex optimization: Algorithms and complexity.
\newblock {\em Foundations and Trends in Machine Learning}, 8(3-4):231--357,
  2015.

\bibitem[Chu90]{k90}
Charles~K. Chui.
\newblock Estimation, control, and the discrete kalman filter (donald e.
  calin).
\newblock {\em SIAM Review}, 32(3):493--494, 1990.

\bibitem[CKL18]{ckl18}
Diptarka Chakraborty, Lior Kamma, and Kasper~Green Larsen.
\newblock Tight cell probe bounds for succinct boolean matrix-vector
  multiplication.
\newblock In {\em Proceedings of the 50th Annual ACM SIGACT Symposium on Theory
  of Computing}, pages 1297--1306, 2018.

\bibitem[CLM{\etalchar{+}}15]{clmmps15}
Michael~B Cohen, Yin~Tat Lee, Cameron Musco, Christopher Musco, Richard Peng,
  and Aaron Sidford.
\newblock Uniform sampling for matrix approximation.
\newblock In {\em Proceedings of the 2015 Conference on Innovations in
  Theoretical Computer Science (ITCS)}, pages 181--190. ACM, 2015.

\bibitem[CLN{\etalchar{+}}22]{cohen2022robustness}
Edith Cohen, Xin Lyu, Jelani Nelson, Tam{\'a}s Sarl{\'o}s, Moshe Shechner, and
  Uri Stemmer.
\newblock On the robustness of countsketch to adaptive inputs.
\newblock In {\em International Conference on Machine Learning}, pages
  4112--4140. PMLR, 2022.

\bibitem[CLS21]{cls21}
Michael~B Cohen, Yin~Tat Lee, and Zhao Song.
\newblock Solving linear programs in the current matrix multiplication time.
\newblock {\em Journal of the ACM (JACM)}, 68(1):1--39, 2021.

\bibitem[CMP20]{cmp20}
Michael~B Cohen, Cameron Musco, and Jakub Pachocki.
\newblock Online row sampling.
\newblock {\em Theory of Computing}, 16(15):1--25, 2020.

\bibitem[CPP22]{cpp22}
Xi~Chen, Christos Papadimitriou, and Binghui Peng.
\newblock Memory bounds for continual learning.
\newblock In {\em 2022 IEEE 63rd Annual Symposium on Foundations of Computer
  Science (FOCS)}, pages 519--530. IEEE, 2022.

\bibitem[CS17]{cs17}
Moses Charikar and Paris Siminelakis.
\newblock Hashing-based-estimators for kernel density in high dimensions.
\newblock In {\em 2017 IEEE 58th Annual Symposium on Foundations of Computer
  Science (FOCS)}, pages 1032--1043. IEEE, 2017.

\bibitem[CSWZ23]{cherapanamjeri2023optimal}
Yeshwanth Cherapanamjeri, Sandeep Silwal, David~P Woodruff, and Samson Zhou.
\newblock Optimal algorithms for linear algebra in the current matrix
  multiplication time.
\newblock In {\em Proceedings of the 2023 Annual ACM-SIAM Symposium on Discrete
  Algorithms (SODA)}, pages 4026--4049. SIAM, 2023.

\bibitem[CV95]{cortes1995support}
Corinna Cortes and Vladimir Vapnik.
\newblock Support-vector networks.
\newblock {\em Machine learning}, 20:273--297, 1995.

\bibitem[CW09]{cw09}
Kenneth~L Clarkson and David~P Woodruff.
\newblock Numerical linear algebra in the streaming model.
\newblock In {\em Proceedings of the forty-first annual ACM symposium on Theory
  of computing}, pages 205--214, 2009.

\bibitem[CW17]{cw17}
Kenneth~L Clarkson and David~P Woodruff.
\newblock Low-rank approximation and regression in input sparsity time.
\newblock {\em Journal of the ACM (JACM)}, 63(6):1--45, 2017.

\bibitem[Dah16]{d16}
S{\o}ren Dahlgaard.
\newblock On the hardness of partially dynamic graph problems and connections
  to diameter.
\newblock In {\em 43rd International Colloquium on Automata, Languages, and
  Programming (ICALP 2016)}. Schloss Dagstuhl-Leibniz-Zentrum fuer Informatik,
  2016.

\bibitem[DL19]{dl19}
Edgar Dobriban and Sifan Liu.
\newblock Asymptotics for sketching in least squares.
\newblock In {\em Proceedings of the 33rd International Conference on Neural
  Information Processing Systems}, pages 3675--3685, 2019.

\bibitem[Fre75]{freedman1975tail}
David~A Freedman.
\newblock On tail probabilities for martingales.
\newblock {\em the Annals of Probability}, pages 100--118, 1975.

\bibitem[Haz19]{h19}
Elad Hazan.
\newblock Introduction to online convex optimization.
\newblock {\em arXiv preprint arXiv:1909.05207}, 2019.

\bibitem[HFT01]{HFT_book_01}
Trevor Hastie, Jerome~H. Friedman, and Robert Tibshirani.
\newblock {\em The Elements of Statistical Learning: Data Mining, Inference,
  and Prediction}.
\newblock Springer Series in Statistics. Springer, 2001.

\bibitem[HKM{\etalchar{+}}22]{hkm+22}
Avinatan Hassidim, Haim Kaplan, Yishay Mansour, Yossi Matias, and Uri Stemmer.
\newblock Adversarially robust streaming algorithms via differential privacy.
\newblock {\em Journal of the ACM}, 69(6):1--14, 2022.

\bibitem[HKNS15]{hkns}
Monika Henzinger, Sebastian Krinninger, Danupon Nanongkai, and Thatchaphol
  Saranurak.
\newblock Unifying and strengthening hardness for dynamic problems via the
  online matrix-vector multiplication conjecture.
\newblock In {\em Proceedings of the forty-seventh annual ACM symposium on
  Theory of computing}, pages 21--30, 2015.

\bibitem[HS{\etalchar{+}}52]{hestenes1952methods}
Magnus~R Hestenes, Eduard Stiefel, et~al.
\newblock Methods of conjugate gradients for solving linear systems.
\newblock {\em Journal of research of the National Bureau of Standards},
  49(6):409--436, 1952.

\bibitem[HS22]{hs22}
Shuichi Hirahara and Nobutaka Shimizu.
\newblock Hardness self-amplification from feasible hard-core sets.
\newblock In {\em 2022 IEEE 63rd Annual Symposium on Foundations of Computer
  Science (FOCS)}, pages 543--554. IEEE, 2022.

\bibitem[HW13]{hw13}
Moritz Hardt and David~P Woodruff.
\newblock How robust are linear sketches to adaptive inputs?
\newblock In {\em Proceedings of the forty-fifth annual ACM symposium on Theory
  of computing}, pages 121--130, 2013.

\bibitem[IP01]{ip01}
Russell Impagliazzo and Ramamohan Paturi.
\newblock On the complexity of k-sat.
\newblock {\em Journal of Computer and System Sciences}, 62(2):367--375, 2001.

\bibitem[JL84]{jl84}
William~B Johnson and Joram Lindenstrauss.
\newblock Extensions of lipschitz mappings into a hilbert space.
\newblock {\em Contemporary mathematics}, 26(189-206):1, 1984.

\bibitem[JSWZ21]{jswz21}
Shunhua Jiang, Zhao Song, Omri Weinstein, and Hengjie Zhang.
\newblock A faster algorithm for solving general lps.
\newblock In {\em Proceedings of the 53rd Annual ACM SIGACT Symposium on Theory
  of Computing (STOC)}, pages 823--832, 2021.

\bibitem[JX22]{jx22}
Ce~Jin and Yinzhan Xu.
\newblock Tight dynamic problem lower bounds from generalized bmm and omv.
\newblock In {\em Proceedings of the 54th Annual ACM SIGACT Symposium on Theory
  of Computing}, pages 1515--1528, 2022.

\bibitem[Kal60]{k60}
Rudolph~Emil Kalman.
\newblock A new approach to linear filtering and prediction problems.
\newblock {\em Journal of Basic Engineering}, 82(1):35--45, 1960.

\bibitem[Law61]{Law61}
Charles~Lawrence Lawson.
\newblock {\em Contributions to the Theory of Linear Least Maximum
  Approximation /}.
\newblock Los Angeles : (S.N.), 1961, 1961.

\bibitem[LG14]{le14}
Fran{\c{c}}ois Le~Gall.
\newblock Powers of tensors and fast matrix multiplication.
\newblock In {\em Proceedings of the 39th international symposium on symbolic
  and algebraic computation}, pages 296--303, 2014.

\bibitem[LR21]{lr21}
Joshua Lau and Angus Ritossa.
\newblock Algorithms and hardness for multidimensional range updates and
  queries.
\newblock In {\em 12th Innovations in Theoretical Computer Science Conference
  (ITCS 2021)}. Schloss Dagstuhl-Leibniz-Zentrum f{\"u}r Informatik, 2021.

\bibitem[LS14]{ls14}
Yin~Tat Lee and Aaron Sidford.
\newblock Path finding methods for linear programming: Solving linear programs
  in $\wt{O}(\sqrt{\mathrm{rank}})$ iterations and faster algorithms for
  maximum flow.
\newblock In {\em 2014 IEEE 55th Annual Symposium on Foundations of Computer
  Science}, pages 424--433. IEEE, 2014.

\bibitem[LSZ19]{lsz19}
Yin~Tat Lee, Zhao Song, and Qiuyi Zhang.
\newblock Solving empirical risk minimization in the current matrix
  multiplication time.
\newblock In {\em Annual Conference on Learning Theory (COLT)}, 2019.

\bibitem[LW17]{lw17}
Kasper~Green Larsen and Ryan Williams.
\newblock Faster online matrix-vector multiplication.
\newblock In {\em Proceedings of the Twenty-Eighth Annual ACM-SIAM Symposium on
  Discrete Algorithms}, pages 2182--2189. SIAM, 2017.

\bibitem[Mad13]{m13}
Aleksander Madry.
\newblock Navigating central path with electrical flows: From flows to
  matchings, and back.
\newblock In {\em 2013 IEEE 54th Annual Symposium on Foundations of Computer
  Science (FOCS)}, pages 253--262. IEEE, 2013.

\bibitem[MM79]{martin1979multivariate}
Nick Martin and Hermine Maes.
\newblock {\em Multivariate analysis}.
\newblock Academic press London, 1979.

\bibitem[NN13]{nn13}
Jelani Nelson and Huy~L Nguy{\^e}n.
\newblock Osnap: Faster numerical linear algebra algorithms via sparser
  subspace embeddings.
\newblock In {\em 2013 ieee 54th annual symposium on foundations of computer
  science}, pages 117--126. IEEE, 2013.

\bibitem[PKP{\etalchar{+}}19]{pkp+19}
German~I Parisi, Ronald Kemker, Jose~L Part, Christopher Kanan, and Stefan
  Wermter.
\newblock Continual lifelong learning with neural networks: A review.
\newblock {\em Neural Networks}, 113:54--71, 2019.

\bibitem[Pla50]{Pla50}
R.~L. Plackett.
\newblock Some theorems in least squares.
\newblock {\em Biometrika}, 37(1/2):149--157, 1950.

\bibitem[PW17]{pilanci2017newton}
Mert Pilanci and Martin~J Wainwright.
\newblock Newton sketch: A near linear-time optimization algorithm with
  linear-quadratic convergence.
\newblock {\em SIAM Journal on Optimization}, 27(1):205--245, 2017.

\bibitem[RG75]{rg75}
Lawrence~R Rabiner and Bernard Gold.
\newblock Theory and application of digital signal processing.
\newblock {\em Englewood Cliffs: Prentice-Hall}, 1975.

\bibitem[Sar06]{s06}
Tamas Sarlos.
\newblock Improved approximation algorithms for large matrices via random
  projections.
\newblock In {\em 2006 47th Annual IEEE Symposium on Foundations of Computer
  Science (FOCS'06)}, pages 143--152. IEEE, 2006.

\bibitem[SS11]{ss11}
Daniel~A Spielman and Nikhil Srivastava.
\newblock Graph sparsification by effective resistances.
\newblock {\em SIAM Journal on Computing}, 40(6):1913--1926, 2011.

\bibitem[Sti81]{GaussLS}
Stephen~M. Stigler.
\newblock {Gauss and the Invention of Least Squares}.
\newblock {\em The Annals of Statistics}, 9(3):465--474, 1981.

\bibitem[Str69]{s69}
Volker Strassen.
\newblock Gaussian elimination is not optimal.
\newblock {\em Numerische mathematik}, 13(4):354--356, 1969.

\bibitem[Tro11]{tropp2011user}
Joel~A Tropp.
\newblock User-friendly tail bounds for matrix martingales.
\newblock 2011.

\bibitem[vdB20]{b20}
Jan van~den Brand.
\newblock A deterministic linear program solver in current matrix
  multiplication time.
\newblock In {\em Proceedings of the Fourteenth Annual ACM-SIAM Symposium on
  Discrete Algorithms}, pages 259--278. SIAM, 2020.

\bibitem[vdBNS19]{jns19}
Jan van~den Brand, Danupon Nanongkai, and Thatchaphol Saranurak.
\newblock Dynamic matrix inverse: Improved algorithms and matching conditional
  lower bounds.
\newblock In {\em 2019 IEEE 60th Annual Symposium on Foundations of Computer
  Science (FOCS)}, pages 456--480. IEEE, 2019.

\bibitem[Wil94]{wilkinson1994rounding}
James~Hardy Wilkinson.
\newblock {\em Rounding errors in algebraic processes}.
\newblock Courier Corporation, 1994.

\bibitem[Wil18]{W18survey}
Virginia~Vassilevska Williams.
\newblock Some open problems in fine-grained complexity.
\newblock {\em SIGACT News}, 49(4):29–35, dec 2018.

\bibitem[Woo14]{w14}
David~P. Woodruff.
\newblock Sketching as a tool for numerical linear algebra.
\newblock {\em Foundations and Trends in Theoretical Computer Science},
  10(1-2):1--157, 2014.

\bibitem[Woo21]{w21}
David Woodruff.
\newblock A very sketchy talk (invited talk).
\newblock In {\em 48th International Colloquium on Automata, Languages, and
  Programming (ICALP 2021)}. Schloss Dagstuhl-Leibniz-Zentrum fuer Informatik,
  2021.

\bibitem[WW18]{WW18}
Virginia~Vassilevska Williams and R.~Ryan Williams.
\newblock Subcubic equivalences between path, matrix, and triangle problems.
\newblock {\em J. {ACM}}, 65(5):27:1--27:38, 2018.

\bibitem[WZ22]{wz22}
David~P Woodruff and Samson Zhou.
\newblock Tight bounds for adversarially robust streams and sliding windows via
  difference estimators.
\newblock In {\em 2021 IEEE 62nd Annual Symposium on Foundations of Computer
  Science (FOCS)}, pages 1183--1196. IEEE, 2022.

\end{thebibliography}

\newpage
\appendix

\section{Kalman's method}
\label{sec:exact}

We review Kalman's approach for dynamic least squares regression, which maintains an exact solution with $O(d^2)$ amortized update time per iteration.

\begin{theorem}
There is an data structure with $O(d^{\omega})$ preprocessing time and $O(d^2)$ update time that maintains an exact solution of the dynamic least squares regression problem.
\end{theorem}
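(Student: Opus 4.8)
The plan is to maintain, at every step $t$, three objects: $\bG^{(t)} := \big((\bA^{(t)})^\top \bA^{(t)}\big)^{-1} \in \R^{d\times d}$, the vector $\bu^{(t)} := (\bA^{(t)})^\top \bb^{(t)} \in \R^d$, and the solution $\bx^{(t)} := \bG^{(t)} \bu^{(t)} \in \R^d$. By Fact~\ref{fact:closed-form}, $\bx^{(t)}$ is then \emph{exactly} the minimizer of $\|\bA^{(t)}\bx - \bb^{(t)}\|_2$, so the data structure is correct with zero error (hence $\eps$-approximate for every $\eps \geq 0$). In preprocessing I would form $(\bA^{(0)})^\top \bA^{(0)}$ and invert it — both $O(d^\omega)$ since all matrix dimensions are $\Theta(d)$ — and compute $\bu^{(0)}$ and $\bx^{(0)}=\bG^{(0)}\bu^{(0)}$ in $O(d^2)$; the total is $O(d^\omega)$.

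For an incremental update $(\ba,\beta)$ the Gram matrix changes by the rank-one term $\ba\ba^\top$, so the Sherman–Morrison formula gives
\[
\bG^{(t)} = \bG^{(t-1)} - \frac{\bG^{(t-1)}\ba\,\ba^\top \bG^{(t-1)}}{1 + \ba^\top \bG^{(t-1)}\ba},
\]
computed in $O(d^2)$ time by first forming $\bG^{(t-1)}\ba$; then $\bu^{(t)} = \bu^{(t-1)} + \beta\,\ba$ in $O(d)$ and $\bx^{(t)} = \bG^{(t)}\bu^{(t)}$ in $O(d^2)$. For a decremental update deleting row $i$ — whose contents $(\ba,\beta)$ I retrieve from an auxiliary table mapping current row indices to their values, updated in $O(d)$ per operation — the Gram matrix changes by $-\ba\ba^\top$, and the same identity with the sign flipped yields
\[
\bG^{(t)} = \bG^{(t-1)} + \frac{\bG^{(t-1)}\ba\,\ba^\top \bG^{(t-1)}}{1 - \ba^\top \bG^{(t-1)}\ba},
\qquad
\bu^{(t)} = \bu^{(t-1)} - \beta\,\ba,
\]
again $O(d^2)$, followed by recomputing $\bx^{(t)}$ in $O(d^2)$. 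Thus every update costs $O(d^2)$ in the worst case, so certainly amortized; the invariants for $\bG^{(t)},\bu^{(t)},\bx^{(t)}$ are maintained by a one-line induction on $t$.

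The one point needing care — and the main obstacle to a perfectly clean statement — is that the update formulas require $(\bA^{(t)})^\top \bA^{(t)}$ to remain invertible after each deletion (equivalently $1 - \ba^\top \bG^{(t-1)}\ba \neq 0$, i.e. the deleted row is not the one propping up the column rank). Since the instance is initialized with the full-rank square matrix $\bM^{(0)} \in \R^{(d+1)\times(d+1)}$, I would handle this by declaring the first $d$ rows of $\bM^{(0)}$ permanent, so that $\bA^{(t)}$ always contains a fixed rank-$d$ block; alternatively one restricts, as is standard, to update sequences along which $\bA^{(t)}$ has full column rank at all times (otherwise the LSR minimizer is not even unique). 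Under this convention the theorem follows by combining the maintained exact invariants with the $O(d^2)$ per-update and $O(d^\omega)$ preprocessing bounds above. (If one additionally insists on the bounded-precision word-RAM model, a routine bit-complexity analysis of the Sherman–Morrison updates is needed, using the usual fact that $\poly\log$-size numbers suffice; I would only remark on this in passing.)
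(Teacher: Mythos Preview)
Your proposal is correct and takes essentially the same approach as the paper: maintain $((\bA^{(t)})^\top\bA^{(t)})^{-1}$ and $(\bA^{(t)})^\top\bb^{(t)}$, update the inverse via the Sherman--Morrison/Woodbury identity in $O(d^2)$, and recover $\bx^{(t)}$ by a matrix--vector product. If anything, your write-up is more complete than the paper's --- you spell out the deletion formula and flag the full-rank caveat --- but the underlying method is identical.
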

\begin{proof}
For any $t\in [T]$, the data structure maintains $\bH^{(t)} := ((\bA^{\ttop})^\top \bA^{\ttop})^{-1}$ and $ \bu^{(t)} := (\bA^{\ttop})^{\top} \bb^\ttop$. The later one costs $O(d)$ time to update. By the Woodbury identity, if a new row arrives, then the former one satisfies
\begin{align*}
    \bH^{(t)} =&~ ((\bA^{(t-1)})^\top \bA^{(t-1)} + \ba^{(t)} (\ba^{\ttop})^{\top} )^{-1}\\
    =&~ \bH^{(t-1)} - \bH^{(t-1)}\ba^{(t)} \big(1 + (\ba^{(t)})^{\top} \bH^{(t-1)}\ba^{(t)} \big)^{-1} (\ba^{(t)})^{\top} \bH^{(t-1)},
\end{align*}
and it can be updated in $O(d^2)$ time. It is similar when a row is deleted.

The optimal solution $\bx^{(t)}$ at step $t$ satisfies $\bx^{(t)}= ((\bA^{(t)})^{\top} \bA^{(t)})^{-1} (\bA^{(t)})^{\top}\bb^{(t)} = \bH^{(t)}\bu^{(t)}$, and it can be computed in $O(d^2)$ time given $\bH^{(t)}$ and $\bu^{(t)}$.
\end{proof}
\section{Missing proofs from Section \ref{sec:fully}}
\label{sec:fully-app}

We first prove the $\omv$-hardness of well-conditioned PSD matrices.
\begin{proof}[Proof of Lemma \ref{lem:omv-real}]
Given a Boolean matrix $\bB \in \{0,1\}^{d\times d}$ in the OMv conjecture (Conjecture~\ref{conj:omv}), we construct a PSD matrix 
\[
\bH = \left[
\begin{matrix}
2\bI_d & \frac{1}{d}\bB\\
\frac{1}{d}\bB^{\top} & 2 \bI_d
\end{matrix}
\right] \in \R^{2d\times 2d}.
\]
We note that $\bH$ is symmetric and $1 \leq \lambda_{d}(\bH) \leq \lambda_1(\bH) \leq 3$, since for any $\bz = (\bz', \bz'')\in \R^{2n}$ with $\|\bz\|_2^2 = 1$, one has
\begin{align*}
\bz^{\top}\bH \bz = &~ 2\|\bz'\|_2^2 + 2\|\bz''\|_2^2 + \frac{1}{d}(\bz')^{\top} \bB^{\top}\bz'' + \frac{1}{d} (\bz'')^{\top} \bB \bz' \\
\leq &~ 2\|\bz'\|_2^2 + 2\|\bz''\|_2^2 + 2\|\bz'\|_2\|\bz''\|_2 \in (1, 3).
\end{align*}
The second step follows from $\bB \in \{0,1\}^{d}$, and therefore, $\|\bB \by\|_2 \leq d\|\by\|_2$ and $\|\bB^{\top}\by\|_2\leq d\|\by\|_2$ hold for any $\by \in \R^{d}$.

Given an online query $\bz \in \{0,1\}^{d}$ for $\bB$, we can assume w.l.o.g.~that $\bz \neq \mathbf{0}_d$. We construct a query vector for $\bH$ as $\ov{\bz} = (\mathbf{0}_d, \frac{\bz}{\|\bz\|_2}) \in \R^{2d}$. Clearly one has $\|\ov{\bz}\|_2 = 1$. 

We prove that one can recover $\bB \bz$ from an $O(1/d^{2})$-approximate answer to $\bH \ov{\bz}$. Let $\hat{\by} = (\by', \by'') \in \R^{2d}$ be such an approximate answer, i.e., $\|\hat{\by} - \bH \ov{\bz}\|_2\leq O(1/d^{2})$. We construct $\by \in \{0, 1\}^{d}$ such that $\forall i \in [d]$, $y_{i} = 1$ if $d \|\bz\|_2 \cdot y'_{i} \geq 0.5$ and $y_{i} = 0$ otherwise. Next we prove that $\by = \bB \bz$.
\[
\big\|(\bB \bz - d\|\bz\|_2 \cdot \by')\big\|_2 = d\|\bz\|_2\cdot\left\|\frac{1}{d}\bB\frac{\bz}{\|\bz\|_2} - \by'\right\|_2\leq d\|\bz\|_2\cdot \|\bH \ov{\bz} - \hat{\by}\|_2 \leq d \cdot \sqrt{d}\cdot O(1/d^2) < 0.5.
\]
The second step follows from $\bH \ov{\bz} = (\frac{1}{d} \bB \frac{\bz}{\|\bz\|_2}, \frac{2 \bz}{\|\bz\|_2}) \in \R^{2d}$, and in the third step we use the fact that $\|\hat{\by} - \bH \ov{\bz}\|_2\leq O(1/d^{2})$, and $\|\bz\|_2\leq \sqrt{d}$ since $\bz \in \{0,1\}^d$.

Since each entry of $\bB \bz$ is an integer, rounding $d\|\bz\|_2 \cdot \by'$ to the closest integer gives the exact solution. Hence, we have $\by = \bB \bz$, and we conclude the proof here.
\end{proof}

We next provide the proof of the technical claim.

\begin{proof}[Proof of Claim \ref{claim:decomposition}]
The first claim follows from
\begin{align*}
200\lambda d^8 \geq &~ L^\ttop(\bx^\ttop) \geq \|(\bU_\perp)^\top\bx^\ttop - \frac{1}{\sqrt{d}}\mathbf{1}_{d-d_1}\|_2^2 
\geq \|\bU_\perp (\bU_\perp)^\top\bx^\ttop - \frac{1}{\sqrt{d}} \cdot \bU_\perp \cdot \mathbf{1}_{d-d_1}\|_2^2\\
= &~ \|\bU_\perp(\bU_\perp)^\top\bx^\ttop - \bx^{*}\|_2^2.
\end{align*}
The first step holds due to Eq.~\eqref{eq:loss2}, the second step follows from the definition of $L^\ttop$ in Eq.~\eqref{eq:lsr-r}, the third step holds since $\bU_\perp$ is orthonormal, the fourth step holds due to the definition $\bx^{*} = \frac{1}{\sqrt{d}} \sum_{j=1}^{d-d_1} \bU_{\perp,j}$ in Eq.~\eqref{eq:def_x*}.

For the second claim, we have that
\begin{align*}
200\lambda d^8 \geq &~ L^\ttop(\bx^\ttop) \geq \frac{1}{100}(\langle \bz^\ttop, \bx^\ttop\rangle - 10)^2 = \frac{1}{100}(\langle \bz_{\bU_\perp}^\ttop, \bx^\ttop\rangle + \langle \bz_{ \bU}^\ttop, \bx^\ttop\rangle - 10)^2 \\
= &~ \frac{1}{100}\Big(\langle \bz_{\bU_\perp}^\ttop, \bU_\perp (\bU_\perp)^{\top}\bx^\ttop\rangle + \langle \bz_{\bU}^\ttop, \bx^\ttop \rangle - 10\Big)^2 \\
= &~ \frac{1}{100} \Big(\langle \bz_{\bU_\perp}^\ttop, \bx^{*}\rangle + \langle \bz_{\bU}^\ttop, \bx^\ttop \rangle - 10 \pm 20d^4\sqrt{\lambda} \Big)^2,
\end{align*}
where the first step again follows from Eq.~\eqref{eq:loss2}, the second step follows from the definition of $L^{(t)}$ in Eq.~\eqref{eq:lsr-r}, the third step follows from $\bz_{r}^\ttop = \bz_{ \bU}^\ttop + \bz_{\bU_{\perp}}^\ttop$, the fourth step follows from $\bz_{\bU_\perp}^\ttop$ lies in the column space of $\bU_\perp$, the fifth step follows from our first claim and $\|\bz_{\bU_\perp}^\ttop\|_2 \leq 1$.

For the last claim, we have
\begin{align*}
(1+\eps)^2 \cdot \lambda \cdot (\Delta_{t}^2 + 1) \geq &~ L^\ttop(\bx^\ttop) \geq \lambda \|\bx^\ttop\|_2^2 
\geq \lambda \|(\mathbf{I} - \bV_{t}\bV_{t}^\top)\bx^\ttop\|_2^2 + \lambda \cdot \left\langle \bx^\ttop, \frac{\bz_{\bU}^\ttop}{\|\bz_{\bU}^\ttop\|_2}\right\rangle^2 \\
\geq &~ \lambda \|(\mathbf{I} - \bV_{t}\bV_{t}^\top)\bx^\ttop\|_2^2 + \lambda \cdot \Big( \frac{10 - \langle \bz_{\bU_\perp}^\ttop,\bx^{*}\rangle - 200d^4 \sqrt{\lambda}}{\|\bz_{\bU}^\ttop\|_2} \Big)^2 \\
\geq &~ \lambda \|(\mathbf{I} - \bV_{t}\bV_{t}^\top)\bx^\ttop\|_2^2 + \lambda \cdot \Big( \frac{(10 - \langle \bz_{\bU_\perp}^\ttop,\bx^{*}\rangle) \cdot( 1 - 200d^4 \sqrt{\lambda})}{\|\bz_{\bU}^\ttop\|_2} \Big)^2 \\
= &~ \lambda \|(\mathbf{I} - \bV_{t}\bV_{t}^\top)\bx^\ttop\|_2^2 + \lambda \cdot \Delta_{t}^2 \cdot (1 - 200d^4\sqrt{\lambda})^2.
\end{align*}
Here the first step follows from $L^\ttop(\bx_{t}^{*}) \leq \lambda (\Delta_{t}^2 + 1)$ in Eq.~\eqref{eq:loss} and $\bx^\ttop$ is $\eps$-approximately optimal, the second step follows from the definition of $L^{(t)}$ in Eq.~\eqref{eq:lsr-r}, the third step follows from decomposing $\bx^\ttop$ into the component orthogonal to $\bV_{t}$ and the component in the same direction as $\bz_{\bU}^\ttop$ and ignoring the component in $\bU_\perp$, the fourth step follows from plugging in our second claim, the fifth step follows from $100 - \langle \bz_{\bU_\perp}^\ttop, \bx^{*}\rangle \geq 1$, and the last step follows from $\Delta_{t} = \frac{10 - \langle \bz_{\bU_\perp}^\ttop, \bx^{*}\rangle}{\|\bz_{\bU}^\ttop\|_2}$ in Eq.~\eqref{eq:def_Delta_rt}.

Since $\eps = 1/100$ is a constant and $\lambda = \frac{1}{d^{40}}$, and also note that $\Delta_t \geq 9$, we conclude with $\|(\mathbf{I} - \bV_{t}\bV_{t}^\top)\bx_{r}^\ttop\|_2 \leq 2 \sqrt{\eps} \cdot \Delta_{t}$ from the above calculation.
\end{proof}

\section{Missing proofs from Section \ref{sec:formulation}}
\label{sec:pre-app}
We prove the basic property of online leverage score. 
\begin{proof}[Proof of Fact \ref{fact:online_leverage_score}]
We provide a short proof.
Let $\bN =  ((\bM)^{(t-1)})^\top \bM^{(t-1)} \in \R^{(d+1)\times (d+1)}$. For any $\bx \in \R^{d+1}$, define $\by = \bN^{1/2} \bx$. We have
\begin{align*}
\bx^{\top} \boldm^\ttop (\boldm^\ttop)^{\top} \bx = &~ (\by^{\top} \bN^{-1/2} \boldm^\ttop)^2 \\
\leq &~ \|\by\|_2^2 \cdot (\boldm^\ttop)^{\top} \bN^{-1} \boldm^\ttop \\
= &~ \bx^{\top} \bN \bx \cdot (\boldm^\ttop)^{\top} \bN^{-1} \boldm^\ttop \\
= &~ \bx^{\top} \bN \bx \cdot \tauo^\ttop,
\end{align*}
where the second step follows from Cauchy-Schwarz inequality. This finishes the proof.
\end{proof}

\section{Missing proofs from Section \ref{sec:upper}}
\label{sec:upper-app}
\paragraph{Missing proofs from Section~\ref{sec:data_structure}.}
We first provide the proof of Lemma \ref{lem:close_form_formula_algorithm}. 
We will use the Woodbury identity to compute the changes of the exact solution.
\begin{fact}[Woodbury identity]\label{fac:woodbury}
Let $\bA \in \R^{n\times n}, \bC\in \R^{k\times k}, \bU\in \R^{n\times k}, \bV\in\R^{k\times n}$, one has 
\begin{align*}
(\bA + \bU\bC\bV)^{-1} = \bA^{-1} - \bA^{-1}\bU (\bC^{-1} + \bV\bA^{-1}\bU)^{-1}\bV\bA^{-1}.
\end{align*}
In particular, when $\bU, \bV$ are vectors, i.e., $\bU = \ba, V = \ba^{\top}$, and $\bc = \mathsf{1}$, one has
\begin{align*}
(\bA + \ba\ba^{\top})^{-1} = \bA^{-1} - \bA^{-1}\ba (1 + \ba^{\top}\bA^{-1}\ba)^{-1}\ba^{\top}\bA^{-1}.
\end{align*}
\end{fact}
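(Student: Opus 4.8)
The plan is to prove the general Woodbury identity by direct verification: exhibit the claimed inverse and check that multiplying it by $\bA + \bU\bC\bV$, on either side, yields the identity matrix. Throughout we assume that $\bA$ and $\bC$ are invertible and that the \emph{capacitance matrix} $\bC^{-1} + \bV\bA^{-1}\bU \in \R^{k\times k}$ is invertible; this is precisely the condition making the right-hand side of the asserted formula well-defined. In the rank-one specialization ($k=1$, $\bC = 1$, $\bU = \ba$, $\bV = \ba^{\top}$) this condition reads $1 + \ba^{\top}\bA^{-1}\ba \neq 0$, which we may simply assume; in all of our applications $\bA$ is positive definite and $\ba \neq \mathbf{0}$, so in fact $1 + \ba^{\top}\bA^{-1}\ba > 1$.

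First I would abbreviate $\bM := \bC^{-1} + \bV\bA^{-1}\bU$ and let $\bX := \bA^{-1} - \bA^{-1}\bU\bM^{-1}\bV\bA^{-1}$ denote the claimed inverse. Expanding the product $(\bA + \bU\bC\bV)\bX$ and using $\bA\bA^{-1} = \bI$ gives
\begin{align*}
(\bA + \bU\bC\bV)\bX = \bI - \bU\bM^{-1}\bV\bA^{-1} + \bU\bC\bV\bA^{-1} - \bU\bC\bV\bA^{-1}\bU\bM^{-1}\bV\bA^{-1}.
\end{align*}
The key algebraic step is to collect the last three terms by factoring $\bU$ on the left and $\bV\bA^{-1}$ on the right, which leaves the inner bracket $\bC - \bM^{-1} - \bC\bV\bA^{-1}\bU\bM^{-1} = \bC - (\bI + \bC\bV\bA^{-1}\bU)\bM^{-1}$. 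Since $\bI + \bC\bV\bA^{-1}\bU = \bC(\bC^{-1} + \bV\bA^{-1}\bU) = \bC\bM$, this bracket collapses to $\bC - \bC\bM\bM^{-1} = \bC - \bC = \mathbf{0}$, so $(\bA + \bU\bC\bV)\bX = \bI$. The identity $\bX(\bA + \bU\bC\bV) = \bI$ follows from the symmetric computation (factoring $\bU$ on the right and $\bA^{-1}\bU$ on the left, and using $\bM\bC = \bI + \bV\bA^{-1}\bU\bC$), which establishes $\bX = (\bA + \bU\bC\bV)^{-1}$.

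Finally, the rank-one identity is obtained by substituting $\bC = 1$, $\bU = \ba$, $\bV = \ba^{\top}$ into the general formula, noting that then $\bM = 1 + \ba^{\top}\bA^{-1}\ba$ is a scalar and $\bM^{-1} = (1 + \ba^{\top}\bA^{-1}\ba)^{-1}$. I do not expect a genuine obstacle here, as the statement is a classical fact; the only points worth stating carefully are the invertibility hypothesis on the capacitance matrix and the one line of bookkeeping in the three-term cancellation, which becomes transparent once one spots the substitution $\bI + \bC\bV\bA^{-1}\bU = \bC\bM$.
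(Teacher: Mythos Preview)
Your verification is correct: the direct check that $(\bA+\bU\bC\bV)\bX=\bI$ via the substitution $\bI+\bC\bV\bA^{-1}\bU=\bC\bM$ is the standard one-line argument, and the rank-one case follows by specialization exactly as you say. The paper itself does not prove this fact at all---it is stated without proof as a classical identity and simply invoked in the analysis of the data structure---so there is nothing to compare against; your write-up supplies what the paper omits. One tiny slip: in your description of the symmetric computation you wrote ``factoring $\bU$ on the right and $\bA^{-1}\bU$ on the left,'' but it should be $\bV$ on the right (and $\bA^{-1}\bU$ on the left), using $\bI+\bV\bA^{-1}\bU\bC=\bM\bC$.
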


Now we are ready to prove Lemma~\ref{lem:close_form_formula_algorithm}.
\begin{proof}[Proof of Lemma \ref{lem:close_form_formula_algorithm}]
First note that all claims of the lemma hold for $t=0$ in \textsc{Preprocess}. 

We assume that the claims hold for $t-1$, and inductively prove them for $t$. We only prove the lemma for $\bH^{(t)}$ (Part 4), $\bB^{(t)}$ (Part 5), $\bG^{(t)}$ (Part 7) and $\bu^{(t)}$ (Part 8). The rest follows directly from the algorithm description.

{ \bf Part 4 ($\bH^{(t)} = \big( (\bN^{(t)})^{\top} \bN^{(t)} \big)^{-1}$) \ \ } If $\nu^{(t)} = 0$, then $\bN^{(t)} = \bN^{(t-1)}$ and $\bH^{(t)} = \bH^{(t-1)} = \big( (\bN^{(t)})^{\top} \bN^{(t)} \big)^{-1}$.
Otherwise when $\nu^{(t)} \neq 0$, one has $\bN^{(t)} = [(\bN^{(t-1)})^{\top}, \boldm^{(t)} / \sqrt{p^{(t)}}]^{\top}$. Using the Woodbury identity, we have
\begin{align*}
    &~\big( (\bN^{(t)})^{\top} \bN^{(t)} \big)^{-1} \\
    = &~ \big( (\bN^{(t-1)})^{\top} \bN^{(t-1)} + (\boldm^{(t)})^{\top} \boldm^{(t)} / p^{(t)} \big)^{-1} \\
    = &~ \big( (\bN^{(t-1)})^{\top} \bN^{(t-1)} \big)^{-1} - \frac{\big( (\bN^{(t-1)})^{\top} \bN^{(t-1)} \big)^{-1} \boldm^{(t)} (\boldm^{(t)})^{\top} \big( (\bN^{(t-1)})^{\top} \bN^{(t-1)} \big)^{-1} / p^{(t)}}{1 + (\boldm^{(t)})^{\top} \big( (\bN^{(t-1)})^{\top} \bN^{(t-1)} \big)^{-1} \boldm^{(t)} / p^{(t)}}\\
    = &~ \bH^{(t-1)} - \frac{\bH^{(t-1)} \boldm^{(t)} (\boldm^{(t)})^{\top} \bH^{(t-1)} / p^{(t)}}{1 + (\boldm^{(t)})^{\top} \bH^{(t-1)} \boldm^{(t)} / p^{(t)}},
\end{align*}
The second term is exactly the $\Delta \bH$ term when calling {\sc UpdateMembers} (Line~\ref{line:Delta_H}). Hence, $\bH^{(t)} = \bH^{(t-1)} + \Delta \bH = \big( (\bN^{(t)})^{\top} \bN^{(t)} \big)^{-1}$.

{ \bf Part 5 ($\bB^{(t)} = \bN^{(t)} \bH^{(t)}$) \ \ } If $\nu^{(t)} = 0$, then $\bN^{(t)} = \bN^{(t-1)}$ and $\bH^{(t)} = \bH^{(t-1)}$, so $\bB^{(t)} = \bB^{(t-1)} = \bN^{(t)} \bH^{(t)}$.
Otherwise when $\nu^{(t)} \neq 0$, one has $\bN^{(t)} = [(\bN^{(t-1)})^{\top}, \boldm^{(t)} / \sqrt{p^{(t)}}]^{\top}$ and $\bH^{(t)} = \bH^{(t-1)} + \Delta \bH$. We have
\begin{align*}
    \bN^{(t)} \bH^{(t)} = &~ 
    \begin{bmatrix}
    \bN^{(t-1)} \cdot \bH^{(t)} \\
    (\boldm^{(t)})^{\top} \cdot \bH^{(t)} / \sqrt{p^{(t)}}
    \end{bmatrix} \\
    = &~ \begin{bmatrix}
    \bB^{(t-1)} + \bN^{(t-1)} \cdot \Delta \bH \\
    (\boldm^{(t)})^{\top} \cdot \bH^{(t)} / \sqrt{p^{(t)}}.
    \end{bmatrix}
\end{align*}

This is exactly what we compute in Line~\ref{line:update-B} of {\sc UpdateMembers}.

{ \bf Part 7 ($\bG^{(t)} = \big( (\bA^{(t)})^{\top} (\bD^{(t)})^2 \bA^{(t)} \big)^{-1}$) \ \ } 
The proof is analogous to that of Part 4 ($\bH^{(t)}$). 
If $\nu^{(t)} = 0$, then $\bG^{(t)} = \bG^{(t-1)} = \big( (\bA^{(t-1)})^{\top} (\bD^{(t-1)})^2 \bA^{(t-1)} \big)^{-1} = \big( (\bA^{(t)})^{\top} (\bD^{(t)})^2 \bA^{(t)} \big)^{-1}$.
On the other hand, when $\nu^{(t)} \neq 0$, using the Woodbury identity, one has
\begin{align*}
    \big( (\bA^{(t)})^{\top} (\bD^{(t)})^2 \bA^{(t)} \big)^{-1}  = &~ \big( (\bA^{(t-1)})^{\top} (\bD^{(t-1)})^2 \bA^{(t-1)} + (\ba^{(t)})^{\top}\ba^{(t)}/p^{(t)} \big)^{-1}  \\
    =&~ \bG^{(t-1)} - \frac{\bG^{(t-1)}\ba^{(t)} (\ba^{(t)})^{\top} \bG^{(t-1)} / p^{(t)}}{1 + (\ba^{(t)})^{\top} \bG^{(t-1)} \ba^{(t)} / p^{(t)}}
    = \bG^{(t)}.
\end{align*}
This is exactly what we compute in Line~\ref{line:update-G} of {\sc UpdateMembers}.

{ \bf Part 8 ($\bu^{(t)} = (\bA^{(t)})^{\top}(\bD^{(t)})^2 \bb^{(t)}$) \ \ } 
We focus on the case $\nu^{(t)} \neq 0$, and we have
\begin{align*}
    \bu^{(t)} = \bu^{(t-1)} + \beta^{(t)} \cdot \ba^{(t)} / p^{(t)} = \bA^{(t-1)}(\bD^{(t-1)})^2 \bb^{(t-1)} + (\nu^{(t)})^2 \cdot \ba^{(t)} \cdot \beta^{(t)} = \bA^{(t)}(\bD^{(t)})^2 \bb^{(t)}
\end{align*}
The first step follows from the updating rule of the data structure (Line~\ref{line:update-u} of {\sc UpdateMembers}), and the second step follows from $\bu^{(t-1)}= \bA^{(t-1)}(\bD^{(t-1)})^2 \bb^{(t-1)}$ and $\nu^{(t)} = 1/\sqrt{p^{(t)}}$, the last step follows from the definition of $\bA^{(t)}, \bb^{(t)}$ and $\bD^{(t)}$.
\end{proof}

\paragraph{Missing proofs from Section~\ref{sec:correct-oblivious}.}
We then prove Lemma \ref{lem:spectral-online-leverage-score}. We make use of the following matrix Chernoff bound for adaptive sequences.
\begin{lemma}[Matrix Chernoff: Adaptive sequence. Theorem 3.1 of \cite{tropp2011user}]
\label{lem:matrix-adaptive}
Consider a finite adapted sequence $\{\bX_k\}$ of positive-semidefinite matrices with dimension $d$, and suppose that
\[
\lambda_{\max}(\bX_k) \leq R \quad \text{almost surely}.
\]
Define the finite series 
\[
\bY := \sum_{k}\bX_k \quad \text{and}  \quad \bW:= \sum_{k}\E_{k-1} \bX_k.
\]
For all $\mu \geq 0$, 
\begin{align*}
\Pr[\lambda_{\min}(\bY) \leq (1-\eps)\mu \quad\text{and}\quad \lambda_{\min}(\bW) \geq \mu] \leq &~ d\cdot \Big[\frac{e^{-\eps}}{(1-\eps)^{1-\eps}}\Big]^{\mu/R} \text{ for } \eps \in [0, 1)\\
\Pr[\lambda_{\max}(\bY) \geq (1+\eps)\mu \quad\text{and}\quad \lambda_{\max}(\bW) \leq \mu] \leq &~ d\cdot \Big[\frac{e^{\eps}}{(1+\eps)^{1+\eps}}\Big]^{\mu/R} \text{ for } \eps \geq 0.
\end{align*}
\end{lemma}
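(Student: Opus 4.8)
The plan is to run the matrix Laplace transform method, lifted from the independent setting to the adapted setting via Lieb's concavity theorem; I will prove the upper-tail bound and indicate the symmetric modification for the lower tail. Index the sequence by $k=1,\dots,n$ with filtration $\mathcal{F}_0\subseteq\cdots\subseteq\mathcal{F}_n$ and $\bX_k$ being $\mathcal{F}_k$-measurable, and write $\bS_k:=\sum_{j\le k}\bX_j$ and $\bW_k:=\sum_{j\le k}\E_{j-1}\bX_j$, so that $\bS_n=\bY$, $\bW_n=\bW$, and $\bW_k$ is $\mathcal{F}_{k-1}$-measurable (predictable). Fix $\theta>0$. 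The first ingredient is the one-step bound: since $0\preceq\bX_k\preceq R\bI$ and the scalar inequality $e^{\theta x}\le 1+\psi(\theta)x$ holds pointwise on $[0,R]$ with $\psi(\theta):=(e^{\theta R}-1)/R\ge 0$ (a convex function lies below its chord), the spectral mapping rule gives $e^{\theta\bX_k}\preceq \bI+\psi(\theta)\bX_k$; taking $\E_{k-1}$ and then using $\bI+\bM\preceq e^{\bM}$ yields $\E_{k-1}\!\big[e^{\theta\bX_k}\big]\preceq \exp\!\big(\psi(\theta)\,\E_{k-1}\bX_k\big)$, a positive-definite matrix.

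The core step is to exhibit the supermartingale $L_k:=\tr\exp\!\big(\theta\bS_k-\psi(\theta)\bW_k\big)$, with $L_0=\tr\bI=d$. Writing $\bH:=\theta\bS_{k-1}-\psi(\theta)\bW_{k-1}-\psi(\theta)\E_{k-1}\bX_k$, which is $\mathcal{F}_{k-1}$-measurable, we have $L_k=\tr\exp(\bH+\theta\bX_k)$. Lieb's concavity theorem states that $\bA\mapsto\tr\exp(\bH+\log\bA)$ is concave on positive-definite $\bA$, so Jensen's inequality gives $\E_{k-1}L_k=\E_{k-1}\tr\exp\!\big(\bH+\log e^{\theta\bX_k}\big)\le \tr\exp\!\big(\bH+\log\E_{k-1}e^{\theta\bX_k}\big)$. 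Applying operator monotonicity of $\log$ to the one-step bound gives $\log\E_{k-1}e^{\theta\bX_k}\preceq\psi(\theta)\E_{k-1}\bX_k$, and then monotonicity of $\bA\mapsto\tr\exp(\bH+\bA)$ in the semidefinite order collapses the right-hand side to $\tr\exp(\theta\bS_{k-1}-\psi(\theta)\bW_{k-1})=L_{k-1}$. Hence $\E[L_n]\le L_0=d$.

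To finish, on the event $\{\lambda_{\max}(\bY)\ge(1+\eps)\mu,\ \lambda_{\max}(\bW)\le\mu\}$ I evaluate the quadratic form of $\theta\bS_n-\psi(\theta)\bW_n$ on the top eigenvector of $\bS_n$ and use $\psi(\theta)\ge 0$ to get $\lambda_{\max}(\theta\bS_n-\psi(\theta)\bW_n)\ge\theta(1+\eps)\mu-\psi(\theta)\mu$, so $L_n\ge\lambda_{\max}\!\big(\exp(\theta\bS_n-\psi(\theta)\bW_n)\big)\ge e^{\theta(1+\eps)\mu-\psi(\theta)\mu}$ deterministically on that event. Markov's inequality then yields $\Pr[\text{event}]\le d\,e^{-\theta(1+\eps)\mu+\psi(\theta)\mu}$; substituting $u=\theta R$ the exponent is $\tfrac{\mu}{R}\big(e^{u}-1-(1+\eps)u\big)$, minimized at $u=\log(1+\eps)$, which produces $d\,[e^{\eps}/(1+\eps)^{1+\eps}]^{\mu/R}$. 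The lower tail is identical after replacing the one-step bound with $e^{-\theta\bX_k}\preceq\bI-\widetilde{\psi}(\theta)\bX_k$ where $\widetilde{\psi}(\theta):=(1-e^{-\theta R})/R\ge 0$, using the supermartingale $\tr\exp(-\theta\bS_k+\widetilde{\psi}(\theta)\bW_k)$, evaluating on the bottom eigenvector of $\bS_n$, and optimizing at $u=\theta R=\log\frac{1}{1-\eps}$ to get $d\,[e^{-\eps}/(1-\eps)^{1-\eps}]^{\mu/R}$. The main obstacle is the supermartingale step: because $\bS_{k-1}$, $\bW_{k-1}$ and $\bX_k$ do not commute, the elementary scalar argument fails, and it is precisely Lieb's concavity theorem (equivalently, joint convexity of quantum relative entropy) that supplies the needed step; everything else is scalar calculus transported through the spectral calculus.
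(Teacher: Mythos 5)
Your proof is correct, and since the paper merely cites this statement as Theorem~3.1 of Tropp's user-friendly-tail-bounds paper without reproducing a proof, there is nothing internal to compare against; what you have written is essentially Tropp's own argument. All the key moves are in place and in the right order: the chord bound $e^{\theta x}\le 1+\psi(\theta)x$ on $[0,R]$ transferred to matrices by the spectral mapping theorem, the identification of $\bH=\theta\bS_{k-1}-\psi(\theta)\bW_{k}$ as $\mathcal{F}_{k-1}$-measurable (using predictability of $\bW_k$), Lieb's concavity of $\bA\mapsto\tr\exp(\bH+\log\bA)$ plus conditional Jensen to pull the expectation inside, operator monotonicity of $\log$ together with monotonicity of $\bA\mapsto\tr\exp(\bH+\bA)$ in the Löwner order to collapse to $L_{k-1}$, and finally the deterministic lower bound $L_n\ge e^{\theta(1+\eps)\mu-\psi(\theta)\mu}$ on the event followed by Markov and optimization at $\theta R=\log(1+\eps)$. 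The lower-tail modification with $\widetilde\psi(\theta)=(1-e^{-\theta R})/R$ and the bottom eigenvector is likewise right, and the optimization at $\theta R=\log\tfrac{1}{1-\eps}$ recovers $d\,[e^{-\eps}/(1-\eps)^{1-\eps}]^{\mu/R}$ exactly. The one place worth a sentence of care: Lieb's theorem requires the argument $e^{\theta\bX_k}$ to be positive definite (so that $\log$ is defined), which holds since $\bX_k\succeq 0$ and $\theta>0$, and the monotonicity step $\tr\exp(\bH+\bA)\le\tr\exp(\bH+\bB)$ for $\bA\preceq\bB$ should be cited (it follows, for instance, from the Peierls--Bogoliubov or Golden--Thompson machinery, or directly from the monotonicity of $\tr\exp$ under Löwner order), but these are standard and you invoke them appropriately.
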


Now we are ready to prove Lemma~\ref{lem:spectral-online-leverage-score}.
\begin{proof}[Proof of Lemma \ref{lem:spectral-online-leverage-score}]
Let 
\[
\bX_{0} = ((\bM^{(T)})^\top \bM^{(T)})^{-1/2}((\bM^{(0)})^\top \bM^{(0)})((\bM^{(T)})^\top \bM^{(T)})^{-1/2}.
\]
and
\[
\bX_{t} = \nu_{t}^2 \cdot ((\bM^{(T)})^\top \bM^{(T)})^{-1/2}\boldm^\ttop (\boldm^\ttop)^\top ((\bM^{(T)})^\top \bM^{(T)})^{-1/2}, \quad \forall t\in [T].
\]
Then $\{\bX_0\}_{t\in [T]}$ is an adaptive sequence that satisfies 
(1) $\sum_{t=1}^{T}\E_{t-1}[\bX_t] = \mathbf{I}$, (2) $\bX_{t} \preceq \frac{\eps^{2}}{3\log(d/\delta)} \mathbf{I}$.
The second property follows from 
\begin{align*}
\bX_{t} = &~ \frac{1}{p_t} ((\bM^{(T)})^\top \bM^{(T)})^{-1/2}\boldm^\ttop (\boldm^\ttop)^\top ((\bM^{(T)})^\top \bM^{(T)})^{-1/2}\\
\preceq &~ \frac{\eps^2}{3\log(d/\delta)} \cdot \frac{1}{\tauo^\ttop}  ((\bM^{(T)})^\top \bM^{(T)})^{-1/2}\boldm^\ttop (\boldm^\ttop)^\top ((\bM^{(T)})^\top \bM^{(T)})^{-1/2}\\
\preceq &~ \frac{\eps^2}{3\log(d/\delta)}  ((\bM^{(T)})^\top \bM^{(T)})^{-1/2} ((\bM^{(t-1)})^\top \bM^{(t-1)}) ((\bM^{(T)})^\top \bM^{(T)})^{-1/2}\\
\preceq &~ \frac{\eps^2}{3\log(d/\delta)} \mathbf{I}.
\end{align*}
where we assumed $p_t = 3\eps^{-2}\log(d/\delta)\cdot\tauo^\ttop < 1$ in the second step. This is wlog because we can split $\bX_t$ into smaller terms if $p_t = 1$. The third step follows from Fact \ref{fact:online_leverage_score}.

Now we can apply the matrix Chernoff bound (Lemma \ref{lem:matrix-adaptive}) with $\mu = 1$, $R = \frac{\eps^2}{3\log(d/\delta)}$, we have that with probability at least $1-\delta$, one has
\[
(1-\eps)\mathbf{I} \preceq \sum_{t=0}^{T}\bX_t \preceq (1+\eps)\mathbf{I},
\]
and this implies
\[
(1-\eps)(\bM^{(T)})^\top \bM^{(T)} \preceq (\bM^{(0)})^\top \bM^{(0)} + \sum_{t=1}^{T}\nu_t^2 \cdot \boldm^\ttop (\boldm^\ttop)^\top  \preceq (1+\eps)(\bM^{(T)})^\top \bM^{(T)}.
\]
We conclude the proof here.
\end{proof}

\paragraph{Missing proofs from Section~\ref{sec:upper_robust}.}
Next, we prove the following claim that is used in the proof of Lemma~\ref{lem:intrinsic_new}.
\begin{claim}
\label{claim:upper-tech1}
Condition on the event of Eq.~\eqref{eq:adaptive-upper3}, the largest singular values of $\bY$ is at most $4 \sigma_{\max}^2$.
\end{claim}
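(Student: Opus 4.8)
\textbf{Proof proposal for Claim~\ref{claim:upper-tech1}.}

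The plan is to bound $\lambda_{\max}(\bY)$ directly using the spectral bound of Eq.~\eqref{eq:adaptive-upper3} together with a standard net-to-sphere argument, exactly parallel to what was done in Step~2 of the proof of Lemma~\ref{lem:intrinsic_new}. Recall that, conditioned on the event of Eq.~\eqref{eq:adaptive-upper3}, we have $|\bx^{\top}\bY\bx - \|\bM^{(T)}\bx\|_2^2| \leq \frac{\epsilon}{2}\|\bM^{(T)}\bx\|_2^2$ for every $\bx$ in the net $\mathcal{B}$, hence $\bx^{\top}\bY\bx \leq (1+\tfrac{\epsilon}{2})\|\bM^{(T)}\bx\|_2^2 \leq (1+\tfrac{\epsilon}{2})\sigma_{\max}^2\|\bx\|_2^2 \leq 2\sigma_{\max}^2\|\bx\|_2^2$ for all $\bx \in \mathcal{B}$, using that the largest singular value of $\bM^{(T)}$ is at most $\sigma_{\max}$.

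Next I would upgrade this from the net $\mathcal{B}$ to all unit vectors. Since $\bY$ is symmetric PSD (it is a sum of PSD matrices $\bX^{(t)}$ plus $(\bM^{(0)})^{\top}\bM^{(0)}$), write $S := \lambda_{\max}(\bY) = \max_{\|\bx^{*}\|_2 = 1}(\bx^{*})^{\top}\bY\bx^{*}$ and let $\bx^{*}$ be a maximizing unit vector. As in the proof of Lemma~\ref{lem:intrinsic_new}, there is $\bx \in \mathcal{B}$ with $\|\bx - \bx^{*}\|_2 \leq \frac{\epsilon}{100\kappa} \leq \frac{\epsilon}{100}$. Then
\[
(\bx^{*})^{\top}\bY\bx^{*} \leq \bx^{\top}\bY\bx + 2|\bx^{\top}\bY(\bx^{*}-\bx)| + |(\bx^{*}-\bx)^{\top}\bY(\bx^{*}-\bx)| \leq \bx^{\top}\bY\bx + \big(2\|\bx-\bx^{*}\|_2 + \|\bx-\bx^{*}\|_2^2\big)\cdot S,
\]
where the last inequality uses $\lambda_{\max}(\bY) = S$ and Cauchy--Schwarz. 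Since $\|\bx-\bx^{*}\|_2 \leq \frac{\epsilon}{100}$, the coefficient $2\|\bx-\bx^{*}\|_2 + \|\bx-\bx^{*}\|_2^2 \leq \frac{1}{2}$ for $\epsilon < 1$, giving $S \leq 2\sigma_{\max}^2 + \frac{1}{2}S$, i.e., $S \leq 4\sigma_{\max}^2$.

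The argument is entirely routine; the only point requiring a little care is ensuring that $\bY$ is genuinely PSD so that $\lambda_{\max}(\bY)$ equals the quadratic-form maximum over the unit sphere and so that the Cauchy--Schwarz-type bound $|\bx^{\top}\bY\by| \leq \lambda_{\max}(\bY)\|\bx\|_2\|\by\|_2$ holds, and that the net $\mathcal{B}$ indeed $\frac{\epsilon}{100\kappa}$-covers the unit sphere (in $\ell_2$), which was already established in the proof of Lemma~\ref{lem:intrinsic_new}. Both are immediate, so there is no real obstacle here; the claim is a direct consequence of Eq.~\eqref{eq:adaptive-upper3} plus the upper bound $\sigma_{\max}$ on the top singular value of $\bM^{(T)}$.
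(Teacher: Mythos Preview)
Your argument is correct and uses essentially the same idea as the paper: bound the Rayleigh quotient on the net via Eq.~\eqref{eq:adaptive-upper3}, then pass from the net to the sphere. The paper organizes this as a contradiction (assuming $\lambda_{\max}(\bY)>4\sigma_{\max}^2$ and using the reverse triangle inequality for $\|\bY^{1/2}\cdot\|_2$ to show $(\bx'-\bx'')^\top\bY(\bx'-\bx'')$ would exceed $\lambda_{\max}(\bY)\|\bx'-\bx''\|_2^2$), whereas you expand the quadratic form directly and solve the self-referential inequality $S\le \bx^\top\bY\bx + cS$; both are standard and equivalent.

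One minor point of care: your net point $\bx$ is not exactly unit norm (only $\|\bx\|_2\le 1+\tfrac{\epsilon}{100\kappa}$), so the line ``$S\le 2\sigma_{\max}^2 + \tfrac12 S$'' silently drops a $\|\bx\|_2^2$ factor on the first term and a $\|\bx\|_2$ factor in the cross term. This does not affect the conclusion since the true coefficient in front of $S$ is $\approx 2\cdot\tfrac{\epsilon}{100}\ll \tfrac12$ and $\|\bx\|_2^2\le 1+O(\epsilon)$, so $S\le 4\sigma_{\max}^2$ still holds with room to spare; just be explicit about it.
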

\begin{proof}
We prove this by contradiction. Suppose the largest singular value of $\bY$ is $\sigma^2 > 4 \sigma_{\max}^2$. Let $\bx' = \arg\max_{\bx \in \R^d, \|\bx\|_2 = 1} \bx^{\top} \bY \bx$, and it satisfies that $\bx'^{\top} \bY \bx' = \sigma^2$. There must exist some $\bx'' \in \mathcal{B}$ such that $\|\bx' - \bx''\|_2 \leq \frac{\epsilon}{100 \kappa}$, and $\bx''^{\top} \bY \bx'' \leq (1 + \frac{\epsilon}{2}) \|\bM^{(T)} \bx''\|_2^2 \leq (1 + \epsilon) \sigma_{\max}^2$. We have
\begin{align*}
(\bx' - \bx'')^{\top} \bY (\bx' - \bx'') = &~ \|\bY^{1/2} (\bx' - \bx'')\|_2^2 \\
\geq &~ (\sqrt{\bx'^{\top} \bY \bx'} - \sqrt{\bx''^{\top} \bY \bx''})^2 \\
\geq &~ (\sigma - (1 + \epsilon) \sigma_{\max})^2 \\
\geq &~ (\sigma - (1 + \epsilon) \sigma_{\max})^2 \cdot (100 \kappa / \epsilon)^2 \|\bx' - \bx''\|_2^2
\end{align*}
where the second step follows from triangle inequality of $\ell_2$ norm, the third step follows from $\bx'^{\top} \bY \bx' = \sigma^2$ and $\bx''^{\top} \bY \bx'' \leq (1 + \epsilon) \sigma_{\max}^2$, and the last step follows from $\|\bx' - \bx''\|_2 \leq \frac{\epsilon}{100 \kappa}$. Since $\sigma > 2 \sigma_{\max}$ and $\epsilon < 1/8$, we have $100 (\sigma - (1 + \epsilon) \sigma_{\max}) > \sigma$, and this contradicts with our definition that $\bx'$ is the unit vector that corresponds to the largest singular value of $\bY$.
\end{proof}

\paragraph{Missing proofs from Section~\ref{sec:time}.}
We next prove Lemma \ref{lem:worst_case_query_time}.
\begin{proof}[Proof of Lemma \ref{lem:worst_case_query_time}]
If the $t$-row is not sampled, we only need to invoke the {\sc Sample} procedure. The most time-consuming step of {\sc Sample} is to compute $\wt{\bB}^{(t-1)} \cdot \boldm^{(t)}$ when computing $\tau^{(t)}$ (Line~\ref{line:levarage_score} of Algorithm \ref{algo:sample}). Since $\wt{\bB}^{(t-1)} \in \R^{O(\log(T/\delta)) \times (d+1)}$ and $\boldm^{(t)} = [(\ba^{(t)})^{\top}, \beta^{(t)}]^{\top}$, this takes $O(\log(T/\delta) \cdot \nnz(\ba^{(t)}))$ time.

If the $t$-row is sampled, besides the {\sc Sample} procedure, the data structure also needs to invoke the {\sc UpdateMembers} procedure. 
The most time-consuming step is to compute $\wt{\bB}^{(t)} = \bJ^{(t)} \cdot \bB^{(t)}$ on Line~\ref{line:update-wt_B}. Indeed, it's easy to see that all other computations only involve matrix-vector multiplications and matrix additions, and they can be computed in $O(s^{(t)} d)$ time.
Since $\bJ^{(t)} \in \R^{O(\log(T/\delta)) \times s^{(t)}}$ and $\bB^{(t)} \in \R^{s^{(t)} \times (d+1)}$, computing $\wt{\bB}^{(t)} = \bJ^{(t)} \cdot \bB^{(t)}$ takes $O(s^{(t)} d \log(T/\delta))$ time.
\end{proof}

To prove Lemma~\ref{lem:number-row}, we make use of the following concentration result that is a direct application of Freedman's inequality.
\begin{lemma}
\label{lem:number_sampled_rows-concentration}
Let $p_1, p_2, \cdots, p_T \in [0,1]$ be a sequence of sampling probabilities chosen by adaptive adversary and always satisfies $\sum_{t=1}^T p_t \leq U$. 
Let 
\[
x^{(t)} = 
\begin{cases}
1 & \text{w.p. } p_t \\
0 & \text{w.p. } 1 - p_t
\end{cases}
\]
and let $y^{(0)} = 0$, $y^{(t)} = y^{(t-1)} + x^{(t)}$ for any $t\in [T]$. Then for any $u > 0$,  the final outcome $y^{(T)}$ satisfies
\[
\Pr\left[y^{(T)} \geq u + \sum_{t=1}^T p_t  \right] \leq \exp\left(-\frac{u^2/2}{U + u/3}\right).
\]
\end{lemma}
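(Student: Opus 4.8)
The plan is to recognize the partial sums of the indicators as a martingale and apply (the one-sided form of) Freedman's inequality, Lemma~\ref{thm:freedman}. Let $\mathcal{F}_t$ be the $\sigma$-algebra generated by the first $t$ samples $x^{(1)}, \dots, x^{(t)}$ together with whatever internal randomness the adversary uses through step $t$. The key structural point is that each $p_t$ is chosen by the adversary \emph{before} $x^{(t)}$ is revealed, so $p_t$ is $\mathcal{F}_{t-1}$-measurable, and conditioned on $\mathcal{F}_{t-1}$ the variable $x^{(t)}$ is $\mathrm{Bernoulli}(p_t)$. I would then set $X_t := x^{(t)} - p_t$ and $Y_t := \sum_{s=1}^t X_s = y^{(t)} - \sum_{s=1}^t p_s$, so that $Y_0 = 0$ and $\E_{t-1}[X_t] = p_t - p_t = 0$, giving $\E_{t-1}[Y_t] = Y_{t-1}$; thus $\{Y_t\}$ is a martingale adapted to $\{\mathcal{F}_t\}$ despite the adaptivity of the $p_t$'s.

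Next I would record the two inputs Freedman needs. For the almost-sure bound, $|X_t| = |x^{(t)} - p_t| \le 1$ since $x^{(t)} \in \{0,1\}$ and $p_t \in [0,1]$, so we may take $R = 1$. For the predictable quadratic variation, $W_t = \sum_{s=1}^t \E_{s-1}[X_s^2] = \sum_{s=1}^t p_s(1-p_s) \le \sum_{s=1}^t p_s \le \sum_{s=1}^T p_s \le U$, where the final inequality is the hypothesis of the lemma, valid for every realization; in particular $W_T \le U$ holds deterministically. Since $W_t$ is non-decreasing in $t$, the event $\bigl\{\, y^{(T)} \ge u + \sum_{t=1}^T p_t \,\bigr\} = \{Y_T \ge u\}$ is contained in $\bigl\{\, \exists\, i \in [T]:\ Y_i \ge u \ \text{and}\ W_i \le U \,\bigr\}$.

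Finally I would apply the one-sided version of Freedman's inequality—identical to Lemma~\ref{thm:freedman} but tracking the event $Y_i \ge u$ rather than $|Y_i| \ge u$, which removes the leading factor of $2$—with parameters $\sigma^2 = U$, $R = 1$, and the given $u \ge 0$, obtaining
\[
\Pr\!\left[\exists\, i \in [T]:\ Y_i \ge u \ \text{and}\ W_i \le U \right] \;\le\; \exp\!\left(-\frac{u^2/2}{U + u/3}\right),
\]
which, combined with the containment above, is exactly the claimed bound.

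The argument is essentially routine; the only two points worth spelling out are (i) the predictability/measurability of the adaptively chosen $p_t$ with respect to $\mathcal{F}_{t-1}$, which is precisely what legitimizes treating $\{Y_t\}$ as a martingale against an adaptive adversary, and (ii) using the one-sided form of Freedman's inequality so that the constant in front of the exponential is $1$ rather than $2$ as stated in the lemma. Neither is a real obstacle, but (i) is the conceptual crux since everything else is a direct substitution into Lemma~\ref{thm:freedman}.
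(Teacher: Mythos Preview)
Your proposal is correct and follows essentially the same approach as the paper: center the indicators to form the martingale $Y_t = y^{(t)} - \sum_{s\le t} p_s$, bound $|X_t|\le 1$ and $W_T \le \sum p_t \le U$, and apply Freedman's inequality. You are in fact slightly more careful than the paper in two places: you spell out the $\mathcal{F}_{t-1}$-measurability of the adaptively chosen $p_t$, and you explicitly invoke the one-sided form of Freedman to justify the constant $1$ rather than the $2$ appearing in Lemma~\ref{thm:freedman}, whereas the paper simply writes the bound without the factor of $2$.
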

\begin{proof}
Let $\overline{x}^{(t)} = x^{(t)} - p_t$, and note that $\E_{t-1}[\overline{x}^{(t)}] = 0$ and $|\overline{x}^{(t)}| \leq 1$. 
Let $\overline{y}^{(0)} = 0$, and $\overline{y}^{(t)} = \overline{y}^{(t-1)} + \overline{y}^{(t)}$. Note that the sequence $\overline{y}^{(0)}, \overline{y}^{(1)}, \cdots, \overline{y}^{(T)}$ is a martingale, and $\overline{y}^{(T)} = y^{(T)} - \sum_{t=1}^T p_t$.
We have
\[
\E_{t-1}[(\overline{x}^{(t)})^2] = p_t \cdot (1 - p_t)^2 + (1 - p_t) \cdot p_t^2 = p_t \cdot (1 - p_t).
\]
and the variance satisfies
\[
\Var = \sum_{t=1}^T \E_{t-1}[(\overline{x}^{(t)})^2]
= \sum_{t=1}^T p_t \cdot (1 - p_t) \leq \sum_{t=1}^T p_t \leq U.
\]
Using Freedman's inequality (Lemma \ref{thm:freedman}) with $R = 1$, $\sigma^2 = U$, and any $u > 0$, we have
\[
\Pr[\overline{y}^{(T)} \geq u] \leq \exp\left(-\frac{u^2/2}{U + u/3}\right). \qedhere
\]
\end{proof}

Now we are ready to prove Lemma~\ref{lem:number-row}.
\begin{proof}[Proof of Lemma \ref{lem:number-row}]
For oblivious adversary, conditioning on the event of Lemma \ref{lem:correctness_algorithm}, the expected number of rows are at most 
\begin{align*}
\sum_{t=1}^{T}C_{\text{obl}}\cdot \tau^\ttop \leq 2\sum_{t=1}^{T}C_{\text{obl}} \cdot \tauo^\ttop = O\left(d\eps^{-2}\log(T/\delta)\log(\frac{\sigma_{\max}}{\sigma_{\min}})\right).
\end{align*}
Plugging $U = O(\eps^{-2}d\log(T/\delta)\log(\sigma_{\max} / \sigma_{\min}))$ into Lemma \ref{lem:number_sampled_rows-concentration}, we obtain Eq.~\eqref{eq:number-row-oblivious}.

For adaptive adversary, conditioning on the high probability event of Lemma \ref{lem:spectral-approximation-robust}, the expected number of rows are at most 
\begin{align*}
\sum_{t=1}^{T}C_{\text{adv}}\cdot \tau^\ttop \leq 2\sum_{t=1}^{T}C_{\text{adv}} \cdot \tauo^\ttop = O\left(d^2\eps^{-2}\log(T/\delta)\log^2(\frac{\sigma_{\max}}{\sigma_{\min}})\right).
\end{align*}
Plugging $U = O\left(d^2\eps^{-2}\log(T/\delta)\log^2(\frac{\sigma_{\max}}{\sigma_{\min}})\right)$ into Lemma \ref{lem:number_sampled_rows-concentration}, we obtain Eq.~\eqref{eq:number-row-adaptive}.
\end{proof}

\section{Empirical study}
\label{sec:exp}

As part of our program, we perform empirical evaluations over our algorithms developed for partially dynamic LSR, over both synthetic and real-world datasets. 
Our method is most suitable for input data that are non-uniform. 
Indeed, if the data has low coherence (they are all similar to each other), then the naive uniform sampling is already as good as leverage score sampling.

\vspace{+2mm}
{\noindent \bf Synthetic dataset \ \ } We follow the empirical study of \cite{dl19} and generate data from the {\em elliptical model}. In this model $\ba^{(t)} = w^{(t)}\Sigma \bz^{(t)}$, where $\bz^{(t)} \sim N(0, \bI_d)$ is a random Gaussian vector, $\Sigma \in \R^{d\times d}$ is a PSD matrix, and $w^{(t)}$ is a scalar. The label is generated as $b^{(t)} = \langle \ba^{(t)}, \bx^{\star}\rangle + w^{(t)}\xi^{(t)}$, where $\bx^{\star} \in \R^{d}$ is a hidden vector and $\xi \sim N(0,1)$ is standard Gaussian noise. This model has a long history in multivariate statistics, see e.g.~\cite{martin1979multivariate}. 
In our experiments, we set $\Sigma = \bI_d$ for simplicity. 
In order to make the dataset non-uniform, after the initial phase (the first 10\% of the data), we randomly choose $d/10$ rows among the next 10\% of data to have a large scalar of $\sqrt{T}$. The rest of the data have a scalar of $1$. 
We set $T = 500000$ and $d = 500$.

\vspace{+2mm}
{\noindent \bf Real-world dataset \ \ } We use the VirusShare dataset from the UCI Machine Learning Repository\footnote{\url{https://archive.ics.uci.edu/ml/datasets.php}}. We select this dataset because it has a large number of features and data points, and has low errors when fitted by a linear model. 
The dataset is collected from Nov 2010 to Jul 2014 by VirusShare (an online platform for malware detection).
It has $T = 177856$ data points and $d = 482$ features.

\vspace{+2mm}
{\noindent \bf Baseline algorithms \ \ } We compare with three baseline methods. 
\begin{enumerate}
\item {\em Kalman's approach} makes use of the Woodbury identity and gives an exact solution. 
\item The {\em uniform sampling} approach samples new rows uniformly at random. 
\item The {\em row sampling} approach samples new rows according to the exact online leverage scores \cite{cmp20}.
\end{enumerate}

\vspace{+2mm}
{\noindent \bf Parameters \ \ } In both our method and the row sampling method, we use an error parameter $\epsilon$. We set the sampling probability to be $p = \min\{\tau \epsilon^{-2} / 2, 1\}$ (except when $\epsilon=1$ we set $p = \min\{\tau, 1\}$ to make it non-trivial), where $\tau$ is the approximate online leverage score of different methods. 
We only implement our oblivious algorithm since both datasets do not involve adaptive adversary.
The JL matrix in our algorithm has $k$ number of rows, where we set $k = c_{\epsilon} \cdot \epsilon^{-2}$ for some constants $c_{\epsilon}$ so that $k \approx 20$.  The raw data of our experiments are shown in Table~\ref{tab:experiment}.

\begin{figure}[!htbp]
\centering
  \begin{tabular}{@{}c@{}}
    \includegraphics[width=.48\textwidth]{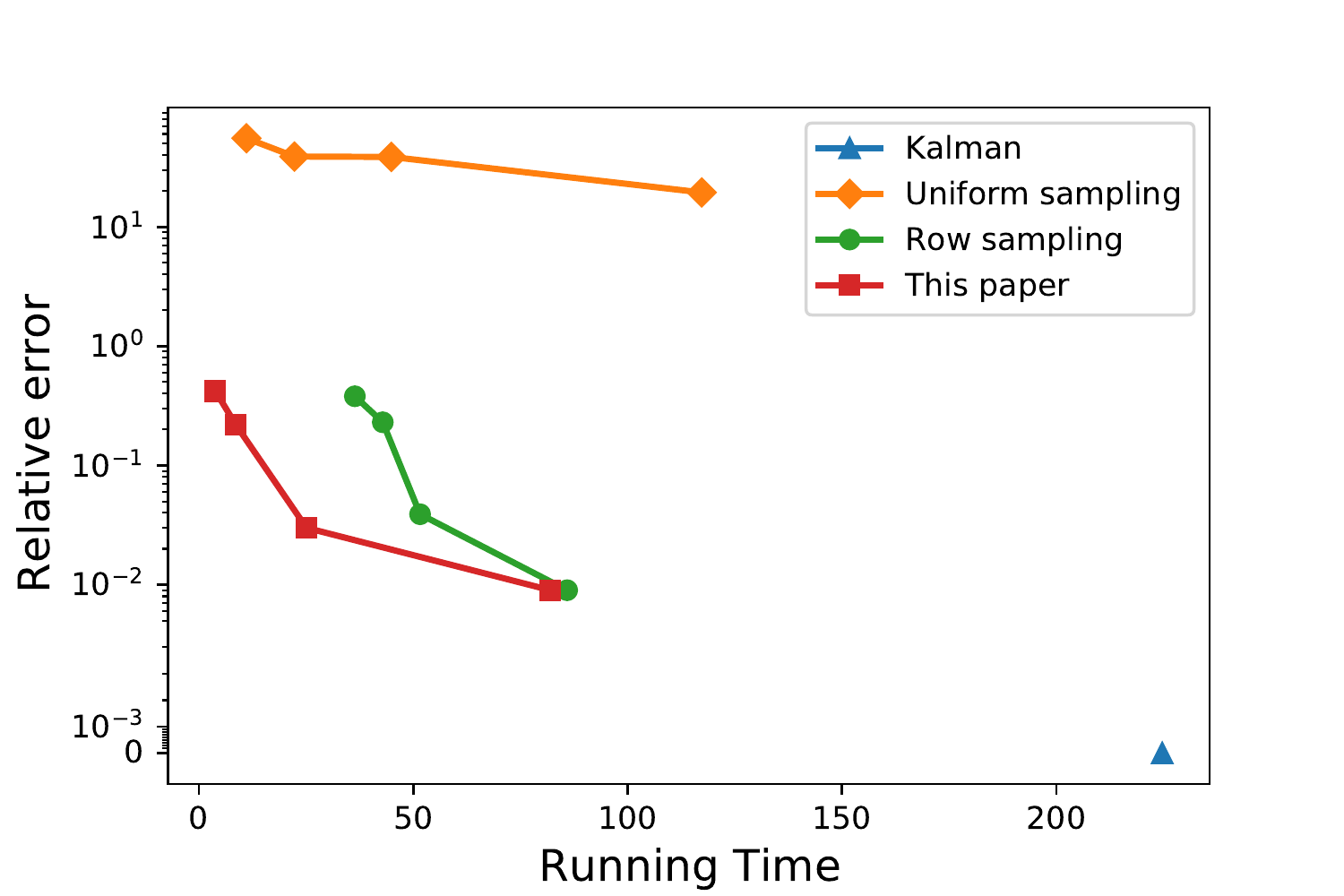}
    \\[\abovecaptionskip] \small (a) Synthetic dataset
      \end{tabular}
  \begin{tabular}{@{}c@{}}
    \includegraphics[width=.48\textwidth]{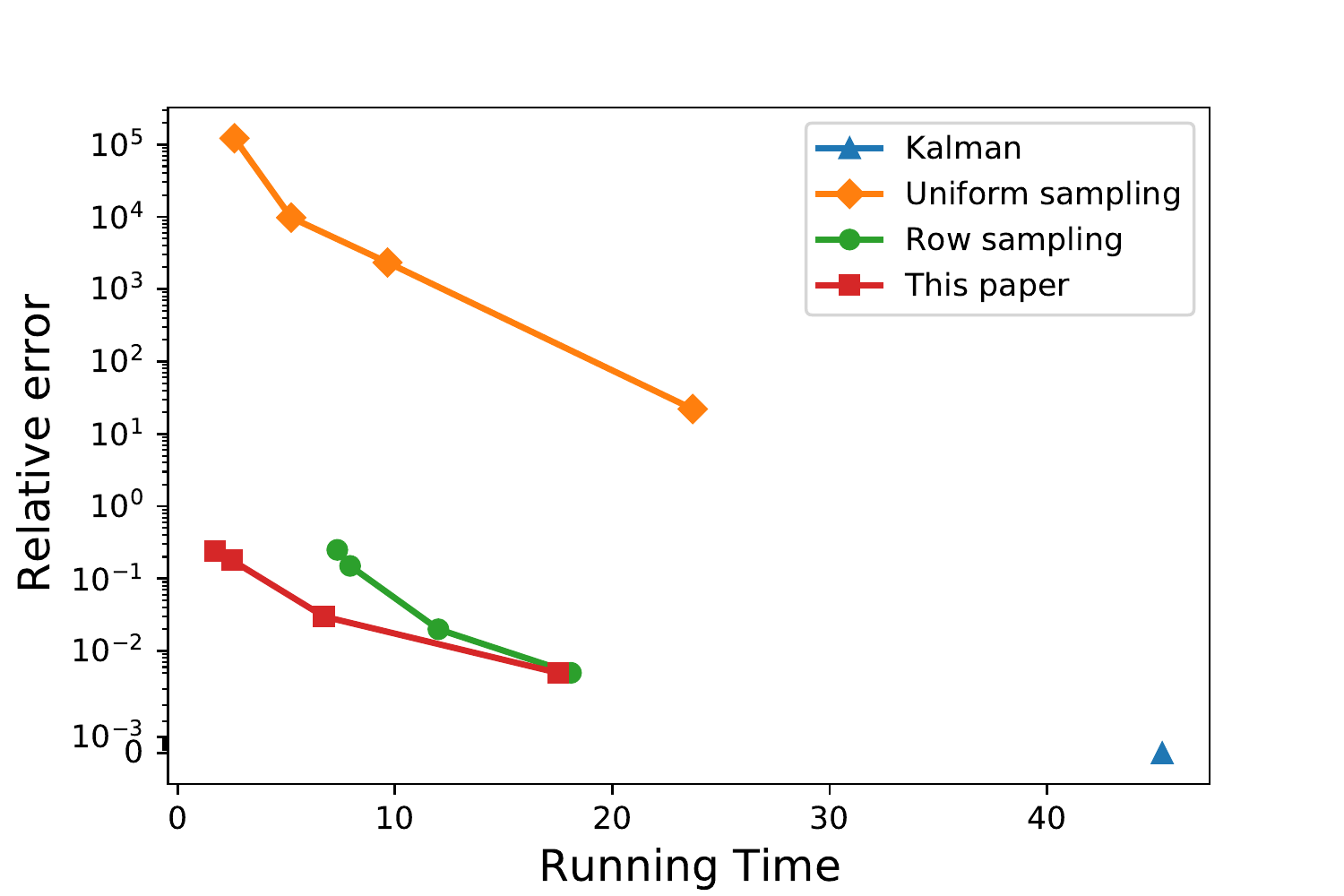}
    \\[\abovecaptionskip] \small (b) VirusShare dataset
    \end{tabular}

\caption{Experiment results. The $x$-axis shows the running time (unit: seconds), and the $y$-axis shows the relative error $(\mathrm{err}/\mathrm{err}_{\mathrm{std}} - 1)$, where $\mathrm{err}$ is the error of the particular approach, and $\mathrm{err}_{\mathrm{std}}$ is the error of the static Normal equation. 
The $y$-axis is on a symlog scale, where for range $\geq 0.005$ we show the base-$10$ log scale, and for range $[0, 0.005)$ we show the linear scale. Kalman's approach has a relative error of $0$, and except this point, all other data points are in the range of the log scale. 
For uniform sampling, we take sampling probability $p=0.05,0.1,0.2,0.5$. For row sampling and our algorithm, we take the error parameter $\eps = 0.1, 0.2, 0.5, 1$. }
\label{fig:exp1}
\end{figure}

\vspace{+2mm}
{\noindent \bf Experiment results \ \ }
Our experiments are executed on an Apple M1 CPU with codes written in MATLAB. We repeat all experiments for at least $5$ times and take the mean. 
On both datasets, we initiate the model based on the first 10\% of the data.
The experiment results are formally presented in Figure \ref{fig:exp1}.
Our algorithm consistently outperforms baseline methods: It runs faster when achieving comparable error rates.

\begin{table}[!ht]
    \centering
    \begin{tabular}{|c|c|c|c|c|c|c|}
    \hline
    Dataset    & Method & Error & Time & Parameters  \\  \hline
    Synthetic & Kalman &  1 & 224.7s &   \\ \hline
    Synthetic & ours &   1.42 & 3.85s & $\eps=1$  \\ \hline
    Synthetic & ours &   1.22 & 8.6s & $\eps=0.5$  \\ \hline
    Synthetic & ours &   1.03 & 25.1s & $\eps=0.2$  \\ \hline
    Synthetic & ours &    1.009 & 82.0s & $\eps=0.1$ \\ \hline
    Synthetic & row sampling &   1.38 & 36.39s & $\eps=1$   \\ \hline
    Synthetic & row sampling &   1.23 & 42.9s & $\eps=0.5$  \\ \hline
    Synthetic & row sampling &   1.039 & 51.6s & $\eps=0.2$  \\ \hline
    Synthetic & row sampling &   1.009& 85.9s & $\eps=0.1$  \\ \hline
    Synthetic & uniform &   56.2 & 11.1s & $p=0.05$  \\ \hline
    Synthetic & uniform &   39.8 & 22.3s & $p=0.1$  \\ \hline
    Synthetic & uniform &   39.5 & 44.9s & $p=0.2$ \\ \hline
    Synthetic & uniform &  20.4 & 117.3s & $p=0.5$ \\ 
    \Xhline{3\arrayrulewidth}
    VirusShare & Kalman & 1 & 45.3s &  \\ \hline
    VirusShare & ours & 1.24 & 1.74s & $\eps = 1$   \\ \hline
    VirusShare & ours & 1.18 & 2.50s & $\eps = 0.5$   \\ \hline
    VirusShare & ours & 1.03 & 6.73s & $\eps = 0.2$    \\ \hline
    VirusShare & ours & 1.005 & 17.5s & $\eps = 0.1$    \\ \hline
    VirusShare & row sampling & 1.25 & 7.35s & $\eps = 1$  \\ \hline
    VirusShare & row sampling & 1.15 & 7.94s & $\eps = 0.5$  \\ \hline
    VirusShare & row sampling & 1.02& 12.0s & $\eps = 0.2$  \\ \hline
    VirusShare & row sampling & 1.005 & 18.1s & $\eps = 0.1$  \\ \hline
    VirusShare & uniform sampling & 1.2153e+05 & 2.62s & $p = 0.05$  \\ \hline
    VirusShare & uniform sampling & 9.7335e+03 & 5.23s & $p = 0.1$  \\ \hline
    VirusShare & uniform sampling & 2.3363e+03 & 9.66s & $p = 0.2$  \\ \hline
    VirusShare & uniform sampling & 23.1 & 23.7s & $p = 0.5$  \\ \hline
    \end{tabular}
    \caption{Experiment results.}
    \label{tab:experiment}
\end{table}

\end{document}